\RequirePackage{afterpackage}
\AfterPackage{amsthm}{
    \RequirePackage{hyperref}
    \RequirePackage[capitalise]{cleveref}

    \usepackage{proof-at-the-end}
    \pgfkeys{/prAtEnd/default/.style={restate, proof at the end, stared, no link to proof}}
    \pgfkeys{/prAtEnd/allend/.style={all end, no link to theorem}}
    
}

\documentclass{acmart}


\usepackage{amsmath}
\usepackage{mathpartir}
\usepackage{mathtools}
\usepackage{xcolor}
\usepackage{graphicx}
\usepackage{calc}
\usepackage{xspace}

\usepackage{amsthm}
\usepackage{thmtools}
\usepackage{stmaryrd}
\usepackage{cases}

\usepackage[capitalise]{cleveref}

\usepackage{xfakebold}
\usepackage{bm}

\AtEndPreamble{%
   \theoremstyle{acmdefinition}
   \newtheorem{remark}[theorem]{Remark}}

\newcommand{\changed}[1]{{\color{red} #1}}
\newcommand{\revOne}[1]{{\color{blue} #1}}
\newcommand{\revTwo}[1]{{\color{orange} #1}}

\newcommand{\DMon}{\mathsf{DMon}}
\newcommand{\CMon}{\mathsf{CMon}}
\newcommand{\LMon}{\mathsf{LMon}}
\newcommand{\Act}{\mathsf{Act}}
\newcommand{\Com}{\mathsf{Com}}
\newcommand{\Cvar}{\mathsf{Con}}
\newlength{\NOTskip} 
\def\NOT#1{\settowidth{\NOTskip}{\ensuremath{#1}}%
            \hspace{0.5\NOTskip}\mathclap{/}\hspace{-0.5\NOTskip}#1}
            
\newcommand{\rec}[1]{\mathsf{rec}~#1}

\newcommand{\xxoverset}[3]{%
\resizebox{#1+\widthof{\scriptsize #2}}{\height}{$#3$}}
\newcommand{\extoverset}[3][0pt]{%
  \overset{\textup{#2}}{\xxoverset{#1}{#2}{#3}}}

\newcommand{\muHML}{$\mu$HML\xspace}
\newcommand{\HypermuHML}{Hyper-\ensuremath{\mathsf{rec}}HML\xspace}

\newcommand{\hyperLTLtwo}{\ensuremath{\textsc{Hyper}^{2}\textsc{LTL}}\xspace}
\newcommand{\hyperLTLtwofp}{\ensuremath{\textsc{Hyper}^{2}\textsc{LTL}_\text{fp}}\xspace}

\newcommand{\ttt}{\mathsf{tt}}
\newcommand{\ff}{\mathsf{ff}}

\newcommand{\maxx}[1]{\textrm{max}~#1}
\newcommand{\minn}[1]{\textrm{min}~#1}

\newcommand{\Trc}{\mathsf{Trc}}
\newcommand{\Htrc}{\mathsf{HTrc}_I}

\newcommand{\sem}[1]{\left\llbracket#1\right\rrbracket}

\newcommand{\fx}{\mathtt{fix}}
\newcommand{\cl}{\mathtt{cls}}

\newcommand{\R}{\mathcal{R}}

\newcommand{\boldarrow}{\setBold[0.6] \rightarrow\unsetBold}
\newcommand{\boldarrowt}{\setBold[0.6]\Rightarrow_{\!\unsetBold\mathnormal{A}}\unsetBold}
\makeatletter

\makeatother

\newcommand{\MdLocName}{{}{\textsc{d}}\mathsf{m}}
\newcommand{\MdName}{{}{\textsc{d}}\mathsf{M}}
\newcommand{\synone}[1]{\MdLocName^\ell_\sigma(#1)}
\newcommand{\Mone}[1]{\Monev{\sigma}{#1}}
\newcommand{\Monev}[2]{\MdName_{#1}(#2)}
\newcommand{\synvone}[3]{\MdLocName^{#1}_{#2}(#3)}

\newcommand{\McName}{{}{\textsc{c}}\mathsf{m}}
\renewcommand{\Mc}[1]{\Mcv{#1}{\sigma}}
\newcommand{\Mcv}[2]{\McName_{#2}(#1)}

\newcommand{\MdecVarName}{\mathcal{M}}
\newcommand{\MdecVar}[2]{\MdecVarName_{#1} (#2 )}
\newcommand{\MdecVarDefSigma}[1]{\MdecVar{\sigma}{#1}}
\newcommand{\MdecVarDefault}{\MdecVarDefSigma{\varphi}}

\newcommand{\PHypermuHML}{PHyper-$\mathsf{rec}$HML\xspace}
\newcommand{\FPHypermuHML}{PHyper-maxHML\xspace}
\newcommand{\WTFPHypermuHML}{Hyper-maxHML\xspace}

\makeatletter
\newcommand\bigDiamond{\mathop{\mathpalette\bigDi@mond\relax}}
\newcommand\bigDi@mond[2]{%
  \vcenter{\hbox{\m@th
    \scalebox{\ifx#1\displaystyle 2\else1.2\fi}{$#1\Diamond$}%
  }}%
}
\newcommand{\op}{\diamond}
\newcommand{\bigop}{\bigDiamond}
\newcommand{\xrightsquigarrow}[1]{\extoverset{$#1$}\rightsquigarrow}

\newcommand{\yes}{\mathsf{yes}}
\newcommand{\no}{\mathsf{no}}
\newcommand{\vend}{\mathsf{end}}

\newcommand{\quantfree}{\mathsf{Qf}}
\newcommand{\FVloc}{\mathsf{fv}_{\mathsf{l}}}
\newcommand{\FVrec}{\mathsf{fv}_{\mathsf{r}}}

\newcommand{\range}{\mathsf{rng}}
\newcommand{\dom}{\mathsf{dom}}

\makeatletter
\def\bRightarrowfill@{\arrowfill@{}{}{\Rightarrow}}
\def\xyarrow#1#2#3#4#5{\ext@arrow 0959{#1}{#3}{#2}{\!\!}^{#4}_{#5}}
\def\xybRightarrow{\xyarrow\bRightarrowfill@}
\makeatother

\newcommand*{\exqed}{\hfill$\blacktriangleleft$}

\citestyle{acmauthoryear}

\usepackage{soul}

\newif\ifextendedversion 
\extendedversiontrue

\newif\ifarxiversion 
\arxiversiontrue 

\newif\ifnotarxiversion
\ifarxiversion
\notarxiversionfalse
\else
\notarxiversiontrue
\fi

\ifarxiversion
\renewcommand{\changed}[1]{#1}
\renewcommand{\revOne}[1]{#1}
\renewcommand{\revTwo}[1]{#1}
\fi

\title{Centralized vs Decentralized Monitors for Hyperproperties}

\author{Luca Aceto}
\orcid{https://orcid.org/0000-0001-8554-6907}
\affiliation{\institution{Dept. of Computer Science, Reykjavik University} \country{ Iceland}}
\affiliation{\institution{Gran Sasso Science Institute}\city{ L'Aquila}\country{ Italy}}
\email{luca@ru.is}

\author{Antonis Achilleos}
\orcid{https://orcid.org/0000-0002-1314-333X}
\affiliation{\institution{Dept. of Computer Science, Reykjavik University} \country{Iceland}}
\email{antonios@ru.is}

\author{Elli Anastasiadi}
\orcid{https://orcid.org/0000-0001-7526-9256}
\affiliation{\institution{Aalborg University} \city{Aalborg} \country{Denmark}}
\email{elli.anastasiadi@it.uu.se}

\author{Adrian Francalanza}
\orcid{https://orcid.org/0000-0003-3829-7391}
\affiliation{ \institution{University of Malta} \country{Malta}}
\email{adrian.francalanza@um.edu.mt}

\author{Daniele Gorla}
\orcid{https://orcid.org/0000-0001-8859-9844}
\affiliation{\institution{Dept. of Computer Science, ``Sapienza'' University of Rome} \country{Italy}}
\email{gorla@di.uniroma1.it}

\author{Jana Wagemaker}
\orcid{https://orcid.org/0000-0002-8616-3905}
\affiliation{\institution{iCIS, Radboud University} \city{Nijmegen} \country{Netherlands}}
\email{jana.wagemaker@ru.nl}

    \ccsdesc[500]{Theory of computation~Operational semantics}
    \ccsdesc[500]{Theory of computation~Modal and temporal logics}
    \ccsdesc[500]{Theory of computation~Logic and verification}
    
\keywords{Runtime Verification, hyperlogics, decentralization}


\begin{document}

\sloppy

\newcommand{\pandor}{\odot}
\newcommand{\por}{\oplus}
\newcommand{\pand}{\otimes}
\newcommand{\Loc}{{\mathcal L}}
\renewcommand{\Htrc}{\mathsf{HTrc}}
\newcommand{\valto}{\Rrightarrow}

\marginparsep 5pt
\marginparwidth 40pt
 \newcommand{\olds}[1]{\oldstylenums{#1}}
 \newcommand{\oldsb}[1]{{\bfseries\olds{#1}}}
 \newcommand{\mnote}[1]{\stepcounter{ncomm}%
 \vbox to0pt{\vss\llap{\tiny\oldsb{\arabic{ncomm}}}\vskip6pt}%
 \marginpar{\tiny\bf\raggedright%
 {\oldsb{\arabic{ncomm}}}.\hskip0.5em#1}}
 \newcounter{ncomm}%

\newcommand{\dan}[1]{\marginpar{ \textbf{Da:} {\footnotesize\em #1}}}
\newcommand{\jana}[1]{\marginpar{ \textbf{Ja:} {\footnotesize\em #1}}}
\newcommand{\antonis}[1]{\marginpar{ \textbf{An:} {\footnotesize\em #1}}}
\newcommand{\elli}[1]{\marginpar{ \textbf{El:} {\footnotesize\em #1}}}
\newcommand{\af}[1]{\marginpar{\textbf{Ad:} {\footnotesize\em #1}}}

\newcommand{\hd}{{\mathit hd}}
\newcommand{\tl}{{\mathit tl}}

\begin{abstract}
This paper focuses on the runtime verification of hyperproperties expressed in \HypermuHML, an expressive yet simple logic for describing properties of sets of traces. 
To this end, we consider a simple language of monitors that observe sets of system executions and report verdicts w.r.t. a given \HypermuHML formula. 
%
We first employ a unique omniscient monitor that centrally observes all system traces.
Since centralised monitors are not ideal for distributed settings, we also provide a language for decentralized monitors, where each trace has a dedicated monitor;
these monitors yield a unique verdict by communicating their observations to one another. 
For both the centralized and the decentralized settings, we provide a synthesis procedure that, given a formula, yields a monitor that is correct (i.e., sound and violation complete). 
A key step in proving the correctness of the synthesis for decentralized monitors is a result showing that, for each formula, the synthesized centralized monitor and its corresponding decentralized one are weakly bisimilar for a suitable notion of weak bisimulation.


\end{abstract}

\maketitle

\renewcommand\shortauthors{Aceto, Achilleos, Anastasiadi, Francalanza, Gorla, Wagemaker}

\begin{acks}
    This work has been supported by the project `Mode(l)s of Verification and Monitorability' (MoVeMent) (grant No \grantnum{irf}{217987}) of the \grantsponsor{irf}{Icelandic Research Fund}{}. Elli Anastasiadi's research has been supported by grant \grantnum{src}{VR 2020-04430}
    of the \grantsponsor{src}{Swedish Research Council}{}.
\end{acks}

\section{Introduction}
Runtime verification (RV)~\cite{Bartocci2018} is a verification technique that observes system executions to determine whether some given specification is satisfied or violated.
%
This runtime analysis is usually conducted by a computational entity called a \textit{monitor}~\cite{Francalanza21}.
RV is a lightweight verification technique 
that is carried out as the system under observation executes,
%
thereby avoiding scalability issues caused by the state-explosion problem, as is the case for model checking.
Recently, RV has been extended to parallel set-ups \cite{Bocchi_Tuosto_DistInteraction,Cassar2017ReliabilityAF,choreogr_monitors_as_mem}, and a large body of work in that setting aims to verify \textit{hyperproperties} at runtime \cite{circuit_monitors,monitoring_hyperLTL,complex_monitor_hyper,hyperproperties,monitor_hyper}.

Hyperproperties~\cite{hyperproperties} are sets of \emph{hypertraces}, \emph{i.e.}~sets of traces that may be seen as describing different system executions or the contributions of different sequential processes to a system execution.
As argued in~\cite{BonakdarpourSS18}, many properties of concurrent and distributed systems can be viewed as hyperproperties.
%
When verifying hyperproperties at runtime, several traces (i.e. several execution sequences) can be observed instead of just one, possibly at the same time. 
Several extensions of temporal logics, such as HyperLTL, HyperCTL$^*$~\cite{hyperLTL}, Hyper$^2$LTL \cite{BeutnerFFM23}, have been defined to express hyperproperties. 
%
Extensions of standard logics to hyper properties also include variations of the $\mu$-calculus, such as \cite{circuit_monitors}, setting the basis for the logic used in this paper, and \cite{Gustfeld21}, which studies an asynchronous semantics.
%

Since they were proposed by Clarkson and Schneider in~\cite{hyperproperties}, hyperproperties have become a fundamental, trace-based formalism for expressing  security and privacy properties, 
verified using static and dynamic techniques~\cite{AgrawalB16,BeutnerFFM23,BeutnerFFM24,monitoring_hyperLTL,BonakdarpourSS18,brett17,ChalupaH23,monitor_hyper} 
implemented in a variety of tools~\cite{BeutnerFFM24,FinkbeinerHST18,BeutnerF22}. 
There is a large body of work, such as~\cite{AgrawalB16,brett17,10.1007/978-3-030-17465-1_7}, detailing several algorithms for monitoring (fragments of) hyperlogics under  different assumptions and providing several correctness guarantees.
%
However, these proposals either construct a centralized monitoring algorithm that has access to all traces in the observed hypertrace, or verify single trace properties, over a distributed set-up\footnote{See e.g.~\cite{
BonakdarpourFRR22, BonakdarpourFRT16, FraigniaudRT20,FraGP13} for distributed monitoring algorithms for classic trace-based logics.}. 
Having an omniscient monitor simplifies the runtime analysis since the monitoring algorithm can compare all traces as needed by simply accessing different parts of its local memory. 
But this power comes with drawbacks. 
For starters, centralized monitors are unrealistic for distributed systems, where trace analysis is typically localised to network nodes so as to minimize communication across locations.
Moreover, centralized monitors
create single points of failure during verification~\cite{AcetoAFI24}. 
Furthermore,  it can be problematic to store all the traces locally, especially in light of the wide availability of multi-core systems.
The goal of the decentralized monitor synthesis from logical specifications presented in this paper is to permit distributed monitor choreographies with \emph{local} trace views whose components communicate in order to verify \emph{global} properties (such as hyperproperties). 
Decentralized monitors have been shown to avoid high contentions  leading to vastly improved scalability~\cite{AcetoAFI24}. 
They also offer better privacy guarantees whenever they are stationed locally at the nodes where the respective traces are generated~\cite{FraGP13,JiaGP16}. 
To the best of our knowledge, 
%
such a message-passing monitoring set-up has never been studied for the purpose of verifying hyperproperties so far.

In this paper, we study procedures for the 
\emph{automated synthesis of
centralized and decentralized monitors} from hyperproperties described
in the logic \HypermuHML~\cite{circuit_monitors}. 
%
This logic extends the linear-time~\cite{Vardi88} $\mu$-calculus~\cite{Kozen-paper}  (also known as Hennessy-Milner logic with recursion~\cite{Larsen-paper}) 
with constructs to describe properties
of hypertraces inspired by the work on HyperLTL (namely variables
ranging over traces, modal operators parametrized by trace variables, matching/mismatching between trace variables,
and existential and universal quantification over them).
%
\HypermuHML
can 
describe hyperproperties not expressible in HyperLTL or HyperCTL*, 
%
such as properties that speak about consensus (see Example \ref{ex:One}) and periodicity (see Example \ref{ex:Two}).
Furthermore, \HypermuHML\ supports a general, syntax-driven monitor synthesis that can handle both the aforementioned hyperproperties, at least in the centralized case (see also the discussion in Section \ref{sec:related_work}).

In both the centralized and decentralized set-ups, we work in the parallel model~\cite{monitor_hyper}, where a fixed number
of system executions is processed in parallel by monitors in an online
fashion. We specify monitors using a process-algebraic formalism that
builds on the one presented in \cite{AcetoPOPL19,HMLrec_mon} to define a class of monitors
called regular. Such monitors are easy to describe, resemble
(alternating) automata, and have sufficient expressive power to
provide standard monitoring guarantees. 
Moreover, their algebraic
structure supports the compositional definition of their operational
semantics and monitor synthesis procedures from formulas,
building on
previous work relating algebraic process calculi with RV~\cite{BocchiCDHY17,InoueY17,Fra17,AcetoAFIL19,Francalanza21,LanotteMM21,AcetoCFI23,LanotteMM23}.

In the centralized case, for each formula in the fragment of
\HypermuHML limited to greatest-fixed-point operators, our synthesis procedure yields a monolithic monitor that has access to all
the traces in an observed hypertrace. 
However, in order to synthesize decentralized monitors for a sufficiently
expressive fragment of the logic, it is necessary to extend the
monitor capabilities with communication, as
shown already in \cite{circuit_monitors}. 
For instance, to monitor for
the property
 ``If there is a trace where event $a$ occurs, then there
exists another trace where event $b$ does not occur thereafter'',
monitors observing different traces need to
communicate to record that event $a$ occurred in some
trace at some point and that there is some trace where $b$ does not
occur from that point onwards. Allowing monitors to send and receive messages
significantly complicates their operational semantics (see Section \ref{sec:decMon}), the monitor
synthesis procedure (see Section \ref{sec:decSyn}), and all consequent proofs. 
The operational semantics for communicating monitors is one of the main contributions of the paper since its design is crucial to obtain the correctness guarantees provided by the synthesis procedure for decentralized monitors.
In
particular, the semantics of decentralized monitors and their
synthesis from formulas have to be designed carefully to ensure that
monitors are reactive (they are always ready to process any
system event) and input-enabled (they can always receive any
input from other monitors in their environment), properties that are desirable in any decentralized RV set-up.

We show that both \emph{the centralized and the decentralized monitor
synthesis procedures are correct}.
More precisely, the monitors
synthesized from formulas are \emph{sound} and
\emph{violation-complete}, meaning that (1) if the monitor synthesized from a formula $\varphi$ reports a positive (resp., negative) verdict when observing a hypertrace $T$, then $T$ does (resp., does not) satisfy $\varphi$, and (2) if $T$ does not satisfy $\varphi$, then its associated monitor will report a negative verdict when observing $T$ (see Theorems \ref{thm:soundCentr} and \ref{thm:complCentr}, and Corollaries \ref{cor:soundDec} and \ref{cor:compDec}). The proof of correctness in the decentralized case is considerably more technical than the corresponding proof in the centralized setting, due to the intricate communication semantics. To address the resulting technical challenges, we develop a proof strategy where we prove the correctness of the decentralized monitor
synthesis procedure using the centralized one as a yardstick.

This methodology is one of the key contributions we offer in this study. More
precisely, in Section \ref{sec:princ} \emph{we identify six properties of a decentralized monitor
synthesis that make it `principled'} (see Definition \ref{def:principled-synthesis}) and we show that,
when a decentralized monitor synthesis is principled, the centralized
and decentralized monitors synthesized from a formula are related by a
suitable notion of weak bisimulation (Theorem \ref{thm:principled-bisimulation}). Apart from
supporting the definition of decentralized monitor synthesis
procedures, this result allows us to reduce the correctness of our
decentralized monitor synthesis to that of the centralized one, which can in turn drive the definition of further synthesis procedures in future work. 
We also conjecture that our methodology provides a path to proving similar results for other models of communicating monitors independent of the monitoring strategy. 
In summary, our contributions are the following:
\begin{itemize}
    \item a framework for monitoring hyperproperties by a central monitor that has access to all locations (\Cref{sec.cenMon})
    and 
    a decentalized monitoring set-up for hyperproperties, with monitors that communicate (\Cref{sec:decMon});
    \item a synthesis function that returns a correct centralized monitor for every formula without least fixed points (\Cref{sec.cenMon});
    \item a synthesis function that returns a correct (decentralized) choreography of communicating monitors  for every formula without least fixed points that has no location quantifier within a fixed point operator (\Cref{sec:decMon}); and
    \item a methodology to prove the correctness of a synthesis of communicating monitors, by establishing a list of desirable properties and relating the behavior of the decentralized monitors to that 
    of the corresponding centralized monitor (\Cref{def:principled-synthesis,thm:principled-bisimulation}).
\end{itemize}

This is an extended and revised version of \cite{AAAFGW24}. With respect to the extended abstract, here we provide 
\ifarxiversion
all the details and proofs,
\else
\changed{the most crucial details on proofs (for a complete account of all proofs, we refer the interested reader to \cite{aceto2024centralizedvsdecentralizedmonitors})}, 
\fi
we better justify \HypermuHML (through the formulation of a version of the non-interference property taken from \cite{hyperproperties} -- see Example \ref{ex:GMNI}), we elaborate on the safety properties monitorable in our framework (\Cref{sec:monHyperprop}), and we also explicitly include the decentralized synthesis of the diamond operator, that was omitted from the extended abstract (see however the drawback of omitting it in \Cref{rem:diamond}).

\section{The Model and the Logic}
\label{sec:model}

Let $\Act$ be a finite set of actions with at least two elements\footnote{When $\Act$ is a singleton, every property in the logic becomes 
equivalent to true or false.
}, ranged over by $a,b$;
the set of (infinite) traces over $\Act$ is $\Trc=\Act^\omega$, ranged over by $t$.
Given a finite and non-empty set of locations $\Loc$ ranged over by $\ell$, a hypertrace $T$ on $\Loc$ is a function from $\Loc$ to $\Trc$; the set of hypertraces on $\Loc$ is denoted by $\Htrc_\Loc$.
$\Loc$ and $\Act$ are fixed throughout this paper.
%
A hypertrace describes
a (distributed) system with $ |\Loc|$ users, and every user is located at a unique location chosen from $\Loc$.
A system behavior is 
captured
by a hypertrace $T$ on $\Loc$, mapping every user to the trace they perform.

For $t,t'\in \Trc$, we write $t\xrightarrow{a}t'$ whenever $t=at'$.
Let $A: \Loc \rightarrow \Act$; for $T, T' \in \Htrc_\Loc$, we write $T\xrightarrow{A}T'$ whenever $T(\ell)\xrightarrow{A(\ell)}T'(\ell)$, for every $\ell \in \Loc$.
Notice that, for each $T$, there is a {\em unique} pair $A$ and $T'$ such that $T\xrightarrow{A}T'$:
more precisely, for every $\ell \in \Loc$, we have that $A(\ell) = a$ and $T'(\ell) = t'$, whenever $T(\ell) = at'$. We denote the $A$ and $T'$ just defined by $\hd(T)$ and $\tl(T)$ respectively.
For a partial function $f: D \rightharpoonup E$ (where $D$ and $E$ are sets ranged over by $d$ and $e$, respectively), we denote by $\dom(f)$ the set $\{d \in D \mid f(d) \mbox{ is defined}\}$ and
by $\range(f)$ the set $\{e \mid \exists d \in \dom(f).~f(d) = e\}$.
Notation $f[d \mapsto e]$  denotes the (partial) function mapping $d$ to $e$ and behaving like $f$ otherwise.

\subsection{The Logic \HypermuHML}
\label{sec:fulllogic}

We consider \HypermuHML as the logic to specify {\em hyperproperties}.
 We assume two disjoint and countably infinite sets $\Pi$ and $V$ of {\em location variables} and {\em recursion variables}, ranged over by $\pi$ and $x$, respectively. Formulas of \HypermuHML are constructed as follows:
\[
	\varphi ::= \ttt \mid \ff \mid \varphi \land \varphi \mid \varphi \lor \varphi \mid \max x. \varphi \mid \min x. \varphi \mid x
\mid \exists \pi. \varphi \mid \forall \pi. \varphi \mid \pi=\pi \mid \pi\neq\pi \mid [a_\pi]\varphi \mid \langle a_\pi\rangle\varphi
\]

Apart from the basic boolean constructs,
we include the greatest and and least fixed-point operators to describe unbounded and/or infinite behaviors in a finitary manner,\footnote{In LTL, this behavior is captured by the ‘Until' and ‘Release' operators, but these are less expressive than fixed-points; see \cite{AcetoAFIL21:sosym}.} existential/universal quantifiers and equality/inequality tests on location variables, and the usual Hennessy-Milner modalities where
$[a_\pi]$ stands for `necessarily after $a$ at the location bound to $\pi$',  and $\langle a_\pi\rangle$ denotes  `possibly after $a$ at the location bound to $\pi$'.
%
A formula is said to be {\em guarded} if every
recursion variable appears within the scope of a modality within its fixed-point binding. All formulas are assumed to be guarded (without loss of expressiveness 
\cite{DBLP:journals/jacm/KupfermanVW00}).
We write $\FVloc(\varphi)$ to denote the free location variables of $\varphi$, and $\FVrec(\varphi)$ for the free recursion variables.

\begin{remark} We consider formulas where bound location variables are all pairwise distinct (and different from the free variables); hence, the formula $\forall\pi.[a_\pi]\exists\pi.\varphi$ denotes the formula $\forall\pi.[a_\pi]\exists\pi'.(\varphi\{^{\pi'}\!/_\pi\})$, where $\varphi\{^{\pi'}\!/_\pi\}$ stands for the capture-avoiding substitution of $\pi'$ for $\pi$ in $\varphi$. A similar notation for other kinds of substitutions is used throughout the paper. 
\exqed
\end{remark}

The semantics for formulas $\varphi$ in \HypermuHML is defined over $\Htrc_\Loc$ through the function $\sem{-}^\rho_\sigma$, as shown in \Cref{tab:semantics_hyper_mu_HML}, by exploiting two partial functions: $\rho\colon V\rightharpoonup 2^{\Htrc_\Loc}$, which assigns a set of hypertraces on $\Loc$ to all free recursion variables of $\varphi$, and $\sigma\colon \Pi\rightharpoonup  \Loc$, which assigns a location to all free location variables of $\varphi$. In what follows, we tacitly assume that the free recursion and location variables in a formula $\varphi$ are always included in $\dom(\rho)$ and $\dom(\sigma)$, respectively. 
\revOne{Function $\rho$ gives the semantics of recursion variables. Intuitively, $\rho(x)$ is the set of hypertraces that are assumed to satisfy $x$, as formally described on the first line of Table 1. Formulae of the form $\maxx {x}.\psi$ and $\min x.\psi$ are interpreted as the greatest and least fixed points of the function induced by their body $\psi$. Function $\sigma$ keeps track of the location associated with each location
variable; it is extended every time a new variable $\pi$ is introduced (through $\exists\pi$ or $\forall\pi$, resp.) to check whether the body of the quantification holds for some (resp., for all) locations.

All the other operators have quite a straightforward semantics. In particular, }
a formula $\langle a_\pi\rangle\varphi$ holds true at hypertrace $T$ if the trace in $T$ at the location bound to $\pi$ starts with an $a$ and $\tl(T)$ satisfies $\varphi$; by contrast, a formula $[ a_\pi]\varphi$ can also hold true if the trace in $T$ at the location associated to $\pi$ does not start with an $a$. 

\begin{table}
    \hrule
\begin{align*}
    & \sem{\ttt}^\rho_\sigma=\Htrc_\Loc \qquad\qquad\qquad\qquad \sem{\ff}^\rho_\sigma=\emptyset
    && \!\!\!\!\sem{x}^\rho_\sigma=\rho(x)
    \\
    & \sem{\varphi\wedge\varphi'}^\rho_\sigma =\sem{\varphi}^\rho_\sigma \cap\sem{\varphi'}^\rho_\sigma
    &&\!\!\!\! \sem{\varphi\vee\varphi'}^\rho_\sigma =\sem{\varphi}^\rho_\sigma \cup\sem{\varphi'}^\rho_\sigma
    \\
    &\sem{\maxx {x}.\psi}^\rho_\sigma=\bigcup\{S\mid S\subseteq \sem{\psi}^{\rho[x\mapsto S]}_\sigma\}
    && \!\!\!\!\sem{\minn {x}.\psi}^\rho_\sigma =\bigcap\{S\mid S\supseteq \sem{\psi}^{\rho[x\mapsto S]}_\sigma\}
   \\
       & \sem{\exists \pi.\varphi}^\rho_\sigma =\bigcup_{\ell\in \Loc} \sem{\varphi}^\rho_{\sigma[\pi\mapsto \ell]}
    &&
     \!\!\!\!\sem{\forall \pi.\varphi}^\rho_\sigma =\bigcap_{\ell\in \Loc} \sem{\varphi}^\rho_{\sigma[\pi\mapsto \ell]}
    \\
    & \sem{\pi=\pi'}^\rho_\sigma =\begin{cases}
        \Htrc_\Loc & \text{if } \sigma(\pi)=\sigma(\pi') \\ \emptyset & \text{otherwise}
        \end{cases} &&
     \!\!\!\!\sem{\pi\neq\pi'}^\rho_\sigma =\begin{cases}
            \Htrc_\Loc & \text{if } \sigma(\pi)\neq\sigma(\pi') \\ \emptyset & \text{otherwise}
            \end{cases}
     \\
    &  \sem{[a_\pi]\varphi}^\rho_\sigma = \{T \mid \hd(T)(\sigma(\pi)) = a \ \text{ implies } \tl(T)\in\sem{\varphi}^\rho_\sigma\}
    \qquad &&  
    \!\!\!\!\sem{\langle a_\pi\rangle\varphi}^\rho_\sigma = \{T \mid \hd(T)(\sigma(\pi))=a \ \wedge\ \tl(T) \in\sem{\varphi}^\rho_\sigma)\}
\end{align*}
\hrule\vspace{0.5em}
\caption{The semantics of \HypermuHML.}
\label{tab:semantics_hyper_mu_HML}
\end{table}

Whenever $\varphi$ is {\em closed} (i.e., without any free variable), the semantics is given by $\sem\varphi^\emptyset_\emptyset$, where $\emptyset$ denotes the partial function with empty domain. Notationally, we shall simply write $\sem\varphi$ instead of $\sem\varphi^\emptyset_\emptyset$. We say that $T$ satisfies the closed formula $\varphi$ if $T \in \sem{\varphi}$.

\begin{example}
\label{ex:One}
For example, consider the set of actions $\{a,b\}$; then, 
the hyperproperty 
\begin{equation}
\label{exp:property}
\varphi_a=
\forall \pi. \max x. \bigl(\langle b_\pi \rangle x \ \vee\ \exists \pi'. (\pi'\neq\pi\ \wedge\ \langle a_{\pi'} \rangle x)\bigr)
\end{equation}
is a consensus-type property stating that, 
at every position of every trace, whenever there is an $a$ there is another trace that also has $a$. 
Using the semantic definition 
of the logic, it is not hard to see that the hypertrace $T_1$ over the set of locations $\{\ell_1,\ell_2,\ell_3\}$ that maps $\ell_1$ to $a^\omega$, $\ell_2$ to $b a^\omega$ and $\ell_3$ to $(b a)^\omega$ does not satisfy the property $\varphi_a$: what breaks the property is the first position. On the other hand, the hypertrace $T_2$ that maps $\ell_1$ to $a^\omega$, $\ell_2$ to $(ab)^\omega$ and $\ell_3$ to $(b a)^\omega$ does satisfy $\varphi_a$ because at each position there are two traces that exhibit an $a$.
\exqed
\end{example}

\begin{example}
\label{ex:GMNI}
As a second example, we show how the well-known property of {\em non-interference}, in the formulation by Goguen and Meseguer from \cite{hyperproperties}, can be written in our logic.
The property requires that ``commands issued by users holding high clearances be removable without affecting observations of users holding low clearances''. They model this security policy as a hyperproperty by treating commands as inputs and observations as outputs, and by requiring that a system contains, for any trace $t$, a corresponding trace $t'$ with no high inputs, yet with the same low outputs as $t$:
$$
GMNI = \{ T \mid \forall t \in T .\exists t' \in T\ .\ (H_{in}(t') = \epsilon \wedge L_{out}(t) = L_{out}(t'))\} ~,
$$ 

In our setting, we assume that actions can be either high or low, i.e. that $\Act$ can be bipartitioned into $\{l^1,\ldots,l^m\}$ and $\{h^1,\ldots,h^n\}$.
Then, we define the auxiliary formula
$$
low\_agree(\pi,\pi') = 
max~x. \left(\ 
\bigwedge_{i=1}^m \big( \langle l^i_{\pi}\rangle\ttt \leftrightarrow \langle l^i_{\pi'}\rangle\ttt \big) 
\ \wedge\ 
\bigvee_{a \in \Act } [a_\pi]x
\ \right).
$$
This formula says that the two given locations $\pi$ and $\pi'$ exhibit the same low behaviour\revOne{, and this holds repeatedly, after any possible action (being it high or low)}.
By using it, {\em GMNI} can be formulated in \HypermuHML\ as follows:
$$
\varphi_{GMNI} = \forall \pi . \exists \pi'. \left(\ 
low\_agree(\pi,\pi')\ \wedge\ 
max~x. \left( \bigwedge_{i=1}^n[h^i_{\pi'}]\ff\ \wedge\ \bigwedge_{i=1}^m [l^i_{\pi'}] x \right)\ 
\right). 
$$
Formula $\varphi_{GMNI}$ states that, for every trace, there must exist another trace such that:
(1) the two completely agrees on the low actions (as expressed by the $low\_agree$ conjunct);
and (2) the second one cannot perform any high action (as expressed by the $max~x$ conjunct). 
\end{example}

\subsection{On the Expressiveness of \HypermuHML}
\label{section:expressiveness}

The logic \HypermuHML adapts linear-time \muHML~\cite{Larsen-paper} to express properties of hypertraces, just as HyperLTL and HyperCTL*~\cite{hyperLTL} are variations on LTL \cite{LTL} and CTL* \cite{CTLstar}, respectively, interpreted over hypertraces. 
It is well known that \muHML is more expressive than LTL and CTL*~\cite{LTL_expressivenes};
in a companion paper, we are working to formally show that the same result holds also in the hyper-setting, i.e. that 
\HypermuHML can express all properties that can be described using HyperLTL and HyperCTL*.
Here we confine ourselves to show that
\HypermuHML can express properties that cannot be described using HyperLTL and HyperCTL*. 

First, we recall that Wolper showed in \cite{LTL_expressivenes} that the property ``event $a$ occurs at all even positions in a trace'' cannot be expressed in LTL (see \cite[Corollary 4.2] {LTL_expressivenes} that is based on Theorem 4.1 in that reference).
 We will refer to this property as $\varphi_e$, where ``$e$'' stands for even,
 %
%
and 
adapt it to a hypertrace setting for proving strictness of the inclusion of HyperLTL in \HypermuHML.

\begin{example}
\label{ex:Two}

Let $\varphi_e^h$ be the hyperproperty on the set of actions $\{a,b\}$ that results 
from adding an existential trace quantifier $\exists \pi$ at the beginning of $\varphi_e$, and 
replacing all modalities with $\pi$-indexed ones: 
\begin{equation}
\label{ex:Whprop}
\varphi^h_e = \exists \pi.
\max x. \bigl([a_{\pi}]\langle a_{\pi} \rangle x \wedge [b_{\pi}]\langle a_{\pi} \rangle x\bigr) 
\end{equation}
This is a liveness property that 
describes the periodicity of events; when evaluated over singleton hypertraces, it coincides with the evaluation of $\varphi_e$. 
\exqed
\end{example}

\begin{theorem}
\label{prop:expressiveness}
\HypermuHML is more expressive than HyperLTL. 
\end{theorem}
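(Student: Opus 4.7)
The plan is to prove both containment---that every HyperLTL formula has an equivalent \HypermuHML formula---and strictness, using $\varphi_e^h$ from Example~\ref{ex:Two} as the separating witness.

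For the containment, I would define a compositional translation $\tau$ from HyperLTL to \HypermuHML by induction on formula structure. Trace quantifiers $\exists\pi,\forall\pi$ map directly to their \HypermuHML counterparts; Boolean connectives translate homomorphically. The atomic proposition $a_\pi$ (``the current action on trace $\pi$ is $a$'') translates to $\langle a_\pi\rangle\ttt$. The next operator $X\psi$, which synchronously advances all traces one step, is captured by $\bigwedge_{a\in\Act}[a_{\pi_0}]\tau(\psi)$ for any trace variable $\pi_0$ in scope; since the tail $\tl(T)$ does not depend on $\hd(T)$, this conjunction uniformly enforces $\tl(T)\in\sem{\tau(\psi)}$. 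The until $\psi_1\mathbin{U}\psi_2$ then translates to $\min x.\bigl(\tau(\psi_2)\lor(\tau(\psi_1)\land\bigwedge_{a\in\Act}[a_{\pi_0}]x)\bigr)$, and release dually to a greatest fixed point. A routine induction on $\psi$ shows $T\models\psi$ iff $T\in\sem{\tau(\psi)}$.

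For strictness, I would argue by contradiction. Suppose $\varphi_e^h$ is equivalent to some HyperLTL formula $\psi$. Consider the subclass of \emph{constant} hypertraces $T_t$ defined by $T_t(\ell)=t$ for all $\ell\in\Loc$. A key lemma states that, on constant hypertraces, every HyperLTL formula $\psi$ becomes equivalent to a pure LTL formula $\psi^\star$ (obtained by erasing trace quantifiers and stripping trace indices from atomic propositions) evaluated on the common trace $t$; intuitively, since every $\sigma(\pi)$ picks out the same trace $t$, the choice of witness for $\exists\pi$ and $\forall\pi$ is irrelevant. On the other hand, as observed after Example~\ref{ex:Two}, evaluating $\varphi_e^h$ on $T_t$ reduces to Wolper's property $\varphi_e$ on $t$. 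Hence $\psi^\star$ would be an LTL formula defining $\varphi_e$, contradicting~\cite[Corollary~4.2]{LTL_expressivenes}.

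The main technical obstacle is this collapse lemma for HyperLTL on constant hypertraces. It requires induction on the HyperLTL formula, carefully tracking trace assignments $\sigma$ and showing that the semantics of every subformula depends only on $t$ and not on $\sigma$. Care is needed when the HyperLTL formula mixes subformulas indexed by distinct variables $\pi\neq\pi'$, since one must verify that the indexed atomic propositions $a_\pi$ and $a_{\pi'}$ always refer to the same trace under a constant hypertrace, and that equality tests between trace variables trivialise. The containment translation is conceptually standard, mirroring the classical encoding of LTL into the $\mu$-calculus, but still requires verifying each case against the formal semantics in Table~\ref{tab:semantics_hyper_mu_HML}.
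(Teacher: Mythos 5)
Your separation argument is essentially the one the paper uses: assume a HyperLTL formula equivalent to $\varphi_e^h$, restrict attention to hypertraces in which every location carries the same trace $t$ (the paper calls these ``singleton hypertraces'' $T=\{t_0\}$, you call them constant hypertraces $T_t$), observe that every trace quantifier is then forced onto $t$ so the HyperLTL formula collapses to a plain LTL formula on $t$, note that $\varphi_e^h$ collapses to Wolper's $\varphi_e$, and contradict \cite[Corollary~4.2]{LTL_expressivenes}. Your ``collapse lemma'' is just a more explicit, induction-ready packaging of the step the paper states informally (``all propositional variables \ldots are now interpreted to hold on the same trace''), so on that part you and the paper coincide.

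Where you genuinely diverge is the containment direction. The paper deliberately does \emph{not} prove that every HyperLTL formula is expressible in \HypermuHML; it states explicitly that this inclusion is being worked out in a companion paper, and the proof given for the theorem establishes only the existence of a \HypermuHML-definable property that HyperLTL cannot express. Your compositional translation $\tau$ (atoms to $\langle a_\pi\rangle\ttt$, next to $\bigwedge_{a\in\Act}[a_{\pi_0}]\tau(\psi)$, until to a least fixed point, release to a greatest fixed point) is the standard LTL-to-$\mu$-calculus embedding lifted pointwise to trace-indexed modalities, and it looks sound against Table~\ref{tab:semantics_hyper_mu_HML}; it does yield guarded formulas as required. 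If carried out in full it would make the theorem self-contained in a way the paper's own proof is not, though you would need to be slightly careful about the mismatch in semantic domains (HyperLTL is interpreted over arbitrary trace sets, whereas \HypermuHML hypertraces are $\Loc$-indexed families for a fixed finite $\Loc$) when stating what ``equivalent'' means. No gap in the part that corresponds to the paper's argument.
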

\begin{proof}
  
By contradiction,
assume that a formula $\varphi_{h-LTL}$ expresses exactly that there exists a trace in a hypertrace such that $a$ holds on even positions.  
We cannot trivially claim that this would be expressed with a single quantifier $\exists \pi$, but we know that over singleton hypertraces, say for $T= \{t_0 \}$, $T \models \varphi_{h-LTL}$ iff $T \models \varphi^h_e $.
However, since $T$ contains only a single trace, we know that all the trace variables in $\varphi_{h-LTL}$ must be mapped to $t_0$.
Consequently, all propositional variables that occur in $\varphi_{h-LTL} $ are now interpreted to hold on the same trace. 
Therefore, for this variable mapping, we get an LTL formula which expresses exactly that the single trace (that all trace variables have been mapped to) satisfies Wolper's property. 

Therefore, we can express the same formula in plain LTL, by replacing all propositional variables with non-trace quantified ones, and arrive at a contradiction. 
%
\end{proof}

For HyperCTL* we exploit the possibility of the latter to use quantifiers in any part of a formula.
This has already been shown in the hyperproperty $\varphi_a$ defined in \eqref{exp:property}: such a formula can potentially spawn an unbounded number of quantifiers,  by unfolding the recursion when encountering $a$ events. 

\begin{theorem}
\HypermuHML is more expressive than HyperCTL*. 
\end{theorem}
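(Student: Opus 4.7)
The plan is to use the consensus formula $\varphi_a$ from \Cref{ex:One} as the separating witness and to show that $\varphi_a$ is not expressible in HyperCTL*. The key syntactic observation is that any HyperCTL* formula $\psi$ contains a fixed, finite number of path quantifiers, say $k$, whereas semantically $\varphi_a$ requires the ability to ``spawn'' fresh existential quantifiers at every step where the recursion is unfolded on an $a$-event, potentially demanding unboundedly many nested witnesses.

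I would argue by contradiction. Suppose some HyperCTL* formula $\psi$ of path-quantifier rank $k$ were equivalent to $\varphi_a$. The strategy is to construct, for each $n > k$, a pair of hypertraces $(T_n, T_n')$ on a sufficiently large location set $\Loc$ such that:
\begin{itemize}
\item $T_n \in \sem{\varphi_a}$ but $T_n' \notin \sem{\varphi_a}$, with the violation in $T_n'$ occurring only at a position whose detection requires $n$ successive existential witnesses along an $a$-cascade;
\item $(T_n, T_n')$ are indistinguishable by every HyperCTL* formula of path-quantifier rank at most $k$.
\end{itemize}
The second point, which is the technical heart of the argument, would be established via an Ehrenfeucht--Fra\"iss\'e game tailored to HyperCTL*: each round of the game corresponds to a path quantifier in the formula, and the duplicator has a winning strategy in the $k$-round game on $(T_n, T_n')$ whenever $n$ exceeds $k$. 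The contradiction is then immediate, since $\psi$ would have to distinguish $T_n$ from $T_n'$, yet by the game argument no rank-$k$ formula can do so.

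The main obstacle is the design of the hypertrace family $\{(T_n, T_n')\}_n$ so that checking $\varphi_a$ on $T_n$ forces a chain of $n$ distinct trace witnesses (each new witness carrying the $a$ needed to continue the unfolding), while the temporal operators of HyperCTL* alone cannot expose the break at depth $n$ with fewer than $n$ quantifier resources. A subtle point is that HyperCTL* allows re-quantification under the scope of temporal operators such as $G$, so the duplicator's winning strategy must track both the location choices and the temporal positions jointly, ensuring that short-cutting the chain by re-binding a path variable at each time step does not help the spoiler. Once the family is built and the game argument is verified, the strict inclusion of HyperCTL* in \HypermuHML\ follows from the equivalence to $\varphi_a$ stated earlier.
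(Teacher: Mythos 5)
Your proposal diverges substantially from the paper's proof, and as it stands it has a genuine gap: everything that would make it a proof is deferred. The paper does not prove any new inexpressibility result at all; it picks the property ``there is an $n \geq 0$ such that $a \neq t(n)$ for every $t \in T$'', cites the literature for the fact that this property is not expressible in HyperCTL* (nor in HyperLTL), and then simply exhibits a formula of \HypermuHML{} (with a $\min$ fixed point) that defines it. You instead try to prove from scratch that the consensus formula $\varphi_a$ is not HyperCTL*-definable, via a family of hypertrace pairs and an Ehrenfeucht--Fra\"iss\'e game. But the two steps that carry all the weight --- the construction of the family $\{(T_n,T_n')\}_n$ and the verification of the duplicator's winning strategy --- are exactly the ones you label as ``the main obstacle'' and leave open. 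A proof plan whose hard kernel is unexecuted is not a proof of the theorem.

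Worse, there is reason to doubt the plan can be executed for $\varphi_a$ in particular. Your quantifier-rank counting presupposes that a formula with $k$ syntactic path quantifiers can supply only $k$ witnesses, but HyperCTL* permits quantification in the scope of temporal operators, so a single occurrence of $\exists\pi'$ under a $G$ is re-instantiated at every time step and yields unboundedly many position-indexed witnesses. Since $\varphi_a$ asks, position by position, for the existence of some other trace exhibiting $a$ at that position --- with no requirement that the witnesses chosen at different positions be related to one another --- it is far from clear that a rank-two formula of the shape $\forall\pi.\,G(\dots \vee \exists\pi'.\dots)$ fails to capture it (modulo the $\pi'\neq\pi$ test, which is a separate and more superficial issue). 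You acknowledge this subtlety but do not resolve it, and resolving it is precisely the mathematical content you would need. The safe repair is the route the paper takes: choose a property whose HyperCTL*-inexpressibility is already established in the literature and show that it is definable in \HypermuHML.
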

\begin{proof}
To prove this, we refer the reader to \cite{epistemic_hyper}, where it is shown that the property ``there
is an $n \geq 0$ such that $a \neq t(n)$ for every $t \in T$'' is not expressible in HyperCTL* (notice that this property was also shown not to be expressible in HyperLTL by \cite{Finkbeiner017}).
In our logic this property is expressible (over the set of actions $\{a,b\}$) with the formula: 
$$\min x. ((\forall \pi. \langle b_{\pi}\rangle \ttt )\vee (\forall \pi'. ( [ a_{\pi'}]x \wedge [b_{\pi'}]x))~. $$

Note that this formula can also be simplified. We choose to write it like this so that it is explicit that either all traces have $b$, or all traces take a step. However the following formulation is also valid, and only uses one quantifier:
$$\min x. ((\forall \pi . ((\langle b_{\pi}\rangle \ttt )\vee ( [ a_{\pi}]x \wedge [b_{\pi}]x))~. 
\vspace*{-.7cm}
$$
\end{proof}

To conclude, we remark that this additional expressiveness of \HypermuHML\ is present in the fragments for which we synthesize monitors, as we shall see in the next sections. 

\section{Centralized Monitoring}
\label{sec.cenMon}

We first present a centralized, omniscient monitor that enforces hyperproperties expressed in \HypermuHML; here, the monitoring algorithm can compare all traces as needed by simply accessing different parts of its local memory. 
 
 \subsection{Centralized Monitors: Syntax and Operational Semantics}
 \label{sec:cmonitors}

The set of centralized monitors $\CMon$ is given by the following grammar:
	\begin{align*}
\CMon \ni m &::= \yes ~\mid~ \no ~\mid~ \vend ~\mid~ a_\ell.m ~\mid~ m+m ~\mid~ m \por m ~\mid~ m\pand m ~\mid~ \rec{x}.m ~\mid~ x
\end{align*}

{\em Verdicts}, taken from the set $\{\yes, \no, \vend\}$ and ranged over by $v$, are the possible outcomes of the monitor observation: it can be $\yes$, if the hyperproperty associated to the monitor is satisfied, $\no$, it if is not, or $\vend$, if the monitor has not enough evidence to conclude either $\yes$ or $\no$. The prefixed monitor $a_\ell.m$ checks if $\ell$ is currently performing an $a$ and, in case, continues as $m$. We can compose monitors by using
\revOne{
non-deterministic choice, used to specify alternative and mutually exclusive observation capabilities, and
parallel-and and parallel-or, used to build parallel monitors that combine their verdicts according to a conjunctive or disjunctive strategy, respectively.
}
Notationally, we denote with $\pandor$ any of $\pand$ and $\por$.
Finally, monitors can also be recursively defined.

\begin{table}
\hrule
\begin{mathpar}
 v \xrightarrow A v \and
\inferrule{A(\ell) = a}{a_\ell.m \xrightarrow A m}
\and
\inferrule{A(\ell) \neq a}{a_\ell.m \xrightarrow A \vend}
\and
\inferrule{m\{^{\rec x.m}/_x\} \xrightarrow A m'}{\rec x.m \xrightarrow A m'}
\and
\inferrule{m \xrightarrow A m'}{m+n \xrightarrow A m'}
 \and
\inferrule{n \xrightarrow A n'}{m+n \xrightarrow A n'}
\and
\inferrule{m \xrightarrow A m' \\ n \xrightarrow A n'}{m \pandor n \xrightarrow A m' \pandor n'}
\end{mathpar}
\hrule\vspace{0.5em}
\caption{The operational semantics for centralized monitors, where $\pandor \in \{\pand,\por\}$.}
\label{tab:centralized_operational_semantics}
\end{table}

\begin{table}
 \hrule
\begin{minipage}{0.67\textwidth}
\center
\[\begin{array}{llll}
    v \valto v
    &
      \inferrule{m \valto \vend \\\\ n\valto \vend}{m\pandor n\valto \vend}
   &
    \inferrule{m \valto \yes}{m \por n \valto \yes}
     & 
     \inferrule{m \valto \no}{m\pand n\valto \no}
     \vspace*{.2cm}
     \\
    \inferrule{m \valto v}{m + n\valto v}
     & 
     \inferrule{m \valto \no \\\\ n\valto v}{m \por n \valto v}
     &
     \inferrule{m \valto \yes \\\\ n\valto v}{m \pand n \valto v}
      &
      \inferrule{m\{^{\rec x.m}/_x\} \valto v}{\rec x.m \valto v}
      \vspace{0.5em}
\end{array}\]
\caption{Verdict evaluation for centralized monitors (up to commutativity of $+$, $\pand$, and $\por$).}
\label{tab:verdict_evaluation_semantics}
\end{minipage}
\quad
\begin{minipage}{0.3\textwidth}
\vspace{0.2em}
\[\begin{array}{lll}
    \inferrule{m \xrightarrow{A} m' \\ T\xrightarrow{A}T'}
    { m\triangleright T \rightarrowtail m'\triangleright T'}
     \vspace*{.5cm}
   \\
   \inferrule{m \valto v}
   {m\triangleright T \rightarrowtail v}
\vspace{0.5em}
\end{array}\]
\caption{The instrumentation rules for centralized monitors.}
\label{tab:instrumentation}
\end{minipage}
\hrule
\end{table}

The operational semantics of centralized monitors is given in \Cref{tab:centralized_operational_semantics}
\revOne{and follows the intuitive explanations we gave above for the different operators. In particular, a monitor that waits for an action at some location (as prescribed by $a_\ell$) can either see that action at $\ell$  (as stated by $A$) and proceed in its observation, or it does not observe that action and stops its monitoring activity, by reporting $\vend$. Recursive monitors should be unfolded to evolve. A non-deterministic choice $m + n$ can initially behave like $m$ and discard $n$ in doing so or vice versa. Finally, once issued, a verdict never changes; possibly different verdicts obtained in different parallel branches of the monitor can be composed conjunctively or disjunctively.}
This is represented by the judgement $\valto$, whose intended use is to evaluate monitors and reach a verdict, whenever possible. The rules are given in \Cref{tab:verdict_evaluation_semantics}.
\revOne{They state that an $\vend$ verdict can be obtained only when combining two (or more) end verdicts, that $\yes$ and $\no$ are the absorbing elements for $\por$ and $\pand$, respectively, and the neutral elements for $\pand$ and $\por$, respectively, that verdict evaluation of a non-deterministic monitor is non-deterministic as well, and that recursive monitors must be unfolded before they can be properly evaluated.}
Also notice that there can be multiple ways to infer the same verdict for the same monitor: \changed{e.g., for $\yes \por \no$ we can either use the third rule of the first line or the (symmetric version of the) second rule from the second line of \Cref{tab:verdict_evaluation_semantics}.} However, the inferred value is of course the same (i.e., $\yes$, in the previous situation).

We instrument a monitor $m$
on a hypertrace $T$  based on the rules of \Cref{tab:instrumentation}. 
As usual, we write $\rightarrowtail^* $ for the reflexive-transitive closure of $\rightarrowtail$. 

\subsection{From Formulas to Centralized Monitors}
\label{sec:centrSyn}

We derive monitors for the subset of formulas without least fixed-points, denoted with \WTFPHypermuHML.
More precisely, given a formula $\varphi$, we want to derive a monitor that, when monitoring a hypertrace $T$, returns $\no$ if and only if $T$ does not belong to the semantics of $\varphi$; furthermore, if it returns $\yes$, then $T$ belongs to the semantics of $\varphi$.
All regular properties of infinite traces that can be monitored for violations with the aforementioned guarantees can be expressed without using least fixed-point operators 
(see the maximality results presented in
~\cite[Proposition~4.18]{AcetoPOPL19} and~\cite[Theorem~5.2]{AcetoAFIL21:sosym} in the setting of logics interpreted over infinite traces).
%
Intuitively, we use least fixed-points to describe liveness properties, whose violation does not have a finite witness in general.

The definition of the synthetized monitor is given by induction on $\varphi$. This definition is parametrized by a  partial function $\sigma$, assigning a location to all the free location variables of $\varphi$; when $\varphi$ is closed, we consider $\Mcv{\varphi}{\emptyset}$. The formal definition is 
given in \Cref{tab:synthesis}.
\revOne{A monitor synthetized from a greatest-fixed-point formula is itself recursive and intuitively checks whether some unfolding of the formula is violated. The monitor for $\varphi\wedge \varphi'$ (respectively, $\varphi\vee \varphi'$) is obtained as the parallel-and (respectively, the parallel-or) of the monitors synthesized from $\varphi$ and  $\varphi'$. Universal and existential quantifiers are treated as conjunctions and disjunctions, respectively, and use the function $\sigma$ to assign values to the newly introduced location variable. 
Finally, the monitors for formulae of the form $[a_\pi].\varphi$ and  $\langle a_\pi \rangle.\varphi$ look for an occurrence of action $a$ at (the location bound to) $\pi$; if such action is observed at that location, then the monitor proceeds by checking the rest of the formula, otherwise the monitor for  $\langle a_\pi \rangle.\varphi$ returns $\no$ whereas the monitor for $[a_\pi].\varphi$ returns $\yes$.
Notice that we are using the choice operator ‘$+$' here to consider only one of the possible observations (corresponding to the action occurring at the location $\sigma(\pi)$) and discarding all the other ones.}


\begin{example}
\label{ex:Wcentr}
Let 
$\mathcal{L} = \{ 1, 2\}$ and $\Act = \{a,b\}$, and consider
the formula \eqref{ex:Whprop}.
The 
monitor synthesis in \Cref{tab:synthesis} produces the following monitor $m$ when applied to that formula:
$$ m  =  \bigoplus_{{\ell \in \{1,2\}}}  \rec x. (
                    (a_{\ell}. (a_{\ell} .x + b_{\ell}.\no) + b_{\ell} . \yes) \otimes (b_{\ell}. (a_{\ell} .x + b_{\ell}.\no) + a_{\ell} . \yes)) .
$$
When monitor $m$ is instrumented with the hypertrace $T$ mapping
location $1$ to $a^\omega$ and location $2$ to $(ab)^\omega$, the
verdict $\no$ cannot be reached: indeed, $T$ satisfies the formula $\varphi$
since the trace at location $1$ has $a$ at all positions. On the
other hand, when $m$ is instrumented with the hypertrace $T'$ mapping
location $1$ to $b^\omega$ and location $2$ to $(ab)^\omega$, the $\no$
verdict is reached after the monitor has observed the first two
actions at locations 1 and 2; this is in line with the fact that $T'$
does not satisfy $\varphi^h_e$. \exqed
\end{example}

\begin{table}
\hrule
\[\def\arraystretch{1.3}\begin{array}{r@{\,}lr@{\,}l}
    & \Mc{\ttt} = \yes \quad\quad
    \Mc{\ff} = \no    \quad
& \Mc{x} = x \qquad
\Mc{\max x.\varphi} = \rec x.\Mc{\varphi} 
\\
& \Mc{\varphi\wedge \varphi'} = \Mc{\varphi} \pand \Mc{\varphi'} \quad
& \Mc{\varphi\vee\varphi'} =\Mc{\varphi} \por \Mc{\varphi'}
\\ 
& \Mc{\forall\pi.\varphi} =  \bigotimes_{\ell \in \Loc} \Mcv{\varphi}{\sigma[\pi \mapsto \ell]}  
& \Mc{\exists\pi.\varphi} =  \bigoplus_{\ell \in \Loc} \Mcv{\varphi}{\sigma[\pi \mapsto \ell]}
\\
& \Mc{\pi=\pi'} = \begin{cases}
\yes~ \mbox{if } \sigma(\pi)=\sigma(\pi')
\\
\no~ \mbox{otherwise }
\end{cases}
&
\Mc{\pi\neq\pi'}  = \begin{cases}
\yes ~ \mbox{if } \sigma(\pi)\neq\sigma(\pi') \\
\no ~ \mbox{otherwise }
\end{cases}
\\ & \Mc{[a_\pi]\varphi} = 
a_{\sigma(\pi)}.\Mc{\varphi} + \sum_{b \neq a} b_{\sigma(\pi)}.\yes &
\Mc{\langle a_\pi \rangle\varphi} =  
    a_{\sigma(\pi)}.\Mc{\varphi} + \sum_{b \neq a} b_{\sigma(\pi)}.\no 
\end{array}\]
\hrule \vspace{0.5em}
\caption{Centralized monitor synthesis.}
\label{tab:synthesis}
\end{table}

The main results of this section are that the centralized monitors
synthesized from formulas report sound verdicts and their verdicts are
complete for formula violations. We refer the reader to~\cite{AcetoAFIL21:sosym} for a discussion on notions of correctness for monitors and the significance of soundness and violation-completeness.
\revTwo{The main point is that, already in the setting of  \muHML, it is possible to have soundness and completeness only for formulae without fixed points; hence,  satisfaction completeness and violation completeness were introduced to weaken the completeness requirement and having the possibility of monitoring also some recursive formulae.}

To prove soundness (\Cref{thm:soundCentr}), we first need three lemmata: 
the first one claims soundness of synthesized monitors when they have not taken any transition yet;
the second one claims that  synthesized monitors always transition to other synthesized monitors;
the last one claims that the relation between the hypertrace and the semantics of the formula are carried over through the instrumentation steps.
The proofs of these results have been relegated to \Cref{app:soundness}, to streamline reading.


\begin{textAtEnd}[allend, category=soundness]
We need a few preliminary lemmata to prove  \Cref{thm:soundCentr}.
We first show that centralized monitors never output the $\vend$ verdict if they are $\vend$-free themselves :
\begin{lemma}\label{lemma:valtoverdict}
    If $m$ is an $\vend$-free monitor and $m\valto v$, then $v=\no$ or $v=\yes$.
\end{lemma}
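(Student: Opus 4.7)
My plan is to prove the lemma by induction on the derivation of $m \valto v$, using the rules in \Cref{tab:verdict_evaluation_semantics}. Being $\vend$-free means that the constant $\vend$ does not occur as a subterm of $m$ (and, since we deal with closed recursion, this property is preserved by unfolding, as I explain below).

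The case analysis proceeds on the last rule applied in the derivation. In the axiom $v \valto v$, we must have $m = v$; since $m$ is $\vend$-free, $v \neq \vend$, so $v \in \{\yes,\no\}$ as required. In the rule that concludes $m \pandor n \valto \vend$, both premises $m \valto \vend$ and $n \valto \vend$ must hold; but $m$ and $n$ are subterms of the $\vend$-free monitor $m \pandor n$, hence they are themselves $\vend$-free, and the induction hypothesis rules out a $\vend$ verdict for either. This rule is therefore never applicable under our assumptions. For the rules with conclusions $m\por n \valto \yes$ and $m\pand n \valto \no$, the output verdict is directly $\yes$ or $\no$, so the claim is immediate. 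For the rules $m + n \valto v$, $m\por n \valto v$ (from $m\valto \no$ and $n\valto v$), and $m\pand n \valto v$ (from $m\valto \yes$ and $n\valto v$), the verdict $v$ comes from a premise of the form $n \valto v$ (or $m \valto v$), whose subject is a $\vend$-free subterm of the original monitor; the induction hypothesis yields $v \in \{\yes,\no\}$.

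The only remaining case is the recursion rule, with conclusion $\rec x.m \valto v$ from the premise $m\{^{\rec x.m}/_x\} \valto v$. The key sub-claim I will use here is that if $\rec x.m$ is $\vend$-free, then $m\{^{\rec x.m}/_x\}$ is also $\vend$-free: indeed, $m$ is $\vend$-free as a subterm of $\rec x.m$, and capture-avoiding substitution of the $\vend$-free monitor $\rec x.m$ for the variable $x$ cannot introduce any occurrence of $\vend$. Hence the induction hypothesis applies to the premise and yields $v \in \{\yes,\no\}$.

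The main (and only mildly nontrivial) step is the recursion case, because the unfolded monitor $m\{^{\rec x.m}/_x\}$ is strictly larger than $\rec x.m$, so the induction must be carried out on the derivation of $\valto$ rather than on the structure of $m$. Since $\valto$ is defined inductively by the rules of \Cref{tab:verdict_evaluation_semantics}, every derivation tree is finite, and induction on its height (or structure) goes through smoothly.
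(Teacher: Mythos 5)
Your proof is correct. The paper in fact states this lemma without giving any proof (treating it as immediate), and your argument by induction on the derivation of $\valto$ — with the two observations that subterms of an $\vend$-free monitor are $\vend$-free and that the unfolding $m\{^{\rec x.m}/_x\}$ of an $\vend$-free $\rec x.m$ remains $\vend$-free — is exactly the routine argument the authors are implicitly relying on, in the same style as the neighbouring inductions on $\valto$ (e.g.\ the proof of the soundness lemma for $\Mc{\varphi}\valto v$).
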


We link the semantics of a formula to the special case where the synthesized centralized monitor is a verdict, which we use in various subsequent proofs:
\begin{lemma}\label{lemma:verdict}
If $\Mc{\varphi}=\yes$, then $\sem{\varphi}_\sigma^\rho=\Htrc_\Loc$, for all $\rho$; if $\Mc{\varphi}=\no$, then $\sem{\varphi}_\sigma^\rho=\emptyset$, for all $\rho$.
\end{lemma}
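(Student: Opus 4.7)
The plan is a straightforward structural induction on $\varphi$, exploiting the observation that the synthesis function of \Cref{tab:synthesis} produces a term syntactically equal to a verdict only in a handful of cases. Scanning the clauses, the synthesized monitor $\Mcv{\varphi}{\sigma}$ has verdict form at the root exclusively when $\varphi \in \{\ttt, \ff\}$ or when $\varphi$ is a location-variable (in)equality test; in every other clause the result has a root symbol that is a prefix, a choice, a parallel operator, or $\rec$, none of which is a verdict.

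For the base cases where a verdict actually arises, both implications are immediate. If $\varphi = \ttt$, then $\Mc{\varphi}=\yes$ and $\sem{\varphi}_\sigma^\rho = \Htrc_\Loc$ by definition; symmetrically for $\varphi = \ff$. For $\varphi = (\pi = \pi')$, the synthesis yields $\yes$ exactly when $\sigma(\pi)=\sigma(\pi')$, which is also exactly the condition under which $\sem{\varphi}_\sigma^\rho = \Htrc_\Loc$; otherwise both clauses give $\no$ and $\emptyset$. The case $\varphi = (\pi \neq \pi')$ is analogous.

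For the remaining inductive cases $\varphi \in \{x,\ \max x.\psi,\ \psi\land\psi',\ \psi\lor\psi',\ [a_\pi]\psi,\ \langle a_\pi\rangle\psi\}$, the root symbol of $\Mcv{\varphi}{\sigma}$ is not a verdict, so the hypothesis of the implication is false and the statement holds vacuously. The only mildly delicate cases are the quantifiers $\forall\pi.\psi$ and $\exists\pi.\psi$, whose synthesis is a big parallel composition $\bigotimes_{\ell\in\Loc}$ or $\bigoplus_{\ell\in\Loc}$. When $|\Loc|\geq 2$ the root is $\pand$ or $\por$ and again the hypothesis is vacuously false. When $|\Loc|=1$, say $\Loc=\{\ell_0\}$, the big operator degenerates to its unique argument $\Mcv{\psi}{\sigma[\pi\mapsto\ell_0]}$, while the semantics correspondingly collapses to $\sem{\psi}_{\sigma[\pi\mapsto\ell_0]}^\rho$; the claim then follows directly from the induction hypothesis on $\psi$ with the extended environment.

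No step presents a real obstacle: verdicts appear in the image of the synthesis only at the leaves of the formula tree, so the entire proof reduces to a shallow case analysis. The only point meriting explicit care is the degenerate single-location quantifier subcase, where one must unfold the notation $\bigotimes_{\ell\in\Loc}$ and $\bigoplus_{\ell\in\Loc}$ and appeal to the inductive hypothesis with a suitably updated $\sigma$.
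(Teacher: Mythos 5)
Your proof is correct and follows essentially the same route as the paper's: a structural induction in which verdicts can only arise at the leaves ($\ttt$, $\ff$, and the location (in)equality tests) or, in the single inductive case that is not vacuous, through a quantifier over a singleton $\Loc$, handled by the induction hypothesis with the extended environment. The paper's own proof is exactly this case analysis, stated slightly more tersely.
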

\ifarxiversion
\begin{proof}
    We proceed by induction on $\varphi$. If $\Mc{\varphi}=\yes$, then the definition of the synthesis function (see \Cref{tab:synthesis}) yields the following base cases: $\varphi=\ttt$, or $\varphi=(\pi=\pi')$ and $\sigma(\pi)=\sigma(\pi')$, or $\varphi=(\pi\neq\pi')$ and $\sigma(\pi)
    \neq\sigma(\pi')$. In all cases we immediately conclude that $\sem{\varphi}_\sigma^\rho=\Htrc_\Loc$, for all $\rho$. 

    In the inductive cases we only need to consider $\Loc=\{\ell\}$ and $\varphi=\forall \pi.\varphi'$ or $\varphi=\exists\pi.\varphi'$ where $\Mc{\varphi}=\Mcv{\varphi'}{\sigma[\pi\mapsto\ell]}=\yes$ (because otherwise $\Mc{\varphi}\neq \yes$). In both cases we obtain from the induction hypothesis that $\sem{\varphi}_\sigma^\rho=\sem{\varphi'}_{\sigma[\pi\mapsto\ell]}^\rho=\Htrc_\Loc$. 

    The proof where $\Mc{\varphi}=\no$ is similar.
\end{proof}
\fi

We show that if synthesized monitors of two formulas are equal, then these two formulas must have the same semantics, which is used to prove \Cref{cor:semantics}.
\begin{lemma}\label{lemma:nochange}
    Let $\varphi$ and $\psi$ be (possibly open) formulas. If $\Mc{\varphi}=\Mcv{\psi}{\sigma'}$, then $\sem{\varphi}_\sigma^\rho=\sem{\psi}_{\sigma'}^\rho$, for all $\rho$.
\end{lemma}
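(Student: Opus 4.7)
The plan is to proceed by structural induction on $\varphi$, performing in each case a secondary analysis on the structure of the monitor $\Mc{\varphi} = \Mcv{\psi}{\sigma'}$, from which I deduce the possible shapes of $\psi$ and then apply either the induction hypothesis or Lemma \ref{lemma:verdict}.

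In the base cases where $\varphi \in \{\ttt, \ff, \pi=\pi', \pi \neq \pi'\}$, the synthesis yields a verdict $v \in \{\yes,\no\}$. Since $\Mcv{\psi}{\sigma'} = v$, Lemma \ref{lemma:verdict} gives $\sem{\psi}_{\sigma'}^\rho \in \{\Htrc_\Loc, \emptyset\}$, matching $\sem{\varphi}_\sigma^\rho$. For $\varphi = x$ the monitor equals the bare recursion variable $x$, and an inspection of \Cref{tab:synthesis} shows that, modulo the singleton-location collapse discussed below, only $\psi = x$ produces such a monitor; hence $\sem{\psi}_{\sigma'}^\rho = \rho(x) = \sem{x}_\sigma^\rho$.

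For the inductive cases I would exploit the compositionality of $\Mc{\cdot}$. When $\varphi = \varphi_1 \wedge \varphi_2$, the monitor has the shape $\Mc{\varphi_1} \pand \Mc{\varphi_2}$; such a shape arises either from a conjunction in $\psi$ or, when $|\Loc| \geq 2$, from a universal quantifier in $\psi$. In the former case the induction hypothesis applied to each subformula directly gives the equality of semantics; in the latter the same equality follows from $\sem{\forall\pi.\psi'}_{\sigma'}^\rho = \bigcap_\ell \sem{\psi'}_{\sigma'[\pi \mapsto \ell]}^\rho$ combined with the induction hypothesis. The case $\varphi = \varphi_1 \vee \varphi_2$ is dual, involving $\por$ and existential quantification. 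For modalities, the specific monitor shape $a_\ell.m + \sum_{b \neq a} b_\ell.v$ pins down both the flavour --- $v = \yes$ forces $[a_\pi]$ while $v = \no$ forces $\langle a_\pi \rangle$ --- and the location $\ell = \sigma(\pi)$, so the structural matching plus the induction hypothesis on $m$ closes the case. Finally, $\varphi = \max x.\varphi'$ synthesises to $\rec{x}.\Mc{\varphi'}$, which at top level can only arise from another greatest-fixed-point formula, and then Tarski's theorem combined with the induction hypothesis yields the equality of the two fixed points.

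The main technical obstacle is the collapse when $|\Loc| = 1$: in that regime both $\Mc{\forall\pi.\varphi'}$ and $\Mc{\exists\pi.\varphi'}$ reduce to $\Mcv{\varphi'}{\sigma[\pi \mapsto \ell_0]}$, so $\psi$ may be wrapped in an arbitrary stack of such semantically transparent quantifiers, making the top-level monitor shape ambiguous with respect to the outermost construct of $\psi$. I would address this by observing that $\bigcap$ and $\bigcup$ over a singleton index set are the identity operation, so these wrappers preserve both the synthesised monitor and the semantics; stripping such wrappers on $\psi$ as a preliminary reduction (formalised by a small sub-induction on $\psi$) reduces each inductive case to one where the top-level construct of $\psi$ is non-collapsing, after which the structural matching outlined above completes the proof.
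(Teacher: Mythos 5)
Your overall strategy---structural induction on $\varphi$, reading off the possible shapes of $\psi$ from the equation $\Mc{\varphi}=\Mcv{\psi}{\sigma'}$, and closing each case with the induction hypothesis or \Cref{lemma:verdict}---is exactly the paper's, and your handling of the singleton-location collapse mirrors how the paper absorbs semantically transparent quantifier wrappers in the $x$ and quantifier cases. There are, however, two concrete gaps in your case analysis. The first is in the modality case: your claim that the verdict ``pins down the flavour'' fails when $|\Act|=2$ and the continuation is itself a verdict. Up to the commutativity of $+$ under which monitors are identified, $\Mc{[a_\pi]\varphi'}$ with $\Mc{\varphi'}=\no$ and $\Mcv{\langle b_{\pi'}\rangle\psi'}{\sigma'}$ with $\Mcv{\psi'}{\sigma'}=\yes$ are the \emph{same} monitor, namely $a_{\sigma(\pi)}.\no + b_{\sigma(\pi)}.\yes$. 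The paper adds this cross-modality subcase explicitly and checks, via \Cref{lemma:verdict}, that the two semantics still coincide ($\{T \mid \hd(T)(\sigma(\pi))=a \text{ implies } \tl(T)\in\emptyset\} = \{T \mid \hd(T)(\sigma'(\pi'))=b \wedge \tl(T)\in\Htrc_\Loc\}$); the case is benign, but your argument as stated would simply never examine it.

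The second gap is in the $\varphi=\varphi_1\wedge\varphi_2$ case matched against $\psi=\forall\pi.\psi'$ (and dually for $\vee$/$\exists$): ``apply the induction hypothesis to each subformula'' is not directly licensed. When $|\Loc|\geq 3$, the binary split $\Mc{\varphi_1}\pand\Mc{\varphi_2}$ need not align with the $|\Loc|$-fold product $\bigotimes_{\ell\in\Loc}\Mcv{\psi'}{\sigma'[\pi\mapsto\ell]}$, so $\Mc{\varphi_1}$ may be a proper sub-product of several $\Mcv{\psi'}{\sigma'[\pi\mapsto\ell]}$ and is then not syntactically of the form $\Mcv{\chi}{\tau}$ that the lemma's statement (and hence the induction hypothesis) requires; the same misalignment arises between nested binary conjunctions on the two sides. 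The paper resolves this with an auxiliary claim, proved by decomposing both sides into maximal $\pand$-factors and exhibiting a partition of the factor set: if $\Mc{\varphi}$ equals a $k$-fold $\pand$-product of synthesized monitors, then $\sem{\varphi}_\sigma^\rho$ is the corresponding $k$-fold intersection. Combined with \Cref{lemma:combine-sigma}, which repackages any partial product of synthesized monitors as a single synthesized monitor with the right semantics, this is what makes the inductive step go through; your sketch needs an ingredient of this kind to be complete.
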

\begin{proof}
We proceed by induction on $\varphi$. As a first base case we consider $\varphi=\ttt$. We then have $\Mc{\varphi}=\yes=\Mcv{\psi}{\sigma'}$. We can immediately conclude that, for all $\rho$, we have $\sem{\varphi}_\sigma^\rho=\sem{\psi}_{\sigma'}^\rho$ using \Cref{lemma:verdict}.
The base cases for $\varphi=\ff$, $\varphi=(\pi=\pi')$ and $\varphi=(\pi\neq\pi')$ are similar. 

The last base case is for $\varphi=x$ for $x$ a recursion variable. Then $\Mc{\varphi}=x=\Mcv{\psi}{\sigma'}$, from which we can conclude that $\psi=x$ or $\psi=Q_1\dots Q_n.\psi'$ for $Q_i$ a universal or existential quantifier, $n\geq 1$, $\Loc=\{\ell\}$ and $\psi'=x$. In the first case it follows immediately that, for all $\rho$, we have $\sem{\varphi}_\sigma^\rho=\sem{\psi}_{\sigma'}^\rho$. In the second case we obtain that $\sem{\psi}_\sigma^\rho=\sem{\psi'}_{\sigma[\pi\mapsto\ell]}^\rho=\rho(x)=\sem{\varphi}$.

\ifarxiversion
We now proceed with the inductive cases. 
\else 
We now proceed with a selection of the inductive cases.
\fi 
We first consider $\varphi=[a_\pi]\varphi'$. Thus $\Mc{\varphi}=a_{\sigma(\pi)}.\Mc{\varphi'} + \sum_{b \neq a} b_{\sigma(\pi)}.\yes=\Mcv{\psi}{\sigma'}$. We distinguish two cases: either $\psi=[a_{\pi'}]\psi'$, $\sigma(\pi)=\sigma'(\pi')$ and $\Mc{\varphi'}=\Mcv{\psi'}{\sigma'}$, or $\Act=\{a,b\}$, $\psi=\langle b_{\pi'}\rangle \psi'$, $\sigma(\pi)=\sigma'(\pi')$, $\Mc{\varphi'}=\no$ and $\Mcv{\psi'}{\sigma'}=\yes$. In the first case we use the induction hypothesis to obtain that, for all $\rho$, we have $\sem{\varphi'}_\sigma^\rho=\sem{\psi'}_{\sigma'}^\rho$. The definition of the semantics gives the desired result immediately.
In the second case, we use \Cref{lemma:verdict} to obtain that, for all $\rho$,
we have $\sem{\varphi}_\sigma^\rho=\{T \mid \hd(T)(\sigma(\pi)) = a \ \text{ 
    implies }\tl(T)\in\emptyset\}=\{T \mid \hd(T)(\sigma'(\pi'))=b \ \wedge\ \tl
    (T)  \in\Htrc_\Loc\}=\sem{\psi}_{\sigma'}^\rho$.                 

\ifarxiversion
The case for $\varphi=\langle a_\pi\rangle\varphi'$ is similar. 

Next we consider $\varphi=\maxx {x}.\varphi'$. Hence, $\Mc{\varphi}=\rec x.\Mc{\varphi'}=\Mcv{\psi}{\sigma'}$. From this we conclude that $\psi=\maxx{x}.\psi'$ and $\Mc{\varphi'}=\Mcv{\psi'}{\sigma'}$ for some $\psi'$ . We use the induction hypothesis to obtain that, for all $\rho$, we have $\sem{\varphi'}_\sigma^\rho=\sem{\psi'}_{\sigma'}^\rho$. We derive 
\begin{align*}
    \sem{\maxx {x}.\varphi'}^\rho_\sigma=\bigcup\{S\mid S\subseteq \sem{\varphi'}^{\rho[x\mapsto S]}_\sigma\}\stackrel{IH}{=}\bigcup\{S\mid S\subseteq \sem{\psi'}^{\rho[x\mapsto S]}_\sigma\}= \sem{\maxx {x}.\psi'}^\rho_\sigma
\end{align*}
\fi 

For the case where $\varphi=\varphi_1\wedge \varphi_2$ and the case where $\varphi=\forall \pi.\varphi'$, we need an intermediary result: 
\begin{align}\label{result:intermediary}
\text{If $\Mc{\varphi}=\Mcv{\varphi_1}{\sigma_1}\pand\dots\pand \Mcv{\varphi_k}{\sigma_k}$ for $k\geq 2$, then }
    \sem{\varphi}_\sigma^\rho=\bigcap_{i=1}^k \sem{\varphi_i}_{\sigma_i}^\rho,
    \text{for all $\rho$.}
\end{align}
We will prove this claim at the end, but first finish the main proof using this result. Consider $\varphi=\varphi_1\wedge \varphi_2$. Hence, $\Mc{\varphi}=\Mc{\varphi_1}\pand \Mc{\varphi_2}=\Mcv{\psi}{\sigma'}$. We use intermediary result (\ref{result:intermediary}) to conclude. 
Now consider the case where $\varphi=\forall \pi.\varphi'$ and $|\Loc|=1$:
we get $\Mc{\varphi}=\Mcv{\varphi'}{\sigma[\pi\mapsto \ell]}=\Mcv{\psi}{\sigma'}$. From the induction hypothesis we obtain that, for all $\rho$,
we have that $\sem{\varphi'}_{\sigma[\pi\mapsto \ell]}^\rho=\sem{\psi}_{\sigma'}^\rho$. 
Since $\Loc=\{\ell\}$, we have, for all $\rho$, that $\sem{\varphi}_\sigma^\rho=\sem{\varphi'}_{\sigma[\pi\mapsto \ell]}^\rho$, 
and we are done. 
If $\Loc=\{\ell_1,\ldots,\ell_k\}$ for $k\geq 2$, we have that $\Mc{\varphi}=\Mcv{\varphi_1}{\sigma[\pi\mapsto \ell_1]}\pand\dots\pand \Mcv{\varphi_k}{\sigma[\pi\mapsto \ell_k]}=\Mcv{\psi}{\sigma'}$ 
and we use intermediary result (\ref{result:intermediary}) to conclude. 

\ifarxiversion
The case for $\varphi=\varphi_1\vee\varphi_2$ and $\varphi=\exists \pi.\varphi'$ can be handled in a similar manner as the two previous cases, with an additional result dual to (\ref{result:intermediary}).
\fi 

Now we prove claim (\ref{result:intermediary}) by induction on $\varphi$. Without loss of generality we can assume that for all $i\in\{1,\dots,k\}$ we have that $\Mcv{\varphi_i}{\sigma_i}\neq m\pand n$ for all $m,n$ (we refer to this as maximality).

We distinguish the case where $k=2$ or $k>2$. If $k=2$, we get from the synthesis function that $\varphi$ is a conjunction or a universal formula. In case $\varphi=\varphi'\wedge \varphi''$, we obtain that $\Mc{\varphi'}\pand \Mcv{\varphi''}{\sigma}=\Mcv{\varphi_1}{\sigma_1}\pand \Mcv{\varphi_2}{\sigma_2}$. We use the induction hypothesis of the main proof to conclude that, for all $\rho$, we have $\sem{\varphi'}_\sigma^\rho=\sem{\varphi_1}_{\sigma_1}^\rho$ and, for all $\rho$, we have $\sem{\varphi''}_\sigma^\rho=\sem{\varphi_2}_{\sigma_2}^\rho$. The desired result now follows immediately from the semantics:
\begin{align*}
\sem{\varphi}_\sigma^\rho=\sem{\varphi'}_\sigma^\rho\cap\sem{\varphi''}_\sigma^\rho\stackrel{IH}=\sem{\varphi_1}_{\sigma_1}^\rho\cap\sem{\varphi_2}_{\sigma_2}^\rho.
\end{align*}

The case for $\varphi=\forall \pi.\varphi'$ and $k=2$ implies that $\Loc=\{\ell_1,\ell_2\}$ and it is handled similarly.
Then we treat the case where $k>2$.
From the synthesis function, we can again conclude that $\varphi$ is a conjunction or a universal formula. The universal case is a complicated version of the conjunctive case, so we only threat the former. Let $\varphi=\forall \pi.\varphi'$ and $K=\{1,\dots,k\}$. Then, 
    $\Mc{\varphi} = \otimes_{\ell \in \Loc} \Mcv{\varphi'}{\sigma[\psi \mapsto \ell]}$, and from our assumption of maximality of $\otimes_{i\in K}\Mcv{\varphi_i}{\sigma_i}$, there is a partition $(K_\ell)_{\ell \in \Loc}$, such that for every $\ell \in \Loc$, 
    $\Mcv{\varphi'}{\sigma[\pi \mapsto \ell]} = \otimes_{i\in K_\ell}\Mcv{\varphi_i}{\sigma_i}$.
    By the induction hypothesis, 
    for all $\rho$ we have that $\sem{\varphi'}_{\sigma[\pi\mapsto\ell]}^\rho =\bigcap_{i\in K_\ell}\sem{\varphi_i}_{\sigma_i}^{\rho}$ and we can conclude that $\sem{\varphi}_\sigma^\rho=\bigcap_{\ell\in\Loc}\sem{\varphi'}_{\sigma[\pi\mapsto\ell]}^\rho = \bigcap_{\ell\in\Loc}\bigcap_{i\in K_\ell}\sem{\varphi_i}_{\sigma_i}^{\rho}=\bigcap_{i\in K}\sem{\varphi_i}_{\sigma_i}^{\rho}$.
\qedhere
\end{proof}

The following immediate fact is used to prove in \Cref{lemma:valto}, which shows soundness of synthesized monitors when they have not taken any transition yet, and \Cref{lemma:steps}:
\begin{corollary}\label{lem:unfolding}
$\sem{\max x.\psi}_\sigma=\sem{\psi\{^{\max x.\psi}/_x\}}_\sigma$.
\end{corollary}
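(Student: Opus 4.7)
The plan is to recognize this as the classical unfolding characterization of greatest fixed points of monotone operators on a complete lattice. I would work with the general form $\sem{\max x.\psi}^\rho_\sigma = \sem{\psi\{^{\max x.\psi}/_x\}}^\rho_\sigma$ for an arbitrary environment $\rho$ (the stated corollary being the specialization to $\rho=\emptyset$), since the inductive cases of the surrounding \Cref{lemma:nochange}-style proofs already need that generality. The argument decomposes into three standard ingredients: a substitution lemma, monotonicity of the induced operator, and Tarski's fixed-point theorem.

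First, I would prove a substitution lemma by structural induction on $\psi$: for every formula $\varphi$ and every $\rho,\sigma$ whose domains contain the appropriate free variables,
$$\sem{\psi\{^{\varphi}/_x\}}^\rho_\sigma \;=\; \sem{\psi}^{\rho[x \mapsto \sem{\varphi}^\rho_\sigma]}_\sigma.$$
The only nontrivial case is $\psi = x$, which follows directly from the clause $\sem{x}^\rho_\sigma = \rho(x)$. All other cases propagate routinely through the defining clauses of \Cref{tab:semantics_hyper_mu_HML}; the global naming convention of Remark~2.2 (bound variables pairwise distinct and disjoint from free ones) ensures there is no clash between the substituted formula and the binders of $\max$, $\min$, $\forall$, $\exists$.

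Second, let $F : 2^{\Htrc_\Loc} \to 2^{\Htrc_\Loc}$ be defined by $F(S) = \sem{\psi}^{\rho[x\mapsto S]}_\sigma$. I would show, by a straightforward induction on $\psi$, that $F$ is monotone with respect to $\subseteq$. Since the logic contains no negation, each connective is monotone in the interpretation of its free recursion variables (conjunction, disjunction, the modalities $[a_\pi]$ and $\langle a_\pi \rangle$, the quantifiers, and both fixed-point operators are all monotone in each argument), and in particular $F$ is monotone in $x$.

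Finally, since $(2^{\Htrc_\Loc}, \subseteq)$ is a complete lattice and $F$ is monotone, Tarski's theorem yields that $F$ has a greatest fixed point, given by $\bigcup\{S \mid S \subseteq F(S)\}$; by the clause for $\max$ in \Cref{tab:semantics_hyper_mu_HML}, this is exactly $\sem{\max x.\psi}^\rho_\sigma$. Hence $F(\sem{\max x.\psi}^\rho_\sigma) = \sem{\max x.\psi}^\rho_\sigma$. Applying the substitution lemma to $\varphi = \max x.\psi$ rewrites $F(\sem{\max x.\psi}^\rho_\sigma)$ as $\sem{\psi\{^{\max x.\psi}/_x\}}^\rho_\sigma$, yielding the claim. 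No real obstacle is anticipated: the result is a textbook consequence of Knaster-Tarski, and the only mild care needed concerns the substitution lemma's interaction with the mixed location- and recursion-variable binders, which is taken care of by the standing variable-distinctness convention.
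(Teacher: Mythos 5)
Your proof is correct and is exactly the standard expansion of the paper's own argument, which dispatches the corollary in one line (``Immediate, since we interpret recursive formulas as fixed points''). The substitution lemma, monotonicity of the induced operator, and the Knaster--Tarski characterization of $\bigcup\{S\mid S\subseteq \sem{\psi}^{\rho[x\mapsto S]}_\sigma\}$ as the greatest fixed point are precisely the ingredients the paper leaves implicit, so the two approaches coincide.
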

\ifarxiversion
\begin{proof}
Immediate, since we interpret recursive formulas as fixed points.
\end{proof}
\fi 

The next lemma shows that, for any combination of synthesized parallel monitors, we can construct an equal synthesized monitor, and link their semantics. It is used in various lemmas below. 
\begin{lemma}\label{lemma:combine-sigma}
If $n=\Mcv{\varphi_1}{\sigma_1}\pand \dots \pand \Mcv{\varphi_k}{\sigma_k}$ with $k\geq 1$, then there exist $\sigma$ and $\psi$ such that $n=\Mc{\psi}$ and $\sem{\psi}_\sigma=\bigcap_{i\in I}\sem{\varphi_i}_{\sigma_i}$. Similarly, if $n=\Mcv{\varphi_1}{\sigma_1}\por \dots \por \Mcv{\varphi_k}{\sigma_k}$, there exist $\sigma$ and $\psi$ such that $n=\Mc{\psi}$ and $\sem{\psi}_\sigma=\bigcup_{i\in I}\sem{\varphi_i}_{\sigma_i}$.
\end{lemma}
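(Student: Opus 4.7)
The plan is to construct $\psi$ as a conjunction (respectively, disjunction) of suitably alpha-renamed variants of the $\varphi_i$'s, and to assemble a single $\sigma$ by taking the disjoint union of the $\sigma_i$'s. I treat only the $\pand$ case; the $\por$ case is symmetric, using $\vee$ in place of $\wedge$.

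First, I would rename the location variables in each $\varphi_i$ (both free and, to avoid captures, bound ones) so that the resulting formulas $\varphi_1',\dots,\varphi_k'$ have pairwise disjoint sets of free location variables. Writing $\sigma_i'$ for the correspondingly renamed variant of $\sigma_i$, I would establish the following \emph{alpha-invariance lemma}, proved by a routine structural induction: for every formula $\varphi$, every $\sigma$, every injective renaming $r$ of location variables applied consistently to $\varphi$ producing $\varphi^r$, and for $\sigma^r := \sigma \circ r^{-1}$, one has $\Mcv{\varphi}{\sigma} = \Mcv{\varphi^r}{\sigma^r}$ and $\sem{\varphi}_\sigma^\rho = \sem{\varphi^r}_{\sigma^r}^\rho$ for all $\rho$. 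In particular, $\Mcv{\varphi_i'}{\sigma_i'} = \Mcv{\varphi_i}{\sigma_i}$ and $\sem{\varphi_i'}_{\sigma_i'}^\rho = \sem{\varphi_i}_{\sigma_i}^\rho$.

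Second, since the domains of the $\sigma_i'$ (restricted to $\FVloc(\varphi_i')$) are pairwise disjoint, the partial function $\sigma := \sigma_1' \cup \cdots \cup \sigma_k'$ is well defined. A second easy structural induction gives the \emph{locality lemma}: both $\Mcv{\cdot}{\cdot}$ and $\sem{\cdot}_\cdot^\rho$ depend only on the restriction of their location argument to the free location variables of the formula. Consequently, $\Mcv{\varphi_i'}{\sigma} = \Mcv{\varphi_i'}{\sigma_i'}$ and $\sem{\varphi_i'}_\sigma^\rho = \sem{\varphi_i'}_{\sigma_i'}^\rho$.

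Third, set $\psi := \varphi_1' \wedge \cdots \wedge \varphi_k'$ (associated in any order, since $\pand$ and $\wedge$ are treated up to associativity and commutativity). Then, from the synthesis clause $\Mc{\varphi \wedge \varphi'} = \Mc{\varphi} \pand \Mc{\varphi'}$ in \Cref{tab:synthesis} and a trivial induction on $k$, we get $\Mc{\psi} = \Mcv{\varphi_1'}{\sigma} \pand \cdots \pand \Mcv{\varphi_k'}{\sigma} = n$. From the semantics of $\wedge$ in \Cref{tab:semantics_hyper_mu_HML}, $\sem{\psi}_\sigma = \bigcap_{i} \sem{\varphi_i'}_\sigma = \bigcap_{i} \sem{\varphi_i}_{\sigma_i}$, completing the proof.

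The main obstacle lies in establishing the alpha-invariance and locality lemmas cleanly; both are standard for syntax-directed definitions, but some care is required at the quantifier and fixed-point cases (to handle the extension $\sigma[\pi \mapsto \ell]$ correctly) and at the modality cases (to ensure $\sigma(\pi)$ is only evaluated at free variables). Once those two auxiliary facts are in hand, the construction of $\psi$ and $\sigma$ is immediate, and the identities for $\Mc{\psi}$ and $\sem{\psi}_\sigma$ follow directly from the synthesis and semantic clauses for $\wedge$ and $\vee$.
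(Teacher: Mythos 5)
Your proposal is correct and follows essentially the same route as the paper: rename the location variables of the $\varphi_i$ apart, merge the environments $\sigma_i$ into one $\sigma$, and take $\psi$ to be the conjunction (resp.\ disjunction) of the renamed formulas, relying on the facts that renaming and extending $\sigma$ with fresh variables change neither the synthesized monitor nor the semantics. The only difference is organizational — the paper performs the renaming one component at a time via an induction on $k$, whereas you disjointify all $k$ components simultaneously and isolate the alpha-invariance and locality facts as explicit auxiliary lemmas, which the paper instead asserts inline.
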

\begin{proof}
    We only prove the case for $\pand$, the case for $\por$ is similar.
    We proceed by induction on $k$. The base case is trivial. For the inductive case we consider $n=\Mcv{\varphi_1}{\sigma_1}\pand \Mcv{\varphi_2}{\sigma_2}\pand \dots \pand \Mcv{\varphi_k}{\sigma_k}$, where we know from the induction hypothesis that there exist $\sigma',\psi'$ such that
    $ \Mcv{\varphi_2}{\sigma_2}\pand \dots \pand \Mcv{\varphi_k}{\sigma_k}=\Mcv{\psi'}{\sigma'}$ and
    $\sem{\psi'}_{\sigma'}=\bigcap_{i\in\{2,\dots,k\}}\sem{\varphi_i}_{\sigma_i}$. Let $\pi_1,\dots,\pi_j$ be all the variables occuring in $\varphi_1$. Let $\pi_1',\dots \pi_j'$ all be fresh variables not occuring in $\varphi_1$ or in $\psi'$. Lastly, let $\varphi_1'=\varphi_1\{^{\pi_1'} /_{\pi_1} \}\dots\{^{\pi_j'} /_{\pi_j} \}$.

    Then $\Mcv{\varphi_1}{\sigma_1}=\Mcv{\varphi_1'}{\sigma'[\pi_1'\mapsto \sigma_1(\pi_1)]\dots[\pi_j'\mapsto \sigma_1(\pi_j)]}$
    and $\sem{\varphi_1}_{\sigma_1}=\sem{\varphi_1'}_{\sigma'[\pi_1'\mapsto \sigma_1(\pi_1)]\dots[\pi_j'\mapsto \sigma_1(\pi_j)]}$.
    Furthermore $\Mcv{\psi'}{\sigma'}=\Mcv{\psi'}{\sigma'[\pi_1'\mapsto \sigma_1(\pi_1)]\dots[\pi_j'\mapsto \sigma_1(\pi_j)]}$
    and $\sem{\psi'}_{\sigma'}=\sem{\psi'}_{\sigma'[\pi_1'\mapsto \sigma_1(\pi_1)]\dots[\pi_j'\mapsto \sigma_1(\pi_j)]}$.
    Thus we can conclude that $n=\Mcv{\varphi_1'\wedge\psi'}{\sigma'[\pi_1'\mapsto \sigma_1(\pi_1)]\dots[\pi_j'\mapsto \sigma_1(\pi_j)]}$. We also have that
    \begin{align*}
        &\sem{\varphi_1'\wedge\psi'}_{\sigma'[\pi_1'\mapsto \sigma_1(\pi_1)]\dots[\pi_j'\mapsto \sigma_1(\pi_j)]}\\
        &=\sem{\varphi_1'}_{\sigma'[\pi_1'\mapsto \sigma_1(\pi_1)]\dots[\pi_j'\mapsto \sigma_1(\pi_j)]}\cap\sem{\psi'}_{\sigma'[\pi_1'\mapsto \sigma_1(\pi_1)]\dots[\pi_j'\mapsto \sigma_1(\pi_j)]}\\
        &=\sem{\varphi_1}_{\sigma_1}\cap\sem{\psi'}_{\sigma'}\\
        &=\sem{\varphi_1}_{\sigma_1}\cap\bigcap_{i\in\{2,\dots,k\}}\sem{\varphi_i}_{\sigma_i}=\bigcap_{i\in I}\sem{\varphi_i}_{\sigma_i}
\qedhere
    \end{align*}
\end{proof}

\ifarxiversion
Then we obtain that, for any synthesized monitor that is a combination of synthesized parallel monitors, we can relate the semantics of the corresponding formulas:
\else 
Using the definition of the synthesis function, \Cref{lemma:nochange,lemma:combine-sigma},  we obtain that, for any synthesized monitor that is a combination of synthesized parallel monitors, we can relate the semantics of the corresponding formulas:
\fi 
\begin{corollary}\label{cor:semantics}
For all $\varphi$ and $\sigma$, if $\Mc{\varphi}=m\pandor n$, and there exist $\psi_1, \psi_2$ and $\sigma_1,\sigma_2$ such that $m=\Mcv{\psi_1}{\sigma_1}$ and $n=\Mcv{\psi_2}{\sigma_2}$, then $\sem{\varphi}_\sigma=\sem{\psi_1}_{\sigma_1}\cap\sem{\psi_2}_{\sigma_2}$ if $\pandor=\pand$, and $\sem{\varphi}_\sigma=\sem{\psi_1}_{\sigma_1}\cup\sem{\psi_2}_{\sigma_2}$ if $\pandor=\por$.
\end{corollary}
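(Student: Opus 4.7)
The plan is to invoke \Cref{lemma:combine-sigma} and \Cref{lemma:nochange} in succession to short-circuit any case analysis on $\varphi$. The key observation is that the hypothesis $\Mc{\varphi} = m \pandor n$ together with $m = \Mcv{\psi_1}{\sigma_1}$ and $n = \Mcv{\psi_2}{\sigma_2}$ already exhibits $\Mc{\varphi}$ as a parallel combination of two synthesized monitors. That is exactly the shape that \Cref{lemma:combine-sigma} consumes (for $k = 2$), so we can repackage $\Mc{\varphi}$ as the synthesis of a single formula whose semantics is already the desired intersection (or union).

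Concretely, I would argue as follows for the case $\pandor = \pand$. Writing $\Mc{\varphi} = \Mcv{\psi_1}{\sigma_1} \pand \Mcv{\psi_2}{\sigma_2}$ and instantiating \Cref{lemma:combine-sigma} at $k=2$, we obtain some $\sigma'$ and some formula $\psi'$ with
\begin{equation*}
    \Mcv{\psi'}{\sigma'} \;=\; \Mcv{\psi_1}{\sigma_1} \pand \Mcv{\psi_2}{\sigma_2} \;=\; \Mc{\varphi}
    \qquad\text{and}\qquad
    \sem{\psi'}_{\sigma'} \;=\; \sem{\psi_1}_{\sigma_1} \cap \sem{\psi_2}_{\sigma_2}.
\end{equation*}
Then, from $\Mc{\varphi} = \Mcv{\psi'}{\sigma'}$, \Cref{lemma:nochange} (taking $\rho = \emptyset$, which is harmless for closed formulas and otherwise holds uniformly for every $\rho$) gives $\sem{\varphi}_\sigma = \sem{\psi'}_{\sigma'}$. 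Chaining the two equalities yields $\sem{\varphi}_\sigma = \sem{\psi_1}_{\sigma_1} \cap \sem{\psi_2}_{\sigma_2}$, as required. The case $\pandor = \por$ is entirely symmetric, using the dual half of \Cref{lemma:combine-sigma} (which gives union rather than intersection).

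I do not expect any real obstacle here, since all the combinatorial work — in particular, renaming fresh variables so that disjoint $\sigma_1, \sigma_2$ can be merged into a single $\sigma'$, and keeping the corresponding semantics invariant — is already absorbed into \Cref{lemma:combine-sigma}. The only mild point to watch is that the statement of the corollary deals with a top-level $\pandor$ that might have arisen from either a binary connective ($\wedge$, $\vee$) or from a quantifier expanded over $\Loc$; but this case analysis is exactly what \Cref{lemma:combine-sigma} has already carried out, which is why we do not need to redo it. Thus the corollary follows as a one-line composition of the two previous lemmas.
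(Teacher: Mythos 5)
Your proof is correct and follows essentially the same route as the paper, whose own proof is simply "immediate from the synthesis definition, by \Cref{lemma:nochange} and \Cref{lemma:combine-sigma}": you instantiate \Cref{lemma:combine-sigma} at $k=2$ to repackage $\Mc{\varphi}$ as $\Mcv{\psi'}{\sigma'}$ with the desired intersection (resp.\ union) semantics, then apply \Cref{lemma:nochange} to transfer this to $\sem{\varphi}_\sigma$. No gaps.
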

\ifarxiversion
\begin{proof}
Immediate from the synthesis definition, by \Cref{lemma:nochange} and \Cref{lemma:combine-sigma}.
\end{proof}
\fi

Now we can prove soundness of synthesized monitors when they have not taken any transition yet:
\end{textAtEnd}

\begin{theoremEnd}[default,category=soundness]{lemma}
\label{lemma:valto}
    If $\Mc{\varphi}\valto v$, then $\sem{\varphi}_\sigma=\Htrc_\Loc$, if $v=\yes$, and $\sem{\varphi}_{\sigma}=\emptyset$, if $v=\no$.
\end{theoremEnd}
\begin{proofEnd}
    \ifarxiversion
We proceed by induction on $\valto$.
\else
We proceed by induction on $\valto$, focusing on the case $v=\yes$. The proof for the case $v=\no$ follows similar lines.
\fi 

The base is for $\Mc{\varphi}=v$: here the result immediately follows from \Cref{lemma:valtoverdict} and \Cref{lemma:verdict}.

For the first inductive case, we consider $\Mc{\varphi}=m\por n\valto \yes$ and $m\valto \yes$. From the synthesis function and possibly \Cref{lemma:combine-sigma} we conclude that there exist $\varphi_1,\varphi_2$ and $\sigma_1, \sigma_2$ such that $m=\Mcv{\varphi_1}{\sigma_1}$ and $n=\Mcv{\varphi_2}{\sigma_2}$. We apply the induction hypothesis to obtain that $\sem{\varphi_1}_{\sigma_1}=\Htrc_\Loc$. The result then follows from \Cref{cor:semantics}.

\ifarxiversion
Let $\Mc{\varphi}=m\pand n\valto \no$ and $m\valto \no$. From the synthesis function and possibly \Cref{lemma:combine-sigma} we conclude that there exist $\varphi_1,\varphi_2$ and $\sigma_1, \sigma_2$ such that $m=\Mcv{\varphi_1}{\sigma_1}$ and $n=\Mcv{\varphi_2}{\sigma_2}$. We apply the induction hypothesis to obtain that $\sem{\varphi_1}_{\sigma_1}=\emptyset$. The result then follows from \Cref{cor:semantics}.
\fi 



The case for $\vend$ can be skipped, as no formula is synthesized into monitors with $\vend$.

\ifarxiversion
Let $\Mc{\varphi}=m\pand n\valto v$ because $m\valto \yes$ and $n\valto v$. From the synthesis function and possibly \Cref{lemma:combine-sigma} we conclude that there exist $\varphi_1,\varphi_2$ and $\sigma_1, \sigma_2$ such that $m=\Mcv{\varphi_1}{\sigma_1}$ and $n=\Mcv{\varphi_2}{\sigma_2}$. We can apply the induction hypothesis to obtain that $\sem{\varphi_1}_{\sigma_1}=\Htrc_\Loc$ and as we know that $v\neq\vend$, we also get that $\sem{\varphi_1}_{\sigma_1}=\Htrc_\Loc$ if $v=\yes$ and $\sem{\varphi_1}_{\sigma_1}=\emptyset$ if $v=\no$. The result then follows from \Cref{cor:semantics}.

The case for $\Mc{\varphi}=m\por n\valto v$ because $m\valto \no$ and $n\valto v$ is similar to the previous case. 

Let $\Mc{\varphi}=m+n\valto v$ because $m\valto v$. This case can be skipped, because, by the definition of the synthesis function, we know that neither $m\valto v$ nor $n\valto v$ can hold whenever $\Mc{\varphi}=m+n$.


For the last inductive case, we consider $\Mc{\varphi}=\rec x.m\valto v$ because $m\{^{\rec x.m}/_x\}\valto v$. By definition of the synthesis, we know that
$\varphi=\maxx {x}.\psi$ and $m=\Mc{\psi}$. Note that $\Mc{\psi}\{^{\rec x.\Mc{\psi}}/_x\}=\Mc{\psi\{^{\maxx {x}.\psi}/_x\}}$. From the induction hypothesis and \Cref{lem:unfolding}, we obtain that
$\sem{\varphi}_\sigma=\sem{\psi\{^{\maxx {x}.\psi}/_x\}}_\sigma=\emptyset$, if $v=\no$, and $\sem{\varphi}_\sigma=\sem{\psi\{^{\maxx {x}.\psi}/_x\}}_\sigma=\Htrc_\Loc$, if $v=\yes$.
\else 
Let $\Mc{\varphi}=m\pand n\valto \yes$ because $m\valto \yes$ and $n\valto \yes$. From the synthesis function and possibly \Cref{lemma:combine-sigma} we conclude that there exist $\varphi_1,\varphi_2$ and $\sigma_1, \sigma_2$ such that $m=\Mcv{\varphi_1}{\sigma_1}$ and $n=\Mcv{\varphi_2}{\sigma_2}$. We can apply the induction hypothesis to obtain that $\sem{\varphi_1}_{\sigma_1}=\Htrc_\Loc$ and  $\sem{\varphi_1}_{\sigma_1}=\Htrc_\Loc$. The result then follows from \Cref{cor:semantics}.

The case for $\Mc{\varphi}=m\por n\valto \yes$ because $m\valto \no$ and $n\valto v$ is similar to the previous case. 

Let $\Mc{\varphi}=m+n\valto \yes$ because $m\valto \yes$. This case can be skipped, because, by the definition of the synthesis function, we know that neither $m\valto \yes$ nor $n\valto \yes$ can hold whenever $\Mc{\varphi}=m+n$.


For the last inductive case, we consider $\Mc{\varphi}=\rec x.m\valto \yes$ because $m\{^{\rec x.m}/_x\}\valto \yes$. By definition of the synthesis, we know that
$\varphi=\maxx {x}.\psi$ and $m=\Mc{\psi}$. Note that $\Mc{\psi}\{^{\rec x.\Mc{\psi}}/_x\}=\Mc{\psi\{^{\maxx {x}.\psi}/_x\}}$. From the induction hypothesis and \Cref{lem:unfolding}, we obtain that
$\sem{\varphi}_\sigma=\sem{\psi\{^{\maxx {x}.\psi}/_x\}}_\sigma=\Htrc_\Loc$.
\fi 
\end{proofEnd}

\begin{textAtEnd}[allend, category=soundness]

    The following lemma is a basic fact regarding substitution, and we use it in various places throughout the paper. 
\begin{lemma}\label{lemma:recursion-substitution}
    If $\Mc{\varphi} = m$, $\Mc{\psi} = n$, and the free location variables of $\psi$ are not bound in $\varphi$, then $\Mc{\varphi\{{}^\psi/{}_x\}} = m\{{}^n/{}_x\}$.
\end{lemma}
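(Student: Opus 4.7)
\medskip

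\noindent\textbf{Proof plan.} The plan is to prove the statement by structural induction on $\varphi$. The synthesis function is defined by a single clause per syntactic construct (Table~\ref{tab:synthesis}), and recursion-variable substitution on monitors commutes with each monitor constructor in the obvious way, so most cases will just be a matter of pushing substitution through the corresponding clause and invoking the inductive hypothesis. The only delicate point is handling binders---$\max y$ on the formula/monitor side, and the location quantifiers $\forall \pi, \exists \pi$ which modify the parameter $\sigma$ used by the synthesis.

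\smallskip

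\noindent For the base cases, $\varphi \in \{\ttt, \ff, \pi=\pi', \pi\neq\pi'\}$ yields a verdict monitor that contains no recursion variable, so $m\{{}^n/{}_x\} = m$ and the substitution is vacuous on $\varphi$ as well. For $\varphi = x$ we have $\Mc{x} = x$ and $x\{{}^\psi/{}_x\} = \psi$, whence both sides equal $n$; for $\varphi = y$ with $y \neq x$ both substitutions are trivial. The boolean cases $\varphi_1 \wedge \varphi_2$ and $\varphi_1 \vee \varphi_2$ are immediate by applying the inductive hypothesis componentwise, using the fact that $(m_1 \pandor m_2)\{{}^n/{}_x\} = m_1\{{}^n/{}_x\} \pandor m_2\{{}^n/{}_x\}$. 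The modal cases $[a_\pi]\varphi'$ and $\langle a_\pi\rangle\varphi'$ reduce in the same way, since the summand $\sum_{b\neq a} b_{\sigma(\pi)}.v$ is a verdict in which $x$ does not occur.

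\smallskip

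\noindent The case $\varphi = \max y.\varphi'$ splits on whether $y = x$. If $y=x$, then $\varphi\{{}^\psi/{}_x\} = \varphi$ by the usual capture-avoiding convention, and likewise $m\{{}^n/{}_x\} = (\rec x.\Mc{\varphi'})\{{}^n/{}_x\} = \rec x.\Mc{\varphi'} = m$ since $x$ is bound in $m$. If $y \neq x$ the substitution commutes with the binder, and by the inductive hypothesis $\Mc{\varphi'\{{}^\psi/{}_x\}} = \Mc{\varphi'}\{{}^n/{}_x\}$, so both sides equal $\rec y.(\Mc{\varphi'}\{{}^n/{}_x\})$.

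\smallskip

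\noindent The main obstacle is the quantifier case $\varphi = Q\pi.\varphi'$ with $Q \in \{\forall,\exists\}$, because the synthesis of $\varphi'$ is carried out with $\sigma[\pi\mapsto\ell]$ rather than $\sigma$, so applying the inductive hypothesis would a priori need the premise $\Mcv{\psi}{\sigma[\pi\mapsto\ell]} = n$, not just $\Mcv{\psi}{\sigma} = n$. This is precisely where the hypothesis that the free location variables of $\psi$ are not bound in $\varphi$ is used: since $\pi$ is bound in $\varphi$, $\pi$ is not free in $\psi$, and by a straightforward auxiliary observation $\Mcv{\psi}{\sigma[\pi\mapsto\ell]} = \Mcv{\psi}{\sigma} = n$. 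The same hypothesis guarantees capture-avoidance of the substitution, so $\varphi\{{}^\psi/{}_x\} = Q\pi.(\varphi'\{{}^\psi/{}_x\})$. Applying the inductive hypothesis to $\varphi'$ with parameter $\sigma[\pi\mapsto\ell]$ then gives $\Mcv{\varphi'\{{}^\psi/{}_x\}}{\sigma[\pi\mapsto\ell]} = \Mcv{\varphi'}{\sigma[\pi\mapsto\ell]}\{{}^n/{}_x\}$, and combining these equalities over all $\ell \in \Loc$ yields the result, since big $\bigotimes$ and $\bigoplus$ distribute over $\{{}^n/{}_x\}$.
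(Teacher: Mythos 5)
Your proof is correct and follows essentially the same route as the paper's: induction on $\varphi$, with the quantifier case as the only delicate point, resolved by observing that since $\pi$ is bound in $\varphi$ it cannot be free in $\psi$, hence $\Mcv{\psi}{\sigma[\pi\mapsto\ell]} = \Mc{\psi} = n$ and the inductive hypothesis applies at the updated environment. The paper simply omits the routine cases that you spell out.
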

\begin{proof}
    The proof is by induction on $\varphi$ and the only interesting case is the one where $\varphi = \exists \pi.\varphi'$ or $\varphi = \forall \pi.\varphi'$. We describe the case for $\varphi = \exists \pi.\varphi'$, as the other one is dual.

    In this case, $m = \Mc{\varphi} = \Mc{\exists \pi.\varphi'} = \bigoplus_{\ell \in \Loc} \Mcv{\varphi'}{\sigma[\pi \mapsto \ell]}$.
    By the induction hypothesis,
    $\Mcv{\varphi'\{^\psi /_x \}}{\sigma[\pi \mapsto \ell]} = \Mcv{\varphi'}{\sigma[\pi \mapsto \ell]}\{^{\Mcv{\psi}{\sigma[\pi \mapsto \ell]}} /_x \}.$
    Since $\pi$ does not appear free in $\psi$,
    $\Mcv{\psi}{\sigma[\pi \mapsto \ell]} = \Mc{\psi} = n.$
    Therefore, $\Mcv{\varphi\{^\psi /_x \}}{\sigma} = \bigoplus_{\ell \in \Loc} \Mcv{\varphi'\{^\psi /_x \}}{\sigma[\pi \mapsto \ell]} = \bigoplus_{\ell \in \Loc} \Mcv{\varphi'}{\sigma[\pi \mapsto \ell]}\{^n/_x \} = m\{^n/_x\}$.
\end{proof}

The next lemma shows that synthesized monitors always transition to other synthesized monitors:
\end{textAtEnd}

\begin{theoremEnd}[default,category=soundness]{lemma}
\label{lemma:chain}
    If $\Mc{\varphi}\xrightarrow{A} m'$, then $m'=\Mcv{\varphi'}{\sigma'}$ for some $\sigma'$ and  $\varphi'$.
\end{theoremEnd}
\begin{proofEnd}
We proceed by induction on $\xrightarrow{A}$. The interesting cases are the inductive cases. We first treat both inductive cases of the sum. If $\Mc{\varphi}=m+n\xrightarrow{A} m'$, we have two cases: either $\varphi=[a_\pi]\psi$ or $\varphi=\langle a_\pi\rangle\psi$. We treat the case for $\varphi=[a_\pi]\psi$, the other case is similar. Thus we are in the case where $\Mc{\varphi}=a_{\sigma(\pi)}.\Mc{\psi} + \sum_{b \neq a} b_{\sigma(\pi)}.\yes$. This means that $m'$ can be $\Mc{\psi}$ or $\yes$. In both cases we have obtained the required result.

If $\Mc{\varphi}=\rec {x}.m\xrightarrow{A} m'$ because $m\{^{\rec x.m}/_x\} \xrightarrow A m'$, we know that $m=\Mc{\psi}$ and $\varphi=\maxx x.\psi$. We use
\Cref{lemma:recursion-substitution} to conclude that $m\{^{\rec x.m}/_x\} = \Mcv{\psi\{^{\varphi}/_x\}}{\sigma}$ (note that the free location variables of $\varphi$ are not bound in $\psi$). Thus we can use the induction hypothesis to know that $m'=\Mcv{\varphi'}{\sigma'}$ for some $\sigma'$ and formula $\varphi'$.

If $\Mc{\varphi}=m\pandor n\xrightarrow{A} m'\pandor n'$, from the synthesis function and possibly \Cref{lemma:combine-sigma} we obtain $\sigma_1,\sigma_2$ and $\psi_1$, $\psi_2$ such that $m=\Mcv{\psi_1}{\sigma_1}$ and $n=\Mcv{\psi_2}{\sigma_2}$. We then use the induction hypothesis to obtain $\sigma_1',\sigma_2'$ and $\psi_1'$, $\psi_2'$ such that $m'=\Mcv{\psi_1'}{\sigma_1'}$ and $n'=\Mcv{\psi_2'}{\sigma_2'}$. The required result now follows from \Cref{lemma:combine-sigma}.
\end{proofEnd}

\begin{textAtEnd}[allend, category=soundness]

    The last result we need to prove soundness is the following, which essentially shows that the relation between the hypertrace and the semantics of the formula are carried over through the instrumentation steps:
\end{textAtEnd}

\begin{theoremEnd}[default,category=soundness]{lemma}
\label{lemma:steps}
    Let $\Mc{\varphi}\triangleright T \rightarrowtail \Mcv{\varphi'}{\sigma'}\triangleright T'$. If
    $T'\notin\sem{\varphi'}_{\sigma'}$ then $T\notin\sem{\varphi}_\sigma$; if
    $T'\in\sem{\varphi'}_{\sigma'}$ then $T\in\sem{\varphi}_\sigma$.
\end{theoremEnd}
\begin{proofEnd}
From $\Mc{\varphi}\triangleright T \rightarrowtail \Mcv{\varphi'}{\sigma'}\triangleright T'$ we deduce that $\Mc{\varphi} \xrightarrow{A} \Mcv{\varphi'}{\sigma'}$ and $T\xrightarrow{A}T'$. We proceed by induction on $\Mc{\varphi} \xrightarrow{A} \Mcv{\varphi'}{\sigma'}$.

In the first base case we have $\Mc{\varphi}=v=\Mcv{\varphi'}{\sigma'}$. The result follows immediately from \Cref{lemma:verdict}.

The cases where $\Mc{\varphi}=a_\ell.m$ are not applicable, as the synthesis function cannot produce $a_\ell.m$ (being $|\Act|>1$).

The first inductive case is for $\Mc{\varphi}=m+n$ and $\Mcv{\varphi'}{\sigma'}=m'$ or $\Mcv{\varphi'}{\sigma'}=n'$. We distinguish two cases.
\begin{enumerate}
    \item $\varphi=[a_\pi]\psi$. Here we have two more cases:
    \begin{enumerate}
    \item $m'=\Mc{\psi}$, $m=a_{\sigma(\pi)}.m'$, and $A(\sigma(\pi))=a$.
    Assume that $T'\notin\sem{\psi}_\sigma$. Then $T\notin\sem{[a_\pi]\psi}_\sigma$ follows immediately from the definitions. If we assume instead that $T'\in\sem{\psi}_\sigma$, we also know immediately that $T\in\sem{[a_\pi]\psi}_\sigma$.
    \item $m'=\yes$, $m=b_{\sigma(\pi)}.m'$ for $b\neq a$,
    and $A(\sigma(\pi))\neq a$. It follows immediately that $T\in\sem{[a_\pi]\psi}_\sigma$ as $A(\sigma(\pi))\neq a$, and that $T'\notin \sem{\varphi'}_{\sigma'}$ never holds.
    \end{enumerate}
    \item $\varphi=\langle a_\pi\rangle\psi$. Also here we have two more cases:
    \begin{enumerate}
    \item $m'=\Mc{\psi}$, $m=a_{\sigma(\pi)}.m'$, and $A(\sigma(\pi))=a$. Assume that $T'\notin\sem{\psi}_\sigma$. Then $T\notin\sem{\langle a_\pi\rangle\psi}_\sigma$ follows immediately from the definitions. If we assume instead that $T'\in\sem{\psi}_\sigma$, we also know immediately that $T\in\sem{\langle a_\pi\rangle\psi}_\sigma$.
    \item $m'=\no$, $m=b_{\sigma(\pi)}.m'$ for $b\neq a$, and $A(\sigma(\pi))\neq a$. It follows immediately that $T\notin\sem{\langle a_\pi\rangle\psi}_\sigma$ as $A(\sigma(\pi))\neq a$, and that $T'\in \sem{\varphi'}_{\sigma'}$ never holds.
    \end{enumerate}
\end{enumerate}

Consider $\Mc{\varphi}=\rec x. m\xrightarrow{A}m'$. Then we have $\varphi=\maxx {x}.\psi$ and $m=\Mc{\psi}$. We know $\Mcv{\varphi'}{\sigma'}=m'$, and $\Mc{\psi}\{^{\rec x.\Mc{\psi}}/_x\}\xrightarrow{A} m'$. Assume that $T'\in\sem{\varphi'}_{\sigma'}$. Note that $\Mc{\psi}\{^{\rec x.\Mc{\psi}}/_x\}=\Mc{\psi\{^{\maxx {x}.\psi}/_x\}}$. From the induction hypothesis and \Cref{lem:unfolding}, we obtain that $T\in \sem{\psi\{^{\maxx {x}.\psi}/_x\}}_\sigma=\sem{\varphi}_\sigma$. The case for $T'\notin\sem{\varphi'}_{\sigma'}$ is similar.

Last we consider $\Mc{\varphi}=m\pandor n\xrightarrow{A}m'\pandor n'$, because $m\xrightarrow{A}m'$, $n\xrightarrow{A}n'$, and  $m'\pandor n'=\Mcv{\varphi'}{\sigma'}$. From \Cref{lemma:combine-sigma}, we know that there exist $\psi_1$,$\psi_2$ and
$\tau_1,\tau_2$ such that $m'=\Mcv{\psi_1}{\tau_1}$ and $n'=\Mcv{\psi_2}{\tau_2}$. Similarly, we obtain also that $m=\Mcv{\varphi_1}{\sigma_1}$ and $n=\Mcv{\varphi_2}{\sigma_2}$ for some  $\varphi_1,\varphi_2$ and $\sigma_1, \sigma_2$.
\ifarxiversion
We distinguish two cases.
\begin{enumerate}
    \item $\pandor=\pand$. We assume $T'\in \sem{\varphi'}_{\sigma'}$. Then by \Cref{cor:semantics} we know that $T'\in\sem{\psi_1}_{\tau_1}$ and $T'\in\sem{\psi_2}_{\tau_2}$.
    From the induction hypothesis we get that $T\in\sem{\varphi_1}_{\sigma_1}$ and $T\in\sem{\varphi_2}_{\sigma_2}$. By \Cref{cor:semantics} we know that $T\in\sem{\varphi}_\sigma$.

    If on the other hand we assume $T'\notin \sem{\varphi'}_{\sigma'}$, the argument is dual.

    \item $\pandor=\por$. We assume $T'\in \sem{\varphi'}_{\sigma'}$. Then by \Cref{cor:semantics}
    we know that $T'\in\sem{\psi_1}_{\tau_1}$ or $T'\in\sem{\psi_2}_{\tau_2}$. From the induction hypothesis we get that $T\in\sem{\varphi_1}_{\sigma_1}$ or $T\in\sem{\varphi_2}_{\sigma_2}$.
    By \Cref{cor:semantics} we know that $T\in\sem{\varphi}_\sigma$. The case for $T'\notin \sem{\varphi'}_{\sigma'}$ is dual.
\qedhere
\end{enumerate}
\else 
We distinguish two cases, depending on whether $\pandor=\pand$ or $\pandor=\por$.
We limit ourselves to presenting the proof for the former case since the case $\pandor=\por$ is similar.
We assume $T'\in \sem{\varphi'}_{\sigma'}$. Then by \Cref{cor:semantics} we know that $T'\in\sem{\psi_1}_{\tau_1}$ and $T'\in\sem{\psi_2}_{\tau_2}$.
    From the induction hypothesis we get that $T\in\sem{\varphi_1}_{\sigma_1}$ and $T\in\sem{\varphi_2}_{\sigma_2}$. By \Cref{cor:semantics} we know that $T\in\sem{\varphi}_\sigma$.

    If, on the other hand, we assume $T'\notin \sem{\varphi'}_{\sigma'}$, the argument is dual.
\fi
\end{proofEnd}

\begin{theorem}[Soundness]
\label{thm:soundCentr}
Let $\varphi \in $ \WTFPHypermuHML be a closed formula and $T \in \Htrc_\Loc$.
If $\Mcv{\varphi}{\emptyset} \triangleright T \rightarrowtail^* \no$, then $T \not\in \sem\varphi$;
if $\Mcv{\varphi}{\emptyset} \triangleright T \rightarrowtail^* \yes$, then $T \in \sem\varphi$.
\end{theorem}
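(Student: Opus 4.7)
The plan is to prove the theorem by induction on the number of instrumentation steps in the reduction sequence $\Mcv{\varphi}{\emptyset} \triangleright T \rightarrowtail^* v$, using the three preliminary lemmas as the main machinery. Since the target of $\rightarrowtail^*$ is a verdict $v$, the sequence must have the shape
\[
\Mcv{\varphi}{\emptyset} \triangleright T \;\rightarrowtail\; m_1 \triangleright T_1 \;\rightarrowtail\; \cdots \;\rightarrowtail\; m_n \triangleright T_n \;\rightarrowtail\; v,
\]
where the final step is justified by the second instrumentation rule (hence $m_n \valto v$) and each earlier step by the first rule (so $m_i \xrightarrow{A_i} m_{i+1}$ and $T_i \xrightarrow{A_i} T_{i+1}$).

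First I would establish, by an easy induction on $n$ using \Cref{lemma:chain}, that every intermediate monitor in the sequence is itself of the form $m_i = \Mcv{\varphi_i}{\sigma_i}$ for some formula $\varphi_i$ and some $\sigma_i$; in particular $m_n = \Mcv{\varphi_n}{\sigma_n}$. Then, applying \Cref{lemma:valto} at the last configuration, $m_n \valto v$ yields $\sem{\varphi_n}_{\sigma_n} = \Htrc_\Loc$ when $v = \yes$ and $\sem{\varphi_n}_{\sigma_n} = \emptyset$ when $v = \no$. In either case we obtain the base link between the semantic judgment about $T_n$ and the value of $v$: $T_n \in \sem{\varphi_n}_{\sigma_n}$ when $v=\yes$ and $T_n \notin \sem{\varphi_n}_{\sigma_n}$ when $v=\no$.

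From this anchoring point I propagate the semantic judgment backwards along the chain. For every $i < n$ we have $\Mcv{\varphi_i}{\sigma_i} \triangleright T_i \rightarrowtail \Mcv{\varphi_{i+1}}{\sigma_{i+1}} \triangleright T_{i+1}$, so \Cref{lemma:steps} allows us to lift membership/non-membership from $T_{i+1}$ at $\varphi_{i+1}$ to $T_i$ at $\varphi_i$. Iterating $n$ times (which is a straightforward secondary induction on $n$) transports the conclusion all the way to the initial configuration, giving $T \in \sem{\varphi}$ when $v=\yes$ and $T \notin \sem{\varphi}$ when $v=\no$, as required; note that we use $\sem{\varphi} = \sem{\varphi}^\emptyset_\emptyset$ since $\varphi$ is closed.

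The only subtlety worth flagging is the interaction between \Cref{lemma:chain} and \Cref{lemma:steps}: the former guarantees that the ``synthesized-monitor'' invariant is preserved along $\rightarrowtail$, which is exactly what the hypothesis of the latter requires, so the two lemmas compose cleanly. No heavy technical obstacle is expected in this proof, since the real difficulty has already been absorbed into \Cref{lemma:valto,lemma:chain,lemma:steps}; the argument here is just a bookkeeping induction that stitches them together.
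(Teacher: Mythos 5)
Your proposal is correct and follows essentially the same route as the paper: the paper likewise proceeds by induction on the length of the reduction sequence, anchoring the verdict with \Cref{lemma:valto}, using \Cref{lemma:chain} to maintain the invariant that every intermediate monitor is a synthesized one, and transporting the semantic judgment across each instrumentation step via \Cref{lemma:steps}. Your backward-propagation phrasing is just a reorganization of the paper's forward induction on the sequence length, with the three lemmas playing identical roles.
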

\begin{proof}
By definition, there exist an integer $h>0$ and $m_1 \triangleright T_1, \ldots, m_h \triangleright T_h$ such that $m_1 = \Mcv{\varphi}{\emptyset}$, $T_1 = T$,
$m_i \triangleright T_i \rightarrowtail m_{i+1} \triangleright T_{i+1}$ (for every $i = 1,\ldots,h-1$), and
$m_h \triangleright T_h \rightarrowtail v$.
We proceed by induction on $h$.
The base case holds because of \Cref{lemma:valto}.
The inductive case holds because of \Cref{lemma:steps}, which can be applied because of \Cref{lemma:chain}.
\end{proof}


We now move to violation completeness, i.e. that all hypertraces that do not belong to $\sem\varphi$ are rejected by the monitor for $\varphi$. This is proved in \Cref{thm:complCentr}, that requires two preliminary easy lemmata, whose proofs hold by definition of the operational semantics.

\begin{lemma}\label{lem:pandNO}
If $m \triangleright T \rightarrowtail^* \no$, then
$(m_1 \pand \ldots \pand m_h \pand m \pand m_{h+1} \pand\ldots\pand m_k)\triangleright T \rightarrowtail^* \no$, for every $m_1,\ldots,m_h,m_{h+1},\ldots,m_k$ 
without free variables.
\end{lemma}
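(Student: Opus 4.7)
The plan is to proceed by induction on the length of the reduction $m \triangleright T \rightarrowtail^* \no$. A small preliminary fact, provable by structural induction on monitors, states that every closed monitor can take a transition on any action profile $A$: verdicts self-loop via $v \xrightarrow{A} v$, prefixes $a_\ell.n$ always reduce (either to $n$ or to $\vend$), and the inductive cases go through $+$, $\pandor$, and recursion unfolding. A companion observation is that $\xrightarrow{A}$ preserves closedness, so this property can be invoked repeatedly along the derivation.

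For the base case, the single step $m \triangleright T \rightarrowtail \no$ must arise from the second rule of \Cref{tab:instrumentation}, so $m \valto \no$. Iterating the absorption rule $\inferrule{m \valto \no}{m \pand n \valto \no}$ from \Cref{tab:verdict_evaluation_semantics} on the left and on the right of $m$ (using commutativity) yields $(m_1 \pand \cdots \pand m_h \pand m \pand m_{h+1} \pand \cdots \pand m_k) \valto \no$. A final application of the instrumentation rule then gives the conclusion $(m_1 \pand \cdots \pand m \pand \cdots \pand m_k) \triangleright T \rightarrowtail \no$.

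For the inductive step, decompose the derivation as $m \triangleright T \rightarrowtail m'' \triangleright T'' \rightarrowtail^* \no$, which forces $m \xrightarrow{A} m''$ and $T \xrightarrow{A} T''$ for some $A$. By the preliminary fact, each closed $m_i$ admits a transition $m_i \xrightarrow{A} m_i''$, with $m_i''$ still closed. Iterating the $\pand$ rule of \Cref{tab:centralized_operational_semantics} produces the joint transition
\[
    (m_1 \pand \cdots \pand m \pand \cdots \pand m_k) \xrightarrow{A} (m_1'' \pand \cdots \pand m'' \pand \cdots \pand m_k''),
\]
whence by the first instrumentation rule the composition takes a step to $(m_1'' \pand \cdots \pand m'' \pand \cdots \pand m_k'') \triangleright T''$. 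The induction hypothesis applied to $m'' \triangleright T'' \rightarrowtail^* \no$ and the new composition (still closed componentwise) finishes the argument. The only subtle point is to ensure that every $m_i$ stays able to transition along the whole derivation, which is precisely what the preservation of closedness under $\xrightarrow{A}$ provides; otherwise, the proof is purely driven by the operational rules.
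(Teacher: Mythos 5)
Your overall strategy---induction on the length of the run, absorption of $\no$ under $\pand$ for the base case, and a joint $A$-transition of all components for the inductive step---is the right one; the paper gives no explicit proof and simply asserts that the lemma holds ``by definition of the operational semantics'', so at that level there is nothing to deviate from. The genuine problem is your preliminary fact. The claim that \emph{every} closed monitor can take a transition on any action profile $A$ is false: for $\rec x.x$ the only applicable rule requires $x\{^{\rec x.x}/_x\} = \rec x.x$ itself to move first, so this monitor has no transitions at all, and consequently $a_\ell.\no \pand \rec x.x$ is stuck even though $a_\ell.\no \triangleright T \rightarrowtail^* \no$ whenever $T(\ell)$ starts with $a$. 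Your proof of the fact also cannot be ``by structural induction'' as written: in the recursion case the premise of the rule mentions $m\{^{\rec x.m}/_x\}$, which is not a subterm of $\rec x.m$, so that is exactly where the induction breaks.

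The repair is to invoke the paper's standing guardedness assumption: all formulas are guarded, so every $m_i$ that actually arises in the application of this lemma inside \Cref{thm:complCentr} is synthesized from a closed guarded formula, and for such monitors reactivity does hold. Establishing it requires a well-founded measure (e.g., the number of top-level $\rec$ binders encountered before reaching a prefix), in the same spirit as the order $\prec$ the paper uses in its reactivity proof for decentralized monitors; guardedness guarantees that unfolding strictly decreases that measure. You also need the (easy but unstated) observation that reactivity, and not merely closedness, is preserved by $\xrightarrow{A}$, since the induction hypothesis is applied to the derivatives $m_i''$. With the lemma restricted to guarded (equivalently, reactive) closed monitors and these two amendments in place, your argument is complete.
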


\begin{lemma}\label{lem:porNO}
If $m_i \triangleright T \rightarrowtail^* \no$ for every $i = 1,\ldots,k$, then
$(m_1 \por \ldots \por m_k)\triangleright T \rightarrowtail^* \no$.
\end{lemma}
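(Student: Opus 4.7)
The lemma is a structural property of the rules in \Cref{tab:centralized_operational_semantics,tab:verdict_evaluation_semantics}: the $\valto$-rule $m\valto\no,\,n\valto\no\Rightarrow m\por n\valto\no$ and the transition rule $m\pandor n\xrightarrow{A}m'\pandor n'$ together should suffice to lift the individual $\no$-derivations of each $m_i$ into a $\no$-derivation of the parallel-or composition. The only genuine subtlety is that the given derivations $m_i\triangleright T\rightarrowtail^*\no$ may reach their $\valto\no$-evaluable states after different numbers $k_i$ of instrumentation steps, whereas a $\pandor$-transition forces every component to step simultaneously on the same action map $A$, which is uniquely determined by $T$ via $\hd$ and $\tl$.

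I would bridge this gap with an auxiliary \emph{progress preservation} lemma: for every closed monitor $m$ with $m\valto v$ and every $A:\Loc\to\Act$, there exists $m'$ such that $m\xrightarrow{A}m'$ and $m'\valto v$. Because the derivation of $m_1\pand m_2\valto\no$ may come from $m_1\valto\yes$ together with $m_2\valto\no$, this must be proved by mutual induction on the predicates $\valto\yes$ and $\valto\no$. The induction is a routine structural induction on the $\valto$-derivation: the verdict base case uses $v\xrightarrow{A}v$; the $+$- and $\rec$-cases apply the induction hypothesis directly after the corresponding transition rule; and the parallel cases combine the induction hypothesis on the ``driving'' component with a supporting sub-lemma stating that every closed guarded monitor admits an $A$-transition for every $A$. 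This sub-lemma is itself a straightforward structural induction, with guardedness excluding infinite $\rec$-unfoldings without an intervening prefix.

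With the progress lemma in hand, set $k^\star = \max_i k_i$. Iterating the lemma at most $k^\star - k_i$ times on each derivation yields extended derivations $m_i\xrightarrow{A_1}\cdots\xrightarrow{A_{k^\star}}m_i^{(k^\star)}$ of uniform length $k^\star$ whose final states still satisfy $m_i^{(k^\star)}\valto\no$. Assembling these derivations in lockstep via the $\pandor$-rule produces $m_1\por\cdots\por m_k\triangleright T\rightarrowtail^{k^\star} m_1^{(k^\star)}\por\cdots\por m_k^{(k^\star)}\triangleright T^{(k^\star)}$, whose endpoint $\valto\no$ by a trivial induction on $k$ using the $\por$-rule of \Cref{tab:verdict_evaluation_semantics}; a final instrumentation step via the verdict rule in \Cref{tab:instrumentation} delivers $\no$. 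The main obstacle is the progress lemma itself: its $\pand$-subcase is what forces the $\yes$/$\no$ mutual induction, and the transition-availability sub-lemma is where guardedness becomes essential; everything else is bookkeeping dictated by the operational rules.
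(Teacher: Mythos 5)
Your proof is correct, and it takes the only reasonable route: the paper itself gives no argument for this lemma (it is dismissed as holding ``by definition of the operational semantics''), so your write-up simply supplies the bookkeeping the authors leave implicit. You correctly isolate the one genuinely non-trivial point — the component derivations may apply the verdict rule after different numbers of steps, while the $\por$-transition rule forces lockstep — and your verdict-preserving progress lemma (if $m\valto v$ then for every $A$ there is $m'$ with $m\xrightarrow{A}m'$ and $m'\valto v$, proved by mutual induction on $\valto\yes$/$\valto\no$ together with reactivity of closed guarded monitors) is exactly the right device for padding the shorter derivations to a common length; note only that the closedness/guardedness hypotheses you invoke for the reactivity sub-lemma are indeed the ones under which the lemma is applied (cf.\ the ``without free variables'' side condition in \Cref{lem:pandNO}).
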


\begin{theorem}[Violation Completeness]\label{thm:complCentr}
Let $\varphi \in $ \WTFPHypermuHML be a closed formula and $T \in \Htrc_\Loc$.
If $T \not\in \sem\varphi$, then $\Mcv{\varphi}{\emptyset} \triangleright T \rightarrowtail^* \no$.
\end{theorem}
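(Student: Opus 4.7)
The plan is to proceed by structural induction on $\varphi$, with a strengthened statement that handles free location variables: if $T \notin \sem{\varphi}_\sigma$ then $\Mc{\varphi} \triangleright T \rightarrowtail^* \no$; the theorem then follows by taking $\sigma = \emptyset$. The base cases $\ttt$, $\ff$, $\pi = \pi'$, and $\pi \neq \pi'$ are immediate from \Cref{tab:synthesis} and the semantics, since the synthesis yields $\no$ directly whenever the satisfaction set is empty. For $\varphi \wedge \varphi'$, a violation by $T$ means at least one conjunct is violated; the induction hypothesis supplies $\no$ on that conjunct and \Cref{lem:pandNO} propagates the verdict through $\pand$. The case $\forall \pi. \varphi$ is analogous, using the failing location as the failing conjunct. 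The cases $\vee$ and $\exists$ are dual: every disjunct is violated, the IH yields $\no$ for each, and \Cref{lem:porNO} concludes.

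For the modalities, if $T \notin \sem{[a_\pi] \varphi}_\sigma$ then the semantics forces $\hd(T)(\sigma(\pi)) = a$ and $\tl(T) \notin \sem{\varphi}_\sigma$; the monitor picks the $a$-branch of the outer sum, transitions under $\hd(T)$ to $\Mc{\varphi}$, and the induction hypothesis applied to $\tl(T)$ and $\varphi$ yields $\rightarrowtail^* \no$. For $\langle a_\pi \rangle \varphi$, a violation either means $\hd(T)(\sigma(\pi)) = b$ for some $b \neq a$ (and the monitor takes the corresponding $b_{\sigma(\pi)}.\no$ branch, reaching $\no$ in one step), or $\hd(T)(\sigma(\pi)) = a$ with $\tl(T) \notin \sem{\varphi}_\sigma$ (which proceeds exactly as in the necessity case).

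The delicate case, and the expected main obstacle, is $\max x. \psi$: a naive structural induction breaks because unfolding produces a larger formula. My plan is to exploit the fact that \WTFPHypermuHML expresses only safety properties, so any violation is witnessed by a finite prefix. Concretely, I would define approximants $\varphi^{(0)} = \ttt$ and $\varphi^{(n+1)} = \psi\{{}^{\varphi^{(n)}}/{}_x\}$, show that $\sem{\max x. \psi}_\sigma = \bigcap_{n \geq 0} \sem{\varphi^{(n)}}_\sigma$ (leveraging guardedness so that the descending chain stabilizes at $\omega$), and then carry out a nested induction on the least $n$ with $T \notin \sem{\varphi^{(n)}}_\sigma$. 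The inductive step uses the operational rule for $\rec x.m$ together with \Cref{lemma:recursion-substitution} to argue that one unfolding of $\rec x.\Mc{\psi}$ mirrors the formula-level substitution $\psi\{{}^{\max x.\psi}/{}_x\}$ at the monitor level; this reduces the approximant index by one and lets the structural hypothesis on $\psi$ take over. The main technical care is in handling nested greatest fixed points (so that inner $\max$ operators are approximated simultaneously via a well-founded combined measure on formula structure and approximant index) and in aligning the formula-level substitution with the monitor-level recursion unfolding precisely via the substitution lemma.
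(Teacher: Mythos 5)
Your treatment of the non-fixed-point cases coincides with the paper's: the base cases, the use of \Cref{lem:pandNO} and \Cref{lem:porNO} for $\wedge/\forall$ and $\vee/\exists$, and the case analysis on $\hd(T)(\sigma(\pi))$ for the modalities are all exactly as in the paper's proof. The divergence, and the problem, is in the $\max x.\psi$ case. The paper does \emph{not} use approximants: it defines an environment $\rho_\sigma(x) = \{T \mid \Mcv{\cl(x)}{\sigma}\triangleright T \not\rightarrowtail^* \no\}$ (together with a closure operation $\cl$ that substitutes fixed-point subformulas for free recursion variables), proves the strengthened statement for \emph{open} subformulas relative to $\rho_\sigma$, and handles $\max x.\chi$ by a Knaster--Tarski argument: the set $S$ of hypertraces not rejected by $\Mcv{\cl(\chi)}{\sigma}$ is shown (by the inductive hypothesis for $\chi$) to be a post-fixed point of the semantic functional, hence $S \subseteq \sem{\max x.\chi}_\sigma^{\rho_\sigma}$. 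This sidesteps both of the obstacles your route runs into.

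Concretely, your plan has two gaps. First, the identity $\sem{\max x.\psi}_\sigma = \bigcap_{n}\sem{\varphi^{(n)}}_\sigma$ is not a consequence of guardedness; it requires the semantic functional to be co-continuous so that the greatest fixed point is reached at $\omega$. That does hold here (hypertrace transitions are deterministic, $\Loc$ and $\Act$ are finite, and nested $\max$ operators can be handled by a Beki\'{c}-style exchange argument), but it is a substantive claim you would have to prove, not a remark. Second, and more seriously, the step where ``the structural hypothesis on $\psi$ takes over'' does not go through as described: after one unfolding, the running monitor is $\Mc{\psi}\{^{\rec x.\Mc{\psi}}/_x\} = \Mc{\psi\{^{\max x.\psi}/_x\}}$ (by \Cref{lemma:recursion-substitution}), which is \emph{not} the monitor $\Mc{\psi\{^{\varphi^{(n-1)}}/_x\}}$ of the approximant your induction is tracking. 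The two differ precisely at the recursion leaves, where one has $\rec x.\Mc{\psi}$ and the other has $\yes$ (for $\varphi^{(0)}=\ttt$) or a shallower unfolding. You would need an additional lemma showing that $\no$-reachability transfers from each approximant's monitor to the recursive monitor (i.e., that replacing those leaves cannot destroy a $\no$-derivation, which requires inspecting the $\valto$ rules for $\pand$ and $\por$). Without that lemma the approximant induction proves a statement about the wrong monitor. Also note that your strengthened statement quantifies only over free location variables, so it does not even apply to the body $\psi$, which has $x$ free; the approximant substitution is what keeps your formulas closed, which is exactly why the monitor mismatch above cannot be waved away.
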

\begin{proof}
    We assume that in $\varphi$ every recursive variable $x$ appears in the scope of a unique (max) fixed-point subformula of $\varphi$ of the form $\fx(x) \triangleq \max x.\psi$. Let $\leq_\varphi$ be a partial order of the recursive variables in $\varphi$, such that $x \leq_\varphi y$ if $\fx(x)$ is a subformula of $\fx(y)$ in $\varphi$.
    We now define the closure $\cl(\psi)$ of a subformula $\psi$ of $\varphi$ by induction on the number of the free recursion variables in $\psi$:
    if $\psi$ is closed, then $\cl(\psi) \triangleq \psi$; otherwise, $\cl(\psi) \triangleq \cl(\psi\{^{\fx(x)}/_x\})$, where $x$ is $\leq_\varphi$-minimal in $\psi$.

    For each $\sigma$, let $\rho_\sigma$ be such that $\rho_\sigma(x) = \{ T \mid \Mcv{\cl(x)}{\sigma} \triangleright T \not\rightarrowtail^* \no \}$ for every $x$ in its domain.
By induction on the structure of $\psi$, we prove that:
\begin{quote}
for every (not necessarily closed) subformula $\psi$ of $\varphi$, $T \not\in \sem\psi_\sigma^{\rho_\sigma}$ implies  $\Mcv{\cl(\psi)}{\sigma} \triangleright T \rightarrowtail^* \no$, 
for $\sigma$ and $\rho_\sigma$ such that
$\FVloc(\psi) \subseteq \dom(\sigma)$ and
$\FVrec(\psi) \subseteq \dom(\rho_\sigma)$.
\end{quote}
Clearly, the theorem is proved if we consider $\varphi$.
There are seven possible base cases:
\begin{itemize}
\item $\psi = \ttt$, or $\psi = (\pi = \pi')$ for $\sigma(\pi) = \sigma(\pi')$,
or $\psi = (\pi \neq \pi')$ for $\sigma(\pi) \neq \sigma(\pi')$: in these cases, all $T$'s belong to $\sem\psi^{\rho_\sigma}_\sigma$, so there is nothing to prove.

\item $\psi = \ff$, or $\psi = (\pi = \pi')$ for $\sigma(\pi) \neq \sigma(\pi')$,
or $\psi = (\pi \neq \pi')$ for $\sigma(\pi) = \sigma(\pi')$: in these cases, 
$\Mc{\psi} \triangleq \no$ and the claim is then trivial.

\item $\psi = x$: immediate from the definition of $\rho_\sigma$.
\end{itemize}
For the inductive step, we distinguish the outmost operator in $\psi$:
\begin{itemize}
\item $\psi = \psi_1 \wedge \psi_2$: by definition of the semantics, $T \not\in \sem{\psi}^{\rho_\sigma}_\sigma$ if and only if $T \not\in \sem{\psi_i}^{\rho_\sigma}_\sigma$, for some $i \in \{1,2\}$.
By inductive hypothesis, $\Mcv{\cl(\psi_i)}{\sigma} \triangleright T \rightarrowtail^* \no$.
By definition of $\Mcv{\cl(\psi)}{\sigma}$ and Lemma~\ref{lem:pandNO}, $\Mcv{\cl(\psi)}{\sigma} \triangleright T \rightarrowtail^* \no$.

\item $\psi = \psi_1 \vee \psi_2$: by definition of the semantics, $T \not\in \sem{\psi}^{\rho_\sigma}_\sigma$ if and only if $T \not\in \sem{\psi_i}^{\rho_\sigma}_\sigma$, for $i \in \{1,2\}$.
By inductive hypothesis, $\Mcv{\cl(\psi_i)}{\sigma} \triangleright T \rightarrowtail^* \no$.
By definition of $\Mcv{\cl(\psi)}{\sigma}$ and Lemma~\ref{lem:porNO},
$\Mcv{\cl(\psi)}{\sigma} \triangleright T \rightarrowtail^* \no$.

\item $\psi = [a_\pi]\chi$: by definition of the semantics, $T \not\in \sem{\psi}^{\rho_\sigma}_\sigma$ if and only if $\hd(T)(\sigma(\pi)) = a$ and $\tl(T)\not\in\sem{\chi}^{\rho_\sigma}_\sigma$.
By definition, $\Mcv{\cl(\psi)}{\sigma} = a_{\sigma(\pi)}.\Mcv{\cl(\chi)}{\sigma} + \sum_{b \neq a} b_{\sigma(\pi)}.\yes$; hence, $\Mcv{\cl(\psi)}{\sigma} \triangleright T \rightarrowtail \Mcv{\cl(\chi)}{\sigma} \triangleright \tl(T)$. By inductive hypothesis, $\Mcv{\cl(\chi)}{\sigma} \triangleright \tl(T) \rightarrowtail^* \no$, and we easily conclude.

\item $\psi = \langle a_\pi \rangle\chi$: by definition of the semantics, $T \not\in \sem{\psi}^{\rho_\sigma}_\sigma$ if and only if either $\hd(T)(\sigma(\pi)) \neq a$ or $\tl(T)\not\in\sem{\chi}^{\rho_\sigma}_\sigma$.
By definition, $\Mcv{\cl(\psi)}{\sigma} = a_{\sigma(\pi)}.\Mcv{\cl(\chi)}{\sigma} + \sum_{b \neq a} b_{\sigma(\pi)}.\no$. If $\hd(T)(\sigma(\pi)) \neq a$, then 
$\Mcv{\cl(\psi)}{\sigma} \triangleright T \rightarrowtail \no \triangleright \tl(T)$ and we easily conclude.
Otherwise, we work like in the previous case.

\item $\psi = \forall\pi.\chi$: by definition of the semantics, $T \not\in \sem{\psi}^\rho_\sigma$ if and only if there exists $\ell \in \Loc$ such that $T \not\in\sem{\chi}^\rho_{\sigma[\pi \mapsto \ell]}$.
By inductive hypothesis, $\Mcv{\cl(\chi)}{\sigma[\pi \mapsto \ell]} \triangleright T \rightarrowtail^* \no$.
We conclude by definition of $\Mcv{\cl(\psi)}{\sigma}$ and Lemma~\ref{lem:pandNO}.

\item $\psi = \exists\pi.\chi$: by definition of the semantics, $T \not\in \sem{\psi}^\rho_\sigma$ if and only if for all $\ell \in \Loc$ it holds that $T \not\in\sem{\chi}^\rho_{\sigma[\pi \mapsto \ell]}$.
By inductive hypothesis, $\Mcv{\cl(\chi)}{\sigma[\pi \mapsto \ell]} \triangleright T \rightarrowtail^* \no$, for all $\ell$. We conclude by definition of $\Mcv{\cl(\psi)}{\sigma}$ and Lemma~\ref{lem:porNO}.

\item $\psi = \max x.\chi$: we prove the contrapositive, i.e. that $\Mcv{\cl(\psi)}{\sigma}\triangleright T \not\rightarrowtail^* \no$ implies that $T \in \sem{\psi}_\sigma^{\rho_\sigma}$.
Let us consider $S \triangleq \{T'\mid \Mcv{\cl(\chi)}{\sigma} \triangleright T' \not\rightarrowtail^* \no\}$. 
We observe that
$\rho_\sigma[x \mapsto S]$ satisfies the assumptions required by the inductive statement; so,
the inductive hypothesis gives us that $S \subseteq \sem{\chi}_\sigma^{\rho_\sigma[x \mapsto S]}$, and therefore $S \subseteq \sem{\psi}_\sigma^{\rho_\sigma}$.
By the definition of the monitor semantics, $\Mcv{\cl(\psi)}{\sigma}\triangleright T \not\rightarrowtail^* \no$ yields $\Mcv{\cl(\chi)}{\sigma}\triangleright T \not\rightarrowtail^* \no$, meaning that $T \in S$, and therefore $T \in \sem{\psi}_\sigma^{\rho_\sigma}$, which completes the proof.
\qedhere 
\end{itemize}
\end{proof}




\subsection{On the Monitorable Hypersafety Properties}
\label{sec:monHyperprop}

The monitor synthesis we presented above does not cover the whole \HypermuHML, but just \WTFPHypermuHML; nevertheless, this fragment is able to capture all {\em hypersafety properties}, as defined in \cite{hyperproperties}.
Using this definition, one can show that hypersafety properties are subset-closed, as shown in \cite[Theorem 1]{hyperproperties}.

In our setting, $S$ ($\subseteq \Htrc_\Loc$) is a hypersafety property whenever
$$ 
\forall T \in \Htrc_\Loc .\ T \not\in S \implies 
\left( 
\exists M \in \Htrc_{\Loc}^f\ .\ 
\left( 
M \leq T\ \wedge\ 
\forall T' \in \Htrc_{\Loc} .\ M \leq T' \implies T' \not\in S
\right)\right).
$$ 
where $\Htrc_{\Loc}^f$ denotes the set of mappings from $\Loc$ to $\Act^*$ and $M \leq T$ means that, for every $\ell \in \Loc$, there exists a $\ell' \in \Loc$ such that $M(\ell)$ is a prefix of $T(\ell')$. 
\revOne{The formula above tells that a set of hypertraces $S$ is a hypersafety property if every hypertrace $T$ that does not belong to $S$ admits a ``finite justification'' $M$ for this; $M$ is a set of finte traces that can be extended to become $T$ (by possibly changing the traces-to-locations assignment --- as the `$\leq$' relation allows) and that, however we extend it, we obtain some hypertrace that is not in $S$.}

Notice that, in general, \cite{hyperproperties} do not associate traces to locations and, most importantly, they allow hypertraces to be possibly infinite sets of traces.
By contrast, our $T$'s are finite sets, since $|\Loc| = k$, which also implies an upper bound on any possible ``bad thing'' $M$. For such $M$'s it has been proven (see \cite{monitoring_hyperLTL}) that, for each $k>1$, these properties coincide with the so-called {\em $k$-safety hyperproperties}, which in turn correspond to HyperLTL sentences with at most $k$ universal quantifiers. 

However, our monitor synthesis also handles formulae with quantifier alternation; as an example, consider Goguen and Meseguer 's classic definition of the non-interference property given in \cite{hyperproperties}
(see $GMNI$ in \cref{ex:GMNI}).
%
Clearly, the hyperproperty {\em GMNI} can be expressed in the fragment of \HypermuHML for which we have synthesized centralized monitors in \cref{sec:centrSyn}
(see $\varphi_{GMNI}$ in \cref{ex:GMNI}): indeed, to detect violations for a fixed set of traces, it suffices to check all pairs for such a violation, which is what the formula expresses and the monitor synthesized from it does.
However, $GMNI$ is known not to be a hypersafety property, since it is not subset closed.

Hence, if we want to characterize all the properties monitorable for violations in our centralized monitoring set-up, where verdicts are irrevocable, what is needed is a variation on the notion of hypersafety properties that takes the fact that hypertraces are $\Loc$-indexed families of traces into account.
We believe that the following proposal better corresponds to our set-up, and hence to the power of the monitors we have discussed so far. 
Intuitively, by following the characterization of safety property given in \cite{AS85} (i.e., the well-known motto ``nothing bad happens''), the two key characteristics of the notion of the ``bad thing'' witnessing a violation of a hypersafety property are that it must be finitely observable and irremediable for a given \textit{fixed} set of locations $\Loc$.

\begin{definition}
A hyperproperty $S$ is {\em violation-monitorable} whenever
$$
\forall T \in \Htrc_\Loc .\ T \not\in S \implies 
\left( 
\exists M \in \Htrc_{\Loc}^f\ .\ 
\left( 
M \sqsubseteq T\ \wedge\
\forall T' \in \Htrc_{\Loc}.\  M \sqsubseteq T' \Rightarrow T' \not\in S
\right)\right).
$$
where $M \sqsubseteq T$ means that, for every $\ell \in \Loc$, it holds that $M(\ell)$ is a prefix of $T(\ell)$.
\end{definition}

Essentially, this definition allows for the ``bad thing'' $M$ to be infinitely extended (as $T'$), but, differently from hypersafety, no new trace can be added (i.e., $M$ and $T'$ are defined on the very same set of locations $\Loc$). 
Notice that $GMNI$ is a violation-monitorable property. Indeed, let $T \not\in GMNI$; 
then, there is some location $\ell$ such that, for every $\ell' \in \Loc$, it holds that either $T(\ell')$ 
contains some high action, or the low parts of $T(\ell)$ and $T(\ell')$ are different. For each $\ell' \in \Loc$, let $i_{\ell'}$ 
be the first index such that $T(\ell')(i_{\ell'})$ is high or $T(\ell)(i_{\ell'}) \neq T(\ell')(i_{\ell'})$, and let $i$ be the 
maximum of all the $i_{\ell'}$. Let $M$ be the $\Loc$-indexed set of finite traces mapping each $\ell' \in \Loc$ to the prefix 
of $T(\ell')$ of length $i$. 
Now, for every $T'$ such that $M \sqsubseteq T'$, we have that $T' \not\in GMNI$ because its prefix $M$ provides a witness breaking the definition of GMNI for $T'(\ell)$. 

We conjecture that violation-monitorable hypersafety corresponds to the properties that are monitorable for violations in the parallel monitor set-up. 
Moreover, such a definition should also allow us to prove maximality results for a specific monitoring set-up; for example, 
 any violation-monitorable property is monitorable for violations by the monitors produced by our synthesis, and vice versa. This is a challenging direction for future research.
 Furthermore, the classic definition of hypersafety property and that of violation-monitorable hyperproperty given above are parametrized on a relation between $\Loc$-index sets of finite traces and $\Loc$-indexed sets of traces: the relations $\leq$ and $\sqsubseteq$. Another variation on such a relation could be defined modulo a permutation of $\Loc$, so that the names of the locations do not really matter:
\begin{center}
$M \preceq T$ iff there is a permutation $\gamma$ over $\Loc$ such that $M(\ell)$ is a prefix of $T(\gamma(\ell))$, for each $\ell \in \Loc$. 
\end{center}
The study of the notion of safety that emerges by using $\preceq$ and how it compares with the above notions of safety is another direction for future work.

\section{Decentralized Monitoring}
\label{sec:decMon}

When verifying a distributed system, having a central authority that performs any type of runtime verification is a strong assumption, as it reduces the appeal of distribution, creates single points of failure during verification and can pose problems in storing all the traces locally, especially in light of the wide availability of multi-core systems.
Thus, we study to what extent hyperproperties can be monitored by decentralized monitors; these avoid high contentions (leading to vastly improved scalability~\cite{AcetoAFI24}) and also offer better privacy guarantees (whenever they are stationed locally at the nodes where the respective traces are generated~\cite{FraGP13,JiaGP16}). 

\subsection{Decentralized Monitors: Syntax and Operational Semantics}
\label{sec:monitors}
We associate monitors to locations, denoted by $\ell$, and monitors associated to $\ell$ monitor only actions required to happen at $\ell$, thus allowing the processing of events to happen locally. This imposes some form of coordination between monitors at different locations. For this reason, we introduce the possibility for monitors to communicate.



We define a communication alphabet $\Com$, ranged over by $c$, over some finite alphabet of communication constants $\Cvar$ (that contains $\Act$), ranged over by $\gamma$, as
\[
    \Com \ni c\ ::=\ (!G,\gamma) \ \mid\ (?G,\gamma),
\]
where $G\subseteq\Loc$ and $\gamma \in\Cvar$.
We have a communication action $(!G,\gamma)$ for sending $\gamma$ to group $G$ (multicast communication), and one $(?G,\gamma)$ for receiving $\gamma$ from any monitor from the set $G$.
Point-to-point communication can be represented by taking singleton sets for $G$.

The syntax of decentralized monitors is given by the following grammar:
	\begin{align*}
	\DMon \ni M &::=  [m]_\ell ~\mid~ M\vee M ~\mid M\wedge M  \\
	\LMon \ni m &::= \yes \,\mid\, \no \,\mid\, \vend \,\mid\, a.m \,\mid\, c.m \,\mid\,
	m+m \,\mid\, m \por m \,\mid\, m\pand m \,\mid\, \rec{x}.m \,\mid\, x
\end{align*}
Notationally, in what follows we shall sometimes use $\op$ to denote any of $\wedge$ and $\vee$.
Monitor $[m]_\ell$ denotes that $m$ monitors the trace located at location $\ell$, so, it is ‘localized' at $\ell$ (this justifies the name $\LMon$).
Monitors assigned to the same trace run in parallel and observe identical events;
contrary to \cite{circuit_monitors}, monitors assigned to different traces are no longer completely isolated from each other, but can now communicate, which is the main new feature of the decentralized set-up.  


The operational rules for $m \in \LMon$ are given in \Cref{tab:decentralized_locla_operational_semantics}. Notice that, when we have parallel monitors, only one of them at a time can send; by contrast,
all those that can receive from some location $\ell$ are forced to do so.
\begin{table}
    \hrule
\begin{mathpar}
a.m \xrightarrow a m
\and
\inferrule{\ell\in G}{
(?G,\gamma).m \xrightarrow {(?\ell,\gamma)} m}
\and
(!G,\gamma).m \xrightarrow {(!G,\gamma)} m
\and
v \xrightarrow a v
\and
\inferrule{m\{^{\rec x.m}/_x\} \xrightarrow\lambda m'}{\rec x.m \xrightarrow\lambda m'}
\and
\inferrule{m \xrightarrow a m' \\ n \xrightarrow a n'}{m \pandor n \xrightarrow a m' \pandor n'}
\and
\inferrule{m \xrightarrow{(?\ell,\gamma)} m' \\ n \xrightarrow{(?\ell,\gamma)} n' }{m \pandor n \xrightarrow{(?\ell,\gamma)} m' \pandor n'}
 \and 
 \inferrule{m \xrightarrow\lambda m'}{m+n \xrightarrow\lambda m'}
 \and
 \inferrule{m \xrightarrow{(!G,\gamma)} m'}{m \pandor n \xrightarrow{(!G,\gamma)} m' \pandor n} 
\and
\inferrule{m \xrightarrow{(?\ell,\gamma)} m' \\ n   \NOT{\xrightarrow{(?\ell,\gamma)}}}{m \pandor n \xrightarrow{(?\ell,\gamma)} m' \pandor n}
\end{mathpar}
\hrule\vspace{0.5em}
\caption{The operational semantics for decentralized local monitors (up to commutativity of +, $\pand$ and $\por$), where we let $\lambda$ denote either $a$, $(!G,\gamma)$ or $(?\ell,\gamma)$ for $\ell\in\Loc$, $G\subseteq\Loc$. }
\label{tab:decentralized_locla_operational_semantics}

\end{table}

\begin{table}[t]
\hrule
\begin{minipage}{0.55\textwidth}
\center

\[\begin{array}{lll}
\inferrule{m\xrightarrow {(!G,\gamma)} m'}
{[m]_\ell\xrightarrow {\ell:(!G,\gamma)} [m']_\ell}
&
    \inferrule{m\xrightarrow {(?\ell',\gamma)} m' \\ \ell\in G}
    {[m]_\ell \extoverset {$G:(?\ell',\gamma)$}\rightsquigarrow [m']_\ell}
\vspace*{.3cm}
\\
\inferrule{m\NOT{\xrightarrow {(?\ell',\gamma)}}}
{[m]_\ell \extoverset {$G:(?\ell',\gamma)$}\rightsquigarrow [m]_\ell}
&
\inferrule{\ell\notin G}
{[m]_\ell \extoverset {$G:(?\ell',\gamma)$}\rightsquigarrow [m]_\ell}
\vspace*{.3cm}
\\
\multicolumn{2}{l}{
\inferrule{M \extoverset{$G:(?\ell,\gamma)$}\rightsquigarrow M' \\ N \extoverset{$G:(?\ell,\gamma)$}\rightsquigarrow N' }{M \op N \extoverset{$G:(?\ell,\gamma)$}\rightsquigarrow M' \op N'}
}
\vspace*{.3cm}
\\
\multicolumn{2}{l}{
\inferrule{M \xrightarrow{\ell:(!G,\gamma)} M' \\ N \extoverset{$G:(?\ell,\gamma)$}\rightsquigarrow N'  }{M \op N \xrightarrow{\ell:(!G,\gamma)} M' \op N'}
}
\vspace{0.5em}
\end{array}\]

\caption{Operational semantics for communication of $M\in\DMon$ (up to commutativity of $\wedge$ and $\vee$ -- denoted by $\op$).}
\label{tab:communicating_operational_semantics_communicating_monitors}
\end{minipage}
\quad
\begin{minipage}{0.4\textwidth}
\center

\[\begin{array}{l}
    \inferrule{A(\ell)= a \\ m \xrightarrow{a} m'
    }{%
   [m]_\ell  \xrightarrow{A} [m']_\ell
    }
\vspace*{.3cm}
\\
\inferrule{A(\ell)= a \\ m \NOT{\xrightarrow{a}} \\ m \NOT{\xrightarrow{c}}}{%
[m]_\ell  \xrightarrow{A} [\vend]_\ell
}
\vspace*{.3cm}
\\
\inferrule{M \xrightarrow{A} M' \\ N \xrightarrow{A} N'}{
M \op N\xrightarrow{A} M' \op N'
}
\vspace{0.5em}
\end{array}\]

\caption{Operational semantics for actions of $M\in\DMon$ (up to commutativity of $\wedge$ and $\vee$  -- denoted by $\op$).}
\label{tab:actions_operational_semantics_communicating_monitors}
\end{minipage}
\hrule
\end{table}

For $M \in \DMon$, the operational semantics can be found in \Cref{tab:communicating_operational_semantics_communicating_monitors} (the rules concerning communication) and \Cref{tab:actions_operational_semantics_communicating_monitors} (the rules concerning action steps). The operational semantics in \Cref{tab:communicating_operational_semantics_communicating_monitors} defines multicast, where a monitor located at $\ell$ sends a message to group $G$ and every monitor at a location in $G$ that can receive from $\ell$ does so; every monitor that cannot, or that is not in $G$, does not change its state.
The first four rules capture the judgment for inferring when all components of a monitor which are able to receive a certain $\gamma$ sent from a location do so. Intuitively, $\ell$ is the location from which message $\gamma$ was sent to group $G$, and $M\extoverset{$G:(?\ell,\gamma)$} \rightsquigarrow N$ indicates that every monitor in $M$ located at a location in $G$ that can receive $\gamma$ from $\ell$ indeed has received $\gamma$ and transitioned appropriately in $N$.
The last two rules then actually define communication. In particular, the last rule in \Cref{tab:communicating_operational_semantics_communicating_monitors} implements multicast by stipulating that the outcome of the synchronization between a send action $\ell:(!G,\gamma)$ and a receive one of the form $G:(?\ell,\gamma)$ is the send action 
itself,
which can be received by other monitors at locations in $G$ in a larger monitor of which $M\op N$ is a sub-term. We note, in passing, that monitors $M\in\DMon$ are `input-enabled': for each $M, G,\ell$ and $\gamma$, there is always some $M'$ such that $M\extoverset{$G:(?\ell,\gamma)$} \rightsquigarrow M'$. So the last rule in \Cref{tab:communicating_operational_semantics_communicating_monitors} (and its symmetric version) can always be applied when the send transition in its premise is available.


Monitors can also locally observe an action, as prescribed by a location-to-action function $A$; the rules are given in \Cref{tab:actions_operational_semantics_communicating_monitors}.
Monitors at the same location
observe the same action.
If a monitor cannot take the action prescribed by $A$ at its location, the monitor becomes $\vend$, as stipulated by the second rule given in \Cref{tab:actions_operational_semantics_communicating_monitors}.
%
Note that it is not sufficient to trigger that rule when $m$ 
cannot 
exhibit action $A(\ell)$: 
we also require that $m$ cannot communicate. 
Note that the inability of $m$ to exhibit action $A(\ell)$ is not sufficient to trigger that rule: we also require that $m$ cannot communicate.
Intuitively, this is because  monitors exhibit an `alternating' behavior in which they observe the next action produced by a system hypertrace and then embark in a sequence of communications with other monitors to inform them of what they observed.  
\revOne{The order in which these communication messages are received is immaterial, since for receive actions a confluence property holds (at least for the monitors synthetized from formulae --- see Lemma~\ref{lemma:commutativereceiving}).}

As will be made clear in our definition 
of a weak bisimulation relation presented in~\Cref{def:weakbis}, 
such communications are interpreted as internal actions in monitor behavior. Therefore, the inability of some monitor $[m]_\ell$ to perform action $A(\ell)$ can only be gauged in `stable states'---that is, monitor states in which no communication is possible. This design choice is akin to that underlying the definition of refusal testing presented in~\cite{Phillips87} and of the stable-failures model for (Timed) CSP defined in~\cite{ReedR99,Roscoe1997}, where the inability of a process to perform some action can only be determined in states that afford no internal computation steps.

\begin{table}
\hrule
\begin{minipage}{0.45\textwidth}
\center

\[\begin{array}{ll}
\inferrule{m \valto v}{[m]_\ell \valto v}
&
\inferrule{M \valto \vend \quad N \valto \vend}{M \op N \valto \vend}
\vspace*{.3cm}
\\
\inferrule{M \valto \no}{M \wedge N \valto \no}
& 
\inferrule{M \valto \yes \quad N \valto v}{M \wedge N \valto v}
\vspace*{.3cm}
\\
\inferrule{M \valto \yes}{M \vee N \valto \yes}
& 
\inferrule{M \valto \no \quad N \valto v}{M \vee N \valto v}
\vspace{0.5em}
\end{array}\]

\caption{The verdict combination rules for decentralized monitors (up to commutativity of $\wedge$ and $\vee$  -- denoted by $\op$).}
\label{tab:verdict_combination_decentralized}
\end{minipage}
\quad
\begin{minipage}{0.45\textwidth}
\center

\[\begin{array}{l}
    \inferrule{M \xrightarrow{A} M' \\ T\xrightarrow{A}T'}
    {%
   M\triangleright T \rightarrowtail M'\triangleright T'}
\vspace*{.3cm}
   \\
   \inferrule{M\xrightarrow{\ell:(!G,\gamma)} M' }
   {%
  M\triangleright T \rightarrowtail M'\triangleright T}
\vspace*{.3cm}
   \\
   \inferrule{M \valto v }
   {%
  M\triangleright T \rightarrowtail v}
\vspace{0.5em}
\end{array}\]

\caption{The evolution of a decentralized monitor instrumented on a hypertrace.}
\label{tab:instrumentation_decentralized}
\end{minipage}
\hrule
\end{table}

Verdict evaluation for $M \in \DMon$ is defined in \Cref{tab:verdict_combination_decentralized} and relies on that for $m \in \CMon$ provided in \Cref{tab:verdict_evaluation_semantics}.
Finally, given a decentalized monitor $M$ and a hypertrace $T$, the instrumentation of the monitor on the trace is described by the rules of \Cref{tab:instrumentation_decentralized}.
 As before, we denote with $\rightarrowtail^* $ the reflexive transitive closure of $ \rightarrowtail$.

\revTwo{
To conclude, let us show how the operational semantics works, by considering the following toy example:
\begin{align*}
m_1 \triangleq a_1.m_1' & \qquad \text{where } \qquad m_1' \triangleq (!\{\ell_2,\ell_3\},a).m_1''
\\
m_2 \triangleq a_2.m_2' & \qquad \text{where } \qquad m_2' \triangleq (?\ell_1,a).m_2''
\\
m_3 \triangleq a_3.m_3' & \qquad \text{where } \qquad m_3' \triangleq (?\ell_1,a).m_3''
\\
m_4 \triangleq a_4.m'_4
\end{align*}
Assume that we have four locations  $\ell_1,\ldots,\ell_4$ and the 4-tuple of actions $A \triangleq [\ell_1 \mapsto a_1, \ell_2 \mapsto a_2, \ell_3 \mapsto a_3, \ell_4 \mapsto a]$. Then,
\[
m_i \xrightarrow{a_i} m_i'
\]
for all $i \in \{1,\ldots,4\}$ and so
\begin{align*}
[m_i]_{\ell_i} \xrightarrow A [m_i']_{\ell_i} & \qquad \text{for every } i \in \{1,2,3\} \text{ (since $a_i = A(\ell_i)$)}
\\
[m_4]_{\ell_4} \xrightarrow A [\vend]_{\ell_4} & \qquad \text{ (since $a_4 \neq A(\ell_4) = a$)}
\end{align*}
Thus,
\[
\bigwedge_{i=1}^4 [m_i]_{\ell_i} \xrightarrow A \bigwedge_{i=1}^3 [m_i']_{\ell_i} \ \ \wedge\ \  [\vend]_{\ell_4}
\]
Now, we can handle communication. First, we can infer that
\[
m_1' \xrightarrow{(!\{\ell_2,\ell_3\},a)} m_1''
\qquad \text{and} \qquad
m_i' \xrightarrow{(?\ell_1,a)} m_i'' \quad \text{for } i \in \{2,3\}
\]
and so 
\[
[m_1']_{\ell_1} \xrightarrow{\ell_1:(!\{\ell_2,\ell_3\},a)} [m_1'']_{\ell_1}
\qquad \text{and} \qquad
[m_i']_{\ell_i} \extoverset {$\{\ell_2,\ell_3\}:(?\ell_1,a)$}\rightsquigarrow  [m_i'']_{\ell_i} \quad \text{for } i \in \{2,3\}
\]
By observing that $\vend \NOT{\xrightarrow {(?\ell_1,a)}}$, we can also infer that
\[
[\vend]_{\ell_4} \extoverset {$\{\ell_2,\ell_3\}:(?\ell_1,a)$}\rightsquigarrow [\vend]_{\ell_4}
\]
and so
\[
[m_2']_{\ell_2} \wedge [m_3']_{\ell_3} \wedge [\vend]_{\ell_4} \extoverset {$\{\ell_2,\ell_3\}:(?\ell_1,a)$}\rightsquigarrow [m_2'']_{\ell_2} \wedge [m_3'']_{\ell_3} \wedge [\vend]_{\ell_4}
\]
Finally, we have that
\[
\bigwedge_{i=1}^3 [m_i']_{\ell_i} \ \ \wedge\ \  [\vend]_{\ell_4} \xrightarrow{\ell_1:(!\{\ell_2,\ell_3\},a)}  \bigwedge_{i=1}^3 [m_i'']_{\ell_i} \ \ \wedge\ \  [\vend]_{\ell_4}
\]
The final verdict of this monitor will be $\vend$ if all $m_i''$ (for $i \in \{1,2,3\}$) evaluate to either $\vend$ or $\yes$; in contrast, if any of them evaluates to $\no$, then the monitor evaluates to $\no$. In particular, because of the failure in observing $a$ in $\ell_4$ (as stipulated by $A$), there is no way to obtain a $\yes$ verdict.
}

\subsection{Synthesizing Decentralized Monitors Correctly}
\label{sec:princ}

In this section we describe how to synthesize decentralized monitors `correctly' from formulas, i.e. such that their behavior corresponds to that of the corresponding centralized monitors. The advantage of this approach is that it simplifies the proof that monitors synthesized via a `correct' decentralized synthesis function are sound and violation-complete, by utilizing the correspondence to centralized monitors. Moreover, it identifies desirable properties of a `correct' decentralized synthesis function that can guide the development of further automated decentralized-monitor synthesis algorithms. 

We first define the correspondence between centralized and decentralized monitors and show that this correspondence is sufficient to obtain soundness and violation-completeness in the decentralized setting from the corresponding results in the centralized setting (\Cref{thm:soundCentr,thm:complCentr}). In the remainder of the section, given a synthesis function which takes as inputs a formula $\varphi$ and a mapping $\sigma$ from location variables to locations, and outputs a monitor $\MdecVarDefault\in\DMon$, we specify criteria that allow us to derive this correspondence.

The correspondence between the centralized and the decentralized monitors is characterized as a weak bisimulation; this first requires defining sequences of communication actions, as the analogous of unobservable transitions in process calculi.

\begin{definition}
We write 
$M \boldarrow M'$ to denote the existence of an integer $h>0$, of 
$h$ monitors $M_1 , \ldots, M_h $ and of $h-1$ locations $\ell_1,\ldots , \ell_{h-1}$ and communication actions $c_1,\ldots , c_{h-1}$ such that $M_1 = M$, $M_h = M'$, and
$M_i \xrightarrow{\ell_i:c_i} M_{i+1} $ (for every $i = 1,\ldots,h-1$). 

Similarly, we write $m\boldarrow m'$ to denote the existence of an integer $h>0$, of $h$ local monitors $m_1,\dots, m_h$ and of $h-1$ actions $c_1,\cdots c_{h-1}\in \{(!G,\gamma),(?\ell,\gamma)\mid G\subseteq\Loc, \ell\in\Loc, \gamma\in\Cvar\}$ such that $m_1=m$, $m_h=m'$ and $m_i\xrightarrow{c_i}m_{i+1}$  (for every $i = 1,\ldots,h-1$).
\end{definition}
Notice that, by definition of $\rightarrow$ on communicating monitors,
each $c_i$ is $(!G_i,\gamma_i)$,
for some $G_i\subseteq\Loc$ and $\gamma_i\in\Cvar$. 

\begin{definition}\label{def:weakbis}
    A binary relation $\R$ over $\DMon\times\CMon$ is a weak bisimulation if and only if, whenever $M\R m$, it holds that:
    \begin{enumerate}
    \item $ \exists M'\in\DMon$ such that $M\boldarrow M'$ and $M'\valto v$ if and only if $m\valto v$.
    \item If $M\xrightarrow{A}M'$ then $\exists m'\in \CMon$ such that $m\xrightarrow{A}m'$ and $M'\R m'$.
    \item If $M\xrightarrow{c}M'$ then $M'\R m$, where $c=\ell:(!G,\gamma)$ for some $\ell\in\Loc$, $G\subseteq\Loc$, $\gamma\in\Cvar$.
    \item If $m\xrightarrow{A}m'$ then there exist $M_1,M_2, M'$ such that $M\boldarrow M_1\xrightarrow{A}M_2\boldarrow M'$ and $M'\R m'$.
    \end{enumerate}
    \end{definition}

    We now prove four facts concerning the instrumentation rules of weakly bisimilar monitors, which we use to prove \Cref{thm:centralvsdisstrong}, which shows that centralized and decentralized monitors behaviorally agree.

    \begin{lemma}\label{lemma:facts}
        Let $\R$ be a weak bisimulation relation such that $M\R m$. Then the following statements hold:
    \begin{enumerate}
        \item\label{itemone} If there exists $M'$ such that $M\triangleright T \rightarrowtail M'\triangleright T'$, then there exists $m'$ such that $m\triangleright T\rightarrowtail^* m' \triangleright T'$ and $M'\R m'$.
        \item\label{itemtwo} If there exists $m'$ such that $m\triangleright T \rightarrowtail m'\triangleright T'$, then there exists $M'$ such that $M\triangleright T\rightarrowtail^* M' \triangleright T'$ and $M'\R m'$.
        \item\label{itemthree} If $M\triangleright T\rightarrowtail v$ then $m\triangleright T\rightarrowtail v$.
        \item\label{itemfour} If $m\triangleright T\rightarrowtail v$ then $M\triangleright T\rightarrowtail^* v$.
    \end{enumerate}
    \end{lemma}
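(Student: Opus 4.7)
The plan is to prove each of the four clauses independently, using the appropriate clause of \Cref{def:weakbis} to match the instrumentation rule that fires in the hypothesis. The key observation is that the decentralized instrumentation rules in \Cref{tab:instrumentation_decentralized} have three forms: an action step (which changes $T$), a send step (which leaves $T$ unchanged), and a verdict step; whereas the centralized instrumentation rules in \Cref{tab:instrumentation} have only the action step and the verdict step. The weak bisimulation treats send actions as silent, so the correspondence should go through neatly.

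For \emph{Item \ref{itemone}}, I would case split on the decentralized instrumentation rule producing $M\triangleright T \rightarrowtail M'\triangleright T'$. If $M\xrightarrow{A}M'$ with $T\xrightarrow{A}T'$, then clause 2 of the bisimulation yields some $m'$ with $m\xrightarrow{A}m'$ and $M'\R m'$, hence $m\triangleright T \rightarrowtail m'\triangleright T'$ by the centralized action rule. If instead $M\xrightarrow{\ell:(!G,\gamma)}M'$ and $T'=T$, clause 3 gives $M'\R m$, and we choose $m':=m$ so that $m\triangleright T\rightarrowtail^* m\triangleright T$ in zero steps. For \emph{Item \ref{itemtwo}}, the only centralized rule that applies is the action rule, so $m\xrightarrow{A}m'$ and $T\xrightarrow{A}T'$. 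Clause 4 of the bisimulation furnishes $M_1,M_2,M'$ with $M\boldarrow M_1\xrightarrow{A}M_2\boldarrow M'$ and $M'\R m'$. Each send step composing $\boldarrow$ lifts to a decentralized instrumentation step via the send rule (leaving the trace unchanged), and the step $M_1\xrightarrow{A}M_2$ lifts via the action rule (using $T\xrightarrow{A}T'$), so we obtain $M\triangleright T\rightarrowtail^* M_1\triangleright T \rightarrowtail M_2\triangleright T' \rightarrowtail^* M'\triangleright T'$ as required.

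For \emph{Item \ref{itemthree}}, the hypothesis $M\triangleright T\rightarrowtail v$ comes from $M\valto v$. Taking $M':=M$ in clause 1 of \Cref{def:weakbis} (which is permitted since $M\boldarrow M$ holds in the degenerate case $h=1$) gives the existential side, hence $m\valto v$, and so $m\triangleright T\rightarrowtail v$ by the centralized verdict rule. For \emph{Item \ref{itemfour}}, the hypothesis $m\valto v$ and the ``if'' direction of clause 1 provide some $M'$ with $M\boldarrow M'$ and $M'\valto v$. As in item 2, the communication chain $M\boldarrow M'$ lifts to a sequence of send-instrumentation steps $M\triangleright T\rightarrowtail^* M'\triangleright T$, and then $M'\valto v$ closes off with $M'\triangleright T\rightarrowtail v$, yielding $M\triangleright T\rightarrowtail^* v$.

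I do not expect a genuine obstacle here: once the instrumentation rules and the bisimulation clauses are laid side-by-side, each part amounts to picking the matching clause and, for items \ref{itemtwo} and \ref{itemfour}, observing that send-labelled multi-step transitions $\boldarrow$ lift step-by-step to the decentralized instrumentation via the second rule of \Cref{tab:instrumentation_decentralized}. The only point to be careful about is the $h=1$ degenerate case of $\boldarrow$ in item \ref{itemthree}, which is licensed by the quantifier ``$\exists h>0$'' in the definition of $\boldarrow$ and permits taking zero communication steps.
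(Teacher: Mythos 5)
Your proof is correct and follows essentially the same route as the paper's: the same case split on the instrumentation rules for each item, the same matching of each case to the corresponding clause of the weak bisimulation, and the same lifting of $\boldarrow$ to send-instrumentation steps in items 2 and 4. Your explicit remark about the degenerate $h=1$ case of $\boldarrow$ in item 3 is a detail the paper leaves implicit, but the argument is identical.
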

    \begin{proof}
    For the first item, we assume that $M\triangleright T \rightarrowtail M'\triangleright T'$. Hence we have that $M\xrightarrow{A}M'$ and $T\xrightarrow{A}T'$ or $M\xrightarrow{\ell:(!G,\gamma)}M'$ and $T=T'$. In the first case, we use the definition of weak bisimulation to obtain that $m\xrightarrow{A}m'$ for $M'\R m'$, from which we conclude that $m\triangleright T\rightarrowtail m' \triangleright T'$. In the second case, we use the definition of weak bisimulation to obtain that $M'\R m$ and we also know that $m\triangleright T\rightarrowtail^* m \triangleright T$.
    
    For the second item, we assume that $m\triangleright T\rightarrowtail m' \triangleright T'$. Hence we have that $m\xrightarrow{A}m'$ and $T\xrightarrow{A}T'$. From the definition of weak bisimulation, there exist $M_1,M_2, M'$ such that $M\boldarrow M_1\xrightarrow{A}M_2\boldarrow M'$ and $M'\R m'$. Thus, we can conclude that $M\triangleright T \rightarrowtail^* M_1\triangleright T$, $M_1\triangleright T \rightarrowtail M_2\triangleright T'$ and $M_2\triangleright T' \rightarrowtail^* M'\triangleright T'$, from which we obtain that $M\triangleright T \rightarrowtail^* M'\triangleright T'$.
    
    For the third item we assume that $M\triangleright T\rightarrowtail v$, from which we conclude that $M\valto v$. From the definition of weak bisimulation we obtain immediately that $m\valto v$ and thus $m\triangleright T\rightarrowtail v$.
    
    For the fourth item we assume that $m\triangleright T\rightarrowtail v$, from which we conclude that $m\valto v$. From the definition of weak bisimulation we obtain immediately that $M\boldarrow  M'$ and $M'\valto v$ and thus  $M\triangleright T \rightarrowtail^* v$.
    \end{proof}
    

    \begin{proposition}\label{thm:centralvsdisstrong}
        Let $\R$ be a weak bisimulation such that $M\R m$. Then, 
         $M\triangleright T \rightarrowtail^* v$ if and only if
        $m \triangleright T \rightarrowtail^* v$.
        \end{proposition}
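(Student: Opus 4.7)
The plan is to prove the proposition by induction on the length of the reduction sequence $\rightarrowtail^*$, separately for each direction of the biconditional. The preceding \Cref{lemma:facts} already packages the one-step simulation capabilities of the weak bisimulation $\R$ into instrumentation-level statements, so the proof reduces to stitching these one-step facts together along a reduction chain. Note that any derivation $X\triangleright T \rightarrowtail^* v$ must have at least one step, and its final step is necessarily produced by the verdict-reporting rule (the third rule of \Cref{tab:instrumentation_decentralized} for the decentralized side, and the second rule of \Cref{tab:instrumentation} for the centralized side).

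For the $(\Rightarrow)$ direction, I would proceed by induction on the number $k\geq 1$ of $\rightarrowtail$ steps witnessing $M\triangleright T \rightarrowtail^* v$. For $k=1$, we have $M\triangleright T \rightarrowtail v$, and item~\ref{itemthree} of \Cref{lemma:facts} gives $m\triangleright T \rightarrowtail v$, hence $m\triangleright T \rightarrowtail^* v$. For $k>1$, we have $M\triangleright T \rightarrowtail M'\triangleright T'$ followed by a $(k-1)$-step derivation $M'\triangleright T'\rightarrowtail^* v$. Apply item~\ref{itemone} of \Cref{lemma:facts} to obtain $m'$ with $M'\R m'$ and $m\triangleright T \rightarrowtail^* m'\triangleright T'$; then the induction hypothesis applied to the pair $(M',m')$ yields $m'\triangleright T'\rightarrowtail^* v$, and concatenating gives the desired $m\triangleright T \rightarrowtail^* v$.

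The $(\Leftarrow)$ direction is entirely symmetric, using items~\ref{itemfour} and~\ref{itemtwo} instead of items~\ref{itemthree} and~\ref{itemone}. Again induct on $k\geq 1$: the base case invokes item~\ref{itemfour} directly, while the inductive case uses item~\ref{itemtwo} to translate the first centralized step into a (possibly multi-step) decentralized run ending in a state $M'\R m'$, and then applies the induction hypothesis to $(M',m')$.

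I do not anticipate any real obstacle: the heavy lifting was already done in \Cref{lemma:facts}, which bridges the gap between the weak-bisimulation definition (which allows, e.g., $m$ to be simulated by a $\boldarrow$-prefixed $A$-step followed by further $\boldarrow$) and the instrumentation relation $\rightarrowtail$ (which treats communication as a genuine step). The only subtle point worth checking is that in the inductive step of the $(\Rightarrow)$ direction, when $M\triangleright T \rightarrowtail M'\triangleright T'$ is a communication step (so $T'=T$), item~\ref{itemone} still applies and delivers $m\triangleright T \rightarrowtail^* m'\triangleright T$ with $M'\R m'$; here the simulating sequence may have length zero (i.e., $m'=m$), which is harmless since $\rightarrowtail^*$ is reflexive.
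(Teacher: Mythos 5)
Your proof is correct and follows essentially the same route as the paper's: induction on the length of the derivation, using items~\ref{itemthree} and~\ref{itemone} of \Cref{lemma:facts} for the forward direction and items~\ref{itemfour} and~\ref{itemtwo} for the converse. The subtlety you flag about communication steps yielding a possibly zero-length simulating sequence is exactly how the paper handles that case inside \Cref{lemma:facts}.
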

    \begin{proof}
        For the left to right direction, assume that $M \triangleright T \rightarrowtail^* v$. By definition there exist an integer $h>0$ and $M_1 \triangleright T_1, \ldots, M_h \triangleright T_h$ such that $M_1 = M$, $T_1 = T$,
    $M_i \triangleright T_i \rightarrowtail M_{i+1} \triangleright T_{i+1}$ (for every $i = 1,\ldots,h-1$), and
    $M_h \triangleright T_h \rightarrowtail v$. We proceed by induction on $h$. The base case holds because of \Cref{lemma:facts} (Item \ref{itemthree}). In the inductive step we use \Cref{lemma:facts} (Item \ref{itemone}) to obtain $m\triangleright T\rightarrowtail^* m_2\triangleright T_2$ for some $m_2$ and $T_2$ such that $M_2\R m_2$. Then we can use the induction hypothesis to conclude that $m_2\triangleright T_2\rightarrowtail^* v$. Thus,  $m\triangleright T\rightarrowtail^* v$.
    
    For the right to left direction, the proof is identical but uses \Cref{lemma:facts} (Item \ref{itemfour}) and \Cref{lemma:facts} (Item \ref{itemtwo}) instead.
    \end{proof}

This allows us to obtain violation-completeness and soundness for decentralized monitors from the corresponding results for centralized monitors:

    
    \begin{corollary}[Soundness]\label{cor:soundDec}
        Let   $T \in \Htrc_\Loc$, $\varphi \in $ \WTFPHypermuHML be a closed formula such that $\MdecVar{\emptyset}{\varphi}$ is defined,
and $\mathcal{R}$ a weak bisimulation such that $(\MdecVar{\emptyset}{\varphi},\Mcv{\varphi}{\emptyset})\in \mathcal{R}$. 
        If $\MdecVar{\emptyset}{\varphi} \triangleright T \rightarrowtail^* \no$, then $T \not\in \sem\varphi$;
        if $\MdecVar{\emptyset}{\varphi} \triangleright T \rightarrowtail^* \yes$, then $T \in \sem\varphi$.
        \end{corollary}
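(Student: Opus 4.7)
The plan is to derive this corollary directly from two earlier results in the paper: Proposition~\ref{thm:centralvsdisstrong}, which asserts that weakly bisimilar centralized and decentralized monitors agree on the verdicts they report when instrumented on the same hypertrace, and Theorem~\ref{thm:soundCentr}, which establishes soundness of the centralized monitor synthesis. The methodology of using the centralized setting as a yardstick, stressed in the introduction, makes this reduction essentially immediate once the bisimulation hypothesis is available.

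Concretely, I would argue as follows. Assume $\MdecVar{\emptyset}{\varphi} \triangleright T \rightarrowtail^* v$ for $v \in \{\yes,\no\}$. Since $(\MdecVar{\emptyset}{\varphi},\Mcv{\varphi}{\emptyset})\in \mathcal{R}$ and $\mathcal{R}$ is a weak bisimulation, Proposition~\ref{thm:centralvsdisstrong} yields $\Mcv{\varphi}{\emptyset} \triangleright T \rightarrowtail^* v$. Now Theorem~\ref{thm:soundCentr} applied to the closed formula $\varphi \in \WTFPHypermuHML$ gives us directly that $T \notin \sem\varphi$ when $v = \no$, and $T \in \sem\varphi$ when $v = \yes$. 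This concludes the proof.

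There is no real obstacle here, since all the technical work has been done elsewhere: the heavy lifting is in Proposition~\ref{thm:centralvsdisstrong} (which itself rests on Lemma~\ref{lemma:facts} and a careful induction on the length of instrumentation traces) and in Theorem~\ref{thm:soundCentr} (which required the chain of lemmata in Section~\ref{sec:centrSyn}). The corollary is therefore just a packaging of these two results via the bisimulation hypothesis. The only care needed is to ensure the hypothesis on $\varphi$ is correctly inherited, namely that $\varphi \in \WTFPHypermuHML$ and is closed, so that Theorem~\ref{thm:soundCentr} applies with $\sigma = \emptyset$.
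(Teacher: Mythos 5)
Your proof is correct and follows exactly the route the paper intends: transfer the verdict from the decentralized to the centralized monitor via Proposition~\ref{thm:centralvsdisstrong} using the weak bisimulation hypothesis, then conclude with Theorem~\ref{thm:soundCentr}. The paper states this as an immediate corollary of those two results, so your packaging matches the intended argument.
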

        
        \begin{corollary}[Violation Completeness]\label{cor:compDec}
        Let   $T \in \Htrc_\Loc$, $\varphi \in $ \WTFPHypermuHML  be a closed formula such that $\MdecVar{\emptyset}{\varphi}$ is defined, 
        and $\mathcal{R}$ a weak bisimulation such that $(\MdecVar{\emptyset}{\varphi},\Mcv{\varphi}{\emptyset})\in \mathcal{R}$. 
        If $T \not\in \sem\varphi$, then $\MdecVar{\emptyset}{\varphi} \triangleright T \rightarrowtail^* \no$.
        \end{corollary}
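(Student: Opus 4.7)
The plan is to derive this corollary directly by composing two results already proved earlier in the excerpt: the violation-completeness theorem for centralized monitors (\Cref{thm:complCentr}) and the behavioral equivalence between weakly bisimilar decentralized and centralized monitors (\Cref{thm:centralvsdisstrong}). Since $\MdecVar{\emptyset}{\varphi}$ and $\Mcv{\varphi}{\emptyset}$ are assumed to be related by a weak bisimulation $\mathcal{R}$, verdicts reached under instrumentation must agree, so it suffices to transport the centralized $\no$ verdict across the bisimulation.

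First, I would assume the hypothesis $T \not\in \sem{\varphi}$. Since $\varphi \in \WTFPHypermuHML$ is closed, \Cref{thm:complCentr} applies and yields $\Mcv{\varphi}{\emptyset} \triangleright T \rightarrowtail^* \no$. This is the key reduction that gives us a violating centralized execution to mirror.

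Next, because $(\MdecVar{\emptyset}{\varphi},\Mcv{\varphi}{\emptyset}) \in \mathcal{R}$ and $\mathcal{R}$ is a weak bisimulation, I would invoke the right-to-left direction of \Cref{thm:centralvsdisstrong}: whenever $m \triangleright T \rightarrowtail^* v$ and $M \mathcal{R} m$, then $M \triangleright T \rightarrowtail^* v$. Instantiating this with $v = \no$, $m = \Mcv{\varphi}{\emptyset}$, and $M = \MdecVar{\emptyset}{\varphi}$ gives the desired conclusion $\MdecVar{\emptyset}{\varphi} \triangleright T \rightarrowtail^* \no$.

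There is essentially no obstacle here: the conceptual and technical difficulty has been entirely absorbed into the two results being composed. In particular, \Cref{thm:centralvsdisstrong} (and the preparatory \Cref{lemma:facts}) already handle the subtleties introduced by communication steps, by allowing sequences of send transitions $\boldarrow$ to be inserted before and after each observable step, so that a single centralized instrumentation step is simulated by possibly many decentralized ones. The only mild care needed is to note that the corollary's assumption that $\MdecVar{\emptyset}{\varphi}$ is defined (as well as the existence of the bisimulation $\mathcal{R}$) is indeed provided as a hypothesis, so no separate existence argument is required here; establishing that a principled decentralized synthesis does produce such a bisimulation is precisely the content of \Cref{thm:principled-bisimulation}, which will supply these premises in the concrete synthesis developed in Section~\ref{sec:decSyn}.
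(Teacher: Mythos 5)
Your proof is correct and matches the paper's intended argument exactly: the corollary is obtained by composing centralized violation completeness (\Cref{thm:complCentr}) with the verdict-transfer property of weakly bisimilar monitors (\Cref{thm:centralvsdisstrong}), instantiated at $v = \no$. No gaps.
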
    
    
We now describe sufficient conditions for any decentralized synthesis function such that 
 there is a weak bisimulation between the centralized and the decentralized monitors synthesized from a formula $\varphi$ and a location environment $\sigma$. Whenever we write $M\xrightarrow{c}N$ for $M,N\in\DMon$, we assume that $c\in\{\ell:(!G,\gamma)\mid \ell\in\Loc, G\subseteq\Loc,\gamma\in\Cvar\}$, as per the labeling of the communication transitions of decentralized monitors. We write $[m]_\ell\in M$, for $M\in\DMon$, if $[m]_\ell$ is one of its constituents: formally, $[m]_\ell\in [m]_\ell$ and, if $[m]_\ell\in M$, then $[m]_\ell\in M\op N$ and $[m]_\ell\in N\op M$ (recall that $\op$ denotes either $\wedge$ or $\vee$).
We start 
by defining when $M\in\DMon$ can(not) communicate:

\begin{definition}\label{def:nocommunicate}
    Let $M\in\DMon$. We say that $M$ can communicate if there exists $[m]_\ell\in M$ such that $m\xrightarrow{c}n$, for some $c\in\Com$ and $n \in \LMon$. Otherwise, we say that $M$ cannot communicate.
\end{definition}


    

    \begin{definition}\label{def:principled-synthesis}
        We say that a monitor synthesis $\MdecVar{-}{-}$ is {\em principled}
        when it satisfies the following conditions, 
        for every formula $\varphi$ and environment $\sigma$ such that $\MdecVar\sigma\varphi$ is defined:    
    \begin{description}
        \item[Verdict Agreement:] for every verdict $v$, $\Mc{\varphi} \valto v$ if and only if $\MdecVarDefault \valto v$;
        \item[Verdict Irrevocability:] for every verdict $v$ and $\MdecVarDefSigma{\varphi} \xrightarrow{A} M_1 \boldarrow M_2 \boldarrow M$, if $M_2 \valto v$ then $M \valto v$; 
        \item[Reactivity:] for every $A$, if $\varphi$ has no free recursion variables, then there exists $M$ such that $\MdecVarDefault \xrightarrow{A} M$;
        \item[Bounded Communication:] 
        for every $\MdecVarDefault\xrightarrow{A}M\boldarrow M'$, there exists $M''$ such that $M'\boldarrow M''$ and $M''$ cannot communicate;
        \item[Processing-Communication Alternation:] for every $\MdecVarDefault \xrightarrow{A} M \boldarrow M_1$, 
        \begin{enumerate}
            \item $\MdecVarDefault$ cannot communicate, and 
            \item $M_1 \xrightarrow{c} M_2$ implies $M_1 \NOT{\xrightarrow{A'}}$, for every $A'$;
        \end{enumerate}
        \item[Formula Convergence:] 
        if $\MdecVarDefault \xrightarrow{A} M \boldarrow M'$, with $M'$ that cannot communicate, and $\Mc{\varphi} \xrightarrow{A} \Mcv{\varphi'}{\sigma'}$, for some formula $\varphi'$ and environment $\sigma'$, then $M' = \MdecVar{\sigma'}{\varphi'}$.
    \end{description}
    \end{definition}

    From the Formula Convergence property, we can immediately derive an auxiliary property:
    \begin{lemma}[Uniqueness]
    If $\MdecVarDefault \xrightarrow{A} M \boldarrow M_1$, with $M_1$ that cannot communicate, and $\MdecVarDefault \xrightarrow{A} M \boldarrow M_2$, with $M_2$ that cannot communicate, then $M_1=M_2$.
    \end{lemma}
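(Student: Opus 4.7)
The plan is to use Formula Convergence as a pinning-down mechanism, showing that both $M_1$ and $M_2$ must equal the same decentralized monitor $\MdecVar{\sigma'}{\varphi'}$ synthesized from a common formula-environment pair. Specifically, if I can exhibit some $\varphi'$ and $\sigma'$ such that $\Mc{\varphi} \xrightarrow{A} \Mcv{\varphi'}{\sigma'}$, then applying Formula Convergence once to the chain $\MdecVarDefault \xrightarrow{A} M \boldarrow M_1$ and once to $\MdecVarDefault \xrightarrow{A} M \boldarrow M_2$ yields $M_1 = \MdecVar{\sigma'}{\varphi'} = M_2$.

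The preparatory step is therefore to guarantee that $\Mc{\varphi}$ admits some $A$-transition. This follows from a straightforward structural induction on the synthesis of \Cref{tab:synthesis} showing that synthesized centralized monitors are reactive: the two modality cases produce sums $a_{\sigma(\pi)}.\Mc{\varphi'} + \sum_{b \neq a} b_{\sigma(\pi)}.v$ whose summands exhaust every possible action at $\sigma(\pi)$; verdicts perform every $A$ via $v \xrightarrow{A} v$; conjunction, disjunction, and the quantifier expansions preserve reactivity componentwise through the $\pandor$ rule; and guarded recursion reduces to unfolding into a reactive body. Having produced some such $m'$, \Cref{lemma:chain} immediately gives $m' = \Mcv{\varphi'}{\sigma'}$, so the antecedent of Formula Convergence is met.

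The only genuinely delicate point is this preliminary reactivity observation, without which Formula Convergence would be vacuous. On the other hand, no uniqueness of the pair $(\varphi', \sigma')$ is required: even if $\Mc{\varphi}$ has several $A$-successors, picking any one of them is enough to pin down both $M_1$ and $M_2$ to the same decentralized monitor through two symmetric applications of Formula Convergence. After that, the identity $M_1 = M_2$ is immediate, and the proof reduces to combining \Cref{lemma:chain} with two invocations of the Formula Convergence clause of \Cref{def:principled-synthesis}.
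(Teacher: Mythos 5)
Your proof is correct and follows essentially the same route as the paper's: establish that $\Mc{\varphi}$ has some $A$-transition, use \Cref{lemma:chain} to see that its target is $\Mcv{\varphi'}{\sigma'}$ for some $\varphi'$ and $\sigma'$, and then apply Formula Convergence to each of the two chains to conclude $M_1=\MdecVar{\sigma'}{\varphi'}=M_2$. The only difference is that you spell out the reactivity of synthesized centralized monitors, which the paper simply asserts as a known fact.
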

    \begin{proof}
        Indeed, since $\Mc{\varphi}$ has a possible transition for every $A$, by \Cref{lemma:chain} we know that
        $\Mc{\varphi}\xrightarrow{A}\Mcv{\varphi'}{\sigma'}$. By applying Formula Convergence twice, we obtain $M_1=\MdecVar{\sigma'}{\varphi'}=M_2$.
    \end{proof}

    Let $\MdecVar{-}{-}$ be a decentralized synthesis function.
    We define relation $\R_{\MdecVarName}$ as follows:
    \begin{align*}\label{biscandidate}
        \R_{\MdecVarName} & \triangleq 
        \R_1 \cup \R_2 
\\
        \R_1 & \triangleq 
        \left\{\left(\MdecVarDefSigma{\varphi}, \Mc{\varphi}\right)
                \mid \FVloc(\varphi) \subseteq \dom(\sigma)
                \text{ and }
                \FVrec(\varphi) = \emptyset 
         \right\} 
        \\
        \R_2 & \triangleq 
        \left\{
        \left(M',\Mcv{\varphi'}{\sigma'}\right) \mid 
                \FVloc(\varphi) \subseteq \dom(\sigma),~
                \FVrec(\varphi) = \emptyset 
                 \text{ and }
	\MdecVarDefSigma{\varphi}\xrightarrow{A} M \boldarrow M'\boldarrow \MdecVar{\sigma'}{\varphi'}
        \right\}
        \end{align*}
The crucial property of any principled synthesis function is the following:

    \begin{theorem}
    \label{thm:principled-bisimulation}
        For every principled synthesis $\MdecVar{-}{-}$, 
        $\R_{\MdecVarName}$ is a weak bisimulation.
    \end{theorem}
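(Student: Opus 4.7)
The plan is to verify each of the four clauses of \Cref{def:weakbis} for every pair in $\R_1$ and in $\R_2$, appealing in each case to one or two of the six conditions defining a principled synthesis, together with \Cref{lemma:chain}, the Uniqueness corollary stated immediately before the theorem, and a straightforward induction on the centralized synthesis in \Cref{tab:synthesis} showing that $\Mc{\varphi}$ is reactive to every $A$ whenever $\FVrec(\varphi) = \emptyset$ (the latter is easy: recursion unfolds, prefix monitors are always of the form $a_\ell.m + \sum_{b\neq a} b_\ell.v$, and $\pand/\por$ compose componentwise). A closely related auxiliary fact, which I also need, is that action transitions of $\Mc{\varphi}$ produce centralized monitors $\Mcv{\varphi'}{\sigma'}$ whose defining formula $\varphi'$ again has no free recursion variables, so that the clause-(2)/(4) targets can be placed in $\R_1$ or $\R_2$ using the principled properties.

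For a pair $(\MdecVarDefSigma{\varphi}, \Mc{\varphi}) \in \R_1$, clause (1) is witnessed by the trivial $\boldarrow$-chain of length one (taking $M' = \MdecVarDefSigma{\varphi}$) combined with Verdict Agreement, while clause (3) is vacuous since Processing-Communication Alternation~(1) prevents $\MdecVarDefSigma{\varphi}$ from communicating at all. For clause (2), given $\MdecVarDefSigma{\varphi} \xrightarrow{A} M'$, I use Bounded Communication to pick $M''$ non-communicating with $M' \boldarrow M''$; centralized reactivity together with \Cref{lemma:chain} provides $\Mc{\varphi} \xrightarrow{A} \Mcv{\varphi'}{\sigma'}$, and Formula Convergence identifies $M'' = \MdecVar{\sigma'}{\varphi'}$, placing the resulting pair in $\R_2$ via the path $\MdecVarDefSigma{\varphi} \xrightarrow{A} M' \boldarrow \MdecVar{\sigma'}{\varphi'}$. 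Clause (4) is symmetric: from $\Mc{\varphi} \xrightarrow{A} m'$ with $m' = \Mcv{\varphi'}{\sigma'}$ (Lemma~\ref{lemma:chain}), Reactivity yields $\MdecVarDefSigma{\varphi} \xrightarrow{A} M_2$, Bounded Communication gives $M_2 \boldarrow M'$ non-communicating, and Formula Convergence fixes $M' = \MdecVar{\sigma'}{\varphi'}$, leaving the matched pair in $\R_1$.

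For a pair $(M', \Mcv{\varphi'}{\sigma'}) \in \R_2$ witnessed by $\MdecVarDefSigma{\varphi} \xrightarrow{A} M \boldarrow M' \boldarrow \MdecVar{\sigma'}{\varphi'}$, clause (1) takes $M'' = \MdecVar{\sigma'}{\varphi'}$ and applies Verdict Agreement, while clauses (2) and (4) reduce to their $\R_1$ counterparts: for clause (2), an action transition $M' \xrightarrow{A'}$ triggers Processing-Communication Alternation~(2), which forces $M'$ to be non-communicating, so the chain $M' \boldarrow \MdecVar{\sigma'}{\varphi'}$ must be trivial and $M' = \MdecVar{\sigma'}{\varphi'}$; for clause (4), I invoke the clause-(4) witness for the $\R_1$ pair $(\MdecVar{\sigma'}{\varphi'}, \Mcv{\varphi'}{\sigma'})$ and prepend the chain $M' \boldarrow \MdecVar{\sigma'}{\varphi'}$ to the produced initial $\boldarrow$.

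The main obstacle is clause (3) for $\R_2$, where $M' \xrightarrow{c} M''$ must yield $(M'', \Mcv{\varphi'}{\sigma'}) \in \R_2$. Extending the hypothesized path immediately gives $\MdecVarDefSigma{\varphi} \xrightarrow{A} M \boldarrow M''$, but the required continuation $M'' \boldarrow \MdecVar{\sigma'}{\varphi'}$ is not given, because the original $\boldarrow$-chain from $M'$ to $\MdecVar{\sigma'}{\varphi'}$ need not factor through the specific $M''$ that the communication step $c$ leads to. I overcome this via Bounded Communication, which produces $M_{\mathrm{end}}$ non-communicating with $M'' \boldarrow M_{\mathrm{end}}$; since Processing-Communication Alternation~(1), applied to $\varphi',\sigma'$, makes $\MdecVar{\sigma'}{\varphi'}$ itself non-communicating, and both $M_{\mathrm{end}}$ and $\MdecVar{\sigma'}{\varphi'}$ are reachable from $\MdecVarDefSigma{\varphi}$ through the same prefix $\xrightarrow{A} M \boldarrow$, the Uniqueness corollary forces $M_{\mathrm{end}} = \MdecVar{\sigma'}{\varphi'}$, completing the required witness $M'' \boldarrow \MdecVar{\sigma'}{\varphi'}$.
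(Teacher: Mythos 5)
Your overall route is the same as the paper's: split on $\R_1$ versus $\R_2$, dispatch clause (3) for $\R_1$ via Processing-Communication Alternation, handle clauses (2) and (4) by the chain Reactivity $\to$ Bounded Communication $\to$ Formula Convergence (plus \Cref{lemma:chain} and centralized reactivity), and resolve clause (3) for $\R_2$ by funnelling both non-communicating endpoints into the Uniqueness corollary. Your variant of the last step (applying Uniqueness directly to $M_{\mathrm{end}}$ and $\MdecVar{\sigma'}{\varphi'}$ rather than first naming $M_{\mathrm{end}}$ via Formula Convergence) is fine and marginally more direct. The auxiliary facts you flag (centralized reactivity, preservation of $\FVrec=\emptyset$ along transitions) are exactly the ones the paper also uses.

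There is, however, a genuine gap in your treatment of clause (1) of \Cref{def:weakbis} for pairs in $\R_2$. That clause is a biconditional: $(\exists M''.\ M'\boldarrow M'' \wedge M''\valto v)$ iff $\Mcv{\varphi'}{\sigma'}\valto v$, and both directions are load-bearing (the left-to-right direction is what item~3 of \Cref{lemma:facts} relies on to rule out unsound decentralized verdicts). Taking $M''=\MdecVar{\sigma'}{\varphi'}$ and invoking Verdict Agreement only gives you the right-to-left direction. For left-to-right you must consider an \emph{arbitrary} $M''$ with $M'\boldarrow M''$ and $M''\valto v$; Bounded Communication, Formula Convergence and Uniqueness give you $M''\boldarrow\MdecVar{\sigma'}{\varphi'}$, but nothing you have invoked transfers the verdict $v$ from the intermediate state $M''$ to the endpoint $\MdecVar{\sigma'}{\varphi'}$. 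That transfer is precisely what Verdict Irrevocability provides, and it is the only place in the whole proof where that property is used — the fact that your argument never appeals to it is the tell-tale sign that this direction is missing. (The analogous left-to-right direction for $\R_1$ pairs also needs an explicit word: there it follows because $\MdecVarDefSigma{\varphi}$ cannot communicate, so the only reachable $M''$ is $\MdecVarDefSigma{\varphi}$ itself; you have this fact on hand for clause (3) but never connect it to clause (1).)
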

\begin{proof}
    To prove that $\R_{\MdecVarName}$ is a weak bisimulation, we first consider a pair $(\MdecVarDefSigma{\varphi},\Mc{\varphi})\in \R_1$. The first condition of weak bisimulation holds because, from Processing-Communication Alternation and Reactivity, we conclude that $\MdecVarDefSigma{\varphi}\NOT{\xrightarrow{c}}$, and thus $M=M'$. Then the result follows from Verdict Agreement.
        
    For the second condition of weak bisimulation, we assume that $\MdecVarDefSigma{\varphi}\xrightarrow{A} M$. Since $\Mc{\varphi}$ has a possible transition for every $A$, by \Cref{lemma:chain} we know that $\Mc{\varphi}\xrightarrow{A} \Mcv{\varphi'}{\sigma'}$ for some $\sigma'$ and $\varphi'$. From Bounded Communication, we can use Formula Convergence to obtain that $M\boldarrow \MdecVar{\sigma'}{\varphi'}$. By definition, $M\R_2 \Mcv{\varphi'}{\sigma'}$ and this suffices to conclude.
    
    The third condition of weak bisimulation is not relevant, as $\MdecVarDefSigma{\varphi} \NOT{\xrightarrow{c}}$ via Processing-Communication Alternation and Reactivity.
    To check the fourth condition, we assume that $\Mc{\varphi}\xrightarrow{A}\Mcv{\varphi'}{\sigma'}$ (because of 
    \Cref{lemma:chain}, we know that $\Mc{\varphi}\xrightarrow{A}$ results in a monitor synthesized from a formula). Via Reactivity we conclude that $\MdecVarDefSigma{\varphi}\xrightarrow {A} M$ for some $M$. Then we can use Bounded Communication and Formula Convergence to derive that $\MdecVarDefSigma{\varphi}\xrightarrow {A} M\boldarrow \MdecVar{\sigma'}{\varphi'}$. As $\MdecVar{\sigma'}{\varphi'}\R_1 \Mcv{\varphi'}{\sigma'}$, we can conclude.
    
    Now consider a pair $(M',\Mcv{\varphi'}{\sigma'})\in \R_2$, with $\MdecVarDefSigma{\varphi}\xrightarrow{A}M\boldarrow M'\boldarrow \MdecVar{\sigma'}{\varphi'}$ for some $\sigma,\varphi$ and $M$. 
    To verify the first condition from left to right, we assume $M'\boldarrow M''$ and $M''\valto v$. Since $\Mc{\varphi}$ has a possible transition for every $A$ and \Cref{lemma:chain}, we know that $\Mc{\varphi}\xrightarrow{A} \Mcv{\varphi''}{\sigma''}$ for some $\sigma''$ and $\varphi''$. Hence, we can use Bounded Communication and Formula Convergence to conclude that $M'' \boldarrow \MdecVar{\sigma''}{\varphi''}$. Furthermore, we obtain from Uniqueness and Processing-Communication Alternation that $\MdecVar{\sigma'}{\varphi'}=\MdecVar{\sigma''}{\varphi''}$. Then we apply Verdict Irrevocability to conclude that $\MdecVar{\sigma'}{\varphi'}\valto v$. Finally, we can conclude using Verdict Agreement.
    For the right to left direction, we use Verdict Agreement to obtain that $\MdecVar{\sigma'}{\varphi'}\valto v$, which immediately satisfies the first condition of weak bisimulation. 
    For the second condition of weak bisimulation, we assume that $M'\xrightarrow{A'} M''$. From Processing-Communication Alternation, we conclude that $M' \NOT{\xrightarrow{c}}$, and thus that $M'=\MdecVar{\sigma'}{\varphi'}$. This sends back to the second condition for pairs in $\R_1$.
    
    For the third condition, we assume that $M'\xrightarrow{c}M''$. Since $\Mc{\varphi}$ has a possible transition for every $A$ and \Cref{lemma:chain}, we know that $\Mc{\varphi}\xrightarrow{A} \Mcv{\varphi''}{\sigma''}$ for some $\sigma''$ and $\varphi''$. Hence, we can use Bounded Communication and Formula Convergence to conclude that $M'' \boldarrow \MdecVar{\sigma''}{\varphi''}$. Furthermore, we obtain from Uniqueness and Processing-Communication Alternation that $\MdecVar{\sigma'}{\varphi'}=\MdecVar{\sigma''}{\varphi''}$. By definition, $(M'',\Mcv{\varphi'}{\sigma'})\in \R_2$ and this suffices to conclude.

    To check the fourth condition, we assume that 
    $\Mcv{\varphi'}{\sigma'}\xrightarrow{A'}\Mcv{\varphi''}{\sigma''}$. 
    We note that if $N \rightarrow N'$ and $N$ has no free recursion variables, then neither does $N'$ --- by straightforward induction on the derivation of $N \rightarrow N'$.
    Therefore, $\MdecVar{\sigma'}{\varphi'}$ has no free recursion variables, and neither does $\varphi'$.
    Via Reactivity we conclude that $\MdecVar{\sigma'}{\varphi'}\xrightarrow{A'}M_1$. Then we use Bounded Communication and Formula Convergence to derive that $M_1\boldarrow \MdecVar{\sigma''}{\varphi''}$, resulting in
    $M'\boldarrow \MdecVar{\sigma'}{\varphi'}\xrightarrow{A'}M_1 \boldarrow 
    \MdecVar{\sigma''}{\varphi''}$, 
    which concludes the proof.
\end{proof}

\subsection{From Formulas to Decentralized Monitors}
\label{sec:decSyn}

We now describe how to synthesize decentralized monitors for a fragment of \WTFPHypermuHML, and show that this synthesis function satisfies \Cref{def:principled-synthesis}. This allows us to apply \Cref{thm:principled-bisimulation} and obtain soundness and violation-completeness of these synthesized monitors. 

\label{sec:logic}
In what follows, we consider formulas from \PHypermuHML, the subset of \HypermuHML given by the following grammar (see \Cref{sec:concl} for a discussion on the fragment chosen):
 \begin{align*}
	\varphi &::= \forall \pi. \varphi ~\mid~ \exists \pi. \varphi ~\mid~ \varphi \land \varphi ~\mid~ \varphi \lor \varphi ~\mid~ \psi\\
	\psi &::=  \ttt \,\mid\, \ff \,\mid\, \psi \wedge \psi \,\mid\, \psi \lor \psi\,\mid\, \pi=\pi \,\mid\, \pi\neq\pi  \,\mid\, \max x. \psi \,\mid\, \min x. \psi \,\mid\, x \,\mid\, [a_\pi]\psi \,\mid\, \langle a_\pi\rangle\psi
\end{align*}
We denote the class of formulas of type $\psi$ with $\quantfree$ (quantifier free).
\PHypermuHML is a subset of \HypermuHML and thus its semantics over $\Htrc_\Loc$ is the 
one given in \Cref{tab:semantics_hyper_mu_HML}.

We synthesize decentralized monitors for the fragment of \PHypermuHML only containing formulas of type 
$\varphi$ 
without 
least fixed-points, 
which we call \FPHypermuHML. 
The synthesis for decentralized monitors is given in \Cref{tab:synthesis-decentralized}. 
To derive monitors belonging to $\DMon$
for formulas of type $\varphi$ (through function $\MdName$),
we need to derive a monitor belonging to $\LMon$ for formulas of type $\psi\in\quantfree$ (through function $\MdLocName$). 
The
synthesis functions are parametrized by a location $\ell \in \Loc$ and a partial function $\sigma$ from $\Pi$ to $\Loc$ 
that is defined for every free location variable in $\psi$ and $\varphi$.


\begin{table}
    \hrule
    \[\def\arraystretch{1.3}\begin{array}{r@{\,}lr@{\,}l}
            \Mone{\forall\pi.\varphi} &=  \bigwedge_{\ell \in \Loc} \Monev{\sigma[\pi \mapsto \ell]}{\varphi} 
            & \qquad 
            \Mone{\exists\pi.\varphi} &=  \bigvee_{\ell \in \Loc} \Monev{\sigma[\pi \mapsto \ell]}{\varphi}
                \vspace*{.3cm}
		\\
            \Mone{\varphi \wedge \varphi'} &= \Mone{\varphi} \wedge \Mone{\varphi'}
         &\qquad
        \Mone{\varphi \vee \varphi'} &= \Mone{\varphi} \vee \Mone{\varphi'}
            \vspace*{.3cm}
	\\
    \multicolumn{4}{c}{
    \Mone{\psi} =  \begin{cases}
        \bigvee_{\ell \in \range(\sigma)} [\synone{\psi}]_\ell & \mbox{if } \sigma\neq\emptyset
        \\
        [v]_{\ell_0} & \mbox{if } \sigma=\emptyset \wedge \Mc{\psi}\valto v
        \end{cases}}
            \end{array}\]
    \hrule
\[\def\arraystretch{1.3}\begin{array}{r@{\,}lr@{\,}l}
    \multicolumn{4}{c}{
        \synone{\ttt} = \yes \qquad
        \synone{\ff} = \no    \qquad
    \synone{x} = x \qquad
    \synone{\max x.\psi} = \rec x.\synone{\psi} 
    }
    \vspace*{.3cm}
    \\
    \synone{\psi \wedge \psi'} &= \synone{\psi} \pand \synone{\psi'}
 &
\synone{\psi \vee \psi'} &= \synone{\psi} \por \synone{\psi'}
        \vspace*{.3cm}
	\\
    \synone{\pi=\pi'} &= \begin{cases}
    \yes & \mbox{if } \sigma(\pi)=\sigma(\pi')
    \\
    \no & \mbox{otherwise }
    \end{cases}
    &
    \synone{\pi\neq\pi'}  &= \begin{cases}
    \yes & \mbox{if } \sigma(\pi)\neq\sigma(\pi') \\
    \no & \mbox{otherwise }
    \end{cases}
        \vspace*{.3cm}
	\\ 
    \multicolumn{4}{c}{
        \synone{[a_\pi]\psi} =
        \left\{
        \begin{array}{ll} \displaystyle
        a.(!
        (\range(\sigma){\setminus} \{ \ell \})
        , a).\synone{\psi} + 
        \displaystyle
        \sum_{b \neq a} b.(!
        (\range(\sigma){\setminus} \{ \ell \})
        , b).\yes
        & \mbox{if } \sigma(\pi) = \ell 
        \vspace*{.3cm}
        \\
        \displaystyle
        \sum_{b \in \Act} b.\Big((?\{\sigma(\pi)\} , a).\synone{\psi}\ +
        \displaystyle
        \sum_{b\neq a}(?\{\sigma(\pi)\} , b).\yes\Big) & \mbox{otherwise}
        \end{array}
        \right.
}
        \vspace*{.3cm}
	\\
    \multicolumn{4}{c}{
        \synone{\langle a_\pi \rangle \psi} =
        \left\{
        \begin{array}{ll} \displaystyle
        a.(!
        (\range(\sigma){\setminus} \{ \ell \})
        , a).\synone{\psi} + 
        \displaystyle
        \sum_{b \neq a} b.(!
        (\range(\sigma){\setminus} \{ \ell \})
        , b).\no
        & \mbox{if } \sigma(\pi) = \ell 
        \vspace*{.3cm}
        \\
        \displaystyle
        \sum_{b \in \Act} b.\Big((?\{\sigma(\pi)\} , a).\synone{\psi}\ +
        \displaystyle
        \sum_{b\neq a}(?\{\sigma(\pi)\} , b).\no\Big) & \mbox{otherwise}
        \end{array}
        \right.
}
    \end{array}\]
            \hrule\vspace{0.5em}
            \caption{Decentralized monitor synthesis, where $\ell_0$ is any fixed element of $\Loc$.}
            \label{tab:synthesis-decentralized}
        \end{table}

Note that, in the definition of $\Mone{\psi}$, $\Mc{\psi}$ is the monitor resulting from the centralized synthesis function defined in \Cref{tab:synthesis}. Intuitively $\Mone{\psi}$ synthesizes a local monitor at each location relevant to $\psi$, which are the locations associated by $\sigma$ to the free location variables in $\psi$. If $\sigma=\emptyset$, then $\psi$ does not have any free location variable and it is a boolean combination of $\ttt$ and $\ff$; thus, there is no need for communication between locations. In fact, a verdict can be obtained from $\psi$ immediately
 and it coincides with the one reached in the centralized synthesis. 

Thus, every closed formula $\varphi$ on which we apply our synthesis is
\begin{enumerate}
    \item either {\em trivial}, i.e., $\varphi$ is logically equivalent to $\ttt$ or $\ff$,  
    \item or is such that every subformula $\psi\in \quantfree$ of $\varphi$ is in the scope of a quantifier.
\end{enumerate}
For non-trivial formulas, 
the $\sigma = \emptyset$ case for $\Mone{\psi}$ never applies, and we can ignore it.
The decentralized monitor for a closed formula $\varphi$ is $\Monev{\emptyset}{\varphi}$.

\begin{textAtEnd}[allend, category=syn]
The monitor $\Mone{\psi}$ is well-defined for closed formulas because of the following lemma:
\begin{lemma}\label{lemma:emptysigma}
If $\psi$ has no free variables, then $\Mc{\psi}\valto v$ for some $v$.
\end{lemma}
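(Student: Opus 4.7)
The plan is to prove this by structural induction on $\psi$. Since $\psi \in \quantfree$ has no quantifiers to bind location variables and, by hypothesis, no free variables of any kind, I would first dispose of several cases as vacuous: the modality cases $[a_\pi]\psi'$ and $\langle a_\pi\rangle\psi'$ and the (in)equality cases $\pi=\pi'$, $\pi\neq\pi'$ all contain a location variable $\pi$ which, in the absence of any binder in $\quantfree$, would be free in $\psi$, contradicting the hypothesis; the case $\psi = x$ is similarly ruled out since $x$ would be a free recursion variable; and since we work inside \FPHypermuHML the case $\psi = \min x.\psi'$ does not arise.

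The base cases $\psi = \ttt$ and $\psi = \ff$ are immediate: the synthesis of \Cref{tab:synthesis} gives $\Mc{\ttt}=\yes$ and $\Mc{\ff}=\no$, and the rule $v\valto v$ supplies the desired verdict. For the Boolean inductive cases $\psi = \psi_1 \wedge \psi_2$ and $\psi = \psi_1 \vee \psi_2$, both $\psi_1$ and $\psi_2$ inherit the no-free-variables property; by the induction hypothesis, $\Mc{\psi_i}\valto v_i$ for $i\in\{1,2\}$. Since the synthesis never introduces an $\vend$-subexpression, the synthesized monitors are $\vend$-free, so by \Cref{lemma:valtoverdict} each $v_i \in \{\yes,\no\}$. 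A routine case analysis on $(v_1,v_2)$ using the rules for $\pand$ or $\por$ in \Cref{tab:verdict_evaluation_semantics} then yields a verdict for $\Mc{\psi_1}\pand\Mc{\psi_2}$ (respectively $\Mc{\psi_1}\por\Mc{\psi_2}$).

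The only delicate case is $\psi = \max x.\psi'$, which relies on the guardedness assumption. The key observation is that $\psi'$ must be modality-free: any occurrence of a modality in $\psi'$ would require a location variable $\pi$ to appear free in $\psi'$, hence free in $\psi$, contradicting our hypothesis (again using that $\quantfree$ has no location binders). Guardedness then forces every occurrence of $x$ in $\psi'$ to sit inside a modality, so $x$ cannot occur in $\psi'$ at all. Consequently $\psi'$ itself has no free variables and the induction hypothesis yields $\Mc{\psi'}\valto v$ for some verdict $v$. Since $x$ does not occur in $\psi'$, an easy induction on the synthesis definition shows that $x$ does not occur in $\Mc{\psi'}$, so $\Mc{\psi'}\{^{\rec x.\Mc{\psi'}}/_x\} = \Mc{\psi'}$. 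Applying the recursion rule of \Cref{tab:verdict_evaluation_semantics} to $\Mc{\psi} = \rec x.\Mc{\psi'}$ gives $\Mc{\psi}\valto v$, completing this case.

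The main obstacle I anticipate is precisely the $\max x$ case: one must combine guardedness with the absence of location variables to conclude that the recursion variable is in fact vacuous, so that the inductive hypothesis can be applied and the unfolding step becomes trivial. The remaining cases are routine, either by direct computation (for $\ttt$, $\ff$) or by a short case analysis on the verdicts returned by the inductive hypothesis (for the Boolean connectives).
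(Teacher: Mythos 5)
Your proof is correct and follows exactly the route the paper intends: the paper's own proof is a one-line sketch ("induction on $\psi$, using guardedness"), and your elaboration — in particular the observation that the absence of free location variables rules out modalities, so guardedness forces the recursion variable to be vacuous and the unfolding in the $\rec x$ rule to be trivial — is precisely the content that sketch is pointing at.
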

\begin{proof}
    The proof follows from induction on $\psi$, making use of the fact that all formulas are assumed to be guarded.
\end{proof}
\end{textAtEnd}

\begin{remark}
In the first clause of the definition of the synthesis function for box formulas, it might seem superfluous to send a message also when the monitor observes some $b\neq a$. However, this is important to make sure monitors do not deadlock. To see this, consider a synthesis where that definition instead looks like 
\[
\synone{[a_\pi]\psi} =\left\{
        \begin{array}{ll} \displaystyle
        a.(!
        (\range(\sigma){\setminus} \{ \ell \})
        , a).\synone{\psi} + 
        \displaystyle
        \sum_{b \neq a} b.
        \yes
        & \mbox{if } \sigma(\pi) = \ell 
        \\
        \displaystyle
        \sum_{b \in \Act} b.(?\{\sigma(\pi)\} , a).\synone{\psi}\ 
        \displaystyle
         & \mbox{otherwise}
        \end{array}
\right.
\]

Consider $\Act=\{a,b\}$, $\Loc=\{\ell,\ell'\}$ and some hypertrace $T$ such that $T(\ell)=b.t_1$ and $T(\ell')=b.t_2$ for some traces $t_1$ and $t_2$.
Now consider $m\pand n$,
where $m=\synone{[a_\pi]\psi}$, $n=\synvone{\ell}{\sigma}{[a_{\pi'}]\psi'}$, 
$\sigma(\pi)=\ell$ and $\sigma(\pi')=\ell'$. For $A(\ell)=A(\ell')=b\neq a$, we then get $m\xrightarrow{A(\ell)}\yes$ and $n\xrightarrow{A(\ell')}(?\{\sigma(\pi')\} , a).\synone{\psi'}$, and monitor $\yes \pand (?\{\sigma(\pi')\} , a).\synone{\psi'} $ is stuck because the receive action of the monitor $(?\{\sigma(\pi')\} , a).\synone{\psi'} $ has no matching send. It is precisely to avoid these scenarios that we make sure that, for each sending transition, there is a corresponding receiving transition, and a monitor always sends the last action it read to all other locations in the range of the environment $\sigma$.

The same applies to the synthesis of diamonds, being the latter the same as the synthesis for box formulas in Table~\ref{tab:synthesis-decentralized} with $\no$ verdicts in place of $\yes$.
\exqed
\end{remark}

    

\begin{example}\label{exp:wolper_decentralized}
In order to highlight the inter-monitor communication, we consider the following formula 
$$
\varphi = \exists \pi .\exists \pi' .([a_{\pi}] \ff \wedge [b_{\pi'}] \ff)
$$ 
over  $\Loc = \{1,2 \}$ and $\Act = \{a,b\}$, which states that \changed{the two traces must start with different actions. Indeed, $[a_{\pi}] \ff$ requires that the trace at the location bound to $\pi$ does not start with an $a$ and the one at the location bound to $\pi'$ does not start with a $b$. Hence, the only way to satisfy this formula is to associate different locations to $\pi$ and $\pi'$, in such a way that the trace at the location bound to $\pi$ starts with a $b$ and the trace at the location bound to $\pi'$ starts with an $a$.}
By letting $\sigma = [\pi\mapsto\ell,\pi'\mapsto\ell']$, the synthesis function applied to this property gives: 
\[
\Monev{\emptyset}{\varphi} 
= \bigvee_{\ell,\ell' \in \Loc} \ \ \bigvee_{\ell'' \in \{\ell,\ell'\}} \left[\synvone{\ell''}\sigma{[a_{\pi}] \ff \wedge [b_{\pi'}] \ff} \right]_{\ell''}
\]
where
\[
\synvone{\ell''}{\sigma}{[a_{\pi}] \ff \wedge [b_{\pi'}] \ff} = 
\left\{
\begin{array}{ll}
(a. (!\emptyset,a).\no + b.(!\emptyset,b).\yes) \ \pand\
&
\text{if $ \ell = \ell' = \ell''$}
\\
\quad (b. (!\emptyset,b).\no + a.(!\emptyset,a).\yes)
\vspace*{.2cm}
\\
(a.(!\{\ell'\},a).\no + b.(!\{\ell'\},b).\yes) \ \pand 
&
\text{if $\ell \neq \ell'$ and $\ell'' = \ell$}
\\
\quad (a.((?\{\ell'\},b).\no + (?\{\ell'\},a).\yes)\ + \\ 
\quad \ \, b.((?\{\ell'\},b).\no + (?\{\ell'\},a).\yes))
\vspace*{.2cm}
\\
(a.((?\{\ell\},a).\no + (?\{\ell\},b).\yes)\ + 
&
\text{if $\ell \neq \ell'$ and $\ell'' = \ell'$}
\\
\ \, b.((?\{\ell\},a).\no + (?\{\ell\},b).\yes))
\\
\pand\  (b.(!\{\ell\},b).\no + a.(!\{\ell\},a).\yes)
& \phantom{\text{if $\ell \neq \ell'$ and $\ell'' = \ell'$}}\blacktriangleleft
\end{array}
\right.
\]
\end{example}

\begin{remark}
\label{rem:diamond}
We explicitly added in Table \ref{tab:synthesis-decentralized} a rule for formulas of the form $\langle a_\pi \rangle \psi$ even though such formulas are logically equivalent to 
\begin{equation}
\label{Eq:diam2box}
[a_\pi] \psi \textstyle \wedge \bigwedge_{b \neq a} [b_\pi] \ff . 
\end{equation}
However, working with this formulation of the diamond has drawbacks in terms of the size of the resulting monitor.
To showcase this, consider the decentralized synthesis applied on Wolper's property ($\varphi^h_e$ of  \eqref{ex:Whprop} from Example \ref{ex:Two}), 
%
%
expressed here as $\exists \pi. \psi$, with 
\[
\psi = \max x. (\psi_1 \wedge \psi_2)
\qquad
\psi_1 = [a_{\pi}]\langle a_{\pi} \rangle x
\qquad
\psi_2 =  [b_{\pi}]\langle a_{\pi} \rangle x
\]
Let $\Loc = \{1,2 \}$ and $\Act = \{a,b\}$. 
The synthesis is applied thus: 
\[
\Monev{\emptyset}{\varphi^h_e} 
=  
\textstyle
\bigvee_{\ell \in \Loc} \left[\rec x. \left(\synvone\ell{[\pi \mapsto \ell]}{\psi_1} \pand \synvone\ell{[\pi \mapsto \ell]}{\psi_2} \right) \right]_\ell
\]
with
\[
\begin{array}{rcl}
\synvone\ell{[\pi \mapsto \ell]}{\psi_1}  & = &a.(!\emptyset,a).\synvone\ell{[\pi \mapsto \ell]}{\langle a_{\pi} \rangle x} + b.(!\emptyset,b).\yes 
\vspace*{.2cm}
\\ 
\synvone\ell{[\pi \mapsto \ell]}{\psi_2} & = &b.(!\emptyset,b).\synvone\ell{[\pi \mapsto \ell]}{\langle a_{\pi} \rangle x} + a.(!\emptyset,a).\yes
\end{array}
\]
and, by using the decentralized monitor synthesis for diamond formulas based on (\ref{Eq:diam2box}),
\begin{equation}
\label{Eq:diam2boxAtWork}
\synvone\ell{[\pi \mapsto \ell]}{\langle a_{\pi} \rangle x} = 
(a. (!\emptyset,a).x + b.(!\emptyset,b).\yes)
\ \pand\ 
(b. (!\emptyset,b).\no + a.(!\emptyset,a).\yes)
\end{equation}

This indicates that the synthesis for diamonds based on (\ref{Eq:diam2box}) leads to monitors with a high degree of parallelism; for simplicity, we have just two parallel components in \eqref{Eq:diam2boxAtWork} because we assumed to have just two actions. However,  $|\Act|-1$ parallel conjunctions are required in general for every diamond (and may lead to an exponential blow up with several diamonds in sequence). 
By contrast, using the rule for diamonds given in Table \ref{tab:synthesis-decentralized} reduces \eqref{Eq:diam2boxAtWork} to:
\[
\synvone\ell{[\pi \mapsto \ell]}{\langle a_{\pi} \rangle x} = 
a. (!\emptyset,a).x + b.(!\emptyset,b).\no
\]
and the synthesized monitor now contains no occurrence of any parallel operator.
\exqed
\end{remark}

Soundness and violation completeness for the synthesis defined in \Cref{tab:synthesis-decentralized} follow from \Cref{cor:soundDec} and \ref{cor:compDec} by using \Cref{thm:principled-bisimulation}, once we prove the following key result:

\begin{theorem}
\label{thm:principled}
The synthesis function $\MdName$ defined in \Cref{tab:synthesis-decentralized} is principled.
\end{theorem}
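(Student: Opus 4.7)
The plan is to establish each of the six properties of \Cref{def:principled-synthesis} separately, by structural induction on $\varphi$ with a nested induction on the quantifier-free subformula $\psi$, uniformly over the environment $\sigma$ and the local location $\ell$. A preparatory shape lemma would simplify matters greatly: after unfolding the outermost fixed-point binders, every $\synone{\psi}$ is built from verdicts and action-prefixed terms using only $+$, $\pand$, $\por$, and each such action prefix $a.m$ is followed immediately by exactly one communication prefix before the next action prefix occurs---a send $(!(\range(\sigma)\setminus\{\ell\}),a)$ when $\sigma(\pi)=\ell$, or a receive from $\sigma(\pi)$ otherwise. This rigid alternating structure underlies all six properties.

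With this lemma in hand, Verdict Agreement reduces to comparing the Boolean skeletons of $\Mone{\psi}$ and $\Mc{\psi}$: they agree on the placement of verdict leaves and differ only by interposed communication prefixes, which do not contribute to $\valto$. Reactivity holds because, for every $A$, each synthesized local monitor's outermost sum contains a summand prefixed by $A(\ell)$. Processing-Communication Alternation is immediate from the shape lemma: the components of $\MdecVarDefault$ begin with action prefixes rather than communication prefixes, and any residual reached after an action transition has only communication prefixes enabled before the next action can fire. Bounded Communication follows by a decreasing-measure argument on the number of pending sends: each synthesized local component contributes a syntactically bounded amount, and input-enabledness forces every send to be consumed by all relevant receivers within a single step.

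The main obstacle is Formula Convergence, where we must show that the stable state $M'$ reached after one action $A$ followed by all communications coincides exactly with $\MdecVar{\sigma'}{\varphi'}$ for the centralized successor $\Mcv{\varphi'}{\sigma'}$. The complication is that send and receive steps may interleave in many orders, so I would first prove a confluence lemma ensuring that $M'$ is uniquely determined by the initial state and the observed $A$. The crux is then the inductive step for box and diamond formulas: the sender at $\sigma(\pi)$ emits a multicast carrying the value $A(\sigma(\pi))$ to $\range(\sigma)\setminus\{\sigma(\pi)\}$, and each receiver resolves its pending sum of $(?\{\sigma(\pi)\},b)$-branches to the continuation $\synone{\psi'}$ (when $A(\sigma(\pi))$ matches the action in the modality) or to the matching verdict leaf otherwise---precisely the shape dictated by $\MdecVar{\sigma'}{\varphi'}$. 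Propagating this through Boolean connectives and quantifiers is then routine, and Verdict Irrevocability falls out as a consequence: any $M$ reached by weak transitions from an intermediate $M_2$ must likewise coincide with $\MdecVar{\sigma'}{\varphi'}$, so Verdict Agreement fixes its verdict uniquely.
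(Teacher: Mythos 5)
Your overall strategy matches the paper's: prove the six properties one by one, after first pinning down a syntactic class of reachable monitors (your ``shape lemma'' is essentially the paper's class of \emph{relevant} monitors, closed under transitions) and establishing confluence of communication so that the stable state after an action is unique. Verdict Agreement, Reactivity, Processing-Communication Alternation and the crux of Formula Convergence for box/diamond are sketched along the same lines as the paper's Lemmas \ref{lemma:verdictbis}, \ref{lem:reactivity}, \ref{lemma:onlyone} and \ref{lemma:crux}.

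However, your derivation of Verdict Irrevocability is circular. The property quantifies over \emph{every} $M$ with $M_2 \boldarrow M$, including intermediate states that can still communicate; Formula Convergence only identifies the \emph{stable} state (the one that cannot communicate) with $\MdecVar{\sigma'}{\varphi'}$, so your claim that ``any $M$ reached by weak transitions from $M_2$ must coincide with $\MdecVar{\sigma'}{\varphi'}$'' is false. Worse, even for the stable state, passing from ``$M_2 \valto v$'' to ``the stable state $\valto v$'' is exactly the statement that verdicts persist along communication steps --- i.e., Verdict Irrevocability itself. The paper avoids this by proving directly, at the level of relevant local monitors, that $m \xrightarrow{c} n$ and $m \valto v$ imply $n \valto v$ (Lemma \ref{lemma:verdictproplow}), and then lifting this through receive-broadcasts and sends to $\DMon$ (Lemmas \ref{lemma:verdictproponestepreceive} and \ref{lemma:verdictproponestep}). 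A second, smaller gap: for Bounded Communication a decreasing measure on pending sends is not enough, because ``cannot communicate'' also forbids pending \emph{receives} (\Cref{def:nocommunicate}). You must additionally show that every pending receive $(?\{\ell'\},\gamma)$ is eventually matched by a send from the monitor at $\ell'$ (the paper's Lemma \ref{lemma:independentcommunication} --- this is precisely why the synthesis emits a message even on actions $b \neq a$), and that messages ``accidentally'' received by sibling parallel branches do not prevent convergence to the same stable state (handled in the paper via covers and Lemma \ref{lemma:joincommunication}).
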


The proof is long and technical. It is carried out by showing that $\MdName$ satisfies the six properties 
of \Cref{def:principled-synthesis}. The details are in Appendix~\ref{app:corr}.

\begin{textAtEnd}[allend, category=corr] 
We now prove that the synthesis defined in \Cref{tab:synthesis-decentralized} satisfies the properties in \Cref{def:principled-synthesis} by proving one property in each of the following subsections. 
We start with some preliminary definitions and results that will be used for proving different properties.

We start with a lemma that is the equivalent of \Cref{lemma:recursion-substitution} for $\synone{-}$; the proof is identical. Then, we show that $\Mone{\varphi}$ cannot communicate.
        
        \begin{lemma}\label{lemma:recursion-substitutiontwo}
            If $\synone{\psi_1} = m$, $\synone{\psi_2} = n$, and the free location variables of $\psi_2$ are not bound in $\psi_1$, then $\synone{\psi_1\{{}^{\psi_2}/{}_x\}} = m\{{}^n/{}_x\}$.
        \end{lemma}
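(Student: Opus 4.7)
The plan is to proceed by structural induction on $\psi_1$, mirroring the proof of \Cref{lemma:recursion-substitution} but adapted to the definition of $\synone{-}$ in \Cref{tab:synthesis-decentralized}. Since formulas in $\quantfree$ contain no location binders, the side-condition that the free location variables of $\psi_2$ are not bound in $\psi_1$ is automatically satisfied; consequently, the case analysis that was non-trivial in \Cref{lemma:recursion-substitution} (the quantifier case) disappears, and the only binder we need to worry about is the recursion binder $\max y$.

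For the base cases, when $\psi_1 \in \{\ttt, \ff, \pi = \pi', \pi \neq \pi'\}$, both $\synone{\psi_1}$ is a verdict independent of $x$ and $\psi_1\{{}^{\psi_2}/{}_x\} = \psi_1$, so both sides of the desired equality coincide with $\synone{\psi_1}$. When $\psi_1 = x$, we have $\psi_1\{{}^{\psi_2}/{}_x\} = \psi_2$, so $\synone{\psi_1\{{}^{\psi_2}/{}_x\}} = n = x\{{}^n/{}_x\} = m\{{}^n/{}_x\}$; when $\psi_1 = y$ with $y \neq x$, both sides equal $y$. The inductive cases for $\psi_1 = \psi_1' \wedge \psi_1''$ and $\psi_1 = \psi_1' \vee \psi_1''$ are immediate from the induction hypothesis applied to $\psi_1'$ and $\psi_1''$, since the synthesis distributes the substitution through the parallel operators $\pand$ and $\por$.

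For the modality cases $\psi_1 = [a_\pi]\psi_1'$ and $\psi_1 = \langle a_\pi\rangle\psi_1'$, the definition of $\synone{-}$ splits according to whether $\sigma(\pi) = \ell$, but neither $\sigma$ nor $\ell$ is affected by the substitution of a recursion variable, and in both branches $\synone{\psi_1}$ is a sum of prefixed terms whose only occurrence of a recursion variable lies inside $\synone{\psi_1'}$. Since action prefixes $a.\_$, send prefixes $(!G,\gamma).\_$, receive prefixes $(?G,\gamma).\_$, and the constants $\yes$, $\no$ do not bind recursion variables, the substitution $\{{}^n/{}_x\}$ commutes with these constructors, and one application of the induction hypothesis to $\psi_1'$ closes each sub-case.

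The main obstacle, as in \Cref{lemma:recursion-substitution}, is the binder case $\psi_1 = \max y.\psi_1'$. Adopting the standard convention that bound recursion variables are $\alpha$-renamed so that $y \neq x$ and $y$ does not occur free in $\psi_2$ (hence not in $n = \synone{\psi_2}$), capture-avoiding substitution gives $(\max y.\psi_1')\{{}^{\psi_2}/{}_x\} = \max y.(\psi_1'\{{}^{\psi_2}/{}_x\})$, so applying the defining clause for $\synone{\max y.\_}$, then the induction hypothesis, and finally pushing the substitution back outside the $\rec y$ binder yields
\[
\synone{(\max y.\psi_1')\{{}^{\psi_2}/{}_x\}} = \rec y.\synone{\psi_1'\{{}^{\psi_2}/{}_x\}} = \rec y.(\synone{\psi_1'}\{{}^n/{}_x\}) = (\rec y.\synone{\psi_1'})\{{}^n/{}_x\} = m\{{}^n/{}_x\},
\]
completing the proof.
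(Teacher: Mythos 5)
Your proof is correct and follows essentially the same route as the paper, which simply declares the proof "identical" to that of \Cref{lemma:recursion-substitution} (a structural induction whose only delicate case there is the quantifier case). You in fact supply more detail than the paper does, correctly observing that the quantifier case cannot arise for $\quantfree$ formulas and that the only binder needing the usual $\alpha$-renaming care is $\max y$.
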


\begin{lemma}
    \label{lemma:nocommunicate}
    Let $\Mone{\varphi}$ be defined; then, for all $[m]_\ell\in \Mone{\varphi}$ and  $c\in\{(!G,\gamma),(?\ell,\gamma)\mid G\subseteq\Loc, \ell\in\Loc, \gamma\in\Cvar\}$, we have that $m\NOT{\xrightarrow{c}}$.
\end{lemma}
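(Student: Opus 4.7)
The plan is to reduce the lemma to a structural induction on quantifier-free formulas. By inspecting the definition of $\Mone{-}$ in \Cref{tab:synthesis-decentralized} and performing a straightforward induction on the quantifier-propositional skeleton of $\varphi$, every local component $[m]_\ell$ of $\Mone{\varphi}$ has the form either $[v]_{\ell_0}$ for some verdict $v$ or $[\synone{\psi}]_\ell$ for some $\psi\in\quantfree$. The first kind cannot communicate because verdicts only admit the action self-loops $v\xrightarrow{a}v$ from \Cref{tab:decentralized_locla_operational_semantics}. It therefore suffices to establish that $\synone{\psi}\NOT{\xrightarrow{c}}$ for every quantifier-free $\psi$ and every communication label $c\in\Com$.

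This claim is proved by structural induction on $\psi$. The base cases $\ttt$, $\ff$, $\pi=\pi'$ and $\pi\neq\pi'$ synthesize to verdicts, to which the argument above applies, while $\synone{x}=x$ admits no transitions at all. For $\psi_1\wedge\psi_2$ and $\psi_1\vee\psi_2$ the synthesis yields $\synone{\psi_1}\pandor\synone{\psi_2}$, and inspecting the rules for $\pandor$ in \Cref{tab:decentralized_locla_operational_semantics} shows that every top-level send or receive of the composition requires at least one operand to exhibit such a communication transition, which contradicts the induction hypothesis. For the modality cases $[a_\pi]\psi_1$ and $\langle a_\pi\rangle\psi_1$, the synthesized monitor is a sum whose summands are all prefixed by an action $b.(\ldots)$, so by the rules for $+$ and action prefixing in \Cref{tab:decentralized_locla_operational_semantics}, every outgoing transition is labelled by an element of $\Act$.

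The main obstacle is the recursion case $\psi=\maxx{x}.\psi_1$, where $\synone{\psi}=\rec x.\synone{\psi_1}$. By the unfolding rule, a transition $\synone{\psi}\xrightarrow{c}m'$ is really a transition of $\synone{\psi_1}\{{}^{\rec x.\synone{\psi_1}}/{}_x\}$, whereas the induction hypothesis controls only $\synone{\psi_1}$ itself. To bridge this, I appeal to the standing guardedness assumption on \HypermuHML: every free occurrence of $x$ in $\psi_1$ lies within the scope of some modality, and the synthesis rules for $[a_\pi]$ and $\langle a_\pi\rangle$ place the continuation strictly below an action prefix; consequently, every free $x$ in $\synone{\psi_1}$ is syntactically guarded by at least one action prefix. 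A routine auxiliary lemma---proved by induction on the syntactic structure of the local monitor---then shows that, when every free occurrence of $x$ in $m$ sits below an action prefix, the top-level communication transitions of $m\{{}^N/{}_x\}$ coincide with those of $m$ for any $N$, because the substitution only affects strict sub-terms of action-prefixed summands. Applied to $m=\synone{\psi_1}$ and $N=\rec x.\synone{\psi_1}$, together with the induction hypothesis $\synone{\psi_1}\NOT{\xrightarrow{c}}$, this closes the recursion case.
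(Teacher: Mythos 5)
Your proposal reaches the same first reduction as the paper's proof---everything boils down to showing $\synone{\psi}\NOT{\xrightarrow{c}}$ for quantifier-free $\psi$---but you then take a genuinely different route. The paper proves that claim by induction on the derivation of the transition $m\xrightarrow{c}n$, showing that no monitor admitting such a derivation can be of the form $\synone{\psi}$; the recursion case is then immediate, because the premise of the unfolding rule is a strictly shorter derivation and, by \Cref{lemma:recursion-substitutiontwo}, $\synone{\psi_1}\{^{\rec x.\synone{\psi_1}}/_x\}=\synone{\psi_1\{^{\maxx{x}.\psi_1}/_x\}}$ is again a synthesized monitor, so the induction hypothesis applies to it directly. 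You instead induct on the structure of $\psi$ and patch the recursion case with guardedness plus a substitution lemma. Your route makes more explicit \emph{why} the lemma holds (the synthesis only ever introduces communication prefixes strictly below an action prefix), whereas the paper's derivation-based induction dispatches recursion essentially for free.

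The one place your argument is thinner than advertised is the auxiliary substitution lemma. You call it routine by induction on the syntactic structure of the local monitor, but its $\rec y.m'$ case reproduces exactly the difficulty you set out to solve: the initial transitions of $\rec y.(m'\{^{N}/_x\})$ are those of $m'\{^{N}/_x\}\{^{\rec y.(m'\{^{N}/_x\})}/_y\}$, and relating these to the initial transitions of $\rec y.m'$ requires knowing that $y$---not just $x$---is guarded in $m'$, which your hypothesis does not supply for an arbitrary local monitor. The lemma is salvageable by strengthening its statement to monitors in which \emph{all} recursion variables are guarded (true of every $\synone{\psi}$ for guarded $\psi$), or by proving it by induction on the transition derivation---at which point you have essentially rediscovered the paper's induction. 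A cleaner repair that stays within your route: replace structural induction on $\psi$ by induction on the well-founded order $\prec$ of \Cref{def:order}, exactly as in the reactivity proof (\Cref{lem:reactivity}); guardedness gives $\psi_1\{^{\maxx{x}.\psi_1}/_x\}\prec\maxx{x}.\psi_1$, and \Cref{lemma:recursion-substitutiontwo} identifies the unfolding of $\rec x.\synone{\psi_1}$ with $\synone{\psi_1\{^{\maxx{x}.\psi_1}/_x\}}$, so the induction hypothesis applies to the unfolded monitor with no substitution lemma needed.
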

\begin{proof}
The statement follows from a straightforward induction on $\varphi$ and the following claim: for all $\ell,\psi,\sigma$ and $c\in\{(!G,\gamma),(?\ell,\gamma)\mid G\subseteq\Loc, \ell\in\Loc, \gamma\in\Cvar\}$, it holds that $\synone{\psi}\NOT{\xrightarrow{c}}$. 
Such claim can be proved by induction on $\xrightarrow{c}$ to show that, if $m\xrightarrow{c}n$, then $m\neq \synone{\psi}$ for all $\sigma,\psi,\ell$.
\end{proof}

As most results in this work concern monitors that are reached in a run from a synthesized decentralized monitor, it is convenient to capture this in a definition. 

\begin{definition}\label{def:A-derived}
    We say that a monitor $M$ is \textit{$A$-derived} if, for all $[m]_\ell\in  M$, there exist $\psi, \sigma, m'$ such that $\synone{\psi}\xrightarrow{A(\ell)} m'\boldarrow m$. A monitor $M$ is \textit{action-derived} if there exists an $A$ such that $M$ is $A$-derived. We call a local monitor $m$ action-derived/A-derived for $\ell$ if $[m]_\ell$ is action-derived/A-derived.
\end{definition}

We define a syntactic characterization of the kinds of monitors that are action-derived. This provides us with a convenient inductive proof principle. 

\begin{definition}\label{def:relevant}
    The class of \emph{relevant local monitors} is defined for $\psi,\psi_a\in\quantfree$ by the grammar:
    \begin{align*}
        m ::=~ & \synone{\psi} &~\mid~ & (!G,a).\synone{\psi} &~\mid~ & 
        \sum_{a\in\Act}(?\{\ell'\},a).\synone{\psi_a}  ~
            \mid~ & m \pand m &~\mid~ & m \por m . &&\hfill 
    \end{align*}
    The class of relevant communicating monitors is then defined with the following grammar:
    \begin{align*}
        M ::=~~ 
         [m]_\ell ~~\mid~~  M \land M ~~ \mid~~  M \lor M , \hfill 
    \end{align*}
    where $m$ is a relevant local monitor and $\ell \in \Loc$.
    When it is clear from the context if a monitor is a relevant local monitor or a relevant communicating monitor, we will simply call it a relevant monitor, or say that the monitor is relevant.
\end{definition}

The classes of relevant communicating/local monitors are closed under transitions:

\begin{lemma}\label{lemma:stay_relevant}
    Let $M$ and $m$ be a relevant communicating monitor and a relevant local monitor, respectively. If $M \xrightarrow{\lambda} M'$ and $m \xrightarrow{\lambda} m'$, then $M'$ and $m'$  are a relevant communicating monitor and a relevant local monitor, respectively.
\end{lemma}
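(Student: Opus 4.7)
The plan is to prove the two closure claims together, starting from the local one (since the communicating case reduces to it). The approach is a case analysis on the form of the relevant monitor combined with case analysis on the transition rule; the recursion case will additionally need Lemma~\ref{lemma:recursion-substitutiontwo} so that unfolding stays in the image of $\synone{-}$. The grammar of relevant local monitors was designed precisely so that each possible one-step derivative again falls into one of its productions.

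For the local statement, I would proceed by cases on $m$. If $m = \synone{\psi}$, I analyse on $\psi$: for the base cases ($\ttt,\ff,\pi=\pi',\pi\neq\pi'$) the monitor is a verdict $v$ and the only transitions are $v \xrightarrow{a} v$, which stays in the grammar because $\yes = \synone{\ttt}$ and $\no = \synone{\ff}$; the free variable case $\psi = x$ has no transitions; for $\psi = \maxx x.\psi'$ we have $\synone{\psi} = \rec x.\synone{\psi'}$ and the transition premise $\synone{\psi'}\{^{\rec x.\synone{\psi'}}/_x\} \xrightarrow{\lambda} m'$ coincides, by Lemma~\ref{lemma:recursion-substitutiontwo}, with a transition from $\synone{\psi'\{^{\maxx{x}.\psi'}/_x\}}$, so we can invoke an inner induction on the derivation of $\xrightarrow{\lambda}$. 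The conjunction and disjunction cases reduce to the cases $m = m_1 \pand m_2$ and $m = m_1 \por m_2$ below. The modality case is the most delicate: if $\sigma(\pi) = \ell$ then $\synone{[a_\pi]\psi'}$ has the sum-of-prefixes shape $a.(!G,a).\synone{\psi'} + \sum_{b\neq a} b.(!G,b).\yes$, whose action-derivatives are either $(!G,a).\synone{\psi'}$ or $(!G,b).\synone{\ttt}$, both explicitly in the grammar; if $\sigma(\pi) \neq \ell$ then after an action $b$ we obtain $(?\{\sigma(\pi)\},a).\synone{\psi'} + \sum_{b'\neq a}(?\{\sigma(\pi)\},b').\yes$, which precisely matches the shape $\sum_{c\in\Act}(?\{\ell'\},c).\synone{\psi_c}$. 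The diamond case is symmetric.

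The remaining forms of $m$ are easy. If $m = (!G,a).\synone{\psi}$, the only transition is $(!G,a)$ and leads to $\synone{\psi}$. If $m = \sum_{c\in\Act}(?\{\ell'\},c).\synone{\psi_c}$, the sum rules allow only $(?\ell',c)$ transitions to $\synone{\psi_c}$. Finally, if $m = m_1 \pandor m_2$ with $\pandor \in \{\pand,\por\}$, inspecting the operational rules of \Cref{tab:decentralized_locla_operational_semantics} shows that every derivative is of the form $m_1' \pandor m_2'$, $m_1' \pandor m_2$, or $m_1 \pandor m_2'$, where each primed component is a one-step derivative of the corresponding unprimed one; by the induction hypothesis on the smaller derivation (and/or on $m_1,m_2$) each such component is relevant, so the composite remains in the grammar.

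For the communicating statement, I would use structural induction on $M$. The case $M = M_1 \op M_2$ unfolds straightforwardly via the rules of \Cref{tab:communicating_operational_semantics_communicating_monitors,tab:actions_operational_semantics_communicating_monitors}, appealing to the inductive hypothesis on $M_1$ and $M_2$. The base case $M = [m]_\ell$ is reduced to the local claim: a communication derivative $[m']_\ell$ comes from $m \xrightarrow{c} m'$, so $m'$ is relevant by the local statement; an action derivative is either $[m']_\ell$ with $m \xrightarrow{a} m'$ (again handled locally) or $[\vend]_\ell$, which we regard as a relevant verdict. The main obstacle is the modality case inside $\synone{\psi}$, where the exact sum-of-prefixes shapes produced by the synthesis must be matched syntactically against the three non-parallel productions of the grammar; once this matching is established, the rest of the induction is routine.
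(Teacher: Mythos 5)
Your proposal is correct and follows essentially the same route as the paper: reduce the communicating case to the local one, handle $m=\synone{\psi}$ by cases on how the transition is derived (the other grammar productions being immediate), use Lemma~\ref{lemma:recursion-substitutiontwo} to push the recursion unfolding back into the image of $\synone{-}$, and match the modality derivatives syntactically against the grammar of relevant monitors. The only cosmetic difference is that the paper dispatches the communication labels for $\synone{\psi}$ in one stroke via Lemma~\ref{lemma:nocommunicate} and then inducts on the transition derivation, whereas you organise the same checks around the grammar productions; your explicit remark about $[\vend]_\ell$ is a point the paper's proof silently glosses over.
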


\begin{proof}
The proof for $M \xrightarrow{\lambda} M'$ follows directly from the result for $m \xrightarrow{\lambda} m'$;
this can be proved by induction on $\xrightarrow{\lambda}$ and the only interesting case is for $m=\synone{\psi}$. 
By \Cref{lemma:nocommunicate}, 
it must be that $\lambda=a\in\Act$.
The only base cases are $\synone{\psi}=a.m'$, but this cannot occur because of definition of the synthesis function and our assumption that $|\Act| \geq 2$, or $\synone{\psi}=m'=v$, that is trivial. 

For the inductive case, when $\synone{\psi}=n_1+n_2$ and $n_1\xrightarrow{a}m'$, $\psi$ must be of the form $[a'_\pi]\psi'$
or $\langle a'_\pi\rangle \psi'$;
so 
$$
\synone{\psi}=a'.(!\{\sigma(\pi')\mid\sigma(\pi' )\neq \ell\},a').\synone{\psi'}+\sum_{b\neq a'}b.(!\{\sigma(\pi')\mid\sigma(\pi' )\neq \ell\},b).v
$$ 
or 
$$
\synone{\psi}=\sum_{b\in \Act}b.\big((?\{\sigma(\pi)\},a').\synone{\psi'}+\sum_{b\neq a'}(?\{\sigma(\pi)\},b).v\big)
$$ 
where $v$ is $\yes$, if $\psi$ is a box, and is $\no$, if $\psi$ is a diamond.
In all cases, it is immediate that $m'$ is a relevant monitor, regardless of whether $a=a'$ or not.
    
    Next we consider the case for $\synone{\psi}=m=m_1\pandor m_2$ and $m'=m_1'\pandor m_2'$. Hence, $\psi=\psi_1\op \psi_2$ (for $\op = \wedge$ if $\pandor = \pand$ and $\op = \vee$ if $\pandor = \por$), $m_1=\synone{\psi_1}\xrightarrow{a}m_1'$ and $m_2=\synone{\psi_2}\xrightarrow{a}m_2'$. From the induction hypothesis it follows that $m_1'$ and $m_2'$ are relevant monitors, thus making $m'$ relevant as well.

    Last we consider $\synone{\psi}=\rec x.n\xrightarrow{a} m'$ because $n\{^{\rec x.n}/_x\}\xrightarrow{a} m'$. Thus we obtain that $\psi=\max x.\psi'$ and $n=\synone{\psi'}$. As $\synone{\psi'}\{^{\rec x.\synone{\psi'}}/_x\}=\synone{\psi'\{^{\maxx x.\psi'}/_x\}}$ (\Cref{lemma:recursion-substitutiontwo}), we use the induction hypothesis to obtain that $m'$ is relevant.
\end{proof}

Relevant monitors can be used to prove properties of action-derived monitors, because of the following:

\begin{lemma}\label{lemma:A-derived-iff-relevant}
    Every action-derived monitor $M$ is relevant.
\end{lemma}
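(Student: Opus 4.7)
The proof plan is direct: unfold the definitions and induct on the $\boldarrow$ reduction, leveraging \Cref{lemma:stay_relevant} as the workhorse. The key observation is that $\synone{\psi}$ is, by inspection of the grammar in \Cref{def:relevant}, the first (base) production for relevant local monitors, so every syntactically synthesised local monitor is relevant a priori.

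First I would note that the grammar of relevant communicating monitors in \Cref{def:relevant} mirrors exactly the surface syntax of $\DMon$, differing only in imposing that the local components $m$ inside located brackets $[m]_\ell$ be relevant local monitors. Hence it suffices to show that for every $[m]_\ell \in M$ the local monitor $m$ is relevant. Fix such $[m]_\ell$. By \Cref{def:A-derived}, there exist $\psi$, $\sigma$ and $m'$ such that
\[
\synone{\psi} \xrightarrow{A(\ell)} m' \boldarrow m.
\]
Since $\synone{\psi}$ is literally in the grammar for relevant local monitors (as the leftmost base case), it is relevant. Applying \Cref{lemma:stay_relevant} to the single action transition $\synone{\psi} \xrightarrow{A(\ell)} m'$ yields that $m'$ is relevant.

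Next I would proceed by induction on the length of the derivation $m' \boldarrow m$. By definition of $\boldarrow$ for local monitors, this is a (possibly empty) sequence of send transitions $\xrightarrow{(!G,\gamma)}$. The base case (length zero) gives $m = m'$, which we have just shown is relevant. For the inductive step, suppose $m' \boldarrow n \xrightarrow{(!G,\gamma)} m$, where by the induction hypothesis $n$ is a relevant local monitor; one more application of \Cref{lemma:stay_relevant}, now with the communication label $\lambda = (!G,\gamma)$, concludes that $m$ is relevant.

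There is no real obstacle here beyond checking that \Cref{lemma:stay_relevant} is phrased uniformly over all transition labels (both action labels $a \in \Act$ and communication labels), which it is. The only mild care needed is at the top level when $M$ is a boolean combination $M_1 \op M_2$: here we simply observe that the witnesses $\psi, \sigma, m'$ in \Cref{def:A-derived} are quantified per-component $[m]_\ell \in M$, so the argument above applies componentwise, and the relevant-communicating structure of $M$ follows by a trivial structural induction that tracks its $\wedge$/$\vee$ shape.
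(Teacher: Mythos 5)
Your proof is correct and is precisely the argument the paper intends: the paper's own proof is the one-liner ``Straightforward from \Cref{lemma:stay_relevant}'', and you have simply unfolded it into the componentwise reduction plus an induction on the length of the transition sequence. One small slip worth flagging: by its definition, $\boldarrow$ on \emph{local} monitors may contain receive transitions $(?\ell,\gamma)$ as well as sends (the restriction to send labels only applies to $\boldarrow$ on communicating monitors in $\DMon$), but since \Cref{lemma:stay_relevant} is uniform over all labels $\lambda$ --- as you yourself observe --- your induction goes through unchanged.
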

\ifarxiversion 
\begin{proof}
    Straightforward from \Cref{lemma:stay_relevant}.
\end{proof}
\fi
\end{textAtEnd}

\begin{textAtEnd}[allend, category=va] 
\label{sec:va}
    
To relate verdicts of centralized and decentralized monitors, we need the following intermediary result:
\begin{lemma}\label{lemma:threefacts}
    Let $\psi\in\quantfree$ and $\varphi\in$~\PHypermuHML.
    \begin{enumerate}
   \item If there exists $\ell\in\Loc$ for which $\synvone{\ell}{\sigma}{\psi}= v $, then $\Mc{\psi}=v$.
   \item If $\Mc{\psi}=v$, then for all $\ell\in\Loc$ we have $\synvone{\ell}{\sigma}{\psi}= v $. 
    \end{enumerate}
\end{lemma}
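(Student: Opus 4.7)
The plan is to prove both items by case analysis on the structure of $\psi \in \quantfree$, leveraging the observation that neither $\synvone{\ell}{\sigma}{-}$ nor $\Mc{-}$ can produce an output that is syntactically a verdict (an element of $\{\yes,\no,\vend\}$) except when $\psi$ is atomic. My first step is to enumerate the possible shapes of $\synone{\psi}$ from the definition in \Cref{tab:synthesis-decentralized}: for $\psi \in \{\ttt, \ff\}$ and $\psi \in \{\pi = \pi', \pi \neq \pi'\}$ the result is $\yes$ or $\no$; for $\psi = x$ it is $x$; for $\psi = \max x.\psi'$ it is $\rec x.\synone{\psi'}$; for $\psi = \psi_1 \wedge \psi_2$ (resp.\ $\psi_1 \vee \psi_2$) it is a parallel composition $\synone{\psi_1} \pand \synone{\psi_2}$ (resp.\ $\por$); and for box and diamond formulas it has the form $a.m' + \cdots$ or $\sum_{b\in\Act} b.m'$. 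The same enumeration applies to $\Mc{\psi}$ via \Cref{tab:synthesis}, since $\psi \in \quantfree$ excludes the quantifier cases.

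For item (1), I would then argue that, among these cases, only the four atomic ones yield $\synvone{\ell}{\sigma}{\psi} = v$ for some verdict $v$; all remaining cases are vacuous. In each atomic case the verification is immediate: $\synone{\ttt} = \yes = \Mc{\ttt}$; $\synone{\ff} = \no = \Mc{\ff}$; and for $\psi \in \{\pi = \pi', \pi \neq \pi'\}$ both $\synvone{\ell}{\sigma}{\psi}$ and $\Mc{\psi}$ branch on the same condition $\sigma(\pi) \stackrel{?}{=} \sigma(\pi')$, yielding the same verdict. Item (2) is symmetric: a similar inspection of $\Mc{\psi}$ shows it is a verdict only in the same four atomic cases, in each of which $\synvone{\ell}{\sigma}{\psi}$ yields exactly the same verdict for every $\ell$ (the atomic clauses of the $\synone{-}$ definition do not depend on $\ell$).

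The only subtle point, and the one place where a less careful setup could derail the argument, is the implicit claim that monitor terms such as $\yes \pand \no$ are genuinely distinct from $\no$ as syntactic terms, so that $\synone{\psi_1 \wedge \psi_2}$ cannot accidentally be equal to a verdict. This is immediate from the grammar of $\CMon$ in \Cref{sec:cmonitors}, which lists verdicts and parallel compositions as separate top-level constructors; the equality asserted in the lemma is syntactic equality of monitor terms, not equivalence up to the verdict-evaluation relation $\valto$. Hence the enumeration above is exhaustive and both implications are established with no residual obstacle.
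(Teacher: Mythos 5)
Your proposal is correct and follows essentially the same route as the paper's proof: inspect the shapes produced by the two synthesis functions, observe that syntactic equality with a verdict forces $\psi$ to be one of the atomic cases ($\ttt$, $\ff$, $\pi=\pi'$, $\pi\neq\pi'$), and check agreement there. Your explicit remark that the equality is syntactic (so terms like $\yes\pand\no$ do not count as verdicts) makes precise a point the paper leaves implicit.
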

\ifarxiversion 
\begin{proof}
For the first statement we only prove the case for $v=\yes$, the case for $v=\no$ is dual. We observe that $\synvone{\ell}{\sigma}{\psi}= \yes $ implies the following: $\psi=\ttt$, $\psi=(\pi=\pi')$ and $\sigma(\pi)=\sigma(\pi')$, or $\psi=(\pi\neq \pi')$ and $\sigma(\pi)\neq\sigma(\pi')$. In all these cases we obtain immediately that $\Mc{\psi}=\yes$. 

For the second statement, we prove only the case with $v=\no$. We observe that $\Mc{\psi}=\no$ implies that $\psi=\ff$, $\psi=(\pi=\pi')$ and $\sigma(\pi)\neq\sigma(\pi')$, or $\psi=(\pi\neq \pi')$ and $\sigma(\pi)=\sigma(\pi')$ (by definition, $\psi$ is quantifier-free). In all these cases we have  that $\synvone{\ell}{\sigma}{\psi}= \no $, for all $\ell$.
\end{proof}
\fi

We first relate the verdicts of centralized and decentralized synthesized monitors for $\psi$-formulas.
\begin{lemma}\label{lemma:verdictind}
    Let $\sigma$ be such that $\FVloc(\psi)\subseteq \dom(\sigma)$. 
     If there exists $\ell\in\Loc$ such that $\synvone{\ell}{\sigma}{\psi}\valto v $, then $\Mc{\psi}\valto v$. Conversely, if $\Mc{\psi}\valto v$, then $\synvone{\ell}{\sigma}{\psi}\valto v $, for all $\ell\in\Loc$. 
\end{lemma}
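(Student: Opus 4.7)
My plan is to prove both directions of the biconditional simultaneously by structural induction on $\psi$, with the fixed-point case handled via a nested induction on the derivation of $\valto$. The key observation driving the proof is that the centralized and decentralized syntheses produce monitors that share exactly the same \emph{verdict-evaluation skeleton}: the only differences lie under action prefixes $a.(\cdot)$ and communication prefixes $c.(\cdot)$, neither of which admits a verdict rule in Table~\ref{tab:verdict_evaluation_semantics}.

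For the base cases $\psi \in \{\ttt, \ff, \pi=\pi', \pi\neq\pi', x\}$, I invoke Lemma~\ref{lemma:threefacts} to handle direct verdict equalities, noting that for $\psi = x$ neither synthesis produces any verdict since bare variables have no applicable rule. For $\psi = \psi_1 \wedge \psi_2$ and $\psi = \psi_1 \vee \psi_2$, both syntheses place $\pand$ or $\por$ at the top, so I read off the applicable $\valto$-rule (one of the four rules for $\pandor$) and apply the induction hypothesis to each conjunct/disjunct. For the modal cases $[a_\pi]\psi'$ and $\langle a_\pi\rangle\psi'$, I inspect the synthesis clauses directly: every summand of both $\Mc{[a_\pi]\psi'}$ and $\synvone{\ell}{\sigma}{[a_\pi]\psi'}$ (and similarly for the diamond) begins with an action prefix (and, in the decentralized case, may further continue with a communication prefix). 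Since no verdict rule applies to prefixed monitors, neither sum evaluates to any verdict, so both implications hold vacuously.

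The genuinely nontrivial case is $\psi = \max x.\psi'$, where $\Mc{\psi} = \rec x.\Mc{\psi'}$ and $\synvone{\ell}{\sigma}{\psi} = \rec x.\synvone{\ell}{\sigma}{\psi'}$. Here I switch to induction on the height of the $\valto$ derivation: if $\rec x.\synvone{\ell}{\sigma}{\psi'} \valto v$ via a strictly shorter derivation of $\synvone{\ell}{\sigma}{\psi'}\{^{\rec x.\synvone{\ell}{\sigma}{\psi'}}/_x\} \valto v$, I rewrite the latter, using Lemma~\ref{lemma:recursion-substitutiontwo}, as $\synvone{\ell}{\sigma}{\psi'\{^{\max x.\psi'}/_x\}} \valto v$. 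Applying the inner inductive hypothesis (on derivation height) yields $\Mc{\psi'\{^{\max x.\psi'}/_x\}} \valto v$; by Lemma~\ref{lemma:recursion-substitution} this is $\Mc{\psi'}\{^{\rec x.\Mc{\psi'}}/_x\} \valto v$, so the recursion rule of Table~\ref{tab:verdict_evaluation_semantics} delivers $\rec x.\Mc{\psi'} = \Mc{\max x.\psi'} \valto v$. The reverse direction is entirely symmetric, using the same two recursion-substitution lemmas in opposite directions.

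The main obstacle I anticipate is precisely this fixed-point case, because structural induction on $\psi$ alone does not suffice: unfolding $\max x.\psi'$ produces $\psi'\{^{\max x.\psi'}/_x\}$, which is not a subformula. The fix is to restructure the induction so that the primary measure is the $\valto$-derivation height (which does strictly decrease upon unfolding the recursion), while the structural analysis on $\psi$ is used only to identify which rule could have produced the verdict. The two substitution lemmas \ref{lemma:recursion-substitution} and \ref{lemma:recursion-substitutiontwo} ensure that unfolding on the monitor side and unfolding on the formula side commute with both syntheses, which is what allows the induction to close cleanly.
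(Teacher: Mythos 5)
Your proof is correct and, once you restructure the induction so that the $\valto$-derivation height is the primary measure, it coincides with the paper's argument: the paper also inducts on the derivation of $\valto$, dispatches the verdict base case via Lemma~\ref{lemma:threefacts}, observes that the modal cases (top-level $+$ over prefixed summands) cannot evaluate to a verdict, and closes the fixed-point case with Lemmas~\ref{lemma:recursion-substitution} and~\ref{lemma:recursion-substitutiontwo}. The only cosmetic difference is that you carry both directions simultaneously, whereas the paper proves one and notes the other is identical.
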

\begin{proof}
First, observe that $\psi$ is quantifier-free (by definition) and that the definition of $\Mc{\psi}$ is the same as the definition of $\synone{\psi}$ except for the box and diamond modalities.
Hence, we only prove the first statement by induction on $\valto$, as the other proof is identical. 
 


For the base case, we can skip the case for $\vend$ (as a synthesized monitor never takes that shape) and so the only possibility is $\synone{\psi}=v\valto v$: the result follows immediately from \Cref{lemma:threefacts}. 

For the inductive case, we can skip the case for $+$: if $\synone{\psi}=m+n$, then either $\psi=[a_\pi]\psi'$ or $\psi=\langle a_\pi\rangle \psi'$; from the synthesis function, it then follows that $\synone{\psi}$ cannot evaluate to a verdict. 

If $\synone{\psi}=m\pand n\valto \no$ because $m\valto \no$, then $\psi=\psi_1\wedge\psi_2$ and $\synvone{\ell}{\sigma}{\psi_1}=m$. 
We use the induction hypothesis to derive that $\Mc{\psi_1}\valto \no$ and thus $\Mc{\psi}\valto v$.
The case for $\synone{\psi}=m\por n\valto \yes$ because $m\valto \yes$ is dual.

Next we consider 
$\synone{\psi}=m\pand n\valto v$ because $m \valto \yes$ and $n\valto v$. Thus $\psi=\psi_1\wedge\psi_2$, $\synvone{\ell}{\sigma}{\psi_1}=m$ and $\synvone{\ell}{\sigma}{\psi_2}=n$. We use the induction hypothesis to derive that $\Mc{\psi_1}\valto \yes$ and $\Mc{\psi_2}\valto v$, which concludes the proof.
%
The case for $\synone{\psi}=m\por n\valto v$ because $m \valto \no$ and $n\valto v$ is dual.

Finally, if $\synone{\psi}=\rec x.m\valto v$ because $m\{^{\rec x.m}/_x\}\valto v$, then $\psi=\maxx x.\psi'$ and $m=\synone{\psi'}$. As $\synone{\psi'}\{^{\rec x.\synone{\psi'}}/_x\}=\synone{\psi'\{^{\maxx x.\psi'}/_x\}}$, from \Cref{lemma:recursion-substitutiontwo}, we get that $\synone{\psi'\{^{\maxx x.\psi'}/_x\}}\valto v$. By the induction hypothesis, we obtain that  $\Mc{\psi'\{^{\maxx x.\psi'}/_x\}}\valto v$. This, together with $\Mc{\psi'}\{^{\rec x.\Mc{\psi'}}/_x\}=\Mc{\psi'\{^{\maxx x.\psi'}/_x\}}$ (that holds by \Cref{lemma:recursion-substitution}), implies that $\Mc{\psi'}\{^{\rec x.\Mc{\psi'}}/_x\}\valto v$. The latter gives that $\Mc{\psi} = \rec x. \Mc{\psi'} \valto v$. 
\end{proof}

The next lemma states a few basic facts about verdict evaluation in a centralized and decentralized setting; the proof is immediate. Notationally, when $I=\emptyset$, we let $\bigvee_{i\in I} m_i=\no=\bigoplus_{i\in I} m_i$ and $\bigwedge_{i\in I} m_i=\yes=\bigotimes_{i\in I} m_i$.

\begin{lemma}\label{lemma:basicfactsverdict}
\ 
\begin{enumerate}
    \item $\bigoplus_{i\in I} m_i\valto \yes \ \ \Leftrightarrow\ \ \exists i\in I. m_i\valto \yes\ \ \Leftrightarrow\ \ \bigvee_{i\in I} m_i\valto \yes$;
    \item $\bigoplus_{i\in I} m_i\valto \no \ \ \Leftrightarrow\ \ \forall i\in I. m_i\valto \no  \ \ \Leftrightarrow\ \ \bigvee_{i\in I} m_i\valto \no$;
    \item $\bigotimes_{i\in I} m_i\valto \yes\ \ \Leftrightarrow\ \ \forall i\in I. m_i\valto \yes\ \ \Leftrightarrow\ \ \bigwedge_{i\in I} m_i\valto \yes$;
    \item $\bigotimes_{i\in I} m_i\valto \no\ \ \Leftrightarrow\ \ \exists i\in I. m_i\valto \no\ \ \Leftrightarrow\ \ \bigwedge_{i\in I} m_i\valto \no$.
\end{enumerate}
\end{lemma}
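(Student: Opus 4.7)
The statement bundles four chains of biconditionals that are syntactically parallel: for each big operator ($\bigoplus, \bigotimes, \bigvee, \bigwedge$), a single verdict characterization. My plan is a uniform induction on the cardinality of the index set $I$, carried out once and then read off for each of the four items by swapping the relevant operator.

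For the base case $|I|=0$, I unfold the stated notational conventions: $\bigoplus_{i\in\emptyset} m_i = \no$, $\bigotimes_{i\in\emptyset} m_i = \yes$, $\bigvee_{i\in\emptyset} m_i = \no$, and $\bigwedge_{i\in\emptyset} m_i = \yes$. Combined with the axiom $v\valto v$ in Table~\ref{tab:verdict_evaluation_semantics} (and its lifting to $[\cdot]_\ell$ in Table~\ref{tab:verdict_combination_decentralized}), this makes each of the four chains vacuously true: in items~(1) and~(4) all three conditions fail, while in items~(2) and~(3) all three conditions hold.

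For the inductive step, pick any $j\in I$ and let $I=I'\uplus\{j\}$. Using commutativity of $\por$, $\pand$, $\vee$, and $\wedge$, I rewrite $\bigoplus_{i\in I} m_i = (\bigoplus_{i\in I'} m_i)\por m_j$ and similarly for the other three operators. I then read off from the rules:
\begin{itemize}
    \item for $\por$ (and symmetrically for $\vee$ from Table~\ref{tab:verdict_combination_decentralized}): $m\por n \valto \yes$ iff $m\valto\yes$ or $n\valto\yes$, and $m\por n\valto\no$ iff $m\valto\no$ and $n\valto\no$;
    \item for $\pand$ (and symmetrically for $\wedge$): $m\pand n\valto\no$ iff $m\valto\no$ or $n\valto\no$, and $m\pand n\valto\yes$ iff $m\valto\yes$ and $n\valto\yes$.
\end{itemize}
Each of these four statements comes from inspecting the two relevant rules of Table~\ref{tab:verdict_evaluation_semantics}: one direction is an immediate rule application, and the other is by noting there is no other rule for the operator that could derive the given verdict. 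Applying the induction hypothesis to $I'$ then delivers all four items at once.

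The only point requiring care, which I would verify explicitly, is that neither $\por$ nor $\pand$ (nor their decentralized counterparts) has any \emph{other} inference route to $\yes$ or $\no$ — in particular, no rule of Table~\ref{tab:verdict_evaluation_semantics} lets $m\por n$ reach $\yes$ when both operands reach only $\no$, and the $\vend$-combination rule applies only when both sides evaluate to $\vend$, which is outside the verdicts considered here. Once this enumeration of rules is checked, the induction is purely mechanical and confirms the author's observation that the proof is immediate.
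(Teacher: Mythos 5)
Your proof is correct and is essentially the argument the paper has in mind: the paper states the lemma with the remark that ``the proof is immediate'' and gives no further detail, and your induction on $|I|$ — grounding the base case in the stated conventions for empty $\bigoplus,\bigotimes,\bigvee,\bigwedge$ and the inductive step in a rule-inversion analysis of Tables~\ref{tab:verdict_evaluation_semantics} and~\ref{tab:verdict_combination_decentralized} — is the natural formalization of that claim. Your explicit check that no other rule ($\vend$-combination, the $+$ rule, verdict axiom) can derive $\yes$ or $\no$ for a $\por$/$\pand$/$\vee$/$\wedge$ term is exactly the point that makes the ``only if'' directions go through.
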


Now we can relate verdicts for monitors synthesized from any formula in the centralized and decentralized setting, thus proving Verdict Agreement.

\begin{lemma}\label{lemma:verdictbis}
Let $\varphi\in$\FPHypermuHML and $\sigma$ be such that $\FVloc(\varphi)\subseteq\dom(\sigma)$. Then, 
$\Mone{\varphi}\valto v \Leftrightarrow \Mc{\varphi}\valto v$.
\end{lemma}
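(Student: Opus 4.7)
The plan is to proceed by induction on the structure of $\varphi$, exploiting the fact that the definition of $\Mone{-}$ in \Cref{tab:synthesis-decentralized} mirrors the structure of $\Mc{-}$ from \Cref{tab:synthesis} on the connectives $\wedge, \vee, \forall, \exists$, while for quantifier-free subformulas $\psi$ the decentralized monitor is obtained by placing copies of $\synone{\psi}$ at each location in $\range(\sigma)$ and combining them with $\bigvee$.

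For the base case, $\varphi = \psi \in \quantfree$, I would split on whether $\sigma=\emptyset$. If $\sigma = \emptyset$, then by definition $\Mone{\psi} = [v']_{\ell_0}$ where $\Mc{\psi}\valto v'$, and by \Cref{lemma:emptysigma} such a $v'$ exists; the equivalence is then immediate from the first rule of \Cref{tab:verdict_combination_decentralized}. If $\sigma \neq \emptyset$, then $\Mone{\psi} = \bigvee_{\ell \in \range(\sigma)} [\synone{\psi}]_\ell$. By \Cref{lemma:verdictind}, for every $\ell \in \range(\sigma)$ we have $\synone{\psi}\valto v$ iff $\Mc{\psi}\valto v$, so all the conjuncts/disjuncts agree on their verdict. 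Combining the first rule of \Cref{tab:verdict_combination_decentralized} with \Cref{lemma:basicfactsverdict} (items 1--2, applied to the $\bigvee$) then gives $\Mone{\psi}\valto v \Leftrightarrow \Mc{\psi}\valto v$.

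For the inductive step, the cases $\varphi = \varphi_1 \wedge \varphi_2$ and $\varphi = \varphi_1 \vee \varphi_2$ follow by applying the induction hypothesis to each $\varphi_i$ and then comparing the verdict-combination rules for $\wedge,\vee$ in \Cref{tab:verdict_combination_decentralized} against those for $\pand,\por$ in \Cref{tab:verdict_evaluation_semantics}, which are syntactically identical modulo the bracketing into $\LMon$-components. The cases $\varphi = \forall \pi.\varphi'$ and $\varphi = \exists \pi.\varphi'$ reduce to applying the induction hypothesis at each $\sigma[\pi \mapsto \ell]$ for $\ell \in \Loc$ and invoking \Cref{lemma:basicfactsverdict} to relate $\bigwedge$/$\bigvee$ with $\bigotimes$/$\bigoplus$.

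The main subtlety I anticipate is not in any single case but in the base case: the decentralized monitor $\Mone{\psi}$ wraps $\synone{\psi}$ in a disjunction regardless of whether $\psi$ sits in a conjunctive or disjunctive context of $\varphi$, so the argument really does rely on \Cref{lemma:verdictind} saying that \emph{all} local copies of $\synone{\psi}$ agree on their verdict with $\Mc{\psi}$. Once this observation is in place, the disjunction collapses to the common verdict and the induction goes through uniformly.
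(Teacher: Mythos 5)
Your proposal matches the paper's proof essentially step for step: the same structural induction on $\varphi$, the same case split on $\sigma=\emptyset$ versus $\sigma\neq\emptyset$ in the base case, and the same two key ingredients, namely \Cref{lemma:verdictind} to transfer verdicts between $\synone{\psi}$ and $\Mc{\psi}$ and \Cref{lemma:basicfactsverdict} to relate $\bigvee/\bigwedge$ with $\bigoplus/\bigotimes$. Your closing observation about why the outer disjunction in $\Mone{\psi}$ is harmless is exactly the point the paper's base case relies on.
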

\begin{proof}
    The proof proceeds by induction on $\varphi$. From the synthesis functions we know that $v\neq \vend$. For the base case we consider $\varphi=\psi$. We distinguish when 
$\sigma = \emptyset$ or not. If $\sigma=\emptyset$, 
    by assumption we know that $\FVloc(\psi)=\emptyset$ and so $\psi$ is trivial; 
    therefore, $\Mc\psi = v \valto v$ and $\Mone{\psi} = [v]_{\ell_0} \valto v$.
    Let us now consider the case for $\sigma\neq\emptyset$.
If $v=\no$, we derive:
\begin{align*}
    \Mone{\psi} &=  \bigvee_{\ell \in \range(\sigma)} [\synone{\psi}]_\ell \valto \no 
    \\
           &\Leftrightarrow \forall \ell\in \range(\sigma). [\synone{\psi}]_\ell \valto \no \tag{\Cref{lemma:basicfactsverdict}}\\
    &\Leftrightarrow \forall \ell\in \range(\sigma) . \synone{\psi} \valto \no \\
    &\Leftrightarrow \Mc{\psi}\valto \no \tag{\Cref{lemma:verdictind}}
    \end{align*}
If $v=\yes$, we have:
\begin{align*}
    \Mone{\psi}&=  \bigvee_{\ell \in \range(\sigma)} [\synone{\psi}]_\ell \valto \yes \\
       &\Leftrightarrow \exists \ell\in \range(\sigma). [\synone{\psi}]_\ell \valto \yes \tag{\Cref{lemma:basicfactsverdict}}\\
       &\Leftrightarrow \exists \ell\in \range(\sigma) . \synone{\psi} \valto \yes\\
&\Leftrightarrow \Mc{\psi}\valto \yes \tag{\Cref{lemma:verdictind}}
\end{align*}
For the inductive step, we first consider the case for $\exists \pi.\varphi$. 
    If $v=\yes$, we derive 
    \begin{align*}
        \Monev{\sigma}{\exists \pi.\varphi} &=  \bigvee_{\ell \in \Loc} \Monev{\sigma[\pi \mapsto \ell]}{\varphi}\valto \yes \\
        &\Leftrightarrow \exists \ell\in\Loc.\Monev{\sigma[\pi \mapsto \ell]}{\varphi}\valto \yes \tag{\Cref{lemma:basicfactsverdict}}\\
        &\stackrel{IH}\Leftrightarrow \exists \ell\in\Loc.\Mcv{\varphi}{\sigma[\pi \mapsto \ell]}\valto \yes \\
        &\Leftrightarrow \bigoplus_{\ell \in \Loc} \Mcv{\varphi}{\sigma[\pi \mapsto \ell]}=\Mc{\exists \pi.\varphi}\valto \yes \tag{\Cref{lemma:basicfactsverdict}}
    \end{align*}
    The case for $v=\no$ is dual.
    The cases for $\forall \pi.\varphi$, $\varphi_1\wedge\varphi_2$ and $\varphi_1\vee\varphi_2$ are similar.
\end{proof}
\end{textAtEnd}

\begin{textAtEnd}[allend, category=vi]
\label{sec:vi}

We start by proving Verdict Irrevocability at the level of local monitors.
\begin{lemma}\label{lemma:verdictproplow}
    Let $m$ be a relevant local monitor. If $m\xrightarrow{c}n$ and $m\valto v$, then $n\valto v$.
    \end{lemma}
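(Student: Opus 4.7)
The plan is to proceed by structural induction on the relevant local monitor $m$, according to the grammar of \Cref{def:relevant}. The three base cases are handled by showing that the two hypotheses of the lemma cannot be simultaneously satisfied. If $m = \synone{\psi}$, then \Cref{lemma:nocommunicate} tells us that $m$ admits no communication transition, so $m \xrightarrow{c} n$ cannot hold and there is nothing to prove. If $m = (!G,a).\synone{\psi}$ or $m = \sum_{a \in \Act} (?\{\ell'\}, a).\synone{\psi_a}$, then inspection of the rules in \Cref{tab:verdict_evaluation_semantics} shows that no verdict-evaluation rule applies: an action-prefixed monitor admits no such rule, and the sum rule requires one summand to already evaluate to a verdict, which again fails since each summand here is an action-prefix. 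Hence $m \NOT\valto v$, and the hypothesis is vacuously false.

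For the inductive step, $m = m_1 \pandor m_2$, I would argue by a simultaneous case analysis on the derivations of $m \xrightarrow{c} n$ and of $m \valto v$. The rules of \Cref{tab:decentralized_locla_operational_semantics} produce $n$ in one of three shapes (up to symmetry): a send step $m_i \xrightarrow{(!G,\gamma)} m_i'$ giving, say, $n = m_1' \pandor m_2$; a synchronized receive giving $n = m_1' \pandor m_2'$; or a receive by only one side (with the other refusing) giving, say, $n = m_1' \pandor m_2$. By \Cref{lemma:stay_relevant}, every sub-monitor that arises is still relevant, so the induction hypothesis applies to it. Simultaneously, $m_1 \pandor m_2 \valto v$ is obtained by one of the rules in \Cref{tab:verdict_evaluation_semantics}: both sub-monitors yielding $\vend$; one yielding the absorbing verdict ($\no$ for $\pand$, $\yes$ for $\por$); or one yielding the neutral verdict ($\yes$ for $\pand$, $\no$ for $\por$) while the other carries $v$. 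In each combination I would apply the induction hypothesis to exactly the sub-monitors whose state changes under $c$, noting that any sub-monitor that does not move still evaluates to the same verdict trivially; then re-assembling through the same verdict rule yields $n \valto v$.

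The main obstacle will be the bookkeeping of these nested case analyses, since each communication shape must be paired with each verdict-derivation shape so that the induction hypothesis is fed the correct sub-monitor. A subtle point is that a sub-monitor which has already reached an absorbing or neutral verdict may nevertheless still be able to communicate, so I must be careful not to treat it as `stuck'. This is exactly the content of the inductive statement, however, so each such pairing closes cleanly once the hypothesis is invoked on the moving side and the stationary side is observed to remain unchanged.
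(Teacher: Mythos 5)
Your proposal is correct and follows essentially the same route as the paper's proof: structural induction on the relevant local monitor, vacuous base cases (no communication for $\synone{\psi}$ by \Cref{lemma:nocommunicate}, no verdict for the prefix forms), and a combined case analysis on the transition and verdict derivations for $m_1 \pandor m_2$, applying the induction hypothesis to the moving components. The only cosmetic difference is that the paper discharges the $\vend$ case up front via \Cref{lemma:valtoverdict} rather than carrying it through the case analysis, which does not affect correctness.
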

    \begin{proof}
    We proceed by induction on $m$. 
    The base case for $m=\synone{\psi}$ can be concluded from \Cref{lemma:nocommunicate}; in the base cases for $m=(!G,a).\synone{\psi}$ and $m=\sum_{a\in\Act}(?\{\ell'\},a).\synone{\psi_a}$, the premise $m \valto v$ is not satisfied.   
    
    For the inductive step,
if $m=n_1\pand n_2$ with $n_1$ and $n_2$ relevant monitors, then $m\valto v$ yields two cases (by \Cref{lemma:valtoverdict}, $v\neq \vend$ and we can thus exclude that case):
        \begin{enumerate}
    \item $n_1\valto \no$ and $v=\no$. 
    From $m\xrightarrow{c}n$ we derive another three cases, where we let $n=o_1\pand o_2$:
    \begin{enumerate}
        \item $n_1\xrightarrow{c} o_1$ and $n_2=o_2$. We use the induction hypothesis to obtain that $o_1\valto \no$. Then, $n \valto \no$ follows immediately. 
        
        \item $n_2\xrightarrow{c} o_2$ and $n_1=o_1$. Like before, $o_2\valto\no$ and $n \valto \no$.
        
    
        
    
        
         \item $c=(?\ell',\gamma)$, $n_1\xrightarrow{c}o_1$ and $n_2\xrightarrow{c}o_2$. By induction, $o_1\valto\no$, $o_2\valto\no$ and $n \valto \no$.
        
    \end{enumerate}
    
    \item $n_1\valto \yes$ and $n_2\valto v$. 
    This case is symmetric.
    \end{enumerate}
        The case for $\por$ is similar to the one for $\pand$ and, therefore, we omit it.
    \end{proof}
    
    Next we show that the verdict is not $\vend$ and that verdicts carry over receiving transitions:
    \begin{lemma}\label{lemma:endfree}
        Let $M\in\DMon$ be action-derived. If $M\valto v$, then $v\neq \vend$.
    \end{lemma}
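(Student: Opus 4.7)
The plan is to reduce the verdict claim to a purely syntactic invariant: every local monitor embedded in an action-derived $M$ contains no $\vend$ as a subterm, whence verdict evaluation cannot produce $\vend$ either. This lets me push the work onto the grammar of relevant monitors (already available via \Cref{lemma:A-derived-iff-relevant}) rather than reasoning directly about the derivation of $M$ from some $\Mone{\varphi}$.

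First I would invoke \Cref{lemma:A-derived-iff-relevant} to replace the hypothesis ``$M$ is action-derived'' with the structural hypothesis that $M$ is a relevant communicating monitor. By \Cref{def:relevant}, $M$ is then a boolean combination under $\wedge$ and $\vee$ of leaves $[m]_\ell$ with each $m$ a relevant local monitor. The key auxiliary claim is that every relevant local monitor $m$ is $\vend$-free, that is, $\vend$ does not occur as a subterm of $m$. I would prove this by a straightforward induction on the grammar of \Cref{def:relevant}: the communication-prefixed cases $(!G,a).\synone{\psi}$ and $\sum_{a\in\Act}(?\{\ell'\},a).\synone{\psi_a}$ and the parallel cases $m\pand m$ and $m\por m$ preserve $\vend$-freeness immediately, leaving only the base case $m = \synone{\psi}$. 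The latter is dispatched by a short induction on $\psi$ inspecting each clause of $\synone{-}$ in \Cref{tab:synthesis-decentralized}, observing that the only verdicts introduced by the synthesis are $\yes$ and $\no$.

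Once the auxiliary claim is in place, I would conclude by induction on the derivation of $M \valto v$ using the rules of \Cref{tab:verdict_combination_decentralized}. At each leaf $[m]_\ell \valto v$, the premise $m \valto v$ combined with $\vend$-freeness of $m$ and \Cref{lemma:valtoverdict} (whose proof transposes verbatim to local monitors) forces $v \in \{\yes,\no\}$. At each boolean node $M_1 \op M_2 \valto v$, inspection of the rules shows that deriving $v = \vend$ would require both $M_1 \valto \vend$ and $M_2 \valto \vend$, which is ruled out by the induction hypothesis on either branch. The only minor obstacle I anticipate concerns the $\sigma = \emptyset$ clause of $\Mone{\psi}$, which introduces leaves $[v]_{\ell_0}$ whose bodies are not literally of the form $\synone{\psi}$; however such a body can be action-derived only via the verdict self-loop $v \xrightarrow{a} v$, and only for $v \in \{\yes,\no\}$ (since $\vend$ never appears in a synthesized monitor), so the main argument goes through unchanged.
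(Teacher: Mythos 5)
Your proof is correct and follows essentially the same route as the paper's: the paper's one-line argument is an induction on $\valto$ combined with \Cref{lemma:valtoverdict}, which implicitly relies on exactly the $\vend$-freeness of action-derived local monitors that you make explicit via \Cref{lemma:A-derived-iff-relevant} and the grammar of relevant monitors. (One small slip: at a node $M_1\op M_2\valto\vend$ the rules of \Cref{tab:verdict_combination_decentralized} require only that at least one branch evaluate to $\vend$ — e.g.\ via $M_1\valto\yes$ and $M_2\valto\vend$ — not both; the induction hypothesis still closes every such case.)
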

    \ifarxiversion
    \begin{proof}
    This follows immediately from a simple induction on $\valto$ and \Cref{lemma:valtoverdict}.
    \end{proof}
    \fi
    
    \begin{lemma}\label{lemma:verdictproponestepreceive}
        Let $M\in\DMon$ be action-derived.
        If $M\extoverset{$G:(?\ell, \gamma)$}\rightsquigarrow  N$ and $M\valto v$, then $N\valto v$.
    \end{lemma}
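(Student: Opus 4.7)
The plan is to proceed by induction on the structure of the decentralized monitor $M$ (equivalently, on the derivation of $M \extoverset{$G:(?\ell,\gamma)$}\rightsquigarrow N$), invoking \Cref{lemma:verdictproplow} in the base case to handle the local component and \Cref{lemma:endfree} to rule out $v = \vend$ in the inductive cases. First, I note that when $M$ is action-derived so is every sub-monitor $M_i$ of a composite $M = M_1 \op M_2$, because the notion of being action-derived is defined pointwise on the constituents $[m]_\ell \in M$ (\Cref{def:A-derived}).

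For the base case, $M = [m]_\ell$ for some relevant local monitor $m$ (by \Cref{lemma:A-derived-iff-relevant}). Exactly one of the four operational rules for $\rightsquigarrow$ on localized monitors applies. In three of the subcases ($\ell \notin G$, or $m$ cannot perform the receive) we have $N = M$ and the result is immediate. In the remaining subcase $m \xrightarrow{(?\ell',\gamma)} m'$ with $\ell \in G$, we have $N = [m']_\ell$. From $M \valto v$ we derive $m \valto v$ via the rule for $[\cdot]_\ell$, then apply \Cref{lemma:verdictproplow} to obtain $m' \valto v$, and conclude $[m']_\ell \valto v$.

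For the inductive case, $M = M_1 \op M_2$ with $\op \in \{\land,\lor\}$. Inspecting the communication rules, the only applicable rule forces $N = N_1 \op N_2$ with $M_1 \extoverset{$G:(?\ell,\gamma)$}\rightsquigarrow N_1$ and $M_2 \extoverset{$G:(?\ell,\gamma)$}\rightsquigarrow N_2$, and both $M_1, M_2$ remain action-derived, so the induction hypothesis applies to each. By \Cref{lemma:endfree} we may assume $v \in \{\yes,\no\}$. I then case-split on $\op$ and $v$: e.g.\ if $\op = \land$ and $v = \no$, the verdict rules (\Cref{tab:verdict_combination_decentralized}) imply that either $M_1 \valto \no$, or $M_1 \valto \yes$ and $M_2 \valto \no$ (and symmetrically). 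In each subcase, the induction hypothesis applied to the relevant component(s) yields the same verdict on the $N_i$'s, and then the same verdict rule reconstructs $N \valto \no$. The cases for $v = \yes$ and for $\op = \lor$ are handled symmetrically.

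The proof is otherwise a routine case analysis; the main subtlety is ensuring that, for each verdict-evaluation rule used to justify $M \valto v$, we have the required premises to invoke the induction hypothesis on the right sub-monitor. Since the verdict rules for $\land$ and $\lor$ are already stated in a form that specifies a particular verdict for one argument and possibly any verdict for the other, the case analysis closes once one notices that \emph{both} components of $M$ always take a matching $\rightsquigarrow$-step (by the rule for $\op$ in \Cref{tab:communicating_operational_semantics_communicating_monitors}), so the induction hypothesis is available wherever a verdict assumption on a sub-monitor is given.
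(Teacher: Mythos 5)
Your proof is correct and follows essentially the same route as the paper's: induction on the derivation of the receive judgement, with \Cref{lemma:verdictproplow} (applicable via \Cref{lemma:A-derived-iff-relevant}) handling the one non-trivial base case, \Cref{lemma:endfree} excluding $v=\vend$, and a case split on the verdict rules for $\op$ in the composite case, where both components take a matching step and the induction hypothesis applies. (Only a cosmetic slip: two, not three, of the base subcases yield $N=M$; the substance is unaffected.)
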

    \begin{proof}
        We proceed by induction on $M\extoverset{$G:(?\ell, \gamma)$}\rightsquigarrow  N$. For the first base case, we consider $M=[m]_{\ell'}$, $N=[n]_{\ell'}$ and $m\xrightarrow{(?\ell,\gamma)}n$: then, $M\valto v$ implies that $m\valto v$, and (via \Cref{lemma:verdictproplow}, that can be used thanks to \Cref{lemma:A-derived-iff-relevant}) we can conclude $n\valto v$, thus $N\valto v$. In the other two base cases, we have $M=N$ and the result follows immediately. 
    
    For the inductive step we consider $M=M_1\pand M_2$ (the case for $M=M_1\por M_2$ is similar), $N=N_1\pand N_2$, $M_1\extoverset{$G:(?\ell, \gamma)$}\rightsquigarrow N_1$ and $M_2\extoverset{$G:(?\ell, \gamma)$}\rightsquigarrow  N_2$. From $M_1\pand M_2\valto v$ we distinguish two cases, where we use \Cref{lemma:endfree} to exclude the case for $v=\vend$:
    \begin{enumerate}
    \item $M_1\valto \yes$ and $M_2\valto v$, From the induction hypothesis we get $N_1\valto \yes$ and $N_2\valto v$. We immediately obtain that $N_1\pand N_2\valto v$ according to the transition rules for $\valto$.
    \item $M_1\valto \no$ and $v=\no$. We use the induction hypothesis to obtain that $N_1\valto \no$, and the result follows immediately.\qedhere


    \end{enumerate}
    \end{proof}
    

We first prove Verdict Irrevocability for just one communication step;
Verdict Irrevocability then easily follows via straightforward induction.

    \begin{lemma}\label{lemma:verdictproponestep}
        Let $M\in\DMon$ be action-derived.
        If $M\xrightarrow{c}N$ and $M\valto v$, then $N\valto v$.
    \end{lemma}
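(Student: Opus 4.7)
The plan is to proceed by induction on the derivation of $M \xrightarrow{c} N$, where by the labeling of communication transitions we know $c = \ell:(!G, \gamma)$ for some $\ell \in \Loc$, $G \subseteq \Loc$ and $\gamma \in \Cvar$. Looking at Table~\ref{tab:communicating_operational_semantics_communicating_monitors}, a send-labeled transition from a decentralized monitor is derived via exactly two shapes: the base rule for a single localized monitor $[m]_{\ell}$ that performs the send locally, and the inductive rule for a composition $M_1 \op M_2$ in which one side performs the send and the other is rearranged via $\rightsquigarrow$ to absorb the corresponding receives.

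For the base case $M = [m]_\ell$ with $m \xrightarrow{(!G, \gamma)} m'$ and $N = [m']_\ell$: from the verdict rule $\frac{m \valto v}{[m]_\ell \valto v}$, $M \valto v$ unfolds to $m \valto v$. Since $M$ is action-derived, $m$ is a relevant local monitor by \Cref{lemma:A-derived-iff-relevant}, so \Cref{lemma:verdictproplow} gives $m' \valto v$ and therefore $N \valto v$.

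For the inductive case $M = M_1 \op M_2$ with $M_1 \xrightarrow{\ell:(!G, \gamma)} M_1'$ and $M_2 \extoverset{$G:(?\ell, \gamma)$}\rightsquigarrow M_2'$ (up to commutativity of $\op$), so $N = M_1' \op M_2'$: I first note that $M_1$ and $M_2$ are themselves action-derived, because being action-derived is a property of each localized component $[m]_\ell \in M$, and these components are preserved when decomposing $M$ into $M_1 \op M_2$. I then apply the induction hypothesis to $M_1 \xrightarrow{c} M_1'$, giving that $M_1 \valto v_1$ implies $M_1' \valto v_1$; and apply \Cref{lemma:verdictproponestepreceive} to $M_2 \extoverset{$G:(?\ell, \gamma)$}\rightsquigarrow M_2'$, giving that $M_2 \valto v_2$ implies $M_2' \valto v_2$. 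A straightforward case analysis on the verdict rules of Table~\ref{tab:verdict_combination_decentralized} finishes the argument: for $\op = \wedge$, the verdict $v$ of $M_1 \wedge M_2$ arises from either $M_1 \valto \no$ (then $N \valto \no$ by IH on $M_1$), or $M_1 \valto \yes$ and $M_2 \valto v$ (then $M_1' \valto \yes$ and $M_2' \valto v$, so $N \valto v$); the $\vee$ case is dual; and \Cref{lemma:endfree} rules out $v = \vend$ throughout.

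The main subtlety is ensuring that the sub-monitors $M_1$ and $M_2$ remain action-derived so that the two auxiliary lemmata apply; this is immediate from \Cref{def:A-derived}, since action-derivedness is defined pointwise on constituents $[m]_\ell \in M$. No real obstacle arises: the lemma is essentially a combinator-level lift of the corresponding local statements \Cref{lemma:verdictproplow} and \Cref{lemma:verdictproponestepreceive}, together with the symmetric treatment of $\wedge$ and $\vee$ afforded by the $\op$-abbreviation.
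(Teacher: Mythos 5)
Your proof is correct and follows essentially the same route as the paper's: the base case via \Cref{lemma:A-derived-iff-relevant} and \Cref{lemma:verdictproplow}, and the inductive case by combining the induction hypothesis on the sending component with \Cref{lemma:verdictproponestepreceive} on the receiving component, using \Cref{lemma:endfree} to exclude $\vend$. Your explicit remark that action-derivedness is inherited by the sub-monitors $M_1$ and $M_2$ is a point the paper leaves implicit, but it is indeed immediate from \Cref{def:A-derived}.
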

    \begin{proof}
Let $c=\ell:(!G,\gamma)$; we proceed by induction on $M\xrightarrow{c}N$. For the base case, we consider $M=[m]_\ell$, $N=[n]_\ell$ and $m\xrightarrow{(!G,\gamma)}n$: then, $M\valto v$ implies that $m\valto v$; we conclude by \Cref{lemma:A-derived-iff-relevant} and \ref{lemma:verdictproplow}.
    
    For the inductive step, we consider $M=M_1\pand M_2$ (the case for $M=M_1\por M_2$ is similar), $N=N_1\pand N_2$, $M_1\xrightarrow{c}N_1$ and $M_2\extoverset{$G:(?\ell, \gamma)$}\rightsquigarrow N_2$. From $M_1\pand M_2\valto v$ we distinguish four cases, where we use \Cref{lemma:endfree} to exclude the case for $v=\vend$:
    \begin{enumerate}
\item $M_1\valto \no$, $v=\no$. We use the induction hypothesis to obtain that $N_1\valto \no$, and the result follows.
\item $M_2\valto \no$, $v=\no$. We use \Cref{lemma:verdictproponestepreceive} to derive that $N_2\valto \no$, which concludes the proof.
\item $M_1\valto \yes$, $M_2\valto v$. We use the induction hypothesis to obtain that $N_1\valto \yes$ and \Cref{lemma:verdictproponestepreceive} to derive that $N_2\valto v$, and the result follows.
\item $M_2\valto \yes$, $M_1\valto v$. Same as the previous case.
    \qedhere
    

    
    
    
    \end{enumerate}
    \end{proof}

\begin{corollary}\label{lemma:verdictprop}
    If $\Mone{\varphi}\xrightarrow{A} M_1\boldarrow  M_2\boldarrow M$ and $M_2\valto v$, then $M\valto v$.
\end{corollary}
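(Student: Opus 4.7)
The plan is to prove the statement by straightforward induction on the length of the communication sequence from $M_2$ to $M$. By definition of $\boldarrow$, there exist monitors $N_0, N_1, \ldots, N_k$ with $M_2 = N_0$, $M = N_k$, and $N_i \xrightarrow{c_{i+1}} N_{i+1}$ for every $i < k$, where each $c_{i+1}$ is a communication label of the form $\ell:(!G,\gamma)$. The base case $k = 0$ is immediate, since $M = M_2$ and we already have $M_2 \valto v$ by assumption.

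For the inductive step, I would like to apply Lemma~\ref{lemma:verdictproponestep} to each transition $N_i \xrightarrow{c_{i+1}} N_{i+1}$ in turn, thereby propagating the verdict $v$ along the sequence. The hypothesis of that lemma requires each $N_i$ to be action-derived, so the main technical observation I need is that action-derivation is preserved along communication transitions: given $\MdName_\sigma(\varphi) \xrightarrow{A} M_1$ with $M_1 \boldarrow N_i$, Definition~\ref{def:A-derived} is satisfied because, for every $[m]_\ell \in N_i$, the sequence $\synone{\psi} \xrightarrow{A(\ell)} m' \boldarrow m$ can be obtained by concatenating the $A$-step that derives the corresponding local monitor of $M_1$ with the additional communication steps from $M_1$ to $N_i$. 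Since $M_1 \boldarrow M_2 \boldarrow N_i$, every $N_i$ in the sequence is indeed action-derived.

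Assembling these pieces gives the proof: assuming inductively that $N_i \valto v$, the fact that $N_i$ is action-derived and $N_i \xrightarrow{c_{i+1}} N_{i+1}$ allows Lemma~\ref{lemma:verdictproponestep} to yield $N_{i+1} \valto v$; after $k$ applications we obtain $M = N_k \valto v$, as required.

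The main obstacle is the bookkeeping around action-derivation rather than verdict propagation itself: once we know every $N_i$ is action-derived, the single-step lemma does all the real work. The verification that action-derivation is closed under the communication relation $\xrightarrow{c}$ is essentially just an unfolding of Definition~\ref{def:A-derived}, so the overall argument is short and routine.
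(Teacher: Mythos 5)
Your proposal is correct and matches the paper's own argument: the paper also proves this by induction on the length of $M_2 \boldarrow M$, discharging each single communication step with Lemma~\ref{lemma:verdictproponestep}. Your extra care in checking that every intermediate monitor remains action-derived (so that the single-step lemma applies) is a detail the paper leaves implicit, but it unfolds from Definition~\ref{def:A-derived} exactly as you describe.
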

\ifarxiversion
\begin{proof}
By induction on the number of transitions in $M_2\boldarrow M$. In the base case, $M_2=M$ and the claim is trivial. For the inductive step, we consider $M_2\xrightarrow{c}N\boldarrow M$: via \Cref{lemma:verdictproponestep}, we get that $N\valto v$ and the result then follows from the induction hypothesis.
\end{proof}
\fi
\end{textAtEnd}

\begin{textAtEnd}[allend, category=react]
\label{sec:react}

We start by defining an ordering on the formulae of kind $\psi$, based on the number of top-level fixed-point operators in them, where by ``top-level'' we mean ``not under the scope of a modality''. For example, the number of top-level fixed-points in 
\begin{equation}
\label{ex:order}
\max x.\max y.([a_\pi] x \wedge \langle b_{\pi'} \rangle y)\ \ \vee\ \ \max z.[a_{\pi}]\max t.(\langle b_{\pi'} \rangle t \wedge z)
\end{equation}
is 3,  $\max t$ being the only non-top-level fixed-point operator.

\begin{definition}
\label{def:order}
We let $\psi \prec \psi'$ iff
\begin{enumerate} 
\item the number of top-level fixed-point operators in $\psi$  is smaller than the number of top-level fixed-point operators in  $\psi'$, or 
\item the number is the same and $\psi$  is a strict sub-formula of  $\psi'$. 
\end{enumerate}
\end{definition}

\noindent
Notice that $\prec$ is well-founded and that the bottom elements are 
$\ttt$, $\ff$, $\pi=\pi$, $\pi\neq\pi$ and $x$. Coming back to \eqref{ex:order}, we have that
$$
\begin{array}{rcl}
\max z.[a_{\pi}]\max t.(\langle b_{\pi'} \rangle t \wedge z) &\prec& (\max x.[b_\pi] x \wedge \max y.\langle a_{\pi'} \rangle y)
\\
\max t.(\langle b_{\pi'} \rangle t \wedge z) &\prec& \max z.[a_{\pi}]\max t.(\langle b_{\pi'} \rangle t \wedge z)
\end{array}
$$
respectively by point 1 and 2 of \Cref{def:order}.

\begin{lemma}
    \label{lem:reactivity}
For all $\varphi\in$ \FPHypermuHML without free recursion variables, $\sigma$ such that $\FVloc(\varphi)\subseteq\dom(\sigma)$, and $A: \Loc \rightarrow \Act$, there exists $M\in\DMon$ such that $\Mone{\varphi}\xrightarrow{A} M$. 
\end{lemma}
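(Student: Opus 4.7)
The plan is to proceed by structural induction on $\varphi$, with the pivotal observation that Lemma~\ref{lemma:nocommunicate} ensures every local component $[m]_\ell$ appearing in $\Mone{\varphi}$ has $m$ unable to perform any communication action. This rules out the only situation that could block the rules in Table~\ref{tab:actions_operational_semantics_communicating_monitors}: whether or not $m$ can fire an $A(\ell)$-action, some $A$-transition is always available for the local component --- either to $[m']_\ell$ via the first rule, or to $[\vend]_\ell$ via the second one (whose side condition $m \NOT{\xrightarrow{c}}$ is guaranteed by Lemma~\ref{lemma:nocommunicate}).

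For the inductive cases $\varphi \in \{\forall \pi.\varphi', \exists \pi.\varphi', \varphi_1 \land \varphi_2, \varphi_1 \lor \varphi_2\}$, the synthesis in Table~\ref{tab:synthesis-decentralized} produces $\Mone{\varphi}$ as a finite $\land$- or $\lor$-combination of monitors of the form $\Monev{\sigma'}{\varphi''}$, where each $\varphi''$ is a strict subformula with no free recursion variables, and $\sigma'$ (possibly $\sigma$ extended with some $[\pi \mapsto \ell]$) covers $\FVloc(\varphi'')$. The induction hypothesis supplies an $A$-transition for each such constituent, and the last rule of Table~\ref{tab:actions_operational_semantics_communicating_monitors} assembles them into an $A$-transition for $\Mone{\varphi}$.

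For the base case $\varphi = \psi \in \quantfree$, I split on whether $\sigma = \emptyset$. If so, $\psi$ is closed, and Lemma~\ref{lemma:emptysigma} guarantees $\Mc{\psi} \valto v$ for some verdict $v$; then $\Mone{\psi} = [v]_{\ell_0}$, and the base rule $v \xrightarrow{A(\ell_0)} v$ from Table~\ref{tab:decentralized_locla_operational_semantics} together with the first rule of Table~\ref{tab:actions_operational_semantics_communicating_monitors} yields $[v]_{\ell_0} \xrightarrow{A} [v]_{\ell_0}$. Otherwise, $\Mone{\psi} = \bigvee_{\ell \in \range(\sigma)} [\synone{\psi}]_\ell$, and it suffices to obtain an $A$-transition for each disjunct $[\synone{\psi}]_\ell$. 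By the observation in the first paragraph, a case analysis on whether $\synone{\psi} \xrightarrow{A(\ell)} m'$ for some $m'$ selects either the first rule of Table~\ref{tab:actions_operational_semantics_communicating_monitors} (transitioning to $[m']_\ell$) or the second one (transitioning to $[\vend]_\ell$); combining the resulting disjunct transitions via the $\lor$-rule delivers an $A$-transition for $\Mone{\psi}$.

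The potential obstacle --- that a pending send in $\synone{\psi}$ could block the $\vend$-fallback, forcing a finer well-founded induction on recursive subformulas to show an action-transition always exists --- is precisely what Lemma~\ref{lemma:nocommunicate} sidesteps, so no extra ordering on $\psi$-formulas (e.g., counting top-level fixed-point operators) is needed here; such an ordering will instead be invoked in the proofs of Bounded Communication and Formula Convergence, where one must reason about what happens \emph{after} actions have been observed and communications begin.
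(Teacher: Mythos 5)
Your proof is correct, but its base case follows a genuinely different route from the paper's. The paper establishes the stronger local fact that $\synone{\psi}$ can fire \emph{every} action $a\in\Act$, and this forces a well-founded induction on the ordering $\prec$ of \Cref{def:order} (counting top-level fixed-point operators), because the unfolding of $\rec x.\synone{\psi'}$ is not a structural subterm and only guardedness makes it $\prec$-smaller. You instead observe that \Cref{lem:reactivity} asks only for \emph{some} $A$-transition of the localized component $[\synone{\psi}]_\ell$, and that the $\vend$-fallback rule of \Cref{tab:actions_operational_semantics_communicating_monitors} fires whenever $\synone{\psi}\NOT{\xrightarrow{A(\ell)}}$, its side condition on communication being discharged by \Cref{lemma:nocommunicate}; a classical case split on whether the $A(\ell)$-transition exists then closes the case with no ordering on $\quantfree$-formulas at all. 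This is sound: for $a\in\Act$ the local transition relation is generated by purely positive rules, so $\synone{\psi}\NOT{\xrightarrow{A(\ell)}}$ is a well-defined alternative, and either branch of the split yields the required transition. What your shortcut does not deliver is the stronger intermediate fact itself, which the paper reuses elsewhere (e.g., in \Cref{lemma:independentcommunication} and \Cref{lemma:crux}, where synthesized local monitors must genuinely consume the observed action rather than default to $\vend$); so the $\prec$-induction is not wasted work in the overall development --- it is just not needed for reactivity alone, as you correctly note.
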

\begin{proof}
   We proceed by induction on $\varphi$;
the proof is trivial for all the inductive cases. So we only treat the base case, where $\varphi=\psi\in \quantfree$ 
and has no free recursion variables.
We prove that $\synone{\psi}$ is reactive by induction on $\prec$.
The base cases are the bottom elements of $\prec$ and they all entail that $\synone{\psi}$ is a verdict; then, the claim follows immediately because $v \xrightarrow{a} v$ holds for each $v$ and $a$. 
For the inductive case:
\begin{itemize} 
\item If $\psi = [a_\pi] \psi'$ or $\psi = \langle a_\pi\rangle \psi'$, for some $a$, $\pi$ and $\psi'$, then $\synone{\psi}$ is reactive by construction. 
\item If $\psi = \psi_1 \op \psi_2$, then $\synone{\psi} = \synone{\psi_1} \pandor \synone{\psi_2}$ (with $\pandor = \pand$ if $\op = \wedge$ and $\pandor = \por$ if $\op = \vee$). By \Cref{def:order}, $\psi_1 \prec \psi$ and $\psi_2 \prec \psi$; hence, the inductive hypothesis yields that $\synone{\psi_1}$ and $\synone{\psi_2}$ are reactive. Reactivity of $\synone{\psi}$ immediately  follows from the operational rules for $\pandor$.
\item If $\psi = \maxx {x}.\psi'$, we show that, for every $a \in \Act$, there exists $m\in\LMon$ such that $\synone{\maxx {x}.\psi'}=\rec x.\synone{\psi'}\xrightarrow{a}m$. Since formulas are guarded, $\psi'\{^{\maxx{x}.\psi'}/_x\} \prec \maxx {x}.\psi'$; therefore, the induction hypothesis yields that $\synone{\psi'\{^{\maxx{x}.\psi'}/_x\}}\xrightarrow{a}m$ for some $m$. By \Cref{lemma:recursion-substitutiontwo}, we have that $\synone{\psi'}\{^{\rec x.\synone{\psi'}}/_x\}=\synone{\psi'\{^{\maxx{x}.\psi'}/_x\}}$, which concludes the proof, by using the operational rule for recursive monitors. 
\vspace*{-.5cm}
\end{itemize}
\end{proof}
\end{textAtEnd}

\begin{textAtEnd}[allend, category=bc]
\label{sec:bc}

We first show the existence of a finite communication path to a monitor that cannot communicate, and then argue that this monitor is unique. 
To this aim, we start with a few lemmas that are all stating basic facts regarding communication, and are used in many of the lemmas that follow.
    We start with three basic facts regarding receiving and sending of messages for action-derived local monitors: the first item states that local monitors can never receive the same message from the same sender in consecutive communications; the second one states that, if a monitor cannot receive a certain message from some sender, then it cannot receive that message in the future after taking only communication transitions; the last one states that monitors can only send the action they read last. 

    \begin{lemma}\label{lemma:singlesender}
        Let $m\in\LMon$ be $A$-derived for $\ell$. Then:
        \begin{enumerate}
            \item If $m\xrightarrow{(?\ell',\gamma)}n$, then $n\NOT{\xrightarrow{(?\ell',\gamma)}}$.
            \item If $m \NOT {\xrightarrow{(?\ell',\gamma)}}$ and $m\boldarrow n$, then $n \NOT {\xrightarrow{(?\ell',\gamma)}}$.
            \item If $m\xrightarrow{(!G,\gamma)} n$, then $\gamma=A(\ell)$.
        \end{enumerate}
        \end{lemma}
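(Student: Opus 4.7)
The plan is to use the characterization of action-derived monitors as relevant local monitors (Lemma \ref{lemma:A-derived-iff-relevant}) so that I can perform structural induction on the grammar of \Cref{def:relevant}, together with Lemma \ref{lemma:nocommunicate} which ensures that any atom of the form $\synone{\psi}$ inside a relevant monitor cannot communicate. Throughout, I would also track the invariant that under any send or receive transition, a relevant monitor stays relevant (Lemma \ref{lemma:stay_relevant}).

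I would prove item 3 first, as it is the most direct. Since $m$ is $A$-derived for $\ell$, there exist $\psi, \sigma$ and $m''$ with $\synone{\psi}\xrightarrow{A(\ell)} m''\boldarrow m$. Inspecting the synthesis clauses for $[a_\pi]\psi'$ and $\langle a_\pi\rangle \psi'$ when $\sigma(\pi)=\ell$, the only atoms of the form $(!G,\gamma').\synone{\psi''}$ that can be produced by reading action $A(\ell)$ carry $\gamma' = A(\ell)$; the other synthesis clauses produce no sending atoms at all. Any subsequent communication step only converts a send atom $(!G,\gamma').\synone{\psi''}$ into $\synone{\psi''}$ (which does not communicate by Lemma \ref{lemma:nocommunicate}) or consumes a receive atom, so no send atom with label different from $A(\ell)$ can appear in $m$. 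Hence any transition $m\xrightarrow{(!G,\gamma)} n$ must satisfy $\gamma=A(\ell)$.

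For item 2 I would proceed by induction on the length of $m\boldarrow n$. The base case is trivial, and the inductive step reduces to a single-step claim: if $m\NOT{\xrightarrow{(?\ell',\gamma)}}$ and $m\xrightarrow{c} m'$, then $m'\NOT{\xrightarrow{(?\ell',\gamma)}}$. This single-step claim I would establish by induction on the structure of the relevant monitor $m$. The base cases $m=\synone{\psi}$, $m=(!G,a).\synone{\psi}$, and $m=\sum_a(?\{\ell''\},a).\synone{\psi_a}$ are immediate: after sending, the monitor becomes $\synone{\psi}$, which by Lemma \ref{lemma:nocommunicate} cannot communicate; in the summation case, either $\ell''\neq\ell'$ (in which case the assumption is vacuous-friendly and the result $\synone{\psi_a}$ cannot communicate) or the step is impossible. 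For $m = m_1\pandor m_2$, either one side sends (the other is untouched) or both receive simultaneously on a matching label. In each subcase, the hypothesis ``$m$ cannot receive $(?\ell',\gamma)$'' implies that neither $m_i$ can receive $(?\ell',\gamma)$, and the induction hypothesis applied to the side that moved shows the resulting side still cannot; the unchanged side's capabilities are preserved.

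Finally, for item 1, I would perform induction on the derivation of $m\xrightarrow{(?\ell',\gamma)} n$. The base case is $m=\sum_a(?\{\ell''\},a).\synone{\psi_a}$ with $\ell'=\ell''$: then $n=\synone{\psi_\gamma}$, which cannot communicate by Lemma \ref{lemma:nocommunicate}, so a fortiori cannot receive $(?\ell',\gamma)$. For the inductive case $m=m_1\pandor m_2$, the receive rule for $\pandor$ forces \emph{both} $m_1$ and $m_2$ to receive (since they both must make the move by the decentralized local semantics), and the induction hypothesis gives that the resulting $n_1, n_2$ both cannot receive $(?\ell',\gamma)$; consequently neither can $n_1\pandor n_2$. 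The main obstacle I anticipate is ensuring the correct side-condition interplay in the parallel case of item 2, where a send on one side leaves the other untouched: showing that a newly introduced $\synone{\psi}$ cannot inject any new receive capability is precisely what Lemma \ref{lemma:nocommunicate} was formulated for, so invoking it carefully there is what makes the argument go through.
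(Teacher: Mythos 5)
Your overall strategy matches the paper's: both arguments reduce to relevant local monitors via \Cref{lemma:A-derived-iff-relevant} and proceed by structural induction on the grammar of \Cref{def:relevant}, with \Cref{lemma:nocommunicate} discharging all the atomic cases. Your item~2, done by induction on the length of $m\boldarrow n$ reduced to a one-step claim, is a mild repackaging of the paper's direct decomposition of the multi-step communication path component-wise through $\pandor$; your item~3, argued by inspecting the synthesis clauses for box and diamond, is the same observation the paper formalizes as an induction on the derivation of $\synone{\psi}\xrightarrow{A(\ell)} m''$.

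There is, however, one concrete error. In the inductive case of item~1 you assert that the receive rule for $\pandor$ ``forces both $m_1$ and $m_2$ to receive.'' That is not what the semantics in \Cref{tab:decentralized_locla_operational_semantics} says: there is a second receive rule by which $m_1 \xrightarrow{(?\ell',\gamma)} n_1$ and $m_2\NOT{\xrightarrow{(?\ell',\gamma)}}$ yield $m_1\pandor m_2\xrightarrow{(?\ell',\gamma)} n_1\pandor m_2$, so a parallel composition can receive when only one component is able to. Your case analysis therefore omits the case $n = n_1 \pandor m_2$. The omission is easy to repair --- the induction hypothesis gives $n_1\NOT{\xrightarrow{(?\ell',\gamma)}}$, and since $m_2$ still cannot receive $(?\ell',\gamma)$, neither receive rule applies to $n_1\pandor m_2$ --- and this is exactly the second subcase the paper treats. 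The same incomplete enumeration appears in your single-step claim for item~2 (``either one side sends \dots{} or both receive''), though there your closing remark that the untouched side's capabilities are preserved happens to cover the missing subcase. As written, item~1's induction does not go through without adding this second receive rule to the case split.
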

    \begin{proof}
        We prove the first two items for relevant local monitors, which is sufficient because $m$ is a relevant local monitor via \Cref{lemma:A-derived-iff-relevant}. We proceed by induction on the structure of $m$. 
    
    In the base case for $m=\synone{\psi}$, by \Cref{lemma:nocommunicate} $m$ cannot communicate; hence, the premise of the first item is not satisfied and in the second item we get $n=m$, making the statement trivially true. 
 For the base case with $m=(!G,a).\synone{\psi}$, the premise of the first item is not satisfied; for the second item, $n=m$ or $n=\synone{\psi}$, and the result follows from the premise of the lemma or \Cref{lemma:nocommunicate}. 
    If $m=\sum_{a\in\Act}(?\{\ell''\},a).\synone{\psi_a}$, both items follow immediately from \Cref{lemma:nocommunicate}. 
    
    For the inductive step, let $m=m_1\pandor m_2$, with $m_1$ and $m_2$ relevant monitors. For the first statement, we derive two cases from $m_1\pandor m_2\xrightarrow{(?\ell',\gamma)} n$.
    \begin{enumerate}
        \item $n=n_1\pandor n_2$ such that $m_1 \xrightarrow{(?\ell',\gamma)} n_1$ and $m_2 \xrightarrow{(?\ell',\gamma)} n_2$. The induction hypothesis gives that $n_1\NOT {\xrightarrow{(?\ell',\gamma)}}$ and $n_2\NOT {\xrightarrow{(?\ell',\gamma)}}$. Hence, $n\NOT {\xrightarrow{(?\ell',\gamma)}}$. 
        \item $n=n_1\pandor m_2$ such that  $m_1 \xrightarrow{(?\ell',\gamma)} n_1$ and  $m_2 \NOT {\xrightarrow{(?\ell',\gamma)}} $. From the induction hypothesis, we get that $n_1\NOT {\xrightarrow{(?\ell',\gamma)}}$. Hence, $n=n_1\pandor m_2\NOT {\xrightarrow{(?\ell',\gamma)}}$.
            \end{enumerate}
    For the second statement we derive from $m_1\pandor m_2 \NOT {\xrightarrow{(?\ell',\gamma)}}$ that $m_1 \NOT {\xrightarrow{(?\ell',\gamma)}}$ and $ m_2 \NOT {\xrightarrow{(?\ell',\gamma)}}$. Furthermore, we know that $n=n_1\pandor n_2$ such that $m_1\boldarrow n_1$ and $m_2\boldarrow n_2$. From the induction hypothesis, we obtain that $n_1 \NOT {\xrightarrow{(?\ell',\gamma)}}$ and $ n_2 \NOT {\xrightarrow{(?\ell',\gamma)}}$. Hence, $n\NOT {\xrightarrow{(?\ell',\gamma)}}$.

    Let us now prove the third item. 
   Because $m$ is $A$-derived for $\ell$, we know that there exist $\sigma$, $\psi$ and $m'$ such that $\synone{\psi}\xrightarrow{A(\ell)} m' \boldarrow m$; we proceed by induction on $\xrightarrow{A(\ell)}$.
   In the base case $\synone{\psi}=a.m'$; from the synthesis function and our assumption that $|\Act| \geq 2$, we observe that this never occurs.
   
    For the inductive case, if $\synone{\psi}=m_1+m_2$ and $m_1\xrightarrow{A(\ell)}m'$, $\psi$ must be of the form $[a_\pi]\psi'$ 
or $\langle a_\pi \rangle\psi'$;
so 
    \[\synone{\psi}=a.(!\{\sigma(\pi')\mid\sigma(\pi' )\neq \ell\},a).\synone{\psi'}+\sum_{b\neq a}b.(!\{\sigma(\pi')\mid\sigma(\pi' )\neq \ell\},b).v\] 
    or \[\synone{\psi}=\sum_{b\in \Act}b.\big((?\{\sigma(\pi)\},a).\synone{\psi'}+\sum_{b\neq a}(?\{\sigma(\pi)\},b).v\big)\] 
where $v = \yes/\no$ according to the modality (resp., box/diamond).
    It is immediate that $\gamma=A(\ell)$ in $m\xrightarrow{(!G,\gamma)}n$.
    
        
    Next we consider the case for $\synone{\psi}=m_1''\pandor m_2''$ and $m'=m_1'\pandor m_2'$. Hence, $\psi=\psi_1\op \psi_2$ (for $\op=\wedge$ if $\pandor = \pand$ and $\op=\vee$ if $\pandor = \por$), $m_1''=\synone{\psi_1}\xrightarrow{A(\ell)}m_1'$ and $m_2''=\synone{\psi_2}\xrightarrow{A(\ell)}m_2'$. Let $m=m_1\pandor m_2$, where $m_1'\boldarrow m_1$ and $m_2'\boldarrow m_2$. 
      We assume without loss of generality that $n=n_1\pandor m_2$ and $m_1\xrightarrow{(!G,\gamma)} n_1$. From the induction hypothesis, it follows that $\gamma=A(\ell)$.
        
    Last we consider $\synone{\psi}=\rec x.m''\xrightarrow{A(\ell)} m'$ because $m''\{^{\rec x.m''}/_x\}\xrightarrow{A(\ell)} m'$. Thus, $\psi=\max x.\psi'$ and $m''=\synone{\psi'}$. By \Cref{lemma:recursion-substitutiontwo}, $\synone{\psi'}\{^{\rec x.\synone{\psi'}}/_x\}=\synone{\psi'\{^{\maxx x.\psi'}/_x\}}$ and we use the induction hypothesis to obtain the desired result.
    \end{proof}

We generalize the third property from \Cref{lemma:singlesender} to communicating monitors: if an action-derived monitor $M$ can take a communicating transition, the message it sends is always the last action taken by the location of the sender:
    \begin{lemma}\label{lemma:sendright}
        Let $M\in\DMon$ be $A$-derived. If $M\xrightarrow{\ell:(!G,\gamma)}N$, then $\gamma=A(\ell)$.
    \end{lemma}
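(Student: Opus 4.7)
The plan is to proceed by induction on the derivation of $M \xrightarrow{\ell:(!G,\gamma)} N$, exploiting the fact that decentralized monitors are built from localized components $[m]_{\ell'}$ combined by $\wedge$ and $\vee$, and that the third item of \Cref{lemma:singlesender} already handles the local case.

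For the base case, the transition $M \xrightarrow{\ell:(!G,\gamma)} N$ is derived by the first rule of \Cref{tab:communicating_operational_semantics_communicating_monitors}, so $M = [m]_{\ell'}$ with $m \xrightarrow{(!G,\gamma)} m'$, $N = [m']_{\ell'}$, and from the shape of the label on the conclusion we have $\ell' = \ell$. Since $M$ is $A$-derived, by \Cref{def:A-derived} there exist $\psi,\sigma,m''$ such that $\synone{\psi} \xrightarrow{A(\ell)} m'' \boldarrow m$, which means that $m$ is itself $A$-derived for $\ell$. The third item of \Cref{lemma:singlesender} then directly gives $\gamma = A(\ell)$.

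For the inductive step, $M = M_1 \op M_2$ (with $\op \in \{\wedge,\vee\}$) and the only rule in \Cref{tab:communicating_operational_semantics_communicating_monitors} producing a send label from a compound monitor is the last one (up to commutativity), so without loss of generality $M_1 \xrightarrow{\ell:(!G,\gamma)} M_1'$ and $M_2 \extoverset{$G:(?\ell,\gamma)$}\rightsquigarrow M_2'$ with $N = M_1' \op M_2'$. Because $M$ is $A$-derived, every constituent $[m'']_{\ell''}$ of $M$ is $A$-derived for $\ell''$, and hence so is every constituent of $M_1$; therefore $M_1$ is also $A$-derived. The induction hypothesis applied to $M_1 \xrightarrow{\ell:(!G,\gamma)} M_1'$ then yields $\gamma = A(\ell)$, which completes the proof.

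I do not expect any serious obstacle here: the statement is a direct lift of a local property to the compound structure, and the communication semantics is designed so that the only way a compound monitor can emit a send label is for one of its localized sub-components to emit it. The only thing to be careful about is noting that ``$A$-derived'' is preserved componentwise by the definition, so that the inductive hypothesis is applicable to $M_1$.
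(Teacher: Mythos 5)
Your proof is correct and follows essentially the same route as the paper's: induction on the derivation of the send transition, with the base case discharged by the third item of \Cref{lemma:singlesender} and the inductive step reduced to the left operand via the induction hypothesis. Your extra remark that being $A$-derived is inherited componentwise by $M_1$ is a reasonable piece of bookkeeping that the paper leaves implicit.
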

\ifarxiversion
    \begin{proof}
    The result follows by induction on $\xrightarrow{\ell:(!G,\gamma)}$. For the base case, we consider $M=[m]_{\ell}$, $N=[n]_\ell$ and $m\xrightarrow{(!G,\gamma)} n$; by \Cref{lemma:singlesender}, we derive that $\gamma=A(\ell)$. The inductive step follows immediately from the induction hypothesis.
    \end{proof}
\fi
    We now state a few other basic properties:     
    \Cref{lemma:itsallrelative} uses a `commutativity' property for communicating monitors (\Cref{lemma:commutecommunication}), and this is based on five similar results for local monitors (\Cref{lemma:sendreceiveorder}-\Cref{lemma:uniquenesstwo}), that are in addition used in a variety of other lemmas.

    The following lemma shows that, if action-derived local monitors can both send and receive a message, the order in which this happens does not matter. 

\begin{lemma}\label{lemma:sendreceiveorder}
Let $m$ be a relevant local monitor such that
$m\xrightarrow{(!G,\gamma)}m_1$ and $m\xrightarrow{(?\ell',\gamma')} m_2$, for some $G \subseteq \Loc$ and $\ell'\in \Loc$. Then, $m_1\xrightarrow{(?\ell',\gamma')} m_3$ and $m_2\xrightarrow{(!G,\gamma)} m_3$, for some $m_3$.
\end{lemma}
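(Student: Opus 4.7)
The plan is to proceed by induction on the structure of the relevant local monitor $m$. In the three base cases, at least one of the two premises is vacuously false: if $m = \synone{\psi}$, then $m$ cannot communicate at all by \Cref{lemma:nocommunicate}; if $m = (!G',a').\synone{\psi'}$, it cannot receive; and if $m = \sum_{a\in\Act}(?\{\ell''\},a).\synone{\psi_a}$, it cannot send. In each case there is nothing to prove.

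For the inductive case $m = m' \pandor m''$ (with $\pandor \in \{\pand,\por\}$), by \Cref{lemma:stay_relevant} both $m'$ and $m''$ are relevant. By inspection of the rules in \Cref{tab:decentralized_locla_operational_semantics}, the send transition $m \xrightarrow{(!G,\gamma)} m_1$ must come from exactly one side sending while the other remains idle; without loss of generality, we assume $m' \xrightarrow{(!G,\gamma)} \bar m'$ with $m_1 = \bar m' \pandor m''$. The receive transition $m \xrightarrow{(?\ell',\gamma')} m_2$ admits three sub-cases by the shape of its derivation: (a) both sides receive, yielding $m_2 = \tilde m' \pandor \tilde m''$ with $m' \xrightarrow{(?\ell',\gamma')} \tilde m'$ and $m'' \xrightarrow{(?\ell',\gamma')} \tilde m''$; (b) only the left side receives, i.e. $m' \xrightarrow{(?\ell',\gamma')} \tilde m'$ and $m'' \NOT{\xrightarrow{(?\ell',\gamma')}}$, giving $m_2 = \tilde m' \pandor m''$; or (c) only the right side receives, i.e. $m'' \xrightarrow{(?\ell',\gamma')} \tilde m''$ and $m' \NOT{\xrightarrow{(?\ell',\gamma')}}$, giving $m_2 = m' \pandor \tilde m''$.

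In cases (a) and (b), $m'$ performs both the send and the receive, so the induction hypothesis applied to $m'$ yields a common successor $n$ with $\bar m' \xrightarrow{(?\ell',\gamma')} n$ and $\tilde m' \xrightarrow{(!G,\gamma)} n$. We then take $m_3 = n \pandor \tilde m''$ in case (a) and $m_3 = n \pandor m''$ in case (b); in the latter case the receive transition from $m_1$ is justified by the operational rule for a single-sided receive, using that $m'' \NOT{\xrightarrow{(?\ell',\gamma')}}$ still holds. In case (c), the induction hypothesis is not applicable to $m'$ (the receive premise fails there), so we instead invoke \Cref{lemma:singlesender}~(item 2), which holds for relevant local monitors, to conclude from $m' \NOT{\xrightarrow{(?\ell',\gamma')}}$ and $m' \boldarrow \bar m'$ that $\bar m' \NOT{\xrightarrow{(?\ell',\gamma')}}$. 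We then set $m_3 = \bar m' \pandor \tilde m''$; the receive from $m_1$ follows again from the single-sided receive rule, and the send from $m_2$ follows from $m' \xrightarrow{(!G,\gamma)} \bar m'$ composed trivially on the right. The symmetric situation in which the send originates from $m''$ is handled identically up to commuting the components.

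The main obstacle is case (c): since $m'$ cannot receive, there is no handle for the inductive argument on that side, and one has to rely instead on the preservation property of receive-disabling under send transitions provided by \Cref{lemma:singlesender}. This is precisely where our careful setup of the receive rules — which force every component that \emph{can} receive to do so — pays off, ensuring that the status of being unable to receive a given message is stable along communication sequences.
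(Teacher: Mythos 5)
Your proof is correct and follows essentially the same route as the paper's: induction on the structure of the relevant local monitor, trivial base cases, the same three-way case split on the receive derivation for the parallel case, the induction hypothesis in the first two sub-cases, and \Cref{lemma:singlesender}(2) to propagate the inability to receive across the send step in the third. The only cosmetic difference is that you cite \Cref{lemma:stay_relevant} for the relevance of the components, whereas this is immediate from the grammar of relevant local monitors.
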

\begin{proof}
We proceed by induction on $m$. 
In the base case for $m=\synone{\psi}$, we conclude by \Cref{lemma:nocommunicate}; in the base cases for
$m=(!G,a).\synone{\psi}$ and $m=\sum_{a\in\Act}(?\{\ell''\},a).\synone{\psi_a}$, the premise of the lemma is not satisfied. 

For the inductive step, let $m=n_1\pandor n_2$, with $n_1$ and $n_2$ relevant monitors and $\pandor \in \{\pand,\por\}$. 
From $m\xrightarrow{(!G,\gamma)} m_1$ we derive w.l.o.g. that $m_1=n'_1\pandor n_2$ and $n_1\xrightarrow{(!G,\gamma)} n'_1$. From $m\xrightarrow{(?\ell',\gamma')} m_2$  we derive three cases.
\begin{enumerate}
    \item $m_2=m'_2\pandor m''_2$ such that $n_1 \xrightarrow{(?\ell',\gamma')} m'_2$ and $n_2 \xrightarrow{(?\ell',\gamma')} m''_2$.
    From the induction hypothesis, we derive that $n'_1\xrightarrow{(?\ell',\gamma')} m'_3$ and $m'_2\xrightarrow{(!G,\gamma)} m'_3$, for some $m'_3$. Thus, $m_1 = n'_1\pandor n_2\xrightarrow{(?\ell',\gamma')} m'_3\pandor m''_2$ and $m_2 = m'_2\pandor m''_2\xrightarrow{(!G,\gamma)} m'_3\pandor m''_2$, as desired.

    \item $m_2=m'_2\pandor n_2$ such that  $n_1 \xrightarrow{(?\ell',\gamma')} m'_2$ and  $n_2 \NOT {\xrightarrow{(?\ell',\gamma')}} $. From the induction hypothesis, we derive $n'_1\xrightarrow{(?\ell',\gamma')} m'_3$ and $m'_2\xrightarrow{(!G,\gamma)} m'_3$, for some $m'_3$. Thus, $m_1 = n'_1\pandor n_2\xrightarrow{(?\ell',\gamma')} m'_3\pandor n_2$ and $m_2 = m'_2\pandor n_2\xrightarrow{(!G,\gamma)} m'_3\pandor n_2$.
    
    \item $m_2=n_1\pandor m'_2$ such that  $n_1 \NOT {\xrightarrow{(?\ell',\gamma')}}$ and  $n_2 \xrightarrow{(?\ell',\gamma')} m'_2 $. We obtain that $m_2 = n_1\pandor m'_2\xrightarrow{(!G,\gamma)} n'_1\pandor m'_2 $ and $m_1 = n'_1\pandor n_2\xrightarrow{(?\ell',\gamma')} n'_1\pandor m'_2$, where we use \Cref{lemma:singlesender} to conclude that $n'_1 \NOT {\xrightarrow{(?\ell',\gamma')}}$ from $n_1 \NOT {\xrightarrow{(?\ell',\gamma')}}$. 
\qedhere
\end{enumerate}
\end{proof}

The next lemma proves uniqueness of reached states after a receiving transition.
\begin{lemma}\label{lemma:uniqueness}
    Let $m$ be a relevant local monitor such that
 $m\xrightarrow{(?\ell', \gamma)}n_1$ and $m\xrightarrow{(?\ell', \gamma)}n_2$. Then, $n_1=n_2$.
\end{lemma}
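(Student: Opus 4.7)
My approach is to induct on the structure of the relevant local monitor $m$, using the grammar from \Cref{def:relevant}. The key observation driving the proof is that the receive rules in \Cref{tab:decentralized_locla_operational_semantics} are essentially deterministic on relevant monitors, because the only source of choice (the ``+'' operator) appears in a very controlled form.

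For the base cases, $m = \synone{\psi}$ is vacuous by \Cref{lemma:nocommunicate}, and $m = (!G,a).\synone{\psi}$ is vacuous since it can only send. The only substantive base case is $m = \sum_{a\in\Act}(?\{\ell''\},a).\synone{\psi_a}$. A transition $m \xrightarrow{(?\ell',\gamma)} n$ can only be derived via the ``+'' rule combined with the receive-prefix rule, which forces $\ell' \in \{\ell''\}$ and selects the unique summand with label $\gamma$; hence $n_1 = \synone{\psi_\gamma} = n_2$.

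For the inductive case $m = m_1 \pandor m_2$, I observe that the two inference rules in \Cref{tab:decentralized_locla_operational_semantics} for $\pandor$ and receive labels are mutually exclusive: one requires the partner to be able to receive, the other requires it \emph{not} to be able to receive. Therefore both derivations $m \xrightarrow{(?\ell',\gamma)} n_1$ and $m \xrightarrow{(?\ell',\gamma)} n_2$ use the same rule. If the rule where both components receive applies, then $n_i = n_i^1 \pandor n_i^2$ with $m_j \xrightarrow{(?\ell',\gamma)} n_i^j$ for $j \in \{1,2\}$; the induction hypothesis applied to each $m_j$ (which is relevant by \Cref{def:relevant}) yields $n_1^j = n_2^j$, and hence $n_1 = n_2$. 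The case where only one component receives is analogous, with the non-receiving component contributing the same unchanged sub-monitor to both derivations.

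The main potential obstacle would be if the synthesis produced sums of receive prefixes in which the same action $a$ could appear in multiple summands, introducing genuine non-determinism; a quick inspection of \Cref{tab:synthesis-decentralized} rules this out, since every such sum arises from the box/diamond clauses and ranges over $a \in \Act$ with distinct labels. With this observation in hand, the inductive argument goes through routinely.
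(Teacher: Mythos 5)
Your proof is correct and follows essentially the same route as the paper's: structural induction on the relevant local monitor, with the sum-of-receive-prefixes base case settled by uniqueness of the summand for each $(\ell'',a)$ pair, and the $\pandor$ case settled by observing that the two receive rules for parallel composition cannot both apply (the paper phrases this as deriving a contradiction in the mixed case, which is the same observation as your mutual-exclusivity remark). No gaps.
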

\begin{proof}
    We proceed by induction on $m$. In the base case for $m=\synone{\psi}$, we conclude by \Cref{lemma:nocommunicate}. For $m=(!G,a).\synone{\psi}$, the premise of the lemma is not satisfied; for $m=\sum_{a\in\Act}(?\{\ell''\},a).\synone{\psi_a}$, we can conclude because there is only one entry in the sum for each combination of ${\ell''}$ and $a$.
    
    For the inductive step, let $m=m_1\pandor m_2$, with $m_1$ and $m_2$ relevant monitors. From $m\xrightarrow{(?\ell',\gamma)} n_1$ we derive two cases.
    \begin{enumerate}
        \item $n_1=n'_1\pandor n''_1$ such that $m_1 \xrightarrow{(?\ell',\gamma)} n'_1$ and $m_2 \xrightarrow{(?\ell',\gamma)} n''_1$. From $m\xrightarrow{(?\ell',\gamma)} n_2$ we derive two more cases.
        \begin{enumerate}
            \item $n_2=n'_2\pand n''_2$ such that $m_1 \xrightarrow{(?\ell',\gamma)} n'_2$ and $m_2 \xrightarrow{(?\ell',\gamma)} n''_2$. From the induction hypothesis, we conclude that $n'_1=n'_2$ and $n''_1=n''_2$, and thus that $n_1=n_2$.
            \item $n_2=n'_2\pandor m_2$ such that  $m_1 \xrightarrow{(?\ell',\gamma)} n'_2$ and  $m_2 \NOT {\xrightarrow{(?\ell',\gamma)}} $. This is a contradiction with $m_2 \xrightarrow{(?\ell',\gamma)} n''_1$.
        \end{enumerate}
        \item $n_1=n'_1\pandor m_2$ such that $m_1 \xrightarrow{(?\ell',\gamma)} n'_1$ and  $m_2 \NOT {\xrightarrow{(?\ell',\gamma)}} $. We then conclude from $m\xrightarrow{(?\ell',\gamma)
        } n_2$ that $n_2=n'_2\pandor m_2$ such that $m_1 \xrightarrow{(?\ell',\gamma)} n'_2$. From the induction hypothesis, we conclude that $n'_1=n'_2$ and thus that $n_1 =n_2$. 
        \qedhere 
            \end{enumerate}
\end{proof}

The next lemma states that, if a monitor can receive a message from two different senders, the order in which this happens does not matter.
\begin{lemma}\label{lemma:commutativereceiving}
    Let $m$ be a relevant local monitor such that
    $m\xrightarrow{(?\ell_1,\gamma_1)}m_1$ and $m\xrightarrow{(?\ell_2,\gamma_2)} m_2$, where $\ell_1 \neq \ell_2$.Then, 
    $m_1\xrightarrow{(?\ell_2,\gamma_2)} m_3$ and $m_2\xrightarrow{(?\ell_1,\gamma_1)} m_3$, for some $m_3$.
\end{lemma}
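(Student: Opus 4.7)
The plan is to mirror the inductive proof strategy of Lemmas \ref{lemma:sendreceiveorder} and \ref{lemma:uniqueness}, proceeding by structural induction on the relevant local monitor $m$. The base cases are immediate: if $m = \synone{\psi}$, then $m$ admits no communication at all by Lemma \ref{lemma:nocommunicate}; if $m = (!G,a).\synone{\psi}$, then $m$ admits no receive transition; and if $m = \sum_{a \in \Act}(?\{\ell'\},a).\synone{\psi_a}$, then every receive transition is from the single location $\ell'$, contradicting $\ell_1 \neq \ell_2$, so in all three cases the hypothesis fails vacuously.

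In the inductive case $m = n_1 \pandor n_2$, each receive transition $m \xrightarrow{(?\ell_i,\gamma_i)} m_i$ is derived by one of three rules: both $n_1, n_2$ receive, only $n_1$ receives (with $n_2 \NOT{\xrightarrow{(?\ell_i,\gamma_i)}}$), or symmetrically only $n_2$ does. Write $m_1 = n_1^1 \pandor n_2^1$ and $m_2 = n_1^2 \pandor n_2^2$, where each $n_j^k$ is either the evolved residual (if that side received) or $n_j$ itself (if it did not). For each component index $j \in \{1,2\}$, construct $n_j^3$ as follows: if $n_j$ received both messages, apply the induction hypothesis to $n_j$ to close the diamond; if $n_j$ received exactly one of the two messages, take $n_j^3$ to be that residual, and use Lemma \ref{lemma:singlesender}(2) to transfer the inability to receive the other message across the intermediate receive step (this is valid because, by definition, the local $\boldarrow$ relation admits receive transitions); if $n_j$ received neither, set $n_j^3 = n_j$.

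Setting $m_3 := n_1^3 \pandor n_2^3$ and picking, for each of the two final transitions $m_1 \xrightarrow{(?\ell_2,\gamma_2)} m_3$ and $m_2 \xrightarrow{(?\ell_1,\gamma_1)} m_3$, the rule for $\pandor$ whose premises match the choices made above (both sides receive, only one side receives while the other is unable, or no transition on the idle side) yields the desired witnesses. The main obstacle is the combinatorial bookkeeping: with three derivations of each incoming receive, there are up to nine configurations to handle at the level of $m$, though many collapse by symmetry and most reduce directly to the induction hypothesis. The only subtle point is guaranteeing the preservation of ``cannot receive from $\ell$'' across an intermediate receive from a different sender, which is precisely the content of Lemma \ref{lemma:singlesender}(2) and is what makes the ``only one side receives'' branches close cleanly.
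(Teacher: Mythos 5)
Your proof is correct and follows essentially the same route as the paper's: structural induction on the relevant local monitor, with the same vacuous base cases, and an inductive step for $n_1 \pandor n_2$ that closes the diamond component-wise via the induction hypothesis (when a component receives both messages) and Lemma~\ref{lemma:singlesender}(2) (to propagate a component's inability to receive across the intermediate receive step). The paper organizes the inductive step as an explicit four-case enumeration up to symmetry rather than your per-component construction, but the underlying argument and the lemmas invoked are identical.
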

\begin{proof}
    We proceed by induction on $m$. 
    In the base case, for $m=\synone{\psi}$, we conclude by \Cref{lemma:nocommunicate};
    for $m=(!G,a).\synone{\psi}$ and $m=\sum_{a\in\Act}(?\{\ell'\},a).\synone{\psi_a}$, the premise of the lemma is not satisfied. 
    For the inductive step, let $m=n_1\pandor n_2$, with $n_1$ and $n_2$ relevant monitors. From $m\xrightarrow{(?\ell_1,\gamma_1)} m_1$ and $m\xrightarrow{(?\ell_2,\gamma_2)} m_2$, wlog we derive the following four cases.

    \begin{enumerate}
        \item $m_1=n'_1\pandor n'_2$ and $m_2=n''_1\pandor n''_2$, where 
        \begin{align}
            n_1 &\xrightarrow{(?\ell_1,\gamma_1)} n'_1 \label{eq:yz1}
            \\
            n_2 &\xrightarrow{(?\ell_1,\gamma_1)} n'_2 \label{eq:yz2}
            \\
            n_1 &\xrightarrow{(?\ell_2,\gamma_2)} n''_1, \text{ and} \label{eq:yw1}
            \\
            n_2 &\xrightarrow{(?\ell_2,\gamma_2)} n''_2. \label{eq:yw2}
        \end{align}
        From the induction hypothesis on \Cref{eq:yz1,eq:yw1} and on \Cref{eq:yz2,eq:yw2},
        $n'_1\xrightarrow{(?\ell_2,\gamma_2)} r_1$ and $n''_1\xrightarrow{(?\ell_{1},\gamma_1)} r_1$, for some $r_1$,
        and 
        $n'_2\xrightarrow{(?\ell_2,\gamma_2)} r_2$ and $n''_2\xrightarrow{(?\ell_1,\gamma_1)} r_2$, for some $r_2$.
        We conclude that $m_1 \xrightarrow{(?\ell_2,\gamma_2)} r_1 \pandor r_2$ and $m_2 \xrightarrow{(?\ell_1,\gamma_1)} r_1 \pandor r_2$.

        \item $m_1=n'_1\pandor n'_2$ and $m_2=n''_1\pandor n_2$, where 
            \Cref{eq:yz1,eq:yz2,eq:yw1} hold, and
        \begin{align}
            n_2 &\NOT{\xrightarrow{(?\ell_2,\gamma_2)}} .
            \label{eq:ynot2}
        \end{align}
        From the induction hypothesis on \Cref{eq:yz1,eq:yw1},  we get that 
        \begin{align}
        n'_1&\xrightarrow{(?\ell_2,\gamma_2)} r_1 \text{ and} 
        \label{eq:zr1}
        \\
        n''_1&\xrightarrow{(?\ell_{1},\gamma_1)} r_1, 
        \label{eq:wr1}
        \end{align}
        for some $r_1$.
        \Cref{lemma:singlesender} and \Cref{eq:ynot2} yield that 
        $n'_2 \NOT{\xrightarrow{(?\ell_2,\gamma_2)}}$; therefore, from \Cref{eq:zr1}, we get that 
        $m_1 = n'_1 \pandor n'_2 \xrightarrow{(?\ell_2,\gamma_2)} r_1 \pandor z_2$; finally, 
        \Cref{eq:wr1,eq:yz2} yield 
        $m_2=n''_1\pandor n_2 \xrightarrow{(?\ell_1,\gamma_1)} r_1 \pandor n_2$.

        \item $m_1=n'_1\pandor n_2$ and $m_2=n''_1\pandor n_2$, where 
        \Cref{eq:yz1,eq:yw1,eq:ynot2} hold, and
    \begin{align}
        n_2 &\NOT{\xrightarrow{(?\ell_1,\gamma_1)}} .
        \label{eq:ynot21}
    \end{align}
    From the induction hypothesis on \Cref{eq:yz1,eq:yw1},  we get 
     \Cref{eq:zr1,eq:wr1},
    for some $r_1$; this, together with \Cref{eq:ynot2,eq:ynot21},
    yields
    $m_1 = n'_1 \pandor n_2 \xrightarrow{(?\ell_2,\gamma_2)} r_1 \pandor n_2$ and 
    $m_2=n''_1\pandor n_2 \xrightarrow{(?\ell_1,\gamma_1)} r_1 \pandor n_2$.

    \item $m_1=n'_1\pandor n_2$ and $m_2=n_1\pandor n''_2$, where 
        \Cref{eq:yz1,eq:yw2,eq:ynot21} hold, and
    \begin{align}
        n_1 &\NOT{\xrightarrow{(?\ell_2,\gamma_2)}} .
        \label{eq:ynot12}
    \end{align}
    \Cref{lemma:singlesender}, with \Cref{eq:ynot12,eq:yz1} and with \Cref{eq:ynot21,eq:yw2}, yields that 
    $n'_1 \NOT{\xrightarrow{(?\ell_2,\gamma_2)}}$ and $n''_2 \NOT{\xrightarrow{(?\ell_1,\gamma_1)}}$, respectively; therefore, from \Cref{eq:yz1,eq:yw2}, we get that 
    $m_1 = n'_1 \pandor n_2 \xrightarrow{(?\ell_2,\gamma_2)} n'_1 \pandor n''_2$ and 
    $m_2=n_1\pandor n''_2 \xrightarrow{(?\ell_1,\gamma_1)} n'_1 \pandor n''_2$.
    \qedhere
        \end{enumerate}
\end{proof}

The following lemma shows that if action-derived local monitors can send two different messages, the order in which this happens does not matter.
\begin{lemma}\label{lemma:sendreceiveordertwo}
Let $m$ be a relevant local monitor such that
$m\xrightarrow{(!G,\gamma)}m_1$ and $m\xrightarrow{(!G',\gamma')} m_2$, for some $G,G'\subseteq \Loc$ and $\gamma,\gamma' \in \Cvar$ such that either $G\neq G'$ or $\gamma\neq\gamma'$. Then, $m_1\xrightarrow{(!G',\gamma')} m_3$ and $m_2\xrightarrow{(!G,\gamma)} m_3$, for some $m_3$.
\end{lemma}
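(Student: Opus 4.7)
The plan is to prove this by induction on the structure of the relevant local monitor $m$, following closely the pattern of the proofs of Lemma~\ref{lemma:sendreceiveorder} and Lemma~\ref{lemma:sendreceiveordertwo}'s receive-counterpart (Lemma~\ref{lemma:commutativereceiving}). The crux is that a single send transition can only arise from a top-level prefix $(!G,a).\synone{\psi}$ or from lifting a send of a sub-monitor through a parallel operator, so two different sends from the same monitor must originate in (at least) one parallel component.

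For the base cases: if $m = \synone{\psi}$, Lemma~\ref{lemma:nocommunicate} gives $m \NOT{\xrightarrow{c}}$ for every communication action $c$, so the hypothesis is vacuously false. If $m = (!G,a).\synone{\psi}$, the only available send transition is labeled $(!G,a)$, so the assumption that the two sends differ on group or content rules this case out. If $m = \sum_{a\in\Act}(?\{\ell'\},a).\synone{\psi_a}$, there is no send transition at all, so again the hypothesis fails.

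For the inductive step, let $m = n_1 \pandor n_2$ with both $n_i$ relevant. Using the single send rule for $\pandor$, each of the two sends must come from $n_1$ alone or from $n_2$ alone, yielding three sub-cases (up to the symmetry of $\pandor$):
\begin{itemize}
    \item Both sends originate in $n_1$: then $m_1 = n_1' \pandor n_2$ and $m_2 = n_1'' \pandor n_2$ with $n_1 \xrightarrow{(!G,\gamma)} n_1'$ and $n_1 \xrightarrow{(!G',\gamma')} n_1''$. By the induction hypothesis on $n_1$, there is some $n_1'''$ with $n_1' \xrightarrow{(!G',\gamma')} n_1'''$ and $n_1'' \xrightarrow{(!G,\gamma)} n_1'''$. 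Take $m_3 = n_1''' \pandor n_2$.
    \item Both sends originate in $n_2$: symmetric.
    \item One send originates in each: then $m_1 = n_1' \pandor n_2$ with $n_1 \xrightarrow{(!G,\gamma)} n_1'$, and $m_2 = n_1 \pandor n_2'$ with $n_2 \xrightarrow{(!G',\gamma')} n_2'$. Taking $m_3 = n_1' \pandor n_2'$ works directly, since $m_1 \xrightarrow{(!G',\gamma')} m_3$ and $m_2 \xrightarrow{(!G,\gamma)} m_3$ are immediate from the send rule for $\pandor$.
\end{itemize}

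The main obstacle, as in Lemma~\ref{lemma:commutativereceiving}, is purely bookkeeping: keeping track of which component fires which send and rebuilding the joint monitor $m_3$ consistently. Unlike the receive case, there is no interaction with the distinction between $\xrightarrow{(?\ell,\gamma)}$ transitions that synchronize both components versus those that only fire in one; every send moves exactly one component. So the proof is actually a bit cleaner than the receive-commutativity lemma and no auxiliary result beyond the induction hypothesis is needed.
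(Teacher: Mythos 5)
Your proof is correct and follows essentially the same route as the paper's: induction on the structure of the relevant local monitor, with the base cases discharged via \Cref{lemma:nocommunicate} or because the premise fails, and the inductive step for $n_1 \pandor n_2$ split according to which component fires each send (applying the induction hypothesis when both come from the same component, and composing directly when they come from different ones). Your case analysis is just the paper's two cases stated symmetrically, so nothing is missing.
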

\begin{proof}
We proceed by induction on $m$. The base case for $m=\synone{\psi}$ follows from \Cref{lemma:nocommunicate}; for $m=(!G,a).\synone{\psi}$ and $m=\sum_{a\in\Act}(?\{\ell'\},a).\synone{\psi_a}$, the premise of the lemma is not satisfied.  
    
For the inductive step, let $m=n_1\pandor n_2$, with $n_1$ and $n_2$ relevant monitors. From $m\xrightarrow{(!G,\gamma)} m_1$ we derive w.l.o.g. that $m_1=n'_1\pandor n_2$ and $n_1\xrightarrow{(!G,\gamma)} n'_1$. From $m\xrightarrow{(!G',\gamma')} m_2$ we derive two cases.
\begin{enumerate}
    \item $m_2=n''_1\pandor n_2$ such that $n_1 \xrightarrow{(!G',\gamma')} n''_1$.
    From the induction hypothesis, we derive that $n'_1\xrightarrow{(!G',\gamma')} n'$ and $n''_1\xrightarrow{(!G,\gamma)} n'$, for some $n'$. Thus, $n'_1\pandor n_2\xrightarrow{(!G',\gamma')} n'\pandor n_2$ and $n''_1\pandor n_2\xrightarrow{(!G,\gamma)} n' \pandor n_2$, as desired.
    
    \item $m_2=n_1\pandor n''_2$ such that $n_2 \xrightarrow{(!G',\gamma')} n''_2 $. Then, $n_1\pandor n''_2\xrightarrow{(!G,\gamma)} n'_1\pandor n''_2 $ and $n'_1\pandor n_2\xrightarrow{(!G',\gamma')} n'_1\pandor n''_2$. 
    \qedhere
\end{enumerate}
\end{proof}

The following lemma is the version of \Cref{lemma:uniqueness} for sending.
\begin{lemma}\label{lemma:uniquenesstwo}
    Let $m$ be a relevant local monitor such that
 $m\xrightarrow{(!G, \gamma)}n_1$ and $m\xrightarrow{(!G, \gamma)}n_2$. Then, either $n_1=n_2$, or $n_1\xrightarrow{(!G,\gamma)}m'$ and $n_2\xrightarrow{(!G,\gamma)}m'$, for some $m'$.
\end{lemma}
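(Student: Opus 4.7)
My plan is to prove the statement by induction on the structure of the relevant local monitor $m$, closely paralleling the proof of the receiving version (Lemma on uniqueness of receiving transitions), but with an extra complication: two send transitions from different parallel components need not reach the same state, but can be brought to a common monitor with one additional send step each.

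\textbf{Base cases.} If $m = \synone{\psi}$, then by the no-communication lemma the premise is vacuously false. If $m = \sum_{a\in\Act}(?\{\ell'\},a).\synone{\psi_a}$, then $m$ can only receive, so again the premise fails. The only genuine base case is $m = (!G',\gamma').n$, where any transition $m\xrightarrow{(!G,\gamma)}n'$ forces $G = G'$, $\gamma = \gamma'$, and $n' = n$; hence $n_1 = n_2 = n$.

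\textbf{Inductive step.} Let $m = m_1 \pandor m_2$ with $m_1, m_2$ relevant. By inspection of the operational rules in Table~\ref{tab:decentralized_locla_operational_semantics}, a send transition from $m_1 \pandor m_2$ arises from exactly one side. So there are four subcases, according to which side provides each transition.

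\emph{Same side.} Suppose both $n_1$ and $n_2$ originate from $m_1$, i.e.\ $n_1 = n_1'\pandor m_2$ with $m_1 \xrightarrow{(!G,\gamma)} n_1'$ and $n_2 = n_2'\pandor m_2$ with $m_1 \xrightarrow{(!G,\gamma)} n_2'$. By the induction hypothesis, either $n_1' = n_2'$, in which case $n_1 = n_2$ immediately, or there exists $m''$ with $n_1' \xrightarrow{(!G,\gamma)} m''$ and $n_2' \xrightarrow{(!G,\gamma)} m''$, from which $n_1 \xrightarrow{(!G,\gamma)} m''\pandor m_2$ and $n_2 \xrightarrow{(!G,\gamma)} m''\pandor m_2$. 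The symmetric case where both sends come from $m_2$ is handled identically.

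\emph{Opposite sides.} This is the only genuinely new case relative to the receiving version. Suppose $n_1 = n_1'\pandor m_2$ with $m_1 \xrightarrow{(!G,\gamma)} n_1'$ and $n_2 = m_1 \pandor n_2'$ with $m_2 \xrightarrow{(!G,\gamma)} n_2'$. Here $n_1$ and $n_2$ need not be equal (the send ``happens on different sides''), but both can still send once more: since $m_2 \xrightarrow{(!G,\gamma)} n_2'$ is still available in $n_1 = n_1' \pandor m_2$, we obtain $n_1 \xrightarrow{(!G,\gamma)} n_1' \pandor n_2'$, and symmetrically $n_2 = m_1 \pandor n_2' \xrightarrow{(!G,\gamma)} n_1' \pandor n_2'$. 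Hence $m' = n_1' \pandor n_2'$ witnesses the claim. The mirror subcase ($n_1$ from $m_2$, $n_2$ from $m_1$) is identical.

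The main (mild) obstacle is correctly handling the opposite-sides subcase: one must observe that the transition on the non-sending side is preserved after the other side's send, which holds because the send rule for $\pandor$ leaves the non-sending component untouched. No auxiliary lemmas beyond the no-communication lemma are needed, so the proof is self-contained within the inductive structure.
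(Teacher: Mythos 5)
Your proof is correct and follows essentially the same route as the paper's: the same structural induction on relevant local monitors, the same dispatch of the base cases, and the same two-subcase analysis for $\pandor$, where the same-side case uses the induction hypothesis and the opposite-sides case closes the diamond with one further send from each of $n_1$ and $n_2$ to the common monitor $n_1'\pandor n_2'$.
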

\ifarxiversion
\begin{proof}
    We proceed by induction on $m$. If $m=(!G,a).\synone{\psi}$, it is immediate that $n_1=n_2$. If $m=\synone{\psi}$ or $m=\sum_{a\in\Act}(?\{\ell'\},a).\synone{\psi_a}$, the premise of the lemma is not satisfied. 
    
    Let $m=m_1\pandor m_2$, with $m_1$ and $m_2$ relevant monitors. From $m\xrightarrow{(!G,\gamma)} n_1$, we derive that $n_1=m'_1\pandor m'_2$ such that $m_1 \xrightarrow{(!G,\gamma)} m'_1$.
  From $m\xrightarrow{(!G,\gamma)} n_2$ we derive two cases.
        \begin{enumerate}
            \item $n_2=m''_1\pandor m_2$ such that $m_1 \xrightarrow{(!G,\gamma)} m''_1$. From the induction hypothesis, we conclude that $m'_1=m''_1$ (and thus that $n_1=n_2$), or $m''_1 \xrightarrow{(!G,\gamma)} r$ and $m'_1 \xrightarrow{(!G,\gamma)} r$, for some $r$. In the latter case, we derive that $n_1=m'_1\pandor m_2 \xrightarrow{(!G,\gamma)} r\pandor m_2$ and $n_2=m''_1\pandor m_2 \xrightarrow{(!G,\gamma)} r\pandor m_2$.
            \item $n_2=m_1\pandor m''_2$ such that $m_2 \xrightarrow{(!G,\gamma)} m''_2$. In this case, we derive that $n_1=m'_1\pandor m_2 \xrightarrow{(!G,\gamma)} m'_1\pandor m''_2$ and $n_2=m_1\pandor m''_2 \xrightarrow{(!G,\gamma)} m'_1\pandor m''_2$.
            \qedhere 
        \end{enumerate}
\end{proof}
\fi

Before we continue on the path to proving \Cref{lemma:itsallrelative}, we show how to formally capture the state of a decentralized monitor after taking a receiving transition, for which we only need \Cref{lemma:uniqueness}. To this aim, we use $M[[n]_\ell/[m]_\ell]$ as notation for replacing $[m]_\ell$ in $M$ with $[n]_\ell$ (the operation does not change $M$ in case $[m]_\ell$ does not occur in $M$).
\begin{lemma}\label{lemma:group}
    Let $M$ and $N$ be action-derived monitors such that $M\extoverset{$G:(?\ell,\gamma)$}\rightsquigarrow N$; let $\{[m_i]_{\ell_i} \mid 1 \leq i \leq k \}$ be the set of all the $[m]_{\ell} \in M$. 
    For each $1 \leq i \leq k$, let $n_i$ be $m_i$, when $\ell_i \notin G$ or $m_i \NOT{\xrightarrow{(?\ell,\gamma)}}$, and be such that $m_i \xrightarrow{(?\ell,\gamma)} n_i$, otherwise.
    Then, $M[[n_1]_{\ell_1}/[m_1]_{\ell_1}]\cdots [[n_k]_{\ell_k}/[m_k]_{\ell_k}] = N$.
\end{lemma}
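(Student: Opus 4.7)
My plan is to prove the statement by structural induction on $M$, relying on Lemma~\ref{lemma:uniqueness} to guarantee that the local receive transitions pin down the target state uniquely.

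For the base case, $M = [m]_\ell$, so $k=1$ and $[m_1]_{\ell_1} = [m]_\ell$. There are only three rules from Table~\ref{tab:communicating_operational_semantics_communicating_monitors} that can derive $[m]_\ell \extoverset{$G:(?\ell',\gamma)$}\rightsquigarrow N$: either (i) $\ell \in G$ and $m \xrightarrow{(?\ell',\gamma)} n$ with $N = [n]_\ell$, or (ii) $m \NOT{\xrightarrow{(?\ell',\gamma)}}$ and $N = [m]_\ell$, or (iii) $\ell \notin G$ and $N = [m]_\ell$. In cases (ii) and (iii), the statement sets $n_1 = m$, so the substitution $M[[n_1]_{\ell_1}/[m_1]_{\ell_1}]$ equals $M = N$. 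In case (i), the statement sets $n_1$ to any monitor such that $m \xrightarrow{(?\ell',\gamma)} n_1$; by Lemma~\ref{lemma:A-derived-iff-relevant}, $m$ is a relevant local monitor (since $M$ is action-derived), so Lemma~\ref{lemma:uniqueness} forces $n_1 = n$, and hence $M[[n_1]_{\ell}/[m]_{\ell}] = [n]_\ell = N$.

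For the inductive step, $M = M_1 \op M_2$, and inspection of the rules reveals that $M \extoverset{$G:(?\ell',\gamma)$}\rightsquigarrow N$ must be derived using the rule for $\op$, so $N = M_1' \op M_2'$ with $M_1 \extoverset{$G:(?\ell',\gamma)$}\rightsquigarrow M_1'$ and $M_2 \extoverset{$G:(?\ell',\gamma)$}\rightsquigarrow M_2'$. The set $\{[m_i]_{\ell_i} \mid 1 \leq i \leq k\}$ splits into the subsets appearing in $M_1$ and in $M_2$; since each $[m_i]_{\ell_i} \in M_j$ is also a component of the action-derived monitor $M$, and hence itself action-derived, the inductive hypothesis applied to each $M_j$ yields that $M_j[[n_{i_1}]_{\ell_{i_1}}/[m_{i_1}]_{\ell_{i_1}}] \cdots = M_j'$. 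Since substitutions at different components commute, composing these yields $M[[n_1]_{\ell_1}/[m_1]_{\ell_1}] \cdots [[n_k]_{\ell_k}/[m_k]_{\ell_k}] = M_1' \op M_2' = N$.

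The main obstacle is the base case: making sure the receiving target is genuinely unique, because in principle the rules only require \emph{existence} of some $n$ with $m \xrightarrow{(?\ell',\gamma)} n$, while the statement picks an arbitrary such $n_i$. This is precisely what Lemma~\ref{lemma:uniqueness} provides, but it only applies to relevant local monitors, which is why we need the action-derivedness hypothesis together with Lemma~\ref{lemma:A-derived-iff-relevant}. A minor bookkeeping point in the inductive step is that the set of constituents $[m_i]_{\ell_i}$ may contain repeated labels (if the same monitor occurs in both $M_1$ and $M_2$), but this is harmless since all substitutions with the same source produce the same target.
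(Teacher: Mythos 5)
Your proof is correct and follows essentially the same route as the paper's: the paper also invokes Lemma~\ref{lemma:uniqueness} to pin down each $n_i$ uniquely and then concludes by an easy induction on the derivation of $\extoverset{$G:(?\ell,\gamma)$}\rightsquigarrow$, which coincides with your structural induction on $M$ since the rules are syntax-directed. Your version merely spells out the case analysis and the bookkeeping that the paper leaves implicit.
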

\ifarxiversion
\begin{proof}
    \Cref{lemma:uniqueness} yields that each $n_i$ is uniquely defined.
Then, the lemma follows by an easy induction on $\extoverset{$G:(?\ell,\gamma)$}\rightsquigarrow$.
\end{proof}
\fi


 \Cref{lemma:receiveonce}-\Cref{lemma:commutereceivesend} lift the previous results from local to decentralized monitors and are used to prove \Cref{lemma:commutecommunication}.
This result, together with \Cref{lemma:relativestep}, leads to \Cref{lemma:itsallrelative}, which captures that the order of communication does not matter: regardless of the order in which the communication steps are executed, the ‘final' state (i.e., the state where no more communication can take place) is unique.

    \begin{lemma}\label{lemma:receiveonce}
        Let $M$ be action-derived. If $M  \extoverset{$G:(?\ell,\gamma)$}  \rightsquigarrow N$, then $N  \extoverset{$G:(?\ell,\gamma)$}  \rightsquigarrow N$.
    \end{lemma}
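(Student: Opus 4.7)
The plan is to proceed by straightforward induction on the derivation of $M \extoverset{$G:(?\ell,\gamma)$} \rightsquigarrow N$, using \Cref{lemma:singlesender}(1) as the key ingredient at the base case. The intuition is that after all components of $M$ that can receive $(?\ell,\gamma)$ have done so, none of the resulting components can receive that same message again; so the only available $G:(?\ell,\gamma)$-transition from $N$ is the trivial self-loop, witnessed by the second, third, or fourth rule of \Cref{tab:communicating_operational_semantics_communicating_monitors}.

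First, I would observe that action-derivedness is preserved under communication transitions: if $[m]_\ell \in M$ with $M$ being $A$-derived and $m \xrightarrow{(?\ell',\gamma)} n$, then $n$ is also $A$-derived for $\ell$, since by definition there is $\synvone{\ell'}{\sigma}{\psi} \xrightarrow{A(\ell)} m' \boldarrow m$, and prefixing by $m \xrightarrow{(?\ell',\gamma)} n$ extends this witness. Hence the inductive invariant that all local components encountered are action-derived (so that \Cref{lemma:singlesender} applies) is maintained throughout.

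For the base case, $M = [m]_{\ell'}$. There are three sub-cases, corresponding to the first three local rules in \Cref{tab:communicating_operational_semantics_communicating_monitors}. If $\ell' \in G$ and $m \xrightarrow{(?\ell,\gamma)} n$ with $N = [n]_{\ell'}$, then by \Cref{lemma:singlesender}(1), $n \NOT{\xrightarrow{(?\ell,\gamma)}}$, so the third rule yields $[n]_{\ell'} \extoverset{$G:(?\ell,\gamma)$} \rightsquigarrow [n]_{\ell'}$, i.e.\ $N \extoverset{$G:(?\ell,\gamma)$} \rightsquigarrow N$. If instead $m \NOT{\xrightarrow{(?\ell,\gamma)}}$ (so $N = M$), the third rule directly gives $N \extoverset{$G:(?\ell,\gamma)$} \rightsquigarrow N$. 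If $\ell' \notin G$ (so again $N = M$), the fourth rule gives $N \extoverset{$G:(?\ell,\gamma)$} \rightsquigarrow N$.

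For the inductive step, suppose $M = M_1 \op M_2$ with $M_1 \extoverset{$G:(?\ell,\gamma)$} \rightsquigarrow N_1$ and $M_2 \extoverset{$G:(?\ell,\gamma)$} \rightsquigarrow N_2$, so that $N = N_1 \op N_2$ by the combining rule. Since $M$ is action-derived, so are $M_1$ and $M_2$; by the induction hypothesis, $N_1 \extoverset{$G:(?\ell,\gamma)$} \rightsquigarrow N_1$ and $N_2 \extoverset{$G:(?\ell,\gamma)$} \rightsquigarrow N_2$. Applying the combining rule again yields $N \extoverset{$G:(?\ell,\gamma)$} \rightsquigarrow N$, completing the induction. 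The only real obstacle is ensuring the action-derivedness invariant so that \Cref{lemma:singlesender} is applicable; beyond that, the argument is essentially a rule-by-rule verification.
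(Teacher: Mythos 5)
Your proof is correct and follows essentially the same route as the paper's: induction on the derivation of $M \extoverset{$G:(?\ell,\gamma)$}\rightsquigarrow N$, with \Cref{lemma:singlesender}(1) applied in the first base case to rule out a second receipt of the same message, and the combining rule closing the inductive step. The extra observation about preservation of action-derivedness under receive transitions is harmless but not actually needed, since the induction hypothesis is only ever applied to sub-terms of $M$ (which are action-derived directly by definition), not to the targets $N_1, N_2$.
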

    \begin{proof}
    We proceed by induction on $M  \extoverset{$G:(?\ell,\gamma)$}  \rightsquigarrow N$. In the first base case we consider $M=[m]_{\ell'}$, $N=[n]_{\ell'}$, $\ell'\in G$, and $m\xrightarrow{(?\ell,\gamma)}n$: by \Cref{lemma:singlesender}, which we can apply because $M$ is action-derived, we obtain that $n\NOT{\xrightarrow{(?\ell, \gamma)}}$, and so $[n]_{\ell'}\extoverset{$G:(?\ell,\gamma)$}  \rightsquigarrow [n]_{\ell'}$.
    In the second and third base case we consider $M=[m]_{\ell'}=N$, and thus we are done immediately.
    In the inductive case we consider $M=M_1\op M_2$, $N=N_1\op N_2$, $M_1  \extoverset{$G:(?\ell,\gamma)$}  \rightsquigarrow N_1$ and $M_2  \extoverset{$G:(?\ell,\gamma)$}  \rightsquigarrow N_2$. From the induction hypothesis we get that $N_1  \extoverset{$G:(?\ell,\gamma)$}  \rightsquigarrow N_1$ and $N_2  \extoverset{$G:(?\ell,\gamma)$}  \rightsquigarrow N_2$, and thus that $N=N_1\op N_2  \extoverset{$G:(?\ell,\gamma)$}  \rightsquigarrow N_1\op N_2$.
    \end{proof}
    
    \begin{lemma}\label{lemma:commutereceive}
        Let $M$ be $A$-derived and such that $M  \extoverset{$G_1:(?\ell_1,A(\ell_1))$}  \rightsquigarrow N_1$ and $M  \extoverset{$G_2:(?\ell_2,A(\ell_2))$}  \rightsquigarrow N_2$. Then, either $N_1=N_2$, or $N_1  \extoverset{$G_2:(?\ell_2\,A(\ell_2))$}  \rightsquigarrow N_3$ and $N_2  \extoverset{$G_1:(?\ell_1,A(\ell_1))$}  \rightsquigarrow N_3$, for some $N_3$. 
    \end{lemma}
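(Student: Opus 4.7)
The plan is to do induction on the structure of $M$, carefully exploiting the auxiliary commutativity results already established for local monitors (Lemmas~\ref{lemma:uniqueness} and~\ref{lemma:commutativereceiving}) and the uniqueness characterization of \Cref{lemma:group}. Since the statement allows either $N_1=N_2$ or the existence of a common $N_3$ reachable by receiving in the opposite order, the dichotomy will arise naturally from case analysis on whether $\ell_1=\ell_2$ (and $G_1=G_2$) or not.

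For the base case $M=[m]_{\ell'}$, I would split on whether $\ell' \in G_1$ and $\ell' \in G_2$ and on whether $m$ can perform the respective receives; in each subcase $N_1$ or $N_2$ equals $M$, and the result either follows trivially (one of the receives is a no-op, so we may take $N_3$ to be the other $N_i$), or $m$ can perform both receives. In the latter case, if $\ell_1 = \ell_2$ and the messages coincide we invoke \Cref{lemma:uniqueness} to get $N_1 = N_2$; if the messages differ, \Cref{lemma:singlesender} gives that after one receive the other cannot be performed (for the same sender), so we use \Cref{lemma:commutativereceiving} which yields the common $m_3$ when $\ell_1 \neq \ell_2$; note that $m$ is a relevant local monitor by \Cref{lemma:A-derived-iff-relevant}, so these lemmas apply.

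For the inductive step $M = M_1 \op M_2$, we have $M_1 \extoverset{$G_1:(?\ell_1,A(\ell_1))$}\rightsquigarrow N_1^1$ and $M_2 \extoverset{$G_1:(?\ell_1,A(\ell_1))$}\rightsquigarrow N_1^2$ with $N_1 = N_1^1 \op N_1^2$, and similarly for $N_2$. Applying the induction hypothesis to $M_1$ and to $M_2$ yields either componentwise equality (hence $N_1 = N_2$ overall) or the existence of witnesses $N_3^1$, $N_3^2$ reachable in the swapped order from $N_1^i, N_2^i$. Gluing these with the last rule of \Cref{tab:communicating_operational_semantics_communicating_monitors} gives the desired $N_3 = N_3^1 \op N_3^2$. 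One small wrinkle is the case where on one side the induction hypothesis yields equality and on the other it yields the swapped-order witness; here we simply take the common monitor from the equality side and the swap witness from the other, producing a valid $N_3$ for the whole monitor.

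The main obstacle I anticipate is keeping the bookkeeping correct when $\ell_1 = \ell_2$ but $G_1 \neq G_2$, or when different components of $M$ are sensitive to only one of the two incoming messages (because they lie in only one of the two groups). \Cref{lemma:group} is the clean way to handle this: it characterizes $N_i$ as obtained by independently replacing each local monitor $[m]_{\ell'}$ according to whether $\ell' \in G_i$ and $m$ can receive. Using this characterization, the swapped diagram is verified pointwise at each location using the base-case analysis above, after which \Cref{lemma:group} (applied in the opposite order) reassembles the required $N_3$. This circumvents a fragile direct induction on the derivation of $\rightsquigarrow$ and makes the commutation essentially componentwise.
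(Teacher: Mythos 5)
Your proposal follows essentially the same route as the paper: induction on the structure of $M$ (equivalently, on the derivation of the first $\rightsquigarrow$ step), with the base case split on membership in $G_1,G_2$ and on whether $m$ can perform the receives, discharging the $\ell_1=\ell_2$ subcase via \Cref{lemma:uniqueness} and the $\ell_1\neq\ell_2$ subcase via \Cref{lemma:commutativereceiving} and \Cref{lemma:singlesender} (after noting relevance via \Cref{lemma:A-derived-iff-relevant}), and then gluing componentwise in the inductive step. The one place you are too quick is precisely the ``small wrinkle'' you mention: when the induction hypothesis gives equality on one component (say $O_2=O_4$) and a commuting square on the other, forming the composite transition $O_1\op O_2 \rightsquigarrow Q_1\op O_4$ requires the \emph{unchanged} component to itself perform an identity $\rightsquigarrow$ step for the second message — and symmetrically for the first. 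This is not automatic from input-enabledness alone (which only gives \emph{some} target); it is exactly \Cref{lemma:receiveonce}, which guarantees that a component that has already absorbed $G:(?\ell,\gamma)$ absorbs it again as a no-op. With that one-line addition your gluing goes through. Your alternative suggestion — bypassing the structural induction by using \Cref{lemma:group} to characterize each $N_i$ as an independent pointwise substitution and verifying the commutation location by location — is a genuinely different and arguably cleaner decomposition that the paper does not use for this lemma (it reserves \Cref{lemma:group} for \Cref{lemma:many} and related results); it would absorb the mixed-case bookkeeping into the base-case analysis, at the cost of having to re-justify that the pointwise reassembly yields a legal $\rightsquigarrow$ derivation.
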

    \begin{proof}
        We proceed by induction on $M  \extoverset{$G_1:(?\ell_1,A(\ell_1))$}  \rightsquigarrow N_1$. In the first base case we consider $M=[m]_\ell$, $N_1=[n_1]_\ell$, $\ell\in G_1$, and $m\xrightarrow{(?\ell_1,A(\ell_1))}n_1$. We distinguish two cases:
        \begin{enumerate}
            \item $N_2=[n_2]_\ell$, $m\xrightarrow{(?\ell_2, A(\ell_2))}n_2$, $\ell\in G_2$. Since $M$ is action-derived, we can use \Cref{lemma:commutativereceiving} to obtain two further cases:
            \begin{enumerate}
                \item $\ell_1\neq\ell_2$ and $n_1\xrightarrow{(?\ell_2,A(\ell_2))}q$ and $n_2\xrightarrow{(?\ell_1,A(\ell_1))} q$. Hence, $N_1 =[n_1]_\ell \extoverset{$G_2:(?\ell_2,A(\ell_2))$}  \rightsquigarrow [q]_\ell$ and $N_2 =[n_2]_\ell \extoverset{$G_1:(?\ell_1,A(\ell_1))$}  \rightsquigarrow [q]_\ell$.
                
                \item $\ell_1=\ell_2$. Via \Cref{lemma:uniqueness} we conclude that $n_1=n_2$ and thus that $N_1=N_2$. 
            \end{enumerate}
            
            \item $N_2=[m]_\ell$ and either $\ell\notin G_2$ or $m\NOT{\xrightarrow{(?\ell_2,A(\ell_2))}}$. In the latter case, we know by \Cref{lemma:singlesender}(2) that $n_1\NOT{\xrightarrow{(?\ell_2,A(\ell_2))}}$. Hence, in both cases, $N_1 =[n_1]_\ell \extoverset{$G_2:(?\ell_2,A(\ell_2))$}  \rightsquigarrow [n_1]_\ell=N_1$. Since $N_2=M$, we know that $N_2 \extoverset{$G_1:(?\ell_1,A(\ell_1))$}  \rightsquigarrow N_1$, as desired.
        \end{enumerate}
        For the next two base cases, we consider $N_1=[m]_\ell$ and either $\ell\notin G_1$ or $m\NOT{\xrightarrow{(?\ell_1,A(\ell_1))}}$. We distinguish two cases:
        \begin{enumerate}
            \item $N_2=[n_2]_\ell$, $m\xrightarrow{(?\ell_2, A(\ell_2))}n_2$, $\ell\in G_2$. If $m\NOT{\xrightarrow{(?\ell_1,A(\ell_1))}}$, \Cref{lemma:singlesender}(2) entails that $n_2\NOT{\xrightarrow{(?\ell_1,A(\ell_1))}}$. Hence, in all cases, $N_2 =[n_2]_\ell \extoverset{$G_1:(?\ell_1,A(\ell_1))$}  \rightsquigarrow [n_2]_\ell=N_2$. Since $N_1=M$, we know that $N_1 \extoverset{$G_2:(?\ell_2,A(\ell_2))$}  \rightsquigarrow N_2$, as desired.
            
            \item $N_2=[m]_\ell$ and either $\ell\notin G_2$ or $m\NOT{\xrightarrow{(?\ell_2,A(\ell_2))}}$. Hence, $M=N_2=N_1$, and we are done immediately.
        \end{enumerate}

    For the inductive step, we let $M=M_1\op M_2$, $N_1=O_1\op O_2$, $M_1  \extoverset{$G_1:(?\ell_1,A(\ell_1))$}  \rightsquigarrow O_1$ and $M_2  \extoverset{$G_1:(?\ell_1,A(\ell_1))$}  \rightsquigarrow O_2$. This implies that $N_2=O_3\op O_4$, with $M_1  \extoverset{$G_2:(?\ell_2,A(\ell_2))$}  \rightsquigarrow O_3$ and $M_2  \extoverset{$G_2:(?\ell_2,A(\ell_2))$}  \rightsquigarrow O_4$. We apply the induction hypothesis to $M_1  \extoverset{$G_1:(?\ell_1,A(\ell_1))$}  \rightsquigarrow O_1$ to obtain two cases:
    \begin{enumerate}
    \item $O_1  \extoverset{$G_2:(?\ell_2,A(\ell_2))$}  \rightsquigarrow Q_1$ and $O_3  \extoverset{$G_1:(?\ell_1,A(\ell_1))$}  \rightsquigarrow Q_1$. We apply the induction hypothesis to $M_2  \extoverset{$G_1:(?\ell_1,A(\ell_1))$}  \rightsquigarrow O_2$ to obtain two more cases:
    \begin{enumerate}
    \item $O_2  \extoverset{$G_2:(?\ell_2,A(\ell_2))$}  \rightsquigarrow Q_2$ and $O_4  \extoverset{$G_1:(?\ell_1,A(\ell_1))$}  \rightsquigarrow Q_2$.
    Hence, $O_1\op O_2  \extoverset{$G_2:(?\ell_2,A(\ell_2))$}  \rightsquigarrow Q_1\op Q_2$ and $O_3\op O_4  \extoverset{$G_1:(?\ell_1,A(\ell_1))$}  \rightsquigarrow Q_1\op Q_2$.
    \item $O_2=O_4$. We apply \Cref{lemma:receiveonce} to $M_2 \extoverset{$G_2:(?\ell_2,A(\ell_2))$}\rightsquigarrow O_4 $ to conclude that $O_4 \extoverset{$G_2:(?\ell_2,A(\ell_2))$}\rightsquigarrow O_4 $. 
    Being $O_2=O_4$, we obtain that $O_1\op O_2 = O_1\op O_4 \extoverset{$G_2:(?\ell_2,A(\ell_2))$}  \rightsquigarrow Q_1\op O_4$ and $O_3\op O_4  \extoverset{$G_1:(?\ell_1,A(\ell_1))$}  \rightsquigarrow Q_1\op O_4$.
    \end{enumerate}
    \item $O_1=O_3$. We apply the induction hypothesis to $M_2  \extoverset{$G_1:(?\ell_1,A(\ell_1))$}  \rightsquigarrow O_2$ to obtain two more cases:
    \begin{enumerate}
    \item $O_2  \extoverset{$G_2:(?\ell_2,A(\ell_2))$}  \rightsquigarrow Q_2$ and $O_4  \extoverset{$G_1:(?\ell_1,A(\ell_1))$}  \rightsquigarrow Q_2$.
    We apply \Cref{lemma:receiveonce} to $M_1 \extoverset{$G_2:(?\ell_2,A(\ell_2))$}\rightsquigarrow O_3 $ to conclude that $O_3 \extoverset{$G_2:(?\ell_2,A(\ell_2))$}\rightsquigarrow O_3 $. 
Being $O_1=O_3$, we have that $O_1\op O_2 = O_3\op O_2 \extoverset{$G_2:(?\ell_2,A(\ell_2))$}  \rightsquigarrow O_3\op Q_2$ and $O_3\op O_4  \extoverset{$G_1:(?\ell_1,A(\ell_1))$}  \rightsquigarrow O_3\op Q_2$.
    \item $O_2=O_4$. Then we immediately have $N_1=N_2$.\qedhere
    \end{enumerate}
    \end{enumerate}
    \end{proof}
    
    \begin{lemma}\label{lemma:uniquereceive}
        Let $M$ be action-derived and such that $M  \extoverset{$G:(?\ell,\gamma)$}  \rightsquigarrow N_1$ and $M  \extoverset{$G:(?\ell,\gamma)$}  \rightsquigarrow N_2$. Then, $N_1=N_2$.
    \end{lemma}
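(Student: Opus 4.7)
The plan is to proceed by induction on the derivation of $M \extoverset{$G:(?\ell,\gamma)$} \rightsquigarrow N_1$, using Lemma \ref{lemma:uniqueness} to handle the atomic receive step at a single location.

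For the base case $M = [m]_{\ell'}$, I would split on whether $\ell' \in G$ and whether $m$ can perform $(?\ell,\gamma)$. Looking at the three relevant rules in \Cref{tab:communicating_operational_semantics_communicating_monitors} for a single localised monitor, observe that they are mutually exclusive in the premise they require:
\begin{itemize}
\item the first rule requires both $\ell' \in G$ and $m \xrightarrow{(?\ell,\gamma)} m'$;
\item the second requires $m \NOT{\xrightarrow{(?\ell,\gamma)}}$;
\item the third requires $\ell' \notin G$.
\end{itemize}
Hence, if $\ell' \notin G$ or $m \NOT{\xrightarrow{(?\ell,\gamma)}}$, the only possible outcome is $N_1 = N_2 = [m]_{\ell'}$. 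Otherwise $\ell' \in G$ and $m \xrightarrow{(?\ell,\gamma)}$ hold, so both derivations must use the first rule; then $M$ being action-derived implies (via \Cref{lemma:A-derived-iff-relevant}) that $m$ is a relevant local monitor, so \Cref{lemma:uniqueness} ensures that the target of every $(?\ell,\gamma)$-transition from $m$ is unique, giving $N_1 = N_2$.

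For the inductive case $M = M_1 \op M_2$, both derivations must decompose using the third rule of \Cref{tab:communicating_operational_semantics_communicating_monitors} for composed monitors, so $N_1 = N_1' \op N_1''$ and $N_2 = N_2' \op N_2''$ with $M_i \extoverset{$G:(?\ell,\gamma)$}\rightsquigarrow N_j^{(i)}$ for $i \in \{1,2\}$ and $j \in \{1,2\}$. Because $M$ is action-derived, each $M_i$ inherits this property (every $[m']_{\ell'} \in M_i$ lies in $M$, so the witness $\psi,\sigma,m''$ required by \Cref{def:A-derived} is available). Then the induction hypothesis yields $N_1' = N_2'$ and $N_1'' = N_2''$, hence $N_1 = N_2$.

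The main obstacle is really just a bookkeeping one: ensuring the case analysis in the base case exhaustively covers the three single-location rules, and checking that action-derivedness propagates to the subterms $M_1, M_2$ in the inductive step so that \Cref{lemma:uniqueness} is applicable at the leaves. Once that is clear, the argument is a straightforward structural induction with no further subtleties.
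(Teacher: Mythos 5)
Your proof is correct and follows essentially the same route as the paper's: structural induction on the derivation, using the mutual exclusivity of the three single-location rules together with \Cref{lemma:A-derived-iff-relevant} and \Cref{lemma:uniqueness} in the base case, and applying the induction hypothesis componentwise in the $M_1 \op M_2$ case. The only cosmetic difference is that you organise the base case by first deciding which rule must have fired, whereas the paper enumerates the rule used for $N_1$ and then excludes incompatible choices for $N_2$; the content is identical.
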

    \ifarxiversion
    \begin{proof}
        We proceed by induction on $M  \extoverset{$G:(?\ell,\gamma)$}  \rightsquigarrow N_1$. In the first base case we consider $M=[m]_{\ell'}$, $N_1=[n_1]_{\ell'}$, $\ell'\in G$, and $m\xrightarrow{(?\ell,\gamma)}n_1$. We distinguish two cases:
        \begin{enumerate}
            \item $N_2=[n_2]_{\ell'}$, $m\xrightarrow{(?\ell, \gamma)}n_2$. We apply \Cref{lemma:uniqueness} ($M$ is action-derived) to conclude that $n_2=n_1$, and thus $N_1=N_2$. 
            
            \item $N_2=[m]_{\ell'}$ and either $m\NOT{\xrightarrow{(?\ell,\gamma)}}$ or $\ell'\notin G$. This would contradict  $m\xrightarrow{(?\ell,\gamma)}n_1$ and $\ell' \in G$, thus we can exclude this case.
        \end{enumerate}
        In the second and third base case we consider $M=[m]_{\ell'}=N_1$, and either $\ell'\notin G$ or $m\NOT{\xrightarrow{(?\ell,\gamma)}}$. We distinguish two cases:
        \begin{enumerate}
            \item $N_2=[n_2]_{\ell'}$, $m\xrightarrow{(?\ell, \gamma)}n_2$, $\ell'\in G$. This would contradict  $m\NOT{\xrightarrow{(?\ell,\gamma)}}$ or $\ell' \notin G$. Hence we can exclude this case.
            
            \item $N_2=[m]_{\ell'}$ and either $\ell'\notin G$ or $m\NOT{\xrightarrow{(?\ell,\gamma)}}$. Hence, $M=N_2=N_1$, and we are done immediately.
        \end{enumerate}
        
For the inductive case, let $M=M_1\op M_2$, $N_1=O_1\op O_2$, $M_1  \extoverset{$G:(?\ell,\gamma)$}  \rightsquigarrow O_1$ and $M_2  \extoverset{$G:(?\ell,\gamma)$}  \rightsquigarrow O_2$. This implies that $N_2=O_3\op O_4$, with $M_1  \extoverset{$G:(?\ell,\gamma)$}  \rightsquigarrow O_3$ and $M_2  \extoverset{$G:(?\ell,\gamma)$}  \rightsquigarrow O_4$. We apply the induction hypothesis twice to obtain that $O_1=O_3$ and $O_2=O_4$, and thus $N_1=N_2$.
        \end{proof}
    \fi
    
        \begin{lemma}\label{lemma:commutereceivesend}
            Let $M$ be action-derived and such that $M  \xrightarrow{\ell_1:(!G_1,\gamma_1)} N_1$ and 
            $M  \extoverset{$G_2:(?\ell_2,A(\ell_2))$}  \rightsquigarrow N_2$. Then, $N_1  \extoverset{$G_2:(?\ell_2,A(\ell_2))$}  \rightsquigarrow N_3$ and $N_2  \xrightarrow{\ell_1:(!G_1,\gamma_1)} N_3$, for some $N_3$.
        \end{lemma}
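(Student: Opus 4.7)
The plan is to proceed by induction on the structure of $M$, which, being action-derived, is a relevant communicating monitor by Lemma~\ref{lemma:A-derived-iff-relevant}. The proof mirrors the style of Lemma~\ref{lemma:commutereceive}, combined with the already-established commutation of send and receive on relevant local monitors (Lemma~\ref{lemma:sendreceiveorder}) and the fact that the sent content must coincide with $A(\ell_1)$ (Lemma~\ref{lemma:sendright}).

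\medskip

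\textbf{Base case.} Suppose $M = [m]_\ell$. The send transition forces $\ell = \ell_1$, $m \xrightarrow{(!G_1,\gamma_1)} n_1$, and $N_1 = [n_1]_{\ell_1}$. The receive transition falls into three subcases according to the three rules defining $\rightsquigarrow$ on singletons. If $\ell_1 \in G_2$ and $m \xrightarrow{(?\ell_2,A(\ell_2))} n_2$, I apply Lemma~\ref{lemma:sendreceiveorder} (using that $m$ is relevant) to obtain $m_3$ with $n_1 \xrightarrow{(?\ell_2,A(\ell_2))} m_3$ and $n_2 \xrightarrow{(!G_1,\gamma_1)} m_3$, and take $N_3 = [m_3]_{\ell_1}$. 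If $\ell_1 \notin G_2$, then $N_2 = M$, and $N_1 \extoverset{$G_2:(?\ell_2,A(\ell_2))$}\rightsquigarrow N_1$ trivially, so $N_3 = N_1$ works. If $m \NOT{\xrightarrow{(?\ell_2,A(\ell_2))}}$, then I invoke Lemma~\ref{lemma:singlesender}(2) on the single-step send $m \boldarrow n_1$ to conclude $n_1 \NOT{\xrightarrow{(?\ell_2,A(\ell_2))}}$, so again $N_1 \rightsquigarrow N_1$ and $N_3 = N_1$.

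\medskip

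\textbf{Inductive step.} Suppose $M = M_1 \op M_2$. The send decomposes (without loss of generality, exchanging $M_1$ and $M_2$ if needed) as $M_1 \xrightarrow{\ell_1:(!G_1,\gamma_1)} N_1'$ and $M_2 \extoverset{$G_1:(?\ell_1, \gamma_1)$}\rightsquigarrow N_1''$ with $N_1 = N_1' \op N_1''$. Analogously, the receive decomposes as $M_i \extoverset{$G_2:(?\ell_2, A(\ell_2))$}\rightsquigarrow N_2^{(i)}$ ($i \in \{1,2\}$), with $N_2 = N_2' \op N_2''$. The induction hypothesis applied to $M_1$ furnishes $N_3'$ such that $N_1' \extoverset{$G_2:(?\ell_2, A(\ell_2))$}\rightsquigarrow N_3'$ and $N_2' \xrightarrow{\ell_1:(!G_1,\gamma_1)} N_3'$.

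\medskip

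For $M_2$, since $M$ is $A$-derived so is $M_2$, and Lemma~\ref{lemma:sendright} yields $\gamma_1 = A(\ell_1)$. I can then apply Lemma~\ref{lemma:commutereceive} to the two receive transitions from $M_2$. In the non-degenerate case I directly obtain $N_3''$ with $N_1'' \extoverset{$G_2:(?\ell_2, A(\ell_2))$}\rightsquigarrow N_3''$ and $N_2'' \extoverset{$G_1:(?\ell_1, \gamma_1)$}\rightsquigarrow N_3''$. In the degenerate case $N_1'' = N_2''$, I set $N_3'' := N_1'' = N_2''$; Lemma~\ref{lemma:receiveonce}, applied to $M_2 \extoverset{$G_2:(?\ell_2, A(\ell_2))$}\rightsquigarrow N_2''$ and to $M_2 \extoverset{$G_1:(?\ell_1, \gamma_1)$}\rightsquigarrow N_1''$ respectively, supplies the two required reflexive receive transitions on $N_3''$. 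Taking $N_3 = N_3' \op N_3''$ and applying the inductive rules for $\rightsquigarrow$ and $\xrightarrow{\ell_1:(!G_1,\gamma_1)}$ on $\op$-contexts closes the induction.

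\medskip

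The main obstacle will be the inductive case: the send from $M_1$ is `absorbed' on the $M_2$ side as an additional receive from $\ell_1$ carrying $\gamma_1$, so on $M_2$ we must simultaneously commute two distinct receives and exhibit one common successor $N_3''$. This is exactly what Lemma~\ref{lemma:commutereceive} delivers, but its degenerate alternative requires the auxiliary application of Lemma~\ref{lemma:receiveonce} to reinterpret the coinciding states as `post-receive' states for both groups. Care is also needed to ensure throughout that all local monitors considered remain relevant (via Lemma~\ref{lemma:stay_relevant}) so that the local-level commutation lemmas apply.
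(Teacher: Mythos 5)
Your proof is correct and follows essentially the same route as the paper's: a base case handled by Lemma~\ref{lemma:sendreceiveorder} and Lemma~\ref{lemma:singlesender}(2), and an inductive case that applies the induction hypothesis to the sending component, uses Lemma~\ref{lemma:sendright} to justify invoking Lemma~\ref{lemma:commutereceive} on the receiving component, and patches the degenerate $N_1''=N_2''$ alternative with Lemma~\ref{lemma:receiveonce}. The only (immaterial) differences are that the paper inducts on the derivation of the send transition rather than on the structure of $M$, and that in your base-case subcases where $N_2=M$ the required transition $N_2 \xrightarrow{\ell_1:(!G_1,\gamma_1)} N_1$ is just the hypothesis, which you leave implicit.
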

        \begin{proof}
        We proceed by induction on $M  \xrightarrow{\ell_1:(!G_1,\gamma_1)} N_1$. In the base case we consider $M=[m]_{\ell_1}$, $N_1=[n_1]_{\ell_1}$ and $m\xrightarrow{(!G_1,\gamma_1)} n_1$. We distinguish two cases based on $M  \extoverset{$G_2:(?\ell_2,A(\ell_2))$}  \rightsquigarrow N_2$:
        \begin{enumerate}
            \item $N_2=[n_2]_{\ell_1}$, $m\xrightarrow{(?\ell_2, A(\ell_2))}n_2$, and $\ell_1\in G_2$. Hence, we can apply \Cref{lemma:sendreceiveorder} to obtain that there exists $m_1$ such that $n_1\xrightarrow{(?\ell_2,A(\ell_2))}m_1$ and $n_2\xrightarrow{(!G_1,\gamma_1)} m_1$. Hence, $N_2\xrightarrow{\ell_1:(!G_1,\gamma_1)} [m_1]_{\ell_1}$ and $N_1  \extoverset{$G_2:(?\ell_2,A(\ell_2))$}  \rightsquigarrow [m_1]_{\ell_1}$.
            
            \item $N_2=[m]_{\ell_1}$ and either $\ell_1\notin G_2$ or $m\NOT{\xrightarrow{(?\ell_2,A(\ell_2))}}$. 
 The latter case implies that $n_1\NOT{\xrightarrow{(?\ell_2,A(\ell_2))}}$, by applying \Cref{lemma:singlesender} to $m\xrightarrow{(!G_1,\gamma_1)} n_1$. As $M=N_2$, in both cases we have $N_2\xrightarrow{\ell_1:(!G_1,\gamma_1)} N_1$ and $N_1  \extoverset{$G_2:(?\ell_2,A(\ell_2))$}  \rightsquigarrow N_1$, as desired.
        \end{enumerate}
        
        For the inductive case, we consider $M=M_1\op M_2$, $N_1=O_1\op O_2$, 
        $M_1\xrightarrow{\ell_1:(!G_1,\gamma_1)}O_1$ and $M_2 \extoverset{$G_1:(?\ell_1,\gamma_1)$}  \rightsquigarrow O_2$. We deduce that $N_2=O_3\op O_4$, with $M_1  \extoverset{$G_2:(?\ell_2,A(\ell_2))$}  \rightsquigarrow O_3$ and $M_2  \extoverset{$G_2:(?\ell_2,A(\ell_2))$}  \rightsquigarrow O_4$. We use the induction hypothesis to conclude that there exists $Q$ such that $O_1  \extoverset{$G_2:(?\ell_2,A(\ell_2))$}  \rightsquigarrow Q$ and $O_3  \xrightarrow{\ell_1:(!G_1,\gamma_1)} Q$. Since $M\xrightarrow{\ell_1:(!G_1,\gamma_1)} N_1$, we know that $\gamma_1=A(\ell_1)$ (\Cref{lemma:sendright}). Hence, we can apply \Cref{lemma:commutereceive} on $M_2 \extoverset{$G_1:(?\ell_1,\gamma_1)$}  \rightsquigarrow O_2$ and $M_2  \extoverset{$G_2:(?\ell_2,A(\ell_2))$}  \rightsquigarrow O_4$ and have two cases:
        \begin{enumerate}
            \item $O_2  \extoverset{$G_2:(?\ell_2,A(\ell_2))$}  \rightsquigarrow Q'$ and $O_4 \extoverset{$G_1:(?\ell_1,\gamma_1)$}  \rightsquigarrow Q'$. Hence, $N_1=O_1\op O_2\extoverset{$G_2:(?\ell_2,A(\ell_2))$}  \rightsquigarrow Q\op Q'$ and $N_2=O_3\op O_4  \xrightarrow{\ell_1:(!G_1,\gamma_1)} Q\op Q'$. 
            
            \item $O_2=O_4$. We apply \Cref{lemma:receiveonce} to $M_2 \extoverset{$G_1:(?\ell_1,\gamma_1)$}  \rightsquigarrow O_2$ 
to conclude that $O_2 \extoverset{$G_1:(?\ell_1,\gamma_1)$}  \rightsquigarrow O_2$.
Being $O_2=O_4$, we have
$N_1=O_1\op O_2=O_1\op O_4\extoverset{$G_2:(?\ell_2,A(\ell_2))$}  \rightsquigarrow Q\op O_4$ and $N_2=O_3\op O_4 \xrightarrow{\ell_1:(!G_1,\gamma_1)} Q\op O_4$.
\qedhere
        \end{enumerate}
        \end{proof}

    \begin{lemma}\label{lemma:commutecommunication}
        Let $M$ be action-derived and such that $M\xrightarrow{c_1}M_1$ and $M\xrightarrow{c_2}M_2$. Then, either 
        $c_1=c_2$ and $M_1=M_2$, or $M_1\xrightarrow{c_2}N$ and $M_2\xrightarrow{c_1}N$, for some $N$.
    \end{lemma}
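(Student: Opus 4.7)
The plan is induction on $M$ (equivalently, on the derivation of $M \xrightarrow{c_1} M_1$). Because $\rightarrow$ on $\DMon$ only carries send labels, we may write $c_i = \ell_i:(!G_i,\gamma_i)$; by Lemma \ref{lemma:sendright} and the $A$-derivation of $M$, $\gamma_i = A(\ell_i)$. The base case is $M = [m]_\ell$, in which both sends must originate at the same location, so $\ell_1 = \ell_2 = \ell$ and the underlying local monitor satisfies $m \xrightarrow{(!G_1,\gamma_1)} m_1$ and $m \xrightarrow{(!G_2,\gamma_2)} m_2$. If $(G_1,\gamma_1) = (G_2,\gamma_2)$, Lemma \ref{lemma:uniquenesstwo} gives either $m_1 = m_2$ (hence $M_1 = M_2$) or a common successor $m'$ reached by a further send of $(!G_1,\gamma_1)$, fulfilling the conclusion with $c_1 = c_2$. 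Otherwise Lemma \ref{lemma:sendreceiveordertwo} yields the common successor directly, using that $m$ is relevant by Lemma \ref{lemma:A-derived-iff-relevant}.

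The inductive case splits along $M = M_a \op M_b$. Up to the $\op$-symmetry of the conclusion, two configurations need treatment. In configuration (i), both sends originate in $M_a$ (so $M_b$ performs the corresponding multicast receives): write $M_i = M_a^i \op M_b^i$, where $M_a \xrightarrow{c_i} M_a^i$ and $M_b \extoverset{$G_i:(?\ell_i,\gamma_i)$}\rightsquigarrow M_b^i$. The induction hypothesis on $M_a$ either yields $c_1 = c_2$ and $M_a^1 = M_a^2$ --- from which Lemma \ref{lemma:uniquereceive} forces $M_b^1 = M_b^2$ and hence $M_1 = M_2$ --- or a common $M_a^3$ with $M_a^1 \xrightarrow{c_2} M_a^3$ and $M_a^2 \xrightarrow{c_1} M_a^3$; in the latter subcase Lemma \ref{lemma:commutereceive} applied to $M_b$ provides a matching $M_b$-side, possibly after invoking Lemma \ref{lemma:receiveonce} to pump an idempotent receive through $M_b^1$ when the receives collapse to $M_b^1 = M_b^2$. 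In configuration (ii), $c_1$ originates in $M_a$ and $c_2$ in $M_b$; then two applications of Lemma \ref{lemma:commutereceivesend}, one to each side of $\op$, deliver $M_a^3$ and $M_b^3$ such that $M_1 \xrightarrow{c_2} M_a^3 \op M_b^3$ and $M_2 \xrightarrow{c_1} M_a^3 \op M_b^3$.

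The main obstacle is not any single step but the meticulous pairing, on both sides of $\op$, of the appropriate confluence lemma: one must choose among Lemmas \ref{lemma:uniquenesstwo}, \ref{lemma:uniquereceive}, \ref{lemma:commutereceive}, \ref{lemma:commutereceivesend}, and \ref{lemma:receiveonce} according to two independent binary splits --- whether the two sends collapse to a single transition, and whether the induced receives collapse to a single state --- while ensuring that the target $N$ decomposes as $N_a \op N_b$ in a way matched simultaneously by $M_1 \xrightarrow{c_2} N$ and $M_2 \xrightarrow{c_1} N$. A secondary subtlety is that action-derivation must be propagated to the sub-components of $M$ so that the underlying local-level confluence lemmas remain applicable; this is handled uniformly through Lemma \ref{lemma:A-derived-iff-relevant} and the closure of relevance under transitions (Lemma \ref{lemma:stay_relevant}). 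Once the right combination is applied in each sub-case, the common successor assembles with no additional work.
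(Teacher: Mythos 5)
Your proof is correct and follows essentially the same route as the paper's: induction on the derivation of $M\xrightarrow{c_1}M_1$, with the base case split via Lemmas \ref{lemma:sendreceiveordertwo} and \ref{lemma:uniquenesstwo}, and the inductive case split into same-side versus opposite-side sends, discharged with Lemmas \ref{lemma:commutereceive}, \ref{lemma:uniquereceive}, \ref{lemma:receiveonce}, and \ref{lemma:commutereceivesend} exactly as in the paper. No gaps.
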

    \begin{proof}
    We proceed by induction on $M\xrightarrow{c_1} M_1$. Let $c_1=\ell_1:(!G_1,\gamma_1)$ and $c_2=\ell_2:(!G_2,\gamma_2)$.
In the base case we consider $M=[m]_{\ell_1}$, $M_1=[m']_{\ell_1}$ and $m\xrightarrow{(!G_1,\gamma_1)}m'$. Thus we know that $M_2=[m'']_{\ell_1}$, $\ell_1=\ell_2$ and $m\xrightarrow{(!G_2,\gamma_2)}m''$. We distinguish two cases:
    \begin{enumerate}
    \item $G_1\neq G_2$ or $\gamma_1\neq \gamma_2$. Then we apply \Cref{lemma:sendreceiveordertwo} to obtain that $m'\xrightarrow{(!G_2,\gamma_2)}n$ and $m'' \xrightarrow{(!G_1,\gamma_1)}n$. Hence, $M_1\xrightarrow{c_2}[n]_{\ell_1}$ and $M_2\xrightarrow{c_1}[n]_{\ell_1}$.
    \item $G_1=G_2$ and $\gamma_1=\gamma_2$. We apply \Cref{lemma:uniquenesstwo} to obtain that $m'\xrightarrow{(!G_2,\gamma_2)}n$ and $m'' \xrightarrow{(!G_1,\gamma_1)}n$, or that $m'=m''$. In the former case, we conclude like in the first item; in the latter case, we get immediately that $M_1=M_2$ and $c_1=c_2$.
    \end{enumerate}
    
    For the inductive case, let $M=N_1\op N_2$, $M_1=N_1'\op N_2'$, 
    $N_1\xrightarrow{c_1}N_1'$ and $N_2 \extoverset{$G_1:(?\ell_1,\gamma_1)$}  \rightsquigarrow N_2'$. We distinguish two cases.
    \begin{enumerate}
    \item $M_2=O_1\op O_2$, with
    $N_1\xrightarrow{c_2}O_1$ and $N_2 \extoverset{$G_2:(?\ell_2,\gamma_2)$}  \rightsquigarrow O_2$. We obtain two further cases from the induction hypothesis applied to $N_1$:
    \begin{enumerate}
    \item $N_1'\xrightarrow{c_2} Q $ and $O_1\xrightarrow{c_1} Q$. Since $M\xrightarrow{c_1} M_1$ and $M\xrightarrow{c_2}M_2$, we obtain via \Cref{lemma:sendright} that $\gamma_1=A(\ell_1)$ and $\gamma_2=A(\ell_2)$. Thus we can apply \Cref{lemma:commutereceive} to $N_2 \extoverset{$G_1:(?\ell_1,\gamma_1)$}  \rightsquigarrow N_2'$ and $N_2 \extoverset{$G_2:(?\ell_2,\gamma_2)$}  \rightsquigarrow O_2$ to obtain two further cases:
    \begin{enumerate}
        \item $N_2'  \extoverset{$G_2:(?\ell_2,\gamma_2)$}  \rightsquigarrow Q'$ and $O_2  \extoverset{$G_1:(?\ell_1,\gamma_1)$}  \rightsquigarrow Q'$. Hence $N_1'\op N_2'  \xrightarrow{c_2} Q\op Q'$ and $O_1\op O_2 \xrightarrow{c_1} Q\op Q'$, as desired.
        
        \item $N_2'=O_2$. By \Cref{lemma:receiveonce}, $O_2  \extoverset{$G_2:(?\ell_2,\gamma_2)$}  \rightsquigarrow O_2$;
hence, $N_1'\op N_2'=N_1'\op O_2  \xrightarrow{c_2} Q\op O_2$ and  $O_1\op O_2 \xrightarrow{c_1} Q\op O_2$. 
    \end{enumerate}
    \item $N_1'=O_1$ and $c_1=c_2$. Since $c_1=c_2$, we obtain from \Cref{lemma:uniquereceive} that $N_2'=O_2$. Hence, $M_1=M_2$ and we can conclude. 
    \end{enumerate}
    \item $M_2=O_1\op O_2$, with
    $N_1\extoverset{$G_2:(?\ell_2,\gamma_2)$}  \rightsquigarrow O_1$ and $N_2 \xrightarrow{c_2} O_2$. By \Cref{lemma:commutereceivesend} (applied twice, to $N_1$ and $N_2$, which we can do because, like before, $\gamma_1=A(\ell_1)$ and $\gamma_2=A(\ell_2)$), we obtain $N_1' \extoverset{$G_2:(?\ell_2,\gamma_2)$}  \rightsquigarrow Q$ and $O_1\xrightarrow{c_1} Q$, and $N_2' \xrightarrow{c_2} Q'$ and $O_2 \extoverset{$G_1:(?\ell_1,\gamma_1)$}  \rightsquigarrow Q'$. Hence, $N_1'\op N_2'  \xrightarrow{c_2} Q\op Q'$ and $O_1\op O_2 \xrightarrow{c_1} Q\op Q'$, which concludes the proof.
    \qedhere
    \end{enumerate}
    \end{proof}

\begin{lemma}\label{lemma:relativestep}
    If $\Mone{\varphi}\xrightarrow{A} M\boldarrow M' \boldarrow N\NOT{\xrightarrow{c}}$ for all $c$ and $M'\xrightarrow{c} M''$ for some $c$, then $M''\boldarrow N$.
 \end{lemma}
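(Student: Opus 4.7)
My plan is to prove the statement by induction on the length $k$ of the communication derivation $M' \boldarrow N$. Let $M' = P_0 \xrightarrow{c_1} P_1 \xrightarrow{c_2} \cdots \xrightarrow{c_k} P_k = N$.

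For the base case $k=0$, we would have $M' = N$, so that $N \xrightarrow{c} M''$, contradicting the hypothesis that $N \NOT{\xrightarrow{c'}}$ for all $c'$. Hence this case is vacuous.

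For the inductive step with $k \geq 1$, write $M' \xrightarrow{c_1} P_1$ and $P_1 \boldarrow N$ (with the latter path of length $k-1$). Since $\Mone{\varphi} \xrightarrow{A} M \boldarrow M'$, the monitor $M'$ is $A$-derived, so Lemma \ref{lemma:commutecommunication} applies to the two transitions $M' \xrightarrow{c_1} P_1$ and $M' \xrightarrow{c} M''$. This yields two cases:
\begin{itemize}
    \item If $c_1 = c$ and $P_1 = M''$, then $M'' = P_1 \boldarrow N$ immediately.
    \item Otherwise, there exists $Q$ such that $P_1 \xrightarrow{c} Q$ and $M'' \xrightarrow{c_1} Q$. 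To invoke the inductive hypothesis on $P_1$ (with the shorter path $P_1 \boldarrow N$ of length $k-1$ and the side step $P_1 \xrightarrow{c} Q$), I observe that $P_1$ is itself $A$-derived, since $\Mone{\varphi} \xrightarrow{A} M \boldarrow M' \xrightarrow{c_1} P_1$ (action-derivedness is preserved by communication transitions, as each local component $[m]_\ell$ in $P_1$ arises from some $[m_0]_\ell$ in $M$ via a sequence of send/receive actions). The inductive hypothesis then gives $Q \boldarrow N$, and composing with $M'' \xrightarrow{c_1} Q$ yields $M'' \boldarrow N$, as required.
\end{itemize}

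The main technical point is the appeal to Lemma \ref{lemma:commutecommunication}: this is the confluence property that allows a ``side'' communication step to be merged back into the main communication path, producing the standard diamond. The proof is essentially a Newman-style argument, and the inductive hypothesis is used precisely to drive the diamond closure through each remaining step of the path to $N$. The hypothesis that $N$ cannot communicate is only used to make the base case vacuous, ensuring the induction is meaningful; the action-derivedness, inherited from $\Mone{\varphi}$, is what makes commutativity available at every step.
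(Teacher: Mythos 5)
Your proof is correct and follows essentially the same route as the paper's: induction on the length of $M'\boldarrow N$, with the base case vacuous because $N$ cannot communicate, and the inductive step closing the diamond via the confluence property of Lemma~\ref{lemma:commutecommunication} (including the degenerate case where the two transitions coincide). Your explicit remark that $M'$ and $P_1$ are action-derived, which the paper leaves implicit, is a welcome addition rather than a deviation.
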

 \begin{proof}
     We proceed by induction on the length of $M'\boldarrow N$. The base case is for $M'=N$: as $N$ cannot send messages, we then also have $M'\NOT{\xrightarrow{c}}$, and the desired result trivially holds. 
     In the inductive step, we consider $M'\xrightarrow{c'} M_1\boldarrow N$. Let $c=\ell:(!G,\gamma)$ and $c'=\ell':(!G',\gamma')$; 
     by \Cref{lemma:commutecommunication}, either there exists $N'$ such that $M_1\xrightarrow{c} N'$ and $M''\xrightarrow{c'}N'$, or $M''=M_1$. In the latter case we can conclude immediately, because $M_1\boldarrow N$. In the former case we use the induction hypothesis on $\Mone{\varphi}\xrightarrow{A} M \boldarrow M_1 \boldarrow N$ and $M_1\xrightarrow{c} N'$ to obtain that $N'\boldarrow N$ from which we can conclude because $M'\xrightarrow{c} M''\xrightarrow{c'} N'$. 
 \end{proof}

 \begin{lemma}\label{lemma:itsallrelative}
    If $\Mone{\varphi}\xrightarrow{A} M\boldarrow M' \boldarrow N\NOT{\xrightarrow{c}}$ for all $c$ and $M'\boldarrow  M''$, then $M''\boldarrow N$.
\end{lemma}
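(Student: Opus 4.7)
The plan is to prove the statement by induction on the length (number of communication steps) $k$ of the derivation $M' \boldarrow M''$, using Lemma~\ref{lemma:relativestep} as the one-step ingredient that handles confluence locally.

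For the base case $k=0$ we have $M''=M'$, and the conclusion $M''\boldarrow N$ is immediate from the hypothesis $M'\boldarrow N$ (recall that $\boldarrow$ is reflexive since $h$ may equal $1$ in its definition, giving zero communication steps).

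For the inductive step, decompose $M'\boldarrow M''$ as $M'\xrightarrow{c} M_1 \boldarrow M''$ where the tail has length $k-1$. First I would apply Lemma~\ref{lemma:relativestep} to $\Mone{\varphi}\xrightarrow{A} M \boldarrow M' \boldarrow N$ (with $N$ stable, i.e.\ $N\NOT{\xrightarrow{c'}}$ for all $c'$) together with the single step $M'\xrightarrow{c} M_1$; this yields $M_1 \boldarrow N$. Next, I would invoke the induction hypothesis on the shorter communication path $M_1 \boldarrow M''$. To do so, one has to check its premise: we need $\Mone{\varphi}\xrightarrow{A} M \boldarrow M_1' \boldarrow N$ with $N$ stable, for some $M_1'$ playing the role that $M'$ played originally. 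Concatenating $\Mone{\varphi}\xrightarrow{A} M \boldarrow M' \xrightarrow{c} M_1$ gives $\Mone{\varphi}\xrightarrow{A} M \boldarrow M_1$, and combining this with $M_1 \boldarrow N$ established above (and with $M_1\boldarrow M''$) makes the induction hypothesis directly applicable with $M_1$ in the role of $M'$. The hypothesis delivers $M''\boldarrow N$, as required.

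The main obstacle, which has already been absorbed into Lemma~\ref{lemma:relativestep} via the chain of commutativity results (Lemmas~\ref{lemma:commutecommunication}--\ref{lemma:commutereceivesend}), is the genuine confluence of a send step off the ``principal'' path $M'\boldarrow N$: without it, a detour $M'\xrightarrow{c} M_1$ could in principle lead somewhere incompatible with $N$. So in the current lemma the only real work is the careful bookkeeping of the induction hypothesis's premise, ensuring that the prefix $\Mone{\varphi}\xrightarrow{A} M \boldarrow M_1$ is still a single action step followed by a communication sequence (which it is, since all of $M\boldarrow M' \xrightarrow{c} M_1$ consists of communication steps after the initial $\xrightarrow{A}$), so that Lemma~\ref{lemma:relativestep} remains applicable at each recursive invocation.
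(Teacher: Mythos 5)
Your proof is correct and follows exactly the same route as the paper: induction on the length of $M'\boldarrow M''$, with Lemma~\ref{lemma:relativestep} supplying $M_1\boldarrow N$ after peeling off the first communication step, and the induction hypothesis re-instantiated with $M_1$ in the role of $M'$ via the extended prefix $\Mone{\varphi}\xrightarrow{A} M\boldarrow M_1\boldarrow N$. Your additional bookkeeping remarks about why the premise of the induction hypothesis remains satisfied are accurate and merely make explicit what the paper leaves implicit.
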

\begin{proof}
    We proceed by induction on the length of $M'\boldarrow M''$. If the length is $0$, then $M'=M''$ and we are done immediately. Otherwise we consider $M'\xrightarrow{c}M_1\boldarrow M''$ with $c=\ell:(!G,\gamma)$;
    by \Cref{lemma:relativestep}, $M_1\boldarrow N$. From the induction hypothesis on $\Mone{\varphi}\xrightarrow{A} M\boldarrow M_1 \boldarrow N$ and $M_1\boldarrow  M''$,
    we obtain that $M'' \boldarrow N$ and conclude.
\end{proof}


The next three lemmas are used to prove \Cref{lemma:communicationexists}, \Cref{lemma:unioncom}, \Cref{lemma:psijoincom} and \Cref{lemma:crux}. They describe `correct' communication paths of local monitors, where `correct' means that the local monitors only take receiving transitions that correspond to messages containing the last action of the sender. \Cref{lemma:side} shows that in correct communication paths, the order of communication does not matter for the final destination (when no sending and receiving can take place).  \Cref{lemma:sidegeneral} is a direct generalization of \Cref{lemma:side}, and is the equivalent of \Cref{lemma:itsallrelative} on the level of local monitors. \Cref{lemma:merge} is used to show that those correct communication paths can be combined when multiple local monitors are put in parallel, which is used in \Cref{lemma:communicationexists} and \Cref{lemma:crux}.

We first define formally what is such a correct communication path.

       \begin{definition}
    Let $m\boldarrowt m'$ 
    denote the existence of an integer $h>0$, of 
    $h$ monitors $m_1 , \ldots, m_h $ and of $h-1$ communication actions $c_1,\ldots, c_{h-1}$
    such that $m_1 = m$, $m_h = m'$,
    $m_i \xrightarrow{c_i} m_{i+1} $ for every $i = 1,\ldots,h-1$ and, if $c_i=(?\ell,\gamma)$, 
    then 
    $\gamma=A(\ell)$.
\end{definition}
    Hence, $\boldarrowt$ is the same as  $\boldarrow$ but with the extra constraint on actions of kind $(?\ell,\gamma)$.
    
    \begin{lemma}\label{lemma:side}
        Let $m$ be a relevant local monitor such that $ m\boldarrowt m'\NOT{\xrightarrow{c}}$, for all $c$, and either $m\xrightarrow{(?\ell',A(\ell'))}n$, for some $\ell'\in\Loc$, or $m\xrightarrow{(!G,\gamma)}n$, for some $G\subseteq \Loc$. Then $n\boldarrowt m'$.
    \end{lemma}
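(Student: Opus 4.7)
I will proceed by induction on the length of the path $m \boldarrowt m'$. The base case (length $0$) is vacuous: it would force $m = m'$, which contradicts the hypothesis that $m$ can perform a communication $c$ (either the given receive or send) while $m' \NOT{\xrightarrow{c'}}$ for every $c'$. So the path has length at least one, and we may write it as $m \xrightarrow{c'} m_1 \boldarrowt m'$, where $m_1$ is a relevant local monitor by \Cref{lemma:stay_relevant}.

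The strategy is to show, by a case analysis on the shapes of $c$ (the given single step to $n$) and $c'$ (the first step of the path), that either $c = c'$ and $n = m_1$ (so we conclude immediately by $m_1 \boldarrowt m'$), or there exists $m_3$ such that $n \xrightarrow{c'} m_3$ and $m_1 \xrightarrow{c} m_3$, after which the inductive hypothesis applied to the shorter path $m_1 \boldarrowt m'$ and the step $m_1 \xrightarrow{c} m_3$ yields $m_3 \boldarrowt m'$; then $n \xrightarrow{c'} m_3 \boldarrowt m'$ gives $n \boldarrowt m'$, as required.

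The case analysis splits into four subcases according to whether each of $c, c'$ is a send or a receive. When both are sends we invoke \Cref{lemma:sendreceiveordertwo} (when the labels differ) or \Cref{lemma:uniquenesstwo} (when they coincide); when one is a send and the other a receive we use \Cref{lemma:sendreceiveorder}; when both are receives we apply \Cref{lemma:commutativereceiving} (when the senders differ) or \Cref{lemma:uniqueness} (when the senders coincide, in which case the $\boldarrowt$ constraint forces the messages to agree since both equal $A$ applied to the common sender, giving $n = m_1$). In all cases a common successor $m_3$ is produced (or the two states collapse), and the labels are preserved, so the swapped step from $m_1$ to $m_3$ still satisfies the $\boldarrowt$ discipline: any receive label $(?\ell'', A(\ell''))$ remains unchanged, while sends are always admissible.

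The main obstacle is bookkeeping across the case analysis, in particular verifying that the $\boldarrowt$ constraint on receive actions is preserved when steps are swapped. This is where it is crucial that the commutativity lemmas preserve labels: if $c$ is a receive $(?\ell', A(\ell'))$ matching the hypothesis, the step $m_1 \xrightarrow{c} m_3$ obtained by the swap carries exactly the same label, and hence remains a legal $\boldarrowt$ step for use in the inductive hypothesis. Once this is checked, the induction closes cleanly.
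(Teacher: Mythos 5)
Your proof is correct and follows essentially the same route as the paper's: induction on the length of $m \boldarrowt m'$, a vacuous base case (since $m'$ cannot communicate while $m$ can), and a four-way case split on the shapes of the two competing first steps resolved by the same commutation/uniqueness lemmas (\Cref{lemma:sendreceiveorder}, \Cref{lemma:sendreceiveordertwo}, \Cref{lemma:commutativereceiving}, \Cref{lemma:uniqueness}, \Cref{lemma:uniquenesstwo}). Your explicit remark that the swapped step carries the same label and therefore remains a legal $\boldarrowt$ step for the inductive hypothesis is exactly the bookkeeping the paper's argument relies on.
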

    \begin{proof}
        Let $m\boldarrowt m'$ in $h$ steps;
    we proceed by induction on $h$.
    In the base case, $m=m'$ and, as $m'$ cannot communicate, the premise of the lemma is not satisfied. 
    Now suppose that $m\xrightarrow{c_1}m_1\boldarrowt m'$, with $c_1=(!G,\gamma)$ or $c_1=(?\ell'',A(\ell''))$ for some $\ell''\in\Loc$. 
    First, let us consider the case in which $m\xrightarrow{(?\ell',A(\ell'))}n$, for some $\ell'\in\Loc$;
    depending on the form of $c_1$, we have two subcases:
    \begin{enumerate}
    \item $c_1=(!G,\gamma)$. We use \Cref{lemma:sendreceiveorder} to obtain that $m_1\xrightarrow{(?\ell',A(\ell'))} n'$ and $n\xrightarrow{c_1} n'$. Hence, $m_1\boldarrowt m'$ in $h-1$ steps, and $m_1\xrightarrow{(?\ell',A(\ell'))} n'$. We apply the induction hypothesis to obtain that $n' \boldarrowt m'$, from which we can conclude because $n\xrightarrow{c_1} n'$.
    
    \item $c_1=(?\ell'',A(\ell''))$ for some $\ell''\in \Loc$. Via \Cref{lemma:commutativereceiving}, we get that either $m_1\xrightarrow{(?\ell',A(\ell'))} n'$ and $n\xrightarrow{c_1} n'$ (in which case we proceed as before), or $\ell'=\ell''$. In the latter case, we use \Cref{lemma:uniqueness} to derive that $m_1=n$ and conclude, because $m_1\boldarrowt m'$.
    \end{enumerate}
    Then, we consider the case in which $m\xrightarrow{(!G,\gamma)}n$, for some $G\subseteq \Loc$, and we again distinguish two subcases:
    \begin{enumerate}
    \item $c_1=(!G',\gamma')$. 
    If $G \neq G'$ or $\gamma \neq \gamma'$, then 
    we use \Cref{lemma:sendreceiveordertwo} to obtain that $m_1\xrightarrow{(!G,\gamma)} n'$ and $n\xrightarrow{c_1} n'$ and proceed like in the first subcase above.
If $G=G'$ and $\gamma=\gamma'$, we obtain from \Cref{lemma:uniquenesstwo} that $m_1=n$ and proceed like the second subcase above.
    
    \item $c_1=(?\ell'',A(\ell''))$ for some $\ell''\in \Loc$. Via \Cref{lemma:sendreceiveorder}, we get that $m_1\xrightarrow{(!G,\gamma)} n'$ and $n\xrightarrow{c_1} n'$; we then proceed as in the first subcase above.
    \qedhere 
    \end{enumerate} 
    \end{proof}
    
    We can extend the previous lemma to a more general statement:
    \begin{lemma}\label{lemma:sidegeneral}
        Let $m$ be a relevant local monitor such that $ m\boldarrowt m'\NOT{\xrightarrow{c}}$, for all $c$, and $m\boldarrowt n$. Then $n\boldarrowt m'$.
    \end{lemma}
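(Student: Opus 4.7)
The plan is to prove Lemma~\ref{lemma:sidegeneral} by induction on the length of the derivation $m \boldarrowt n$, using Lemma~\ref{lemma:side} as the single-step confluence building block.

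For the base case, where $m \boldarrowt n$ has length $0$, we have $m = n$, and the conclusion $n \boldarrowt m'$ holds directly from the hypothesis $m \boldarrowt m'$.

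For the inductive step, suppose $m \xrightarrow{c_1} m_1 \boldarrowt n$, where $c_1$ is either of the form $(!G,\gamma)$ or $(?\ell',A(\ell'))$ (the latter because $\boldarrowt$ restricts received messages to the sender's last action). I would first apply Lemma~\ref{lemma:side} to the hypothesis $m \boldarrowt m' \NOT{\xrightarrow{c}}$ together with the single step $m \xrightarrow{c_1} m_1$, yielding $m_1 \boldarrowt m'$. To invoke the induction hypothesis on $m_1 \boldarrowt n$, I also need $m_1$ to be a relevant local monitor; this follows from Lemma~\ref{lemma:stay_relevant} applied to $m \xrightarrow{c_1} m_1$ (since $m$ is relevant by assumption). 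Now the induction hypothesis, applied with $m_1$ in place of $m$ and the shorter derivation $m_1 \boldarrowt n$, gives $n \boldarrowt m'$, as required.

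The only subtle point is ensuring Lemma~\ref{lemma:side} is applicable, which requires the single step $m \xrightarrow{c_1} m_1$ to be either a send or a receive of the sender's last action. Both are allowed shapes of transitions in $\boldarrowt$, so the lemma applies in either case without splitting into subcases beyond what Lemma~\ref{lemma:side} already handles internally. Consequently, the induction goes through without further complication, and no new machinery is needed beyond what has already been developed.
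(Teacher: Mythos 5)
Your proof is correct and follows essentially the same route as the paper: induction on the length of $m \boldarrowt n$, using Lemma~\ref{lemma:side} to push the first step of that derivation past the whole path to $m'$, then invoking the induction hypothesis on the shorter remainder. Your explicit appeal to Lemma~\ref{lemma:stay_relevant} to justify that the intermediate monitor is still relevant is a detail the paper leaves implicit, but it is the right justification.
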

\ifarxiversion
    \begin{proof}
        Let $m\boldarrowt n$ in $h$ steps;
        we prove the lemma by induction on $h$.
        In the base case, $n=m$ and we are done immediately. For the inductive step, we consider $m\xrightarrow{c_1} n' \boldarrowt n$, for $c_1=(!G,\gamma)$ or $c_1=(?\ell', A(\ell'))$, where $n' \boldarrowt n$ in $h-1$ steps. By \Cref{lemma:side}, we get $n' \boldarrowt m'$; this, together with $n' \boldarrowt n$, allows us to apply the induction hypothesis and conclude.
    \end{proof}
\fi
    
    \begin{lemma}\label{lemma:merge}
    Let $m_1^1$ and $m_1^2$ be relevant local monitors such that
    $m_1^1\boldarrowt n_1\NOT{\xrightarrow{c}}$ and $m_1^2\boldarrowt n_2\NOT{\xrightarrow{c}}$, for all $c$. 
    Then, $m_1^1\pandor m_1^2\boldarrowt n_1\pandor n_2$, for $\pandor\in\{\pand,\por\}$.
    \end{lemma}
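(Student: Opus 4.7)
The plan is to proceed by nested induction: an outer induction on $k_1$, the length of the derivation $m_1^1 \boldarrowt n_1$, and (for the base case of the outer induction) an inner induction on $k_2$, the length of $m_1^2 \boldarrowt n_2$. Throughout, the derived intermediate monitors remain relevant by \Cref{lemma:stay_relevant}, which keeps the inductive hypothesis applicable.

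For the outer base case $k_1 = 0$ we have $m_1^1 = n_1$, which cannot communicate by assumption. We then show $n_1 \pandor m_1^2 \boldarrowt n_1 \pandor n_2$ by induction on $k_2$. If $k_2 = 0$, the result is immediate; otherwise write $m_1^2 \xrightarrow{c_2} m'' \boldarrowt n_2$. When $c_2 = (!G,\gamma)$, the send rule for $\pandor$ gives $n_1 \pandor m_1^2 \xrightarrow{c_2} n_1 \pandor m''$. When $c_2 = (?\ell, A(\ell))$, the assumption $n_1 \NOT{\xrightarrow{c}}$ for all $c$ guarantees in particular $n_1 \NOT{\xrightarrow{c_2}}$, so the asymmetric receive rule gives $n_1 \pandor m_1^2 \xrightarrow{c_2} n_1 \pandor m''$. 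In either case the inner IH on $n_1$ and $m''$ closes the argument.

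For the outer inductive step $k_1 > 0$ we split $m_1^1 \xrightarrow{c_1} m' \boldarrowt n_1$ and analyse $c_1$. If $c_1 = (!G,\gamma)$, the send rule lifts the step to $m_1^1 \pandor m_1^2 \xrightarrow{c_1} m' \pandor m_1^2$, and the outer IH applied to the shorter derivation $m' \boldarrowt n_1$ (together with the original $m_1^2 \boldarrowt n_2$) completes the case. If $c_1 = (?\ell, A(\ell))$ and $m_1^2 \NOT{\xrightarrow{c_1}}$, the asymmetric receive rule again lifts the step without changing $m_1^2$, and the outer IH applies directly.

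The delicate case, and the main obstacle, is when $c_1 = (?\ell, A(\ell))$ and $m_1^2 \xrightarrow{c_1} m_2'$ as well: here the symmetric receive rule forces $m_1^1 \pandor m_1^2 \xrightarrow{c_1} m' \pandor m_2'$, and we have simultaneously consumed a step on the right side that was not scheduled in $m_1^2 \boldarrowt n_2$. To reconcile this, we use \Cref{lemma:sidegeneral} applied to the single-step derivation $m_1^2 \boldarrowt m_2'$ (note $c_1$ has the required form, so the step is indeed in $\boldarrowt$) together with $m_1^2 \boldarrowt n_2$: this yields $m_2' \boldarrowt n_2$, with $n_2$ still a terminal state. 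The outer IH applied to $m'$ and $m_2'$ then gives $m' \pandor m_2' \boldarrowt n_1 \pandor n_2$, concluding the proof.
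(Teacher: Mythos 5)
Your proof is correct and follows essentially the same route as the paper's: induction on the length of $m_1^1\boldarrowt n_1$, with the same three-way case split on $c_1$ in the inductive step and the same resolution of the symmetric-receive case by showing $m_2'\boldarrowt n_2$ (you invoke \Cref{lemma:sidegeneral} where the paper uses its one-step instance \Cref{lemma:side}, which is an immaterial difference). The only other cosmetic divergence is that you spell out the base case as an explicit inner induction on the length of $m_1^2\boldarrowt n_2$, which the paper leaves implicit.
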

\ifarxiversion
    \begin{proof}
Let $m_1^1\boldarrowt n_1$ in $h$ steps; we proceed by induction on $h$. In the base case, $m_1^1=n_1$; since $n_1$ cannot communicate, we get that $m_1^1\pandor m_1^2\boldarrowt m_1^1\pandor n_2=n_1\pandor n_2$.
    
For the inductive step, we consider $m_1^1\xrightarrow{c_1}n'_1\boldarrowt n_1$, with $c_1=(!G,\gamma)$ or $c_1=(?\ell',A(\ell'))$.
We distinguish three cases.
    \begin{enumerate}
    \item If $c_1=(!G,\gamma)$, then $m_1^1\pandor m_1^2\xrightarrow{c_1} n'_1\pandor m_1^2$. From the induction hypothesis, we get that $n'_1\pandor m_1^2\boldarrowt n_1\pandor n_2$, as desired.
    
    \item $c_1=(?\ell',A(\ell'))$ and $m_1^2\NOT{\xrightarrow{(?\ell',A(\ell'))}}$. In this case $m_1^1\pandor m_1^2\xrightarrow{c_1} n'_1\pandor m_1^2$, and we continue as in the previous case.
    
    \item $c_1=(?\ell',A(\ell'))$ and $m_1^2\xrightarrow{c_1} n'_2$. Thus, $m_1^2\boldarrowt n_2$ and $m_1^2\xrightarrow{(?\ell',A(\ell'))} n'_2$; by \Cref{lemma:side}, we conclude that $n'_2\boldarrowt n_2$. We can then apply the induction hypothesis to conclude that $n'_1\pandor n'_2\boldarrowt n_1\pandor n_2$. Together with  $m_1^1\pandor m_1^2\xrightarrow{c_1} n'_1\pandor n'_2$, this concludes the proof.
    \qedhere 
    \end{enumerate}
    \end{proof}
\fi

    The following lemma shows that, for every receiving transition in a monitor $m$ that is reached from a monitor synthesized from some $\psi\in \quantfree$, there exists a corresponding sending transition in another local monitor reached from a synthesized local monitor for $\psi$ located at a different location. 
    \begin{lemma}\label{lemma:independentcommunication}
       Let $\synvone{\ell}{\sigma}{\psi}\xrightarrow{A(\ell)}m\boldarrow m_1$ and $m_1\xrightarrow{(?\ell',\gamma)}m'$. Then $\ell'\in \range(\sigma)$, $\ell'\neq\ell$, and there are $n,n'$ and $G'$ such that $\synvone{\ell'}{\sigma}{\psi}\xrightarrow{A(\ell')}n\xrightarrow{(!G',
        A(\ell'))} n'$ and $\ell\in G'$.
    \end{lemma}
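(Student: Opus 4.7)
The plan is to prove the lemma by induction on $\psi$ using the well-founded order $\prec$ of \Cref{def:order}, which is needed to accommodate recursive formulas. Along the way I will rely on \Cref{lemma:recursion-substitutiontwo} to unfold recursions, on \Cref{lem:reactivity} to produce action-steps at the sibling location $\ell'$, and on \Cref{lemma:nocommunicate} to show that synthesized subterms are stuck in terms of communication. The implicit standing assumption is that $\ell\in\range(\sigma)$, as is always the case when $\synvone\ell\sigma{\cdot}$ is accessed through $\Mone\cdot$.

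For the base cases $\psi\in\{\ttt,\ff,x,\pi=\pi',\pi\neq\pi'\}$ the synthesized monitor is either a recursion variable (no action transition, so the premise is vacuous) or a verdict $v$ (whose only action transition is $v\xrightarrow{a}v$ and which has no communication transitions). Hence $m_1=m=v$ and the receive premise cannot hold. In the parallel case $\psi=\psi_1\op\psi_2$ we have $\synvone\ell\sigma\psi=\synvone\ell\sigma{\psi_1}\pandor\synvone\ell\sigma{\psi_2}$ and, by routine induction on the length of $\boldarrow$ using the operational rules for $\pandor$ on local monitors, $m_1 = m_1^1\pandor m_1^2$ where $\synvone\ell\sigma{\psi_i}\xrightarrow{A(\ell)}m^i\boldarrow m_1^i$. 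The receive $(?\ell',\gamma)$ must be enabled in at least one side, say $m_1^1$; applying the IH to $\psi_1$ gives $\ell'\in\range(\sigma)$, $\ell'\neq\ell$ and $\synvone{\ell'}{\sigma}{\psi_1}\xrightarrow{A(\ell')}n\xrightarrow{(!G',A(\ell'))}n'$ with $\ell\in G'$. \Cref{lem:reactivity} yields $\synvone{\ell'}{\sigma}{\psi_2}\xrightarrow{A(\ell')}n_2$, so $\synvone{\ell'}{\sigma}{\psi}\xrightarrow{A(\ell')}n\pandor n_2\xrightarrow{(!G',A(\ell'))}n'\pandor n_2$, closing the case.

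The modality case $\psi=[a_\pi]\chi$ (analogous for $\langle a_\pi\rangle\chi$) is the conceptual core and splits on whether $\sigma(\pi)=\ell$. If $\sigma(\pi)=\ell$ then after the action step $m$ is a pure send prefix $(!G,A(\ell)).r$ with $r\in\{\synvone\ell\sigma\chi,v\}$; the only available communication is that single send, after which we land in $r$, which by \Cref{lemma:nocommunicate} cannot communicate further. Thus $m_1\in\{m,r\}$, none of which admits a receive, and the premise cannot hold. If instead $\sigma(\pi)\neq\ell$, the synthesis produces the ``otherwise'' branch, so after the action step $m$ is a sum of $(?\{\sigma(\pi)\},\gamma)$-prefixes leading to $\synvone\ell\sigma\chi$ or to a verdict; again by \Cref{lemma:nocommunicate} the post-receive states are communication-stuck, forcing $m_1=m$. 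Hence the receive label has $\ell'=\sigma(\pi)\in\range(\sigma)$ and $\ell'\neq\ell$. Then $\synvone{\ell'}{\sigma}{\psi}$ falls in the first branch of the synthesis (since now $\sigma(\pi)=\ell'$), producing $n\xrightarrow{(!G',A(\ell'))}n'$ with $G'=\range(\sigma)\setminus\{\ell'\}$, which contains $\ell$ because $\ell\in\range(\sigma)$ and $\ell\neq\ell'$. Finally, for $\psi=\max x.\psi'$ I unfold once via \Cref{lemma:recursion-substitutiontwo} to get $\synvone\ell\sigma{\psi'\{^{\max x.\psi'}/_x\}}\xrightarrow{A(\ell)}m$; guardedness makes this formula strictly $\prec$-smaller, so the IH applies and transfers the witnesses back to $\psi$ by re-folding.

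The main obstacle is the bookkeeping in the parallel case: one must justify that $m\boldarrow m_1$ decomposes cleanly through $\pandor$ so that each receive in $m_1$ can be pinned to exactly one side. This is immediate from the operational rules for local monitors (and follows the spirit of \Cref{lemma:stay_relevant}), but does require an auxiliary decomposition lemma. Beyond that, the proof is essentially a traversal of the synthesis rules, with the modality case carrying the substantive content and the other cases acting as rigid inductive scaffolding.
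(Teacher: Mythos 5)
Your proof is correct and follows essentially the same route as the paper's: a traversal of the synthesis rules in which the modality case pins the receive to $\ell'=\sigma(\pi)\neq\ell$ and reads off the matching send at $\ell'$ from the first synthesis branch (with $\ell\in G'=\range(\sigma)\setminus\{\ell'\}$), the parallel case decomposes the receive onto one component and closes by noting the sibling component also steps at $\ell'$, and recursion is handled by unfolding via \Cref{lemma:recursion-substitutiontwo}. The only difference is bookkeeping: the paper inducts on the derivation of the action transition $\xrightarrow{A(\ell)}$ (so the recursive case is a strictly smaller subderivation), whereas you induct on $\psi$ under the order $\prec$ and invoke guardedness — an interchangeable choice.
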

    \begin{proof}
   We proceed by induction on $\xrightarrow{A(\ell)}$. If $\synone{\psi}=v$, the premise of the lemma is not satisfied; moreover, we do not consider the case where $\synone{\psi}=a.m$, as this cannot occur in the synthesis function (because of our assumption that $|\Act| \geq 2$).
   
    For the first inductive case, we consider $\synone{\psi}=n_1+n_2\xrightarrow{A(\ell)} m$, because $n_1\xrightarrow{A(\ell)} m$ or $n_2\xrightarrow{A(\ell)} m$. Hence, $\psi=[a_\pi]\psi'$ 
or $\psi=\langle a_\pi\rangle\psi'$.
As $m_1\xrightarrow{(?\ell',\gamma)}m'$, we know that $\sigma(\pi)=\ell'\neq\ell$. Thus, from the synthesis function, we  immediately have that  $\ell'\in \range (\sigma)$ and $\synvone{\ell'}{\sigma}{\psi}\xrightarrow{A(\ell')}n$ such that $n\xrightarrow{(!G', A(\ell'))} n'$ with $\ell\in G'$ (being $\ell \in \range(\sigma)$). 
   
    Next we consider $\synone{\psi}=n_1\pandor n_2\xrightarrow{A(\ell)} n'_1\pandor n'_2 (=m) \boldarrow n''_1\pandor n''_2 (=m_1)$, with $n_1\xrightarrow{A(\ell)}n'_1$, $n_2\xrightarrow{A(\ell)} n'_2$, $n'_1\boldarrow n''_1$, and $n'_2\boldarrow n''_2$. From the synthesis function, we know that $\psi=\psi_1\op \psi_2$ (with $\op=\wedge$ if $\pandor=\pand$ and $\op=\vee$ if $\pandor=\por$), $n_1=\synone{\psi_1}$ and $n_2=\synone{\psi_2}$.
The transition $m_1 = n''_1\pandor n''_2 \xrightarrow{(?\ell',\gamma)} m'$ can happen in two ways:
\begin{enumerate}
\item $m' = m'_1 \pandor m'_2$, for $n''_1\xrightarrow{(?\ell' ,\gamma)} m'_1$ and $n''_2\xrightarrow{(?\ell',\gamma)} m'_2$.
From the induction hypothesis, we get that $\ell'\in \range(\sigma)$, $\synvone{\ell'}{\sigma}{\psi_1}\xrightarrow{A(\ell')}o_1 \xrightarrow{(!G_1', A(\ell'))} o_1'$ and $\ell\in G_1'$; similarly, $\synvone{\ell'}{\sigma}{\psi_2}\xrightarrow{A(\ell')} o_2 \xrightarrow{(!G_2', A(\ell'))} o_2'$ and $\ell\in G_2'$. Hence, $\synvone{\ell'}{\sigma}{\psi_1\op \psi_2} =\synvone{\ell'}{\sigma}{\psi_1}\pandor \synvone{\ell'}{\sigma}{\psi_2} \xrightarrow{A(\ell')} o_1\pandor o_2 \xrightarrow{(!G_1', A(\ell'))} o_1'\pandor o_2$ with $\ell\in G_1'$, as desired. 
   
\item $m' =m'_1 \pandor n''_2$, for $n''_1\xrightarrow{(?\ell',\gamma)}m'_1$ and $n''_2\NOT{\xrightarrow{(?\ell',\gamma)}} $ (the symmetric case is identical). 
From the induction hypothesis, we get that $\ell'\in \range(\sigma)$, $\synvone{\ell'}{\sigma}{\psi_1}\xrightarrow{A(\ell')}o \xrightarrow{(!G', A(\ell'))} o'$ and $\ell\in G'$. It is immediate from the synthesis function that $\synvone{\ell'}{\sigma}{\psi_2} \xrightarrow{A(\ell')} o_2$, for some monitor $o_2$.  
    Thus, $\synvone{\ell'}{\sigma}{\psi_1\op \psi_2} =\synvone{\ell'}{\sigma}{\psi_1}\pandor \synvone{\ell'}{\sigma}{\psi_2} \xrightarrow{A(\ell')} o\pandor o_2 \xrightarrow{(!G', A(\ell'))} o'\pandor o_2$ with $\ell\in G'$, as desired. 
\end{enumerate}
   
    Last, we consider $\synone{\psi}=\rec x. o\xrightarrow{A(\ell)}m$ because $o\{^{\rec x.o}/_x\}\xrightarrow{A(\ell)} m$. Thus, $\psi=\max x.\psi'$ and $o=\synone{\psi'}$. By \Cref{lemma:recursion-substitutiontwo}, $\synone{\psi'}\{^{\rec x.\synone{\psi'}}/_x\}=\synone{\psi'\{^{\maxx x.\psi'}/_x\}}$; from induction hypothesis, we obtain that $\ell'\in \range(\sigma)$, $\synvone{\ell'}{\sigma}{\psi'\{^{\maxx x.\psi'}/_x\}}\xrightarrow{A(\ell')}n \xrightarrow{(!G', A(\ell'))} n'$ and $\ell\in G'$. Since $\synvone{\ell'}{\sigma}{\psi}\{^{\rec x.\synvone{\ell'}{\sigma}{\psi}}/_x\}=\synvone{\ell'}{\sigma}{\psi'\{^{\maxx x.\psi'}/_x\}}$, we get that $\synvone{\ell'}{\sigma}{\maxx x.\psi'}=\rec x. \synvone{\ell'}{\sigma}{\psi}\xrightarrow{A(\ell')}n$, which concludes the proof.
    \end{proof}
   
    \Cref{lemma:squiggly_is_not_straight} is used to prove \Cref{lemma:squiggly_on_the_right} and it states that a communicating monitor never makes a choice between a send and a receive, so 
if $M_1$ can send (with some transition labelled with $c$) and reach $M_2$, it cannot have reached $M_2$ by receiving some message (intuitively because the send from $c$ is still present if $M_1$ takes a receiving transition, and it is not present in $M_2$). Instead, if a monitor can both send and receive, this must be because multiple monitors are connected in parallel, either locally with $\por/\pand$ or as communicating monitors via $\vee/\wedge$. 

To this aim, we first define a function $cc$ from relevant (local) monitors to natural numbers that counts the number of top-level send prefixes and prove that it decreases after every send action and remains the same after every receive action. 

\begin{definition}
For arbitrary $\sigma,\ell$ and $\psi$, we define: 
    \begin{itemize}
        \item $cc(\synone{\psi}) = cc(\sum_{a\in \Act}(?\{\ell'\},a).\synone{\psi_a}) = 0$; 
        \item $cc((!G,a).\synone{\psi}) = 1$;
        \item $cc(m_1 \pandor m_2) = cc(m_1) + cc(m_2)$, where $\pandor$ denotes either $\por$ or $\pand$;
        \item $cc([m]_\ell) = cc(m)$; and
        \item $cc(M \op N) = cc(M) + cc(N)$, where $\op$ denotes either $\vee$ or $\wedge$.
    \end{itemize}
\end{definition}
    
    \begin{lemma}\label{lemma:cc}
    Let $m_1$ be a relevant local monitor.
    \begin{enumerate}
    \item If $m_1 \xrightarrow{(!G,\gamma)} m_2$, then $cc(m_1) > cc(m_2)$.
    \item If $m_1 \xrightarrow{(?G,\gamma)} m_3$, then $cc(m_1) = cc(m_3)$.
    \end{enumerate}
    \end{lemma}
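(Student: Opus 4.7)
The plan is to prove both statements by simultaneous induction on the structure of the relevant local monitor $m_1$, exploiting the inductive definition of relevant local monitors given in Definition~\ref{def:relevant}. The key observation is that $cc$ was designed precisely to track the number of top-level send prefixes, and the operational rules show that exactly one send prefix is consumed by each sending transition, whereas a receiving transition never consumes a send prefix (it either consumes a receive prefix, or propagates through a parallel composition without changing anything top-level).

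For the base cases, if $m_1 = \synone{\psi}$, then by Lemma~\ref{lemma:nocommunicate} no communication transition exists, so both premises are vacuous. If $m_1 = (!G,a).\synone{\psi}$, then the only transition available is $m_1 \xrightarrow{(!G,a)} \synone{\psi}$, and we have $cc(m_1) = 1 > 0 = cc(\synone{\psi})$; the premise of the second clause is vacuous because no receive transition exists at the top level (and by Lemma~\ref{lemma:nocommunicate} the residual cannot receive either). If $m_1 = \sum_{a\in\Act}(?\{\ell'\},a).\synone{\psi_a}$, then the only transitions are receives of the form $m_1 \xrightarrow{(?\ell',a)}\synone{\psi_a}$, giving $cc(m_1) = 0 = cc(\synone{\psi_a})$; no send transition is available, so the first clause is vacuous.

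For the inductive step, let $m_1 = n_1 \pandor n_2$ with $n_1, n_2$ relevant. For the first clause, inspecting the operational semantics of $\pandor$ (Table~\ref{tab:decentralized_locla_operational_semantics}), any send transition $m_1 \xrightarrow{(!G,\gamma)} m_2$ must come from exactly one side, say $n_1 \xrightarrow{(!G,\gamma)} n_1'$ with $m_2 = n_1' \pandor n_2$; by the induction hypothesis $cc(n_1') < cc(n_1)$, whence $cc(m_2) = cc(n_1') + cc(n_2) < cc(n_1) + cc(n_2) = cc(m_1)$. For the second clause, a receive transition $m_1 \xrightarrow{(?\ell',\gamma)} m_3$ arises in one of the shapes given by the rules: either both components receive (yielding $m_3 = n_1' \pandor n_2'$), or exactly one receives and the other is inert (yielding $m_3 = n_1' \pandor n_2$ or $m_3 = n_1 \pandor n_2'$). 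In each case the induction hypothesis gives $cc(n_i') = cc(n_i)$ for each component that steps, and the inert component contributes unchanged, so $cc(m_3) = cc(m_1)$.

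I expect no real obstacle here: the lemma is a straightforward structural bookkeeping result, and the only subtlety is to enumerate the three possible shapes of receive transitions through $\pandor$ carefully. The case for $\por$ is handled identically to $\pand$ since the operational rules in Table~\ref{tab:decentralized_locla_operational_semantics} are uniform over $\pandor \in \{\pand,\por\}$, so no separate treatment is needed.
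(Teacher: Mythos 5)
Your proof is correct and follows essentially the same route as the paper's: structural induction on the relevant local monitor, with the base cases dispatched via Lemma~\ref{lemma:nocommunicate} and direct computation, and the $\pandor$ case closed by the induction hypothesis and the definition of $cc$. The paper merely states the inductive case more tersely, whereas you spell out the three shapes of receive transitions explicitly; the substance is identical.
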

    \begin{proof}
    For the first claim, we proceed by induction on $m_1$. In the base cases for $m_1=\synone{\psi}$ and  $m_1=\sum_{a\in \Act}(?\{\ell'\},a).\synone{\psi_a}$, we have that $m_1 \xrightarrow{(!G,\gamma)} m_2$ does not hold (by using \Cref{lemma:nocommunicate} for $m_1=\synone{\psi}$);
for $m_1 = (!G,\gamma).\synone{\psi}$, we have that $m_2=\synone{\psi}$ and, thus, $cc(m_2) = 0<1= cc ((!G,\gamma).\synone{\psi})$. 
For the inductive case, let $m_1=m\pandor m'$, with $m$ and $m'$ relevant monitors; the result follows immediately from the definition of $cc$ and the inductive hypothesis.

For the second claim, we again proceed by induction on $m_1$. 
    The base cases for $m_1=\synone{\psi}$ and $m_1 = (!G,\gamma).\synone{\psi}$ entail that $m_1 \xrightarrow{(?G,\gamma)} m_3$ does not hold; 
    for $m_1 = \sum_{a\in \Act}(?\{\ell'\},a).\synone{\psi_a}$, then $m_3 = \synone{\psi_a}$, for some $\psi_a$, and therefore $cc(m_1) = cc(m_3) = 0$.
    The case for $m_1 = m \pandor m'$ is straightforward from the inductive hypothesis.
    \end{proof}

    \begin{lemma}\label{lemma:squiggly_is_not_straight}
        Let $m_1$ and $M_1$ be action-derived monitors.
        \begin{enumerate}
            \item If $m_1 \xrightarrow{(!G,\gamma)} m_2$ and 
            $m_1 \xrightarrow{(?G,\gamma)} m_3$, then $m_2 \neq m_3$.
            \item If $M_1 \xrightarrow{\ell:(!G,\gamma)} M_2$ and 
            $M_1 \xrightsquigarrow{G:(?\ell,\gamma)} M_3$, then $M_2 \neq M_3$.
        \end{enumerate}
    \end{lemma}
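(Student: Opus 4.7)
My plan is to prove both claims by exploiting the counting function $cc$ and showing that sends strictly decrease it while receives leave it unchanged. Since the first claim (on local monitors) is precisely what \Cref{lemma:cc} gives us, the argument for (1) collapses to: from $m_1 \xrightarrow{(!G,\gamma)} m_2$ we obtain $cc(m_1) > cc(m_2)$, from $m_1 \xrightarrow{(?G,\gamma)} m_3$ we obtain $cc(m_1) = cc(m_3)$, and so $cc(m_2) < cc(m_3)$, which forces $m_2 \neq m_3$. Note that $m_1$ is a relevant local monitor by \Cref{lemma:A-derived-iff-relevant}, so \Cref{lemma:cc} applies.

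For the decentralized claim (2), the plan is to lift \Cref{lemma:cc} to the decentralized level by proving two auxiliary facts about action-derived communicating monitors $M$:
\begin{enumerate}
    \item if $M \xrightarrow{\ell:(!G,\gamma)} M'$, then $cc(M) > cc(M')$;
    \item if $M \xrightsquigarrow{G:(?\ell,\gamma)} M'$, then $cc(M) = cc(M')$.
\end{enumerate}
Both are proved by straightforward induction on the derivation, using \Cref{lemma:stay_relevant} and \Cref{lemma:A-derived-iff-relevant} to ensure that the local constituents encountered along the way remain relevant so that \Cref{lemma:cc} is applicable at the base cases. For (i), the base case is $[m]_\ell \xrightarrow{\ell:(!G,\gamma)} [m']_\ell$ obtained from $m \xrightarrow{(!G,\gamma)} m'$, where \Cref{lemma:cc}(1) yields $cc(m) > cc(m')$; the inductive step uses the last two rules of \Cref{tab:communicating_operational_semantics_communicating_monitors}, combining the strict decrease on the sending side with the equality (from (ii)) on the receiving side. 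For (ii), the base cases of the $\rightsquigarrow$ relation either use \Cref{lemma:cc}(2) or leave the monitor unchanged; the inductive step is immediate from the definition of $cc$ on $\op$.

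With (i) and (ii) in hand, claim (2) follows in one line: from $M_1 \xrightarrow{\ell:(!G,\gamma)} M_2$ we get $cc(M_2) < cc(M_1)$, whereas from $M_1 \xrightsquigarrow{G:(?\ell,\gamma)} M_3$ we get $cc(M_3) = cc(M_1)$, so $cc(M_2) \neq cc(M_3)$ and hence $M_2 \neq M_3$.

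The only mildly delicate point will be keeping track of the side conditions that ensure every monitor examined during the inductions is relevant (so that \Cref{lemma:cc} can be invoked). This is not really an obstacle since $M_1$ is assumed action-derived and \Cref{lemma:stay_relevant,lemma:A-derived-iff-relevant} propagate relevance through both $\xrightarrow{}$ and $\xrightsquigarrow{}$ transitions; I would state this bookkeeping explicitly once at the start of the induction and then proceed with the counting argument.
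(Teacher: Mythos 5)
Your proposal is correct and follows essentially the same route as the paper: the first claim is the direct combination of the two parts of \Cref{lemma:cc}, and the second is obtained by lifting those two monotonicity facts to $\DMon$ by straightforward induction on the transition derivations, with relevance of the local constituents guaranteed by \Cref{lemma:A-derived-iff-relevant} and \Cref{lemma:stay_relevant}. No gaps.
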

    \begin{proof}    
Notice that $m_1$ is a relevant local monitor by \Cref{lemma:A-derived-iff-relevant}.
For the first claim, we use \Cref{lemma:cc} to obtain that $cc(m_3) = cc(m_1) < cc(m_2)$ and so $m_2\neq m_3$.
    The second statement is proved similarly, by observing that
$M_1\xrightarrow{\ell:(!G,\gamma)} M_2$ implies $cc(M_1) > cc(M_2)$ 
and that
$M_1 \xrightsquigarrow{G:(?\ell,\gamma)} M_3$ implies $cc(M_1)= cc(M_3)$ 
(both claims can be proved by a straightforward induction, by using \Cref{lemma:cc}).
    \end{proof}
    
 From the previous lemma we can then conclude that, if $M_1$ reaches $M_2$ by a sending transition, and we combine $M_1$ with another monitor $N_1$ that transitions to $M_2$ composed with $N_2$, then $N_1$ must have taken a receiving step. This is useful for \Cref{lemma:unioncom} and \Cref{lemma:many}, which are in turn used to prove \Cref{lemma:joincommunication}, \Cref{lemma:communicationexists} and \Cref{lemma:actionbispsi}.

    \begin{lemma}\label{lemma:squiggly_on_the_right}
        Let $M_1$ be action-derived, $M_1 \xrightarrow{\ell:(!G,\gamma)} M_2$ and 
        $M_1 \op N_1 \xrightarrow{\ell:(!G,\gamma)} M_2 \op N_2$.
        Then, $N_1 \xrightsquigarrow{G:(?\ell,\gamma)} N_2$.
    \end{lemma}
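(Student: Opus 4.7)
The plan is to proceed by case analysis on the derivation of $M_1 \op N_1 \xrightarrow{\ell:(!G,\gamma)} M_2 \op N_2$. Looking at the operational rules in \Cref{tab:communicating_operational_semantics_communicating_monitors}, the only rule that yields a send transition on a composite monitor is the last one (together with its symmetric version, since $\op$ is commutative). Hence, exactly one of the following two cases must hold:
\begin{enumerate}
    \item $M_1 \xrightarrow{\ell:(!G,\gamma)} M_2$ and $N_1 \xrightsquigarrow{G:(?\ell,\gamma)} N_2$;
    \item $N_1 \xrightarrow{\ell:(!G,\gamma)} N_2$ and $M_1 \xrightsquigarrow{G:(?\ell,\gamma)} M_2$.
\end{enumerate}
Case 1 directly yields the desired conclusion, so the whole work lies in ruling out case 2.

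For case 2, I would observe that the hypothesis gives $M_1 \xrightarrow{\ell:(!G,\gamma)} M_2$, while case 2 simultaneously gives $M_1 \xrightsquigarrow{G:(?\ell,\gamma)} M_2$ with the \emph{same} target $M_2$. This is precisely the situation that \Cref{lemma:squiggly_is_not_straight} (item 2) forbids: applying it to $M_1 \xrightarrow{\ell:(!G,\gamma)} M_2$ and $M_1 \xrightsquigarrow{G:(?\ell,\gamma)} M_2$ would force $M_2 \neq M_2$, a contradiction. Note that \Cref{lemma:squiggly_is_not_straight} requires $M_1$ to be action-derived, which is exactly our hypothesis, so its application is legitimate.

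The main (only) obstacle is really making sure the case analysis on the send transition of $M_1 \op N_1$ is exhaustive, i.e.\ that no other rule could have produced this transition. This is a straightforward inspection of \Cref{tab:communicating_operational_semantics_communicating_monitors}: the action label $\ell:(!G,\gamma)$ can only be produced by the send rule for $[m]_\ell$ (which is not applicable here, as $M_1 \op N_1$ is not a singleton localised monitor) or by the last rule (and its symmetric variant) for composite monitors. Once this is settled, the argument is a one-line appeal to \Cref{lemma:squiggly_is_not_straight}, so the proof should be very short.
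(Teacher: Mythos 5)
Your proof is correct and follows exactly the same route as the paper's: decompose the transition of $M_1 \op N_1$ into the two possible rule instances, close the first case immediately, and refute the second by applying \Cref{lemma:squiggly_is_not_straight} to the pair $M_1 \xrightarrow{\ell:(!G,\gamma)} M_2$ and $M_1 \xrightsquigarrow{G:(?\ell,\gamma)} M_2$. Nothing to add.
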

\ifarxiversion
    \begin{proof}
        From the semantics and $M_1 \op N_1 \xrightarrow{\ell:(!G,\gamma)} M_2 \op N_2$, we obtain that $M_1 \xrightarrow{\ell:(!G,\gamma)} M_2$ and $N_1 \xrightsquigarrow{G:(?\ell,\gamma)} N_2$, or $N_1 \xrightarrow{\ell:(!G,\gamma)} N_2$ and $M_1 \xrightsquigarrow{G:(?\ell,\gamma)} M_2$. In the first case we are done; in the second case, 
$M_1 \xrightarrow{\ell:(!G,\gamma)} M_2$ and $M_1 \xrightsquigarrow{G:(?\ell,\gamma)} M_2$ would contradict  \Cref{lemma:squiggly_is_not_straight}, hence $N_1 \xrightsquigarrow{G:(?\ell,\gamma)} N_2$.
    \end{proof}
\fi

   The next lemma states that action transitions are deterministic for synthesized monitors, and is used in the proof of \Cref{lemma:unioncom}.
    \begin{lemma}\label{lemma:deterministicstep}
       If $\synone{\psi}\xrightarrow{A(\ell)} m$ and $\synone{\psi}\xrightarrow{A(\ell)}n$, then $m=n$.
       \end{lemma}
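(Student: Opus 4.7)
The plan is to proceed by well-founded induction on $\psi$ using the ordering $\prec$ from \Cref{def:order} (the same ordering used in \Cref{lem:reactivity}), inspecting which operational rule from \Cref{tab:decentralized_locla_operational_semantics} can fire from $\synone{\psi}$ based on the shape dictated by the synthesis in \Cref{tab:synthesis-decentralized}.

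For the $\prec$-minimal base cases ($\psi \in \{\ttt,\ff,\pi=\pi',\pi\neq\pi'\}$), $\synone{\psi}$ is a verdict $v$, and the only applicable rule is $v \xrightarrow{a} v$, so $m = n = v$. The case $\psi = x$ is vacuous since $x$ has no action transitions at all. The interesting inductive cases are then:
\begin{itemize}
    \item If $\psi = \psi_1\wedge\psi_2$ or $\psi = \psi_1\vee\psi_2$, then $\synone{\psi} = \synone{\psi_1}\pandor\synone{\psi_2}$ and the \emph{only} rule for action transitions on $\pandor$ requires both components to step synchronously. Hence any transition has shape $\synone{\psi}\xrightarrow{A(\ell)} m_1'\pandor m_2'$ with $\synone{\psi_i}\xrightarrow{A(\ell)} m_i'$, and since $\psi_i\prec\psi$, the inductive hypothesis forces $m_i'$ (and hence the composite) to be unique.
    \item If $\psi = [a_\pi]\psi'$ or $\psi = \langle a_\pi\rangle\psi'$, then $\synone{\psi}$ is a sum $\sum_{b\in\Act} b.(\cdots)$ whose summands are prefixed by pairwise distinct actions. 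The rule for $+$ picks exactly the unique summand whose prefix matches $A(\ell)$, determining $m$ completely from the syntactic form prescribed by the synthesis.
    \item If $\psi = \max x.\psi'$, then $\synone{\psi} = \rec x.\synone{\psi'}$ and the only applicable rule yields transitions $\synone{\psi}\xrightarrow{A(\ell)} m$ derived from $\synone{\psi'}\{{}^{\rec x.\synone{\psi'}}/{}_x\}\xrightarrow{A(\ell)} m$. By \Cref{lemma:recursion-substitutiontwo}, this premise coincides with $\synone{\psi'\{{}^{\max x.\psi'}/{}_x\}}\xrightarrow{A(\ell)} m$, and guardedness of $\psi'$ ensures that every introduced copy of $\max x.\psi'$ lies under a modality, so the number of top-level fixed-points strictly decreases and $\psi'\{{}^{\max x.\psi'}/{}_x\} \prec \max x.\psi'$. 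The inductive hypothesis applies.
\end{itemize}

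The step I expect to be most delicate is the fixed-point case, because the substituted body $\psi'\{{}^{\max x.\psi'}/{}_x\}$ is syntactically \emph{larger} than $\max x.\psi'$, so naive structural induction on $\psi$ fails; the essential observation is that guardedness pushes every unfolded copy of $\max x.\psi'$ under a modality, which is precisely what makes the top-level-fixed-point count in $\prec$ strictly decrease and thereby justifies the appeal to the inductive hypothesis. All other cases reduce to routine inspection of the synthesis table together with the observation that each operator used by $\synone{-}$ either has a deterministic action rule (verdicts, parallel composition) or is a sum whose summands are prefixed by pairwise distinct atomic actions (box and diamond).
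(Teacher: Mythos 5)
Your proof is correct, and the case analysis coincides with the paper's: verdicts, the synchronous $\pandor$ rule, the sum with pairwise-distinct prefixes for modalities, and the unfolding case resolved via \Cref{lemma:recursion-substitutiontwo}. The one genuine difference is the induction measure. The paper proceeds by rule induction on the derivation of $\synone{\psi}\xrightarrow{A(\ell)} m$, so the recursion case is immediate: the premise $m_1\{^{\rec x.m_1}/_x\}\xrightarrow{A(\ell)} m$ is a strictly smaller derivation, and no ordering on formulas is needed. You instead induct on the well-founded order $\prec$ from \Cref{def:order}, which forces you to argue that guardedness pushes every unfolded copy of $\max x.\psi'$ under a modality so that the top-level fixed-point count strictly decreases. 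That argument is valid (it is exactly the one the paper uses for \Cref{lem:reactivity}), but it is extra machinery that rule induction on the transition makes unnecessary. Both routes are sound; the paper's is slightly more economical, yours makes the termination of the unfolding explicit.
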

       \begin{proof}
           We proceed by induction on $\xrightarrow{A(\ell)}$. The interesting cases are the inductive ones. 
           If $\synone{\psi}=m_1+m_2\xrightarrow{A(\ell)} m$, then either $\psi=[a_\pi]\psi'$ 
           or $\psi=\langle a_\pi\rangle\psi'$, for some $a$, $\pi$ and $\psi'$. As there is only one possible transition for each $a\in \Act$, we have obtained the required result.
       
           If $\synone{\psi}=\rec {x}.m_1\xrightarrow{A(\ell)} m$ because $m_1\{^{\rec x.m_1}/_x\} \xrightarrow {A(\ell)} m$, we know that $\psi=\maxx x.\psi'$ and $m_1=\synone{\psi'}$.
By \Cref{lemma:recursion-substitutiontwo}, $m_1\{^{\rec x.m_1}/_x\} = \synone{\psi'\{^{\psi}/_x\}}$ 
(note that the free location-variables of $\psi$ are not bound in $\psi'$); thus, we can use the induction hypothesis to get $m=n$.
           
           If $\synone{\psi}=m_1\pandor m_2\xrightarrow{A(\ell)} m_3\pandor m_4=m$, we know from the synthesis definition that $\psi= \psi_1\op\psi_2$ (for $\op = \vee$ if $\pandor = \por$ and $\op = \wedge$ if $\pandor = \pand$), for some $\psi_1$ and $\psi_2$ such that $m_1=\synone{\psi_1}$ and $m_2=\synone{\psi_2}$. Hence, $\synone{\psi_1}\xrightarrow{A(\ell) } m_3$ and $\synone{\psi_2}\xrightarrow{A(\ell) }m_4$. Similarly, we know that $n=n_1\pandor n_2$ for some $n_1$ and $n_2$ such that $\synone{\psi_1}\xrightarrow{A(\ell) } n_1$ and $\synone{\psi_2}\xrightarrow{A(\ell) }n_2$. We use the induction hypothesis to conclude that $m_3=n_1$ and $m_4=n_2$, from which we derive that $m=n$.
       \end{proof}
   
       In what follows we need to specify a condition on a group of local monitors who all take some `correct' set of communication transitions (correct as in following the restrictions in $\boldarrowt$). This condition states that the reached states in the group of local monitors will no longer take receiving transitions that match a sending transition that occurred on one of the local communication paths.
       
       \begin{definition}
    Let $\ell_1,\ldots,\ell_k \in \Loc$ and $m_1,\ldots,m_k , n_1,\ldots,n_k$ be monitors. 
   We write 
   $m_i \xybRightarrow{\ell_i}{A}{k}{i=1} n_i$ 
   if, for every $i\in\{1,\ldots, k\}$, there exist an integer $h_i>0$,
    $h_i$ monitors $m_i^1 , \ldots, m_i^{h_i} $, and $h_i-1$ actions $c_i^1,\ldots,c_i^{h_i-1}$
    such that $m_i = m_i^1$, $m_i^{h_i} = n_i$,
    $m_i^j \xrightarrow{c_i^j} m_{i}^{j+1} $ for every $j = 1,\ldots,{h_i}-1$ and 
    \begin{enumerate}
   \item if $c_i^j=(?\ell,\gamma)$ for some $\gamma$ and $\ell$, then 
    $\gamma=A(\ell)$; 
    \item if $c_i^j = (!G,\gamma)$ for some $G$ and $\gamma$, then, for every $i'\neq i$, 
   $\ell_{i'} \in G$ implies that 
   $n_{i'}\NOT{\xrightarrow{(?\ell_i,\gamma)}}$.
   \end{enumerate}  
\end{definition}
   Essentially, $m_i \xybRightarrow{\ell_i}{A}{k}{i=1} n_i$  amounts to saying that, for every $i \leq k$, 
   $m_i\boldarrowt n_i$ plus condition 2.
   

   The next lemma is used in the proof of \Cref{lemma:psijoincom}. It proves that local monitors that `accidentally' receive messages from another local monitor's communication transitions are such that they cannot take any more receiving transitions matching those accidental messages. This captures the intuition that a local monitor always takes all receiving transitions that match a message at once, and no new receiving transitions appear for a local monitor when it is taking a chain of consecutive communication transitions. 

Important in the proof is \Cref{lemma:independentcommunication}: in $\boldarrowt$ local monitors can send and receive freely, but in $\boldarrow$ between decentralized monitors, communication is always initiated by a sender. Therefore, all receiving transitions that need to happen need to be initiated by some sending transition.
   \begin{lemma}\label{lemma:unioncom} 
Let $\range (\sigma)=\{\ell_1,\cdots, \ell_k\} \neq \emptyset$
 and, for every $i \in \{1,\ldots,k\}$, 
 $\synvone{\ell_i}{\sigma}{\psi}\xrightarrow{A(\ell_i)}m_{i} \boldarrowt n_{i} \boldarrowt o_{i}$, with $o_{i}$ that cannot communicate. 
If $m_{i} \xybRightarrow{\ell_i}{A}{k}{i = 1} n_{i}$, 
then for every $j \in \{1,\cdots,k\}$ there exist some
$q_{1}, \ldots , q_k$ such that 
   $\bigvee_{i=1}^k[n_{i}]_{\ell_i} \boldarrow \bigvee_{i=1}^k[q_{i}]_{\ell_i}$, 
   $n_{i}\xybRightarrow{\ell_i}{A}{k}{i=1}  q_{i}$,
   and $q_{j}=o_j$.
   \end{lemma}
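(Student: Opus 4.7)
The plan is to fix $j \in \{1,\ldots,k\}$ and proceed by induction on the length $h$ of the chain $n_j \boldarrowt o_j$. In the base case $h = 0$, we have $n_j = o_j$, so we pick $q_i = n_i$ for every $i$: the decentralized chain is empty, each local chain $n_i \xybRightarrow{\ell_i}{A}{k}{i=1} n_i$ is vacuously of length one, and $q_j = n_j = o_j$ as required.

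In the inductive step, write $n_j \xrightarrow{c} n_j^\star \boldarrowt o_j$ and split on $c$. If $c = (!G,\gamma)$, then by \Cref{lemma:singlesender}(3) $\gamma = A(\ell_j)$ and we lift this to a decentralized step
$\bigvee_{i=1}^k [n_i]_{\ell_i} \xrightarrow{\ell_j:(!G,A(\ell_j))} \bigvee_{i=1}^k [n_i^\dagger]_{\ell_i}$
where $n_j^\dagger = n_j^\star$ and, for $i \neq j$, $n_i^\dagger$ is the unique result of the receive $(?\ell_j,A(\ell_j))$ when $\ell_i \in G$ and $n_i$ can take it (using \Cref{lemma:uniqueness} and \Cref{lemma:group}), and $n_i^\dagger = n_i$ otherwise. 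We then check that the induction hypothesis applies with $n_i^\dagger$ in place of $n_i$ and $o_i^\dagger = o_i$: the continuation $n_i^\dagger \boldarrowt o_i$ holds by \Cref{lemma:side}, while the extended chains satisfy condition (1) (the new receive carries $\gamma = A(\ell_j)$) and condition (2): for the new send on chain $j$ any receiver either just consumed the message (and so by \Cref{lemma:singlesender}(1) can no longer take it), or never could (so $n_{i'}^\dagger = n_{i'}$ still cannot); for pre-existing sends, the old condition $n_{i'} \NOT{\xrightarrow{(?\ell_{i_s},\gamma_s)}}$ is propagated to $n_{i'}^\dagger$ via \Cref{lemma:singlesender}(2). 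Applying the IH and prepending the single decentralized step yields the desired $q_i$'s, and condition (2) for the final concatenated chain is again obtained by applying \Cref{lemma:singlesender}(2) across the IH's tail.

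If $c = (?\ell',A(\ell'))$, by \Cref{lemma:independentcommunication} we have $\ell' = \ell_{i_0}$ for some $i_0 \neq j$ and there exist $G' \ni \ell_j$ and $n'$ with $\synvone{\ell_{i_0}}{\sigma}{\psi} \xrightarrow{A(\ell_{i_0})} m_{i_0} \xrightarrow{(!G',A(\ell_{i_0}))} n'$, using \Cref{lemma:deterministicstep} to identify the target of the first transition with $m_{i_0}$. The heart of the argument is to transfer this send capability from $m_{i_0}$ to $n_{i_0}$. I will argue that the path $m_{i_0} \boldarrowt n_{i_0}$ contains no send at all: by \Cref{lemma:singlesender}(3) any such send has label $A(\ell_{i_0})$, and inspection of the synthesis in \Cref{tab:synthesis-decentralized} shows that every send produced by the decentralized synthesis at location $\ell_{i_0}$ uses the group $\range(\sigma) \setminus \{\ell_{i_0}\}$, which contains $\ell_j$; condition (2) of the $\xybRightarrow{}{}{}{}$ hypothesis would then force $n_j \NOT{\xrightarrow{(?\ell_{i_0},A(\ell_{i_0}))}}$, contradicting $n_j \xrightarrow{c} n_j^\star$. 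Hence $m_{i_0} \boldarrowt n_{i_0}$ is made exclusively of receives, and iterating \Cref{lemma:sendreceiveorder} yields $n_{i_0} \xrightarrow{(!G',A(\ell_{i_0}))} n_{i_0}^\star$. We then lift as in the previous subcase: the decentralized step
$\bigvee_{i=1}^k [n_i]_{\ell_i} \xrightarrow{\ell_{i_0}:(!G',A(\ell_{i_0}))} \bigvee_{i=1}^k [n_i^\dagger]_{\ell_i}$
with $n_{i_0}^\dagger = n_{i_0}^\star$, $n_j^\dagger = n_j^\star$, and the other $n_i^\dagger$ determined by receive-or-stay; verification of the IH preconditions is identical in spirit to Subcase 1, and since $n_j^\dagger \boldarrowt o_j$ is strictly shorter the IH closes the proof.

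The main obstacle is precisely the receive subcase and, within it, the argument that $n_{i_0}$ still enjoys the send $(!G',A(\ell_{i_0}))$; this is where the specific shape of the synthesized sends (fixed group $\range(\sigma) \setminus \{\ell_{i_0}\}$) interacts critically with condition (2) of $\xybRightarrow{}{}{}{}$ and the commutation property \Cref{lemma:sendreceiveorder}. The rest of the proof is careful but routine bookkeeping of the $\xybRightarrow{}{}{}{}$ conditions under the one-step extensions, relying on \Cref{lemma:side,lemma:singlesender}.
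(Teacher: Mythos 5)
Your proof is correct and follows essentially the same route as the paper's: the same induction on the length of $n_j \boldarrowt o_j$, the same case split on the first transition, and the same supporting lemmata (\Cref{lemma:singlesender}, \Cref{lemma:independentcommunication}, \Cref{lemma:deterministicstep}, \Cref{lemma:sendreceiveorder}, \Cref{lemma:side}, \Cref{lemma:group}). The one place you deviate is inside the receive subcase, where you must transfer the send capability $(!G',A(\ell_{i_0}))$ from $m_{i_0}$ to $n_{i_0}$. The paper does not claim the path $m_{i_0}\boldarrowt n_{i_0}$ is send-free; it only argues that the path cannot contain a transition with that \emph{particular} label (otherwise condition (2) of $\xybRightarrow{\ell_i}{A}{k}{i=1}$ would contradict $n_j\xrightarrow{(?\ell_{i_0},A(\ell_{i_0}))}$), and then commutes the pending send past any \emph{other} sends and receives using both \Cref{lemma:sendreceiveordertwo} and \Cref{lemma:sendreceiveorder}. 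You instead argue that \emph{no} send at all can occur on that path, which lets you drop \Cref{lemma:sendreceiveordertwo} and iterate only \Cref{lemma:sendreceiveorder}; this is sound, but it leans on the extra observation that every send reachable from $\synvone{\ell_{i_0}}{\sigma}{\psi}$ carries content $A(\ell_{i_0})$ and group $\range(\sigma)\setminus\{\ell_{i_0}\}\ni\ell_j$ — a fact about the synthesized monitors that the generic ``relevant monitor'' grammar does not by itself guarantee and that would deserve its own small induction if made formal. The paper's version avoids that commitment at the cost of one more commutation lemma; both close the case.
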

   \begin{proof}
Fix any $j \in \{1,\ldots,k\}$ and 
let $n_j \boldarrowt o_j$ in $h_j$ steps; we proceed by induction on $h_j$. In the base case, we have that $n_j=o_j$ and so it suffices to consider $q_i = n_i$, for every $i$.
   
   For the inductive step, let $n_j \xrightarrow{c} n'_j \boldarrowt o_j$ with $c=(!G,\gamma)$ or $c=(?\ell,A(\ell))$ for some $\ell\in\Loc$:
   \begin{enumerate} 
   \item Case $c=(!G,\gamma)$: 
by \Cref{lemma:singlesender}, we know that $\gamma=A(\ell_j)$; moreover,
   $\bigvee_{i =1}^k[n_{i}]_{\ell_i} \xrightarrow{\ell_j:(!G,\gamma)}
   \bigvee_{i=1}^{k}
   [n'_{i}]_{\ell_i}$, where $\bigvee_{i \in \{1..k\} \setminus \{j\}}[n_{i}]_{\ell_i} \extoverset{$G:(?\ell_j,\gamma)$}\rightsquigarrow \bigvee_{i \in \{1..k\} \setminus \{j\}}[n'_{i}]_{\ell_i}$ by \Cref{lemma:squiggly_on_the_right}.
   
   We now argue that $m_i \xybRightarrow{\ell_i}{A}{k}{i=1}  n'_{i}$. For each $\ell_i \in \range(\sigma)$, we have that $m_{i}\boldarrowt n_{i}$, say in $g_{i}$ steps. 
   By \Cref{lemma:group}, we obtain that either $n_{i}\xrightarrow{(?\ell_j,A(\ell_j))} n'_{i}$ or $n_{i}=n'_{i}$, for all $i \neq j$; thus, $m_{i} \boldarrowt n'_{i}$ for each $\ell_i \in\range(\sigma)$ in $f_{i} = g_{i}+1$ or $f_{i} = g_{i}$ steps, respectively.
   Furthermore, if $f_{i} = g_{i} +1$,
   then $c_{i}^{g_{i}+1}=(?\ell_j,A(\ell_j))$, for all $i \neq j$, and $c_{j}^{g_{j}+1}=c$.
   For the second condition of $m_{i} \xybRightarrow{\ell_i}{A}{k}{i=1}  n'_{i}$, 
   let
   $i \neq j$ and $m_{i} \boldarrowt n'_i$ in 
   $f_{i}$ steps;
   let $f \leq f_{i}$ and $c_{i}^f=(!G',\gamma')$. 
   Now take some $\ell_h \in G'\cap\range(\sigma)$ with $h \neq i$, and we have to show that $n'_{h} \NOT{\xrightarrow{(?\ell_i,\gamma')}}$. 
   If $f \leq g_i$, then $m_{i} \xybRightarrow{\ell_i}{A}{k}{i =1}  n_{i}$ implies that $n_{h} \NOT{\xrightarrow{(?\ell_i,\gamma')}}$
   %
and, by \Cref{lemma:singlesender}, we can conclude that $n'_{h} \NOT{\xrightarrow{(?\ell_i,\gamma')}}$. 
If $f=g_{i}+1$, then we have reasoned above that $i=j$, $G'=G$ and $\gamma'=\gamma$. 
   Furthermore, we know that either $n_{h}\xrightarrow{(?\ell_i,\gamma)} n'_{h}$ or $n'_{h}= n_{h}\NOT{\xrightarrow{(?\ell_i,\gamma)}}$; in both cases, \Cref{lemma:singlesender} entails that $n'_{h}\NOT{\xrightarrow{(?\ell_i,\gamma)}}$ and so $m_{i} \xybRightarrow{\ell_i}{A}{k}{i =1}  n'_{i}$.

By \Cref{lemma:side}, for all $\ell_i \in \range(\sigma)$, we have that 
   $\synvone{\ell_i}{\sigma}{\psi}\xrightarrow{A(\ell_i)}m_{i}\boldarrowt n'_{i} \boldarrowt o_{i}$. Since $m_{i}\xybRightarrow{\ell_i}{A}{k}{i=1}  n'_{i}$, we can apply the induction hypothesis to obtain that 
   $\bigvee_{i=1}^{k}[n'_{i}]_{\ell_i}\boldarrow \bigvee_{i=1}^{k}[q_{i}]_{\ell_i}$, for some $q_i$s such that $n'_{i}\xybRightarrow{\ell_i}{A}{k}{i=1}  q_{i}$ and $q_j = o_j$. 
It remains to show that $n_{i} \xybRightarrow{\ell_i}{A}{k}{i=1}  q_{i}$: 
   $m_{i} \xybRightarrow{\ell_i}{A}{k}{i=1}  n'_{i}$
   and 
   $n'_{i} \xybRightarrow{\ell_i}{A}{k}{i=1}  q_{i}$ 
   immediately yield 
   the first condition for $n_{i} \xybRightarrow{\ell_i}{A}{k}{i=1}  q_{i}$
   and they yield the second condition by using 
   \Cref{lemma:singlesender}.
   
   
   \item Case $c=(?\ell,A(\ell))$: by \Cref{lemma:independentcommunication}, $\ell \in \range(\sigma)$ (say, $\ell = \ell_h$), $j \neq h$, and 
   $\synvone{\ell_h}{\sigma}{\psi}\xrightarrow{A(\ell_h)}p_{h} \xrightarrow{(!G',A(\ell_h))}p'_{h}$, for some $G' \subseteq \Loc$ such that $\ell_j \in G'$. 
   From \Cref{lemma:deterministicstep}, $p_{h} = m_{h}$.
   Since $m_{i} \xybRightarrow{\ell_i}{A}{k}{i=1}  n_{i}$, then in particular
   $m_{h} \boldarrowt n_{h}$ with the same transitions;
   by
   \Cref{lemma:sendreceiveordertwo} and \Cref{lemma:sendreceiveorder}, $n_{h} \xrightarrow{(!G',A(\ell_h))} n'_{h}$, if 
   $m_{h}\boldarrowt n_{h}$ 
   does not include any transition labeled by $(!G',A(\ell_h))$. 
   If such a transition does exist, from  $m_{i} \xybRightarrow{\ell_i}{A}{k}{i=1}  n_{i}$, we get that $n_{i}\NOT{\xrightarrow{(?\ell_h,A(\ell_h))}}$, for all $i \neq h$ and $\ell_i \in \range(\sigma)\cap G'$; but 
   this would contradict $n_j \xrightarrow{c} n'_j$, since $\ell_j \in G'$. 
   Hence, we have $n_{h} \xrightarrow{(!G',A(\ell_h))} n'_{h}$ and so
   $\bigvee_{i=1}^k [n_{i}]_{\ell_i} \xrightarrow{\ell_h:(!G',A(\ell_h)) }\bigvee_{i=1}^k[n'_{i}]_{\ell_i}$, 
   where, by  \Cref{lemma:squiggly_on_the_right}, 
   $\bigvee_{i \in \{1..k\} \setminus\{h\}} [n_{i}]_{\ell_i} \extoverset{$G':(?\ell_h,A(\ell_h))$} \rightsquigarrow \bigvee_{i \in \{1..k\} \setminus\{h\}} [n'_{i}]_{\ell_i}$.
   We can now argue that $m_{i} \xybRightarrow{\ell_i}{A}{k}{i=1} n'_{i}$ as in case 1 above. 
By \Cref{lemma:sidegeneral}, $n'_{i} \boldarrowt q_{i}$ for all $i \in \{1,\ldots,k\}$ and, by the induction hypothesis, 
$\bigvee_{i =1}^k [n'_{i}]_{\ell_i} \boldarrow \bigvee_{i =1}^k [q_{i}]_{\ell_i}$ 
with $n'_{i} \xybRightarrow{\ell_i}{A}{k}{i=1} q_{i}$ and $q_j=o_j$. 
We can then conclude, by arguing that $n_{i}\xybRightarrow{\ell_i}{A}{k}{i=1}  q_{i}$ as in case 1.
   \qedhere
   \end{enumerate}
   \end{proof}
   
   The next lemma is important for \Cref{lemma:communicationexists} and \Cref{lemma:actionbispsi} (part of the lemmas for Formula Convergence). It is actually stronger than what is needed for \Cref{lemma:communicationexists}, as therein we shall just need that $\bigvee_{i=1}^k [n_i]_{\ell_i}$ transitions to some monitor that cannot comunicate, and not exactly  to $\bigvee_{i=1}^k [o_i]_{\ell_i}$. However, the latter is needed for \Cref{lemma:actionbispsi}, so we present the stronger statement already here. The lemma shows that local monitors can be combined and, even if their communications `accidentally' influence one another, the same final state (where no communication can take place) is reached. It also shows another important property: in $\boldarrowt$ local monitors can send and receive freely, but in $\boldarrow$ between decentralized monitors communication is always initiated by a sender. This lemma captures that all the receiving transitions a local monitor can do before it reaches a state that cannot communicate are triggered by sending actions present in other local monitors located at locations found in the range of $\sigma$.

   \begin{lemma}\label{lemma:psijoincom}
   Let $\range(\sigma)=\{\ell_1,\dots,\ell_k\}$ and, for all $i \in \{1,\ldots,k\}$, 
   $\synvone{\ell_i}\sigma{\psi}\xrightarrow{A(\ell_i)}m_i \boldarrowt n_i \boldarrowt o_i$, with $o_i$ that cannot communicate. 
If $m_i\xybRightarrow{\ell_i}{A}{k}{i=1} n_i$, then $\bigvee_{i=1}^k [n_i]_{\ell_i} \boldarrow \bigvee_{i=1}^k [o_i]_{\ell_i}$. 
   \end{lemma}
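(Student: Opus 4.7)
The plan is to apply \Cref{lemma:unioncom} iteratively, fixing one more component to its non-communicating target $o_i$ at each step. I prove by induction on $j\in\{0,\dots,k\}$ the stronger claim that there exist local monitors $q_1^{(j)},\dots,q_k^{(j)}$ such that (a) $\bigvee_{i=1}^k [n_i]_{\ell_i}\boldarrow \bigvee_{i=1}^k [q_i^{(j)}]_{\ell_i}$, (b) $m_i\xybRightarrow{\ell_i}{A}{k}{i=1} q_i^{(j)}$, and (c) $q_i^{(j)}=o_i$ for every $i\le j$. Taking $j=k$ then yields the lemma. The base case $j=0$ is immediate with $q_i^{(0)}=n_i$, where (a) is the reflexive transition, (b) is the hypothesis, and (c) is vacuous.

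For the inductive step, assuming the claim for some $j<k$, I invoke \Cref{lemma:unioncom} with the intermediate monitors $q_i^{(j)}$ in place of its ``$n_i$'' and the index $j+1$. Its hypothesis $\synvone{\ell_i}\sigma{\psi}\xrightarrow{A(\ell_i)} m_i\boldarrowt q_i^{(j)}\boldarrowt o_i$ follows from (b) (which in particular gives $m_i\boldarrowt q_i^{(j)}$) and from \Cref{lemma:sidegeneral} applied to $n_i\boldarrowt o_i$ (with $o_i$ non-communicating) and $n_i\boldarrowt q_i^{(j)}$; the hypothesis $m_i\xybRightarrow{\ell_i}{A}{k}{i=1} q_i^{(j)}$ is exactly (b). The lemma delivers $q_i^{(j+1)}$ with $q_{j+1}^{(j+1)}=o_{j+1}$, the decentralized transition $\bigvee_i[q_i^{(j)}]_{\ell_i}\boldarrow\bigvee_i[q_i^{(j+1)}]_{\ell_i}$ giving (a), and a local refinement $q_i^{(j)}\xybRightarrow{\ell_i}{A}{k}{i=1} q_i^{(j+1)}$ that I concatenate with the paths witnessing (b) to obtain (b) at $j+1$.

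The portion of (c) already established for $i\le j$ is preserved because each $q_i^{(j)}=o_i$ cannot communicate: inspecting \Cref{tab:communicating_operational_semantics_communicating_monitors}, a component $[o_i]_{\ell_i}$ can only participate in a $\boldarrow$-step through trivial ``stay put'' receive rules, so $q_i^{(j+1)}=q_i^{(j)}=o_i$. The main technical obstacle is the concatenation of $\xybRightarrow{}{}{}{}$ paths for (b): condition~1 of the definition is preserved pointwise along each local trace, but condition~2 (the receive-refusal of other components after a send) must be propagated from the intermediate endpoints $q_{i'}^{(j)}$ to the new later endpoints $q_{i'}^{(j+1)}$, and this is precisely what \Cref{lemma:singlesender}(2) provides, since each $q_{i'}^{(j+1)}$ is reached from $q_{i'}^{(j)}$ by $\boldarrowt$.
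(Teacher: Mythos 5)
Your proof is correct and follows essentially the same strategy as the paper's: repeatedly invoke \Cref{lemma:unioncom} to pin one component at a time to its non-communicating target $o_i$, re-establishing the hypothesis $q_i \boldarrowt o_i$ via \Cref{lemma:sidegeneral} and condition~2 of $m_i\xybRightarrow{\ell_i}{A}{k}{i=1} q_i$ via \Cref{lemma:singlesender}(2) after each round. The only difference is bookkeeping: the paper inducts on $k$ and strips off the pinned component before applying the induction hypothesis, whereas you keep all $k$ components and induct on the number already pinned, which if anything is slightly cleaner since it avoids re-instantiating the statement for a smaller $\range(\sigma)$.
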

   \begin{proof}
       We proceed by induction on $k$. If $k=0$, the result is trivial.
       In the inductive step, 
by \Cref{lemma:unioncom}, there exist $q_1,\ldots,q_k$ and $j \in \{1,\ldots,k\}$ such that $\bigvee_{i=1}^k[n_{i}]_{\ell_i} \boldarrow \bigvee_{i =1}^k[q_{i}]_{\ell_i}$, $n_{i} \xybRightarrow{\ell_i}{A}{k}{i=1} q_{i}$, and $q_j=o_j$. 
For all $i \neq j$, we have $\synvone{\ell_i}\sigma{\psi}\xrightarrow{A(\ell_i)}m_{i} \boldarrowt n_{i} \boldarrowt o_i$ and $n_{i}\boldarrowt q_{i}$; by \Cref{lemma:sidegeneral}, $q_{i} \boldarrowt o_i$. 
From this, we derive that $\synvone{\ell_i}\sigma{\psi}\xrightarrow{A(\ell_i)}m_{i}\boldarrowt q_{i}\boldarrowt o_i$, for all $i \in \{1,\ldots,k\}$. 
   
We now show that $m_{i} \xybRightarrow{\ell_i}{A}{k}{i=1}  q_{i}$.
   For each $i \in \{1,\ldots,k\}$, we have $m_{i}\boldarrowt n_{i}$ (say, in $f_i$ steps) and $n_i \boldarrowt q_i$ (say, in $g_i$ steps); thus, $m_i\boldarrowt q_i$ in $h_i=f_i+g_i$ steps. 
   For the second condition of $m_{i}\xybRightarrow{i}{A}{k}{i=1}  q_{i}$, take any $i$, any $h \leq h_i$ (with $c_i^h = (!G,\gamma)$) and any $\ell_{i'} \in G\cap\range(\sigma)$ with $i'\neq i$, and we have to show that $q_{i'}\NOT{\xrightarrow{(?\ell_i,\gamma)}}$. 
   If $h \leq f_i$, then $m_{i}\xybRightarrow{\ell_i}{A}{k}{i=1} n_{i}$ entails that $n_{i'}\NOT{\xrightarrow{(?\ell_i,\gamma)}}$; by \Cref{lemma:singlesender}, $q_{i'}\NOT{\xrightarrow{(?i\ell_,\gamma)}}$. 
   On the other hand, if $h >f_i$, the result follows from $n_{i}\xybRightarrow{\ell_i}{A}{k}{i=1} q_{i}$.
   
Finally, the induction hypothesis entails that $\bigvee_{i \in \{1..k\}\setminus\{j\}}[q_{i}]_{\ell_i} \boldarrow \bigvee_{i \in \{1..k\}\setminus\{j\}}[o_i]_{\ell_i}$. Hence, $\bigvee_{i \in \{1..k\}} [n_i]_{\ell_i} \boldarrow \bigvee_{i \in \{1..k\}\setminus\{j\}}[q_{i}]_{\ell_i} \vee [o_j]_{\ell_j} \boldarrow \bigvee_{i \in \{1..k\}}[o_i]_{\ell_i}$. In the last step we use that $o_j$ cannot communicate and therefore does not change anymore.
   \end{proof}   

Another important feature that we need for Bounded Communication and Formula Convergence is the following: if different monitors influence each other via communication when combined (so a decentralized monitor is combined with another monitor and all of a sudden a local monitor in the decentralized system receives messages that it did not receive before), this makes no difference on the states the monitors reach when no more communication can take place. 

We first set up a definition that helps us
with two things. First, it allows us to
describe what a decentralized monitor $M$ looks like if it is combined via $\vee$ or $\wedge$ with another decentralized monitor, and then transitions according to a communication path of that other monitor. In other words, $M$ will change state when the other monitor sends messages that local monitors in $M$ can receive, and we formally capture what $M$ looks like after these changes. Second, this definition will allow us to prove that influences from another monitor captured via \Cref{def:completeset} do not affect the final outcome (when no more communication takes place). These two elements then lead to \Cref{lemma:joincommunication}, where the different monitors in a join will send messages that `accidentally' reach the others and it is shown this is not a problem. \Cref{lemma:joincommunication} is used to prove \Cref{lemma:communicationexists} (Bounded Communication) and \Cref{lemma:actionbis} (Formula Convergence).

To this end, we define a set $L$ of monitors that can all take the same receiving transition. \Cref{def:completeset} defines this set properly, and \Cref{lemma:preservecompleteness} proves three general facts about this set.
\begin{definition}\label{def:completeset}
    Let $H\subseteq \Loc$ and $M\in \DMon$. Let $L=\{[m_1]_
    {\ell_1}, \dots ,[m_k]_{\ell_k}\}$. 
    We say that $L$ is a \emph{set of recipients for $H:(?\ell', \gamma)$} if, for all $j\in \{1,\dots,k\}$:
    \begin{enumerate}
        \item $\ell_j\in H$, and 
        \item $m_j\xrightarrow{(?\ell',\gamma)} n_j$, for some $n_j\in\LMon$.
    \end{enumerate}
    We call a set $L$ an \emph{$M$-cover} for $H:(?\ell', \gamma)$ if 
    \begin{enumerate}
        \item $L$ is a set of recipients for $H:(?\ell', \gamma)$, 
        \item every $[m]_\ell \in L$ 
        is action-derived, and 
        \item if $[m]_\ell\in M$, $\ell\in H$ and 
    $m \boldarrow m'\xrightarrow     {(?\ell',\gamma)} n $, 
    then $[m']_\ell\in L$. 
    \end{enumerate}
\end{definition}

\begin{lemma}\label{lemma:preservecompleteness}
    Let $H\subseteq \Loc$, $M$ be action-derived, and $L=\{[m_1]_
    {\ell_1}, \dots ,[m_k]_{\ell_k}\}$ be an $M$-cover for $H:(?\ell, \gamma)$. 
    \begin{enumerate}
        \item If $M=[m]_{\ell'}\notin L$, and $M\xrightarrow{\ell':(!G,\gamma)} N$ or $M\extoverset{$G:(?\ell'',\gamma)$}\rightsquigarrow N$, then $N=[n]_{\ell'}\notin L$.
        \item If $M\extoverset{$G:(?\ell',\gamma)$}\rightsquigarrow N$, then $L$ is also an $N$-cover for $H:(?\ell, \gamma)$.
        \item If $M\xrightarrow{\ell':(!G,\gamma)} N$, then $L$ is also an $N$-cover for $H:(?\ell, \gamma)$.
    \end{enumerate}
\end{lemma}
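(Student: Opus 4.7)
The plan is to treat all three parts uniformly via a single structural observation: every transition from $M$ to $N$ --- whether $\xrightarrow{\ell':(!G,\gamma')}$, $\xrightsquigarrow{G:(?\ell'',\gamma')}$, or a no-op branch of the latter --- preserves the location of each component and advances its local state by at most one communication step. Combined with \Cref{lemma:singlesender}(2) (inability to receive $(?\ell,\gamma)$ propagates along $\boldarrow$), this is exactly what is needed to carry the $M$-cover invariants over to $N$-cover invariants. Conditions~1 and~2 of the $M$-cover definition refer only to $L$ itself and are unaffected by replacing $M$ with $N$, so in every part the only thing to check is condition~3.

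For Part~1, I would first unpack the assumption $M=[m]_{\ell'}\notin L$. Applying condition~3 of the $M$-cover with the trivial $m\boldarrow m$ (length one) shows that if $\ell'\in H$ and $m\xrightarrow{(?\ell,\gamma)}$, then $[m]_{\ell'}\in L$; hence either $\ell'\notin H$ or $m\NOT{\xrightarrow{(?\ell,\gamma)}}$. In the first case, the transition does not change the location, so $N=[n]_{\ell'}$ still has $\ell'\notin H$ and is not a recipient. In the second case, whichever of $m\xrightarrow{(!G,\gamma')}n$ or $m\xrightarrow{(?\ell'',\gamma')}n$ is in force gives $m\boldarrow n$, and \Cref{lemma:singlesender}(2) (whose applicability relies on the hypothesis that $M$ is action-derived) yields $n\NOT{\xrightarrow{(?\ell,\gamma)}}$, so $N$ fails condition~2 of being a recipient and thus $N\notin L$.

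For Part~2, the plan is to prove the following per-component reachability claim by induction on the derivation of $M\xrightsquigarrow{G:(?\ell'',\gamma')} N$: for every $[n']_{\ell'''}\in N$, the component $[m]_{\ell'''}\in M$ at the same location satisfies $m\boldarrow n'$ (either $m=n'$ by one of the two identity rules, or $m\xrightarrow{(?\ell'',\gamma')}n'$). Given this, condition~3 for $N$ is immediate: if $[n']_{\ell'''}\in N$, $\ell'''\in H$ and $n'\boldarrow n''\xrightarrow{(?\ell,\gamma)} n'''$, then concatenation yields $m\boldarrow n''\xrightarrow{(?\ell,\gamma)} n'''$, and condition~3 of the original $M$-cover places $[n'']_{\ell'''}\in L$. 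Part~3 follows the same pattern, except that the induction is on the derivation of $M\xrightarrow{\ell':(!G,\gamma')} N$ and produces, for each component of $N$, one of three cases: unchanged, a single send step by the unique sender at $\ell'$, or a single receive step in the attached $\xrightsquigarrow{G:(?\ell',\gamma')}$. In each case, $m\boldarrow n'$, and the rest of the argument is identical to Part~2.

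The main obstacle will be the bookkeeping for Part~3: the structural induction must correctly classify each localised component according to whether it is the sender, sits in $G$ and can receive, sits in $G$ but cannot receive, or sits outside $G$, and show that in every one of these cases the corresponding component of $N$ is reached by a $\boldarrow$-sequence of length at most one. This is routine but has to be tracked carefully against the rules of \Cref{tab:communicating_operational_semantics_communicating_monitors}; once it is in place, the appeal to condition~3 of the hypothesis $M$-cover finishes all three parts without further work.
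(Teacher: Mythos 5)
Your proposal is correct and follows essentially the same route as the paper's proof: Part 1 via the cover definition (trivial path $m\boldarrow m$) plus \Cref{lemma:singlesender}(2), and Parts 2 and 3 by induction on the transition derivation, observing that each localised component of $N$ is reached from the corresponding component of $M$ by at most one communication step and then concatenating with the given path to invoke condition 3 of the $M$-cover. The only (harmless) difference is that you factor this out as an explicit per-component reachability claim, whereas the paper carries the cover property directly through the induction.
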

\begin{proof}
    For the first item, we observe that $N=[n]_{\ell'}$, where 
    \begin{itemize}
        \item $m=n$; or 
        \item $m\xrightarrow{(?\ell'',\gamma)}n$, or;
        \item $m\xrightarrow{(!G,\gamma)}n$. 
    \end{itemize}
    In the first case, we have $[n]_{\ell'}=[m]_{\ell'}\notin L$. In the other two cases, because $L$ is an $M$-cover, either $\ell'\notin H$, and so $[n]_{\ell'}\notin L$, or $m\NOT{\xrightarrow{(?\ell, \gamma)}}$, and by \Cref{lemma:singlesender} this implies that $n\NOT{\xrightarrow{(?\ell, \gamma)}}$ and again $[n]_{\ell'}\notin L$.
    
    For the second item, we proceed by induction on $\extoverset{$G:(?\ell',\gamma)$}\rightsquigarrow$.
    For the base case, $M=[m]_{\ell''}$, $N=[n]_{\ell''}$, $\ell''\in G$, and $m=n$ or $m\xrightarrow{(?\ell',\gamma)}n$. Suppose that $n\boldarrow n_1\xrightarrow{(?\ell,\gamma)} n_2$ and $\ell''\in H$. Then, we also know that $m\boldarrow n_1\xrightarrow{(?\ell,\gamma)} n_2$; as $[m]_{\ell''}=M$, because $L$ is an $M$-cover, we obtain that $[n_1]_{\ell''}\in L$, yielding that $L$ is an $N$-cover. 
    For the inductive case, $M=M_1\op M_2$, $N=N_1\op N_2$, $M_1 \extoverset{$G:(?\ell',\gamma)$}\rightsquigarrow N_1$ and $M_2 \extoverset{$G:(?\ell',\gamma)$}\rightsquigarrow N_2$. Since $L$ is an $M$-cover, it is also an $M_1$-cover and an $M_2$-cover. From the induction hypothesis, $L$ is an $N_1$-cover and an $N_2$-cover, and thus an $N$-cover.

For the third item, we proceed by induction on $\xrightarrow{\ell' :(!G,\gamma)}$. For the base case, $M=[m]_{\ell'}$, $N=[n]_{\ell'}$, and $m\xrightarrow{(!G,\gamma)}n$. Suppose that $n\boldarrow n_1\xrightarrow{(?\ell,\gamma)} n_2$. Then, we also know that $m\boldarrow n_1\xrightarrow{(?\ell,\gamma)} n_2$; as $[m]_{\ell'}=M$, because $L$ is an $M$-cover, we obtain that $[n_1]_{\ell'}\in L$, yielding that $L$ is an $N$-cover. 
For the inductive case, $M=M_1\op M_2$, $N=N_1\op N_2$, $M_1 \xrightarrow{\ell':(!G,\gamma)} N_1$ and $M_2 \extoverset{$G:(?\ell',\gamma)$}\rightsquigarrow N_2$. Since $L$ is an $M$-cover, it is also an $M_1$-cover and an $M_2$-cover. From the induction hypothesis and the previous item, $L$ is an $N_1$-cover and an $N_2$-cover, and thus an $N$-cover.
\end{proof}

In what follows it is useful to describe the state of a monitor $M$ after receiving the message corresponding to $L$. To do this formally, we observe the following:

\begin{lemma}\label{lemma:unique-transitions-from-covers}
    Let $L$ be an $M$-cover for $H:(?\ell,\gamma)$. 
    For every $[m]_{\ell'}\in L$, if 
    $m \xrightarrow{(?\ell,\gamma)} m_1$ and $m \xrightarrow{(?\ell,\gamma)} m_2$, then $m_1 = m_2$ and $[m_1]_\ell \notin L$.
\end{lemma}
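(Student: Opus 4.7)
The plan is to split the statement into its two claims and discharge each by appealing to lemmas already established in the excerpt.

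For the equality $m_1 = m_2$, the strategy is to invoke \Cref{lemma:uniqueness}, which asserts that receiving transitions on relevant local monitors are deterministic. I first need to verify that $m$ is indeed relevant: by definition of an $M$-cover (condition 2 of \Cref{def:completeset}), every element of $L$, and in particular $[m]_{\ell'}$, is action-derived. \Cref{lemma:A-derived-iff-relevant} then ensures that $m$ is a relevant local monitor, so \Cref{lemma:uniqueness} applies and gives $m_1 = m_2$ directly.

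For the non-membership claim $[m_1]_{\ell'} \notin L$, the key observation is that every element of $L$ is, by definition, a set of recipients for $H:(?\ell,\gamma)$, so it must be able to perform a $(?\ell,\gamma)$ transition (condition 2 in the definition of set of recipients). However, \Cref{lemma:singlesender} (item 1) guarantees that an action-derived local monitor which has just performed a $(?\ell,\gamma)$ receive cannot immediately perform that same receive again: since $m$ is action-derived and $m \xrightarrow{(?\ell,\gamma)} m_1$, we obtain $m_1 \NOT{\xrightarrow{(?\ell,\gamma)}}$. Hence $[m_1]_{\ell'}$ cannot satisfy the defining condition of $L$, and so $[m_1]_{\ell'} \notin L$.

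I do not anticipate any real obstacle: both claims are immediate consequences of previously established lemmas, once one unwraps the definitions of $M$-cover and set of recipients. The only minor care needed is checking the action-derived hypothesis required to apply \Cref{lemma:singlesender} and \Cref{lemma:uniqueness}, which is given directly by the definition of an $M$-cover.
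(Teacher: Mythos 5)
Your proof is correct and follows exactly the paper's route: the paper's own proof simply states that the lemma is immediate from \Cref{lemma:singlesender} and \Cref{lemma:uniqueness}, which are precisely the two lemmas you invoke (for the non-re-receivability giving $[m_1]_{\ell'}\notin L$, and for determinism of receives giving $m_1=m_2$, respectively). Your additional care in checking the action-derived/relevant hypotheses via condition 2 of \Cref{def:completeset} and \Cref{lemma:A-derived-iff-relevant} just makes explicit what the paper leaves implicit.
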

\ifarxiversion
\begin{proof}
    The lemma follows immediately from \Cref{lemma:singlesender,lemma:uniqueness}.
\end{proof}
\fi

With this lemma, we can then define what $M$ after $L$ is:
\begin{definition}\label{def:M_L_}
    Let $L= \{[m_1]_
{\ell_1},\dots ,[m_k]_{\ell_k}\}$ 
be an $M$-cover for $H:c$ and, for each $i \leq k$, let $n_i$ be the unique monitor (by \Cref{lemma:unique-transitions-from-covers}) such that 
$m_i \xrightarrow{c} n_i$.
We define $M[L] = M[[n_1]_{\ell_1}/[m_1]_{\ell_1}]\dots [[n_k]_{\ell_k}/
[m_k]_{\ell_k}]$.
\end{definition}

We can now use \Cref{def:completeset,def:M_L_} and \Cref{lemma:preservecompleteness} to describe a monitor that `accidentally' receives messages from another monitor's communication transitions.
\begin{lemma}
    \label{lemma:many}
    Let 
    $M_1$ and $M_2$ be $A$-derived and such that
    $M_1^1\xrightarrow{c_1}M_1^2\xrightarrow{c_2}\dots 
    \xrightarrow{c_{n-1}} M_1^n$
and
    $M_1^1\op
    M_2^1
    \xrightarrow{c_1} 
    M_1^2\op
    M_2^2
    \xrightarrow {c_2}
    \dots\xrightarrow{c_{n-1}}M^n_1
    \op
    M^n_2
    $, 
    where 
    $M_1^1=M_1$, 
    $M_2^1=M_2$, 
    $n> 0$,
    and  
    $c_i=\ell_i:(!G_i,A
    (\ell_i))$, for every $i\in\{1,\dots,n-1\}$. 
    Then, for every $i\in\{1,\dots,n-1\}$, there exists 
    an $M_2$-cover 
    $L_i$ for 
    $G_i:(?\ell_i,A(\ell_i))$,
    such that
    \(M_2[L_1]\dots [L_i]= M_2^{i+1}\). 
\end{lemma}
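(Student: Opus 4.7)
The plan is to proceed by induction on $n$. The base case $n=1$ is vacuous, as no covers need to be produced.

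For the inductive step, I first apply the induction hypothesis to obtain $M_2$-covers $L_1, \ldots, L_{n-2}$ with $M_2[L_1]\cdots[L_i] = M_2^{i+1}$ for every $i \leq n-2$; in particular, $M_2^{n-1} = M_2[L_1]\cdots[L_{n-2}]$. I then define $L_{n-1}$ as the set of all $[m']_\ell$ such that there is some $[m_0]_\ell \in M_2$ with $\ell \in G_{n-1}$, $m_0 \boldarrow m'$, and $m' \xrightarrow{(?\ell_{n-1},A(\ell_{n-1}))} n'$ for some $n'$. Verifying that $L_{n-1}$ is an $M_2$-cover for $G_{n-1}:(?\ell_{n-1},A(\ell_{n-1}))$ is immediate: both recipient clauses of \Cref{def:completeset} hold by construction; each $[m']_\ell$ is action-derived because $[m_0]_\ell$ is and $\boldarrow$ preserves action-derivation; and the closure clause is a restatement of the definition. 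By the last rule of \Cref{tab:communicating_operational_semantics_communicating_monitors}, the transition $M_1^{n-1}\op M_2^{n-1} \xrightarrow{c_{n-1}} M_1^n \op M_2^n$, with $c_{n-1} = \ell_{n-1}:(!G_{n-1},A(\ell_{n-1}))$, decomposes into $M_1^{n-1} \xrightarrow{c_{n-1}} M_1^n$ together with $M_2^{n-1} \xrightsquigarrow{G_{n-1}:(?\ell_{n-1},A(\ell_{n-1}))} M_2^n$. \Cref{lemma:group} then characterizes $M_2^n$ explicitly: each top-level component $[m]_\ell$ of $M_2^{n-1}$ is replaced by its unique receiving successor if $\ell \in G_{n-1}$ and $m$ can receive the message, and is left unchanged otherwise.

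The main obstacle is reconciling this $\rightsquigarrow$-based description of $M_2^n$ with the substitution $M_2^{n-1}[L_{n-1}]$, and for this I would strengthen the induction hypothesis with the traceability invariant that every top-level component $[m]_\ell$ of $M_2^{i+1}$ arises from some $[m_0]_\ell \in M_2$ via a chain $m_0 \boldarrow m$. This invariant is preserved inductively: whenever the application of $L_j$ substitutes a top-level $[m_j]_\ell$ by $[n_j]_\ell$ with $m_j \xrightarrow{(?\ell_j,A(\ell_j))} n_j$, any existing chain $m_0 \boldarrow m_j$ extends to a chain $m_0 \boldarrow n_j$. Using the invariant at stage $n-1$, every top-level $[m]_\ell$ of $M_2^{n-1}$ with $\ell \in G_{n-1}$ that can receive $(?\ell_{n-1},A(\ell_{n-1}))$ belongs to $L_{n-1}$ by its closure clause; conversely, elements of $L_{n-1}$ that do not occur as top-level components of $M_2^{n-1}$ have no effect on the iterated substitution. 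Uniqueness of the receiving successors (\Cref{lemma:unique-transitions-from-covers,lemma:uniqueness}) then guarantees that the two descriptions of the new state coincide, yielding $M_2[L_1]\cdots[L_{n-1}] = M_2^n$ and closing the induction.
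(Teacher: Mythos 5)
Your proof follows essentially the same route as the paper's: induction on $n$, the identical definition of $L_{n-1}$, verification that it is an $M_2$-cover, and the final reconciliation of $M_2^{n-1}[L_{n-1}]$ with $M_2^n$ via \Cref{lemma:group} and the uniqueness of receiving successors. Your ``traceability invariant'' is sound but reinvents machinery the paper already packages as \Cref{lemma:preservecompleteness}(2--3), which, iterated along the chain, shows directly that $L_{n-1}$ is an $M_2^{n-1}$-cover, so that the closure clause (with the trivial chain $m \boldarrow m$) gives you exactly clause (a) of your reconciliation.

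The one step you gloss over is the decomposition of the composite transition. You claim that ``by the last rule of \Cref{tab:communicating_operational_semantics_communicating_monitors}'' the transition $M_1^{n-1}\op M_2^{n-1} \xrightarrow{c_{n-1}} M_1^n \op M_2^n$ decomposes with $M_1^{n-1}$ as the sender and $M_2^{n-1}$ as the receiver. That rule (taken up to commutativity of $\op$) equally permits the derivation in which $M_2^{n-1}$ sends and $M_1^{n-1}$ receives, landing in the same pair of states; knowing from the first chain that $M_1^{n-1} \xrightarrow{c_{n-1}} M_1^n$ does not by itself exclude this. The paper rules it out with \Cref{lemma:squiggly_on_the_right}, which rests on the counting argument of \Cref{lemma:squiggly_is_not_straight} (a send strictly decreases the number of top-level send prefixes while a receive preserves it, so the same state cannot be reached both ways). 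You should cite that lemma here rather than the operational rule; with that fix your argument goes through.
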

\begin{proof}
    We proceed by induction on $n$. If $n=1$, then 
    the lemma holds vacuously. 
%
      Otherwise, we have 
      \[M_1^1\op
        M_2^1
        \xrightarrow{c_1} 
        M_1^2\op 
        M_2^2
        \xrightarrow {c_2}
        \dots
        \xrightarrow{c_{n-2}}M^{n-1}_1
        \op
        M^{n-1}_2
        \xrightarrow{c_{n-1}}M^n_1
        \op
        M^n_2.
      \] 
    The induction hypothesis tells us that there exist 
    $L_1,\dots,L_{n-2}$ such that, for each $i\in\{1,\dots,n-2\}$, $L_i$ is an 
    $M_2$-cover
    for $G_i:(?\ell_i,A(\ell_i))$, 
    and 
    $M_2[L_1]\cdots[L_{i}] = M_2^{i+1}$. 
    Let
    \[L_{n-1} = \{ [m']_\ell \mid \ell \in G_{n-1} \text{ and } \exists\,  [m]_\ell \in M_2.~m \boldarrow m'\xrightarrow  {(?\ell_{n-1},A(\ell_{n-1}))} m''  \}.\] 
    Then, $L_{n-1}$ is an 
    $M_2$-cover
    for $G_{n-1}:(?\ell_{n-1},A(\ell_{n-1}))$ by \Cref{def:completeset},
    and also an 
    $M_2^{i}$-cover
    for $G_{n-1}:(?\ell_{n-1},A(\ell_{n-1}))$ by \Cref{lemma:squiggly_on_the_right,lemma:preservecompleteness}.

    What is left to show is that $M_2[L_1]\cdots[L_{n-2}][L_{n-1}] = M_2^{n}$. We can conclude this from the induction hypothesis, if we show that $M_2^{n-1}[L_{n-1}]=M_2^n$. By \Cref{lemma:squiggly_on_the_right}, $M_2^{n-1}\xrightsquigarrow{G_{n-1}:(\ell_{n-1},\gamma_{n-1})}M_2^n$. 
    Let $\{ [m_i]_{\ell_i} \mid 1 \leq i \leq k \} = \{ [m]_\ell \in M_2^{n-1} \}$.
    For each $1 \leq i \leq k$, let $m'_i$ be $m_i$, when $\ell_i \notin G_{n-1}$ or $m_i \NOT{\xrightarrow{(?\ell_{n-1},\gamma_{n-1})}}$, or be such that $m_i \xrightarrow{(?\ell_{n-1},\gamma_{n-1})} m'_i$, otherwise.
    Then, $M[L_{n-1}] = M_2^{n-1}[[m'_1]_{\ell_1}/[m_1]_{\ell_1}]\cdots [[m'_k]_{\ell_k}/[m_k]_{\ell_k}] = M_2^{n}$, by \Cref{lemma:group}.
\end{proof}

\Cref{lemma:stillreceive} and \Cref{lemma:onestep} are used to prove \Cref{lemma:receivealwayshappens}, and they state some basic facts of behavior of $M[L]$ w.r.t. the behavior of $M$.

\begin{lemma}\label{lemma:stillreceive}
    Let $H\subseteq \Loc$, $M$ be $A$-derived, and $L$ be an $M$-cover for $H:(?\ell', A(\ell'))$. If $M\extoverset{$G:(?\ell'',A(\ell''))$}\rightsquigarrow N$, for some $\ell''\in \Loc$, then $M[L]\extoverset{$G:(?\ell'',A(\ell''))$}\rightsquigarrow N[L]$.
\end{lemma}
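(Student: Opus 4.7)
The plan is to proceed by induction on the derivation of $M \extoverset{$G:(?\ell'', A(\ell''))$}{\rightsquigarrow} N$. The base cases correspond to the three $\rightsquigarrow$-rules for $[m]_{\ell_0}$, and the inductive step handles $M = M_1 \op M_2$. A convenient preliminary observation, exploiting \Cref{lemma:uniqueness} and \Cref{lemma:unique-transitions-from-covers}, is that for a single local monitor $[m]_{\ell_0}$ the substitution $M[L]$ has only two outcomes: $M[L] = [m']_{\ell_0}$ with $m \xrightarrow{(?\ell', A(\ell'))} m'$ if $[m]_{\ell_0} \in L$, and $M[L] = M$ otherwise.

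For the easy base cases---(iii) where $\ell_0 \notin G$, and (ii) where $m \NOT{\xrightarrow{(?\ell'', A(\ell''))}}$---we have $N = M$. In case (iii) the conclusion is immediate because rule (iii) applies again to $M[L]$ regardless of whether any substitution happens. In case (ii), if $[m]_{\ell_0} \in L$ and $M[L] = [m']_{\ell_0}$, the inability to receive $(?\ell'', A(\ell''))$ is preserved from $m$ to $m'$ by \Cref{lemma:singlesender}(2) (applied to the local step $m \boldarrow m'$, since local $\boldarrow$ admits receive actions as well), so rule (ii) applies once more.

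The interesting base case is (i), where $\ell_0 \in G$ and $m \xrightarrow{(?\ell'', A(\ell''))} n$ with $N = [n]_{\ell_0}$. If $[m]_{\ell_0} \notin L$, then $M[L] = M$; since locally $m \boldarrow n$, any derivative of $n$ capable of receiving $(?\ell', A(\ell'))$ would, by condition (3) of the cover, force $[m]_{\ell_0}$'s relevant derivatives---including $m$ itself---into $L$, so \Cref{lemma:singlesender}(2) guarantees $[n]_{\ell_0} \notin L$ and $N[L] = N$; rule (i) then concludes. If instead $[m]_{\ell_0} \in L$, pick $m'$ with $m \xrightarrow{(?\ell', A(\ell'))} m'$ so that $M[L] = [m']_{\ell_0}$. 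Now apply \Cref{lemma:commutativereceiving} to the two receives out of $m$ (using that $m$ is relevant by \Cref{lemma:A-derived-iff-relevant}). When $\ell' = \ell''$, \Cref{lemma:uniqueness} gives $m' = n$, and \Cref{lemma:singlesender}(1) ensures $[n]_{\ell_0} \notin L$, so $M[L] = N[L]$ and we conclude with rule (ii). When $\ell' \neq \ell''$, commutativity yields $m_3$ with $m' \xrightarrow{(?\ell'', A(\ell''))} m_3$ and $n \xrightarrow{(?\ell', A(\ell'))} m_3$; since $m \boldarrow n$ and $n$ can receive $(?\ell', A(\ell'))$, condition (3) of the cover places $[n]_{\ell_0} \in L$, so $N[L] = [m_3]_{\ell_0}$ and $M[L] = [m']_{\ell_0} \extoverset{$G:(?\ell'',A(\ell''))$}{\rightsquigarrow} [m_3]_{\ell_0} = N[L]$ by rule (i).

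For the inductive step $M = M_1 \op M_2$, I would first observe that $L$ is also an $M_1$-cover and an $M_2$-cover for $H:(?\ell', A(\ell'))$: conditions (1) and (2) are intrinsic to $L$, while condition (3) only becomes weaker when restricting from $M$ to a sub-monitor. The IH then gives $M_i[L] \extoverset{$G:(?\ell'',A(\ell''))$}{\rightsquigarrow} N_i[L]$ for $i = 1, 2$. Since the substitution operates pointwise on the constituent $[m]_\ell$ of a composed monitor, $M[L] = M_1[L] \op M_2[L]$ and $N[L] = N_1[L] \op N_2[L]$, and the $\op$-rule for $\rightsquigarrow$ concludes. The main obstacle is the $\ell' \neq \ell''$ subcase of (i): one must synchronise \Cref{lemma:commutativereceiving} with the cover-membership condition (3) to correctly identify $m_3$ as the common state reached both by $M[L]$ after receiving $(?\ell'', A(\ell''))$ and by $N$ after receiving $(?\ell', A(\ell'))$.
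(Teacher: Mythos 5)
Your proof is correct and follows essentially the same route as the paper's: induction on the derivation of $\extoverset{$G:(?\ell'',A(\ell''))$}\rightsquigarrow$, a case split on whether the local monitor belongs to $L$, and the same appeals to \Cref{lemma:commutativereceiving}, \Cref{lemma:uniqueness} and \Cref{lemma:singlesender} in the interesting subcases of rule (i). The only cosmetic difference is that you re-derive inline (from cover condition (3) and \Cref{lemma:singlesender}) the fact that membership in $L$ is preserved/excluded under local transitions, where the paper packages this as \Cref{lemma:preservecompleteness}.
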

\begin{proof}
We proceed by induction on $\extoverset{$G:(?\ell'',A(\ell''))$}\rightsquigarrow$. In the first base case, $M=[m]_\ell$, $N=[n]_\ell$, $\ell\in G$ and $m\xrightarrow{(?\ell'',A(\ell''))}n$. If $[m]_\ell\notin L$, then $[n]_\ell\notin L$ via \Cref{lemma:preservecompleteness}. Hence, $M=M[L]$ and $N=N[L]$, which concludes the proof. Otherwise, $M[L]=[m_1]_\ell$ with $m\xrightarrow{(?\ell',A(\ell'))}m_1$ and $\ell\in H$. We distinguish two cases:
\begin{enumerate}
    \item $\ell''\neq \ell'$. We then apply \Cref{lemma:commutativereceiving} on $m$:
    $n\xrightarrow{(?\ell',A(\ell'))} n_1$ and $m_1\xrightarrow{(?\ell'',A(\ell''))} n_1$. From this, we derive that $M[L]=[m_1]_\ell\extoverset{$G:(?\ell'',A(\ell''))$}\rightsquigarrow [n_1]_\ell$. Since $L$ is an $M$-cover, it is also an $N$-cover via \Cref{lemma:preservecompleteness}. 
    Thus, from $n\xrightarrow{(?\ell',A(\ell'))} n_1$ we obtain that 
    $N = [n]_\ell
    \in L$. Hence,  from 
    $n \xrightarrow{(?\ell',A(\ell'))} n_1$ and 
    via \Cref{lemma:unique-transitions-from-covers},  we get 
    $N[L] = [n_1]_\ell$
    and $M[L]=[m_1]_\ell\extoverset{$G:(?\ell'',A(\ell''))$}\rightsquigarrow [n_1]_\ell=N[L]$.

    \item $\ell''=\ell'$. Then, $A(\ell'') = A(\ell')$, and, via \Cref{lemma:uniqueness}, we obtain that $m_1=n$; so, $M[L]=[m_1]_\ell=[n]_\ell=N$. From $m\xrightarrow{(?\ell',A(\ell'))}m_1$ and \Cref{lemma:singlesender}, we conclude that $m_1\NOT{\xrightarrow{(?\ell',A(\ell'))}}$. Since $\ell''=\ell'$ and $m_1=n$, it holds that $n\NOT{\xrightarrow{(?\ell'',A(\ell''))}}$. 
    and $n\NOT{\xrightarrow{(?\ell',A(\ell'))}}$, which implies that $[n]_\ell \notin L$, and therefore $N=N[L]$. 
    From $n\NOT{\xrightarrow{(?\ell'',A(\ell''))}}$ we derive that $M[L]=[n]_\ell\extoverset{$G:(?\ell'',A(\ell''))$}\rightsquigarrow [n]_\ell=N=N[L]$, which is what we wanted to show.
\end{enumerate}

In the second base case, we consider $M=[m]_\ell$, $N=[m]_\ell$, and $m
\NOT {\xrightarrow{(?\ell'',A(\ell''))}}$. If $M=M[L]$, then $[m]_\ell\notin L$ (by \Cref{lemma:unique-transitions-from-covers}), and thus also $N=N[L]$, which concludes the proof. Otherwise, we have 
that $M[L]=[m_1]_\ell$ with $m\xrightarrow{(?\ell',A(\ell'))}m_1$ and $\ell\in H$. We know that $M=N$, so $M[L]=N[L]$. We know that $m
\NOT {\xrightarrow{(?\ell'',A(\ell''))}}$ and, via \Cref{lemma:singlesender}, that $m_1
\NOT {\xrightarrow{(?\ell'',A(\ell''))}}$. We can conclude that $M[L]=[m_1]_\ell\extoverset{$G:(?\ell'',A(\ell''))$}\rightsquigarrow [m_1]_\ell=N[L]$.

In the third base case, we consider $M=[m]_\ell$, $N=[m]_\ell$ and $\ell\notin G$. If $M=M[L]$, then $[m]_\ell\notin L$, and thus also $N=N[L]$, which concludes the proof. Otherwise, we have 
that $M[L]=[m_1]_\ell$ with $m\xrightarrow{(?\ell',A(\ell'))}m_1$ and $\ell\in H$. We know that $M=N$, so $M[L]=N[L]$. Since $\ell\notin G$, we can conclude that $M[L]=[m_1]_\ell\extoverset{$G:(?\ell'',A(\ell''))$}\rightsquigarrow [m_1]_\ell=N[L]$.

For the inductive step, $M=M_1\op M_2$, $N=N_1\op N_2$, $M_1\extoverset{$G:(?\ell'',A(\ell''))$}\rightsquigarrow N_1$, and $M_2\extoverset{$G:(?\ell'',A(\ell''))$}\rightsquigarrow N_2$. Clearly, $M_1$ and $M_2$ are $A$-derived and $L$ is an $M_1$-cover and an $M_2$-cover; thus, we can apply the induction hypothesis to obtain that $M_1[L]\extoverset{$G:(?\ell'',A(\ell''))$}\rightsquigarrow N_1[L]$, and $M_2[L]\extoverset{$G:(?\ell'',A(\ell''))$}\rightsquigarrow N_2[L]$. From this, we derive $M[L]=M_1[L]\op M_2[L]\extoverset{$G:(?\ell'',A(\ell''))$}\rightsquigarrow N_1[L]\op N_2[L]=N[L]$.
\end{proof}

\begin{lemma}\label{lemma:onestep}
    Let $H\subseteq \Loc$, $M$ be $A$-derived and $L$ be an $M$-cover for $H:(?\ell', A(\ell'))$. 
    If $M\xrightarrow{\ell:(!G,\gamma)} N$, then $M[L]\xrightarrow{\ell:(!G,\gamma)} N[L]$.
\end{lemma}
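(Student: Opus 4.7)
The plan is to proceed by induction on the derivation of $M\xrightarrow{\ell:(!G,\gamma)} N$, mirroring the structure of the proof of \Cref{lemma:stillreceive} but using the send/receive commutation result (\Cref{lemma:sendreceiveorder}) in the base case.

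In the base case we have $M=[m]_\ell$, $N=[n]_\ell$ and $m\xrightarrow{(!G,\gamma)} n$. The split is on whether $[m]_\ell\in L$ or not. If $[m]_\ell\notin L$, then \Cref{lemma:preservecompleteness}(1) yields $[n]_\ell\notin L$, so $M=M[L]$ and $N=N[L]$, and the conclusion is immediate. Otherwise, $\ell\in H$ and there exists $m_1$ such that $m\xrightarrow{(?\ell',A(\ell'))} m_1$, giving $M[L]=[m_1]_\ell$. Since $M$ is action-derived, so is $m$, hence $m$ is a relevant local monitor by \Cref{lemma:A-derived-iff-relevant}, and we may apply \Cref{lemma:sendreceiveorder} to the pair $m\xrightarrow{(!G,\gamma)} n$ and $m\xrightarrow{(?\ell',A(\ell'))} m_1$ to obtain $m_3$ with $m_1\xrightarrow{(!G,\gamma)} m_3$ and $n\xrightarrow{(?\ell',A(\ell'))} m_3$. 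Thus $M[L]=[m_1]_\ell\xrightarrow{\ell:(!G,\gamma)} [m_3]_\ell$. It remains to check that $[m_3]_\ell=N[L]$. By \Cref{lemma:preservecompleteness}(3), $L$ is also an $N$-cover for $H:(?\ell',A(\ell'))$, and since $n\xrightarrow{(?\ell',A(\ell'))} m_3$ with $\ell\in H$, the definition of cover forces $[n]_\ell\in L$. Hence $N[L]=[n']_\ell$ for the unique $n'$ with $n\xrightarrow{(?\ell',A(\ell'))} n'$ (uniqueness by \Cref{lemma:unique-transitions-from-covers}), so $n'=m_3$ and $N[L]=[m_3]_\ell$, completing the base case.

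For the inductive step, $M=M_1\op M_2$ and $N=N_1\op N_2$, with (w.l.o.g.) $M_1\xrightarrow{\ell:(!G,\gamma)} N_1$ and $M_2\extoverset{$G:(?\ell,\gamma)$}\rightsquigarrow N_2$; both $M_1$ and $M_2$ are $A$-derived and inherit $L$ as a cover. By \Cref{lemma:sendright}, $\gamma=A(\ell)$, so \Cref{lemma:stillreceive} applies to $M_2$ and yields $M_2[L]\extoverset{$G:(?\ell,\gamma)$}\rightsquigarrow N_2[L]$, while the induction hypothesis on $M_1$ gives $M_1[L]\xrightarrow{\ell:(!G,\gamma)} N_1[L]$. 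Combining these via the parallel-send rule produces $M[L]=M_1[L]\op M_2[L]\xrightarrow{\ell:(!G,\gamma)} N_1[L]\op N_2[L]=N[L]$, as required.

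The main obstacle I anticipate is in the base case: we must show that the substitution $L$ does not destroy the send capability, and that the resulting state after the send coincides with $N[L]$. This is precisely where the commutation provided by \Cref{lemma:sendreceiveorder} is essential, together with the uniqueness of receive-continuations afforded by \Cref{lemma:unique-transitions-from-covers}. The surrounding bookkeeping (membership of $[n]_\ell$ in $L$ as an $N$-cover) is handled by the preservation lemma \Cref{lemma:preservecompleteness}.
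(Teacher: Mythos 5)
Your proof is correct and follows essentially the same route as the paper's: induction on the derivation of the send transition, with the base case split on membership in $L$, the commutation supplied by \Cref{lemma:sendreceiveorder}, cover preservation via \Cref{lemma:preservecompleteness}, uniqueness via \Cref{lemma:unique-transitions-from-covers}, and the inductive step combining the induction hypothesis with \Cref{lemma:sendright} and \Cref{lemma:stillreceive}. The only difference is that you make explicit the appeal to \Cref{lemma:A-derived-iff-relevant} before invoking \Cref{lemma:sendreceiveorder}, which the paper leaves implicit.
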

\begin{proof}
The proof proceeds by induction on $\xrightarrow{\ell:(!G,\gamma)}$. For the base case, $M=[m]_\ell$, $N=[n]_\ell$ and $m\xrightarrow{(!G,\gamma)}n$. If $M=M[L]$, then also $N=N[L]$ (by \Cref{lemma:preservecompleteness}) and the proof is done. Otherwise, we must have that $M[L]=[m]_\ell[[m_1]_\ell/[m]_\ell]=[m_1]_\ell$ for $m\xrightarrow{(?\ell', A(\ell'))} m_1$ and $\ell\in H$. By \Cref{lemma:sendreceiveorder}, $n\xrightarrow{(?\ell', A(\ell'))} n_1$ and $m_1\xrightarrow{(!G,\gamma)}n_1$. Since $L$ is an $M$-cover, we know it is also an $N$-cover (\Cref{lemma:preservecompleteness}) and therefore $[n]_\ell \in L$. Then we obtain from $n\xrightarrow{(?\ell', A(\ell'))} n_1$ 
and \Cref{lemma:unique-transitions-from-covers} 
that 
$N[L] = [n_1]_\ell$;
hence, $M[L]=[m_1]_\ell\xrightarrow{\ell:(!G,\gamma)}[n_1]_\ell=N[L]$.

For the inductive step, $M=M_1\op M_2$, $N=N_1\op N_2$, $M_1\xrightarrow{\ell:(!G,\gamma)}N_1$, and $M_2\extoverset{$G:(?\ell,\gamma)$}\rightsquigarrow N_2$. Clearly,  $M_1$ and $M_2$ are $A$-derived, and $L$ is an $M_1$-cover and an $M_2$-cover; thus, we can apply the induction hypothesis to obtain that $M_1[L]\xrightarrow{\ell:(!G,\gamma)}N_1[L]$. By \Cref{lemma:sendright}, $\gamma=A(\ell)$. Since $M_2\extoverset{$G:(?\ell,A(\ell))$}\rightsquigarrow N_2$, we can apply \Cref{lemma:stillreceive} to get $M_2[L]\extoverset{$G:(?\ell,A(\ell))$}\rightsquigarrow N_2[L]$. Thus $M[L]=M_1[L]\op M_2[L]\xrightarrow{\ell:(!G,\gamma)}N_1[L]\op N_2[L]=N[L]$. 
\end{proof}


The next lemma shows that $M_i[L]$ is the state of $M_i$ after receiving the message corresponding to $L$. This can be understood as $M_i$ receiving messages from some other decentralized monitor (as what happens in \Cref{lemma:joincommunication}). So, even if $M_i$ receives at any point some messages from another decentralized monitor, the same state is reached when no more communication can take place.

\begin{lemma}\label{lemma:receivealwayshappens}
    Let $H\subseteq \Loc$, $M_1$ be action-derived, and
    $M_1\xrightarrow{c_1} M_2\xrightarrow{c_2}\dots \xrightarrow{c_{n-1}} M_n \xrightarrow{c_n} 
  N$, where $n\geq 1$ and $N$ cannot communicate. If $s\leq n$ and $L$ is an $M_s$-cover for $H:(?\ell, A(\ell))$,
then, for all $i \in\{s,\ldots,n\}$, it holds that
$M_i[L]\xrightarrow{c_i}M_{i+1}[L]\xrightarrow{c_{i+1}}
\dots
\xrightarrow{c_{n-1}} M_n[L] 
\xrightarrow{c_n}N$.
\end{lemma}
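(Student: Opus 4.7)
The proof will proceed by a straightforward induction along the chain of send transitions, with the key invariant that $L$ remains an $M_i$-cover for $H:(?\ell, A(\ell))$ throughout, so that Lemma~\ref{lemma:onestep} can be applied at each step. First, I would observe that, by definition of $\boldarrow$ on $\DMon$, every label $c_i$ is of the form $\ell_i:(!G_i,\gamma_i)$, so each $M_i \xrightarrow{c_i} M_{i+1}$ is a send transition. Also, action-derivedness is preserved along $\boldarrow$: if $M_1$ is $A$-derived and $M_1 \boldarrow M_i$, we can extend the witnessing $\boldarrow$ chains for the local components of $M_i$, so $M_i$ is $A$-derived too.

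The main induction is on $i \in \{s, s+1, \ldots, n\}$, establishing the two clauses (a) $L$ is an $M_i$-cover for $H:(?\ell, A(\ell))$, and (b) $M_i[L] \xrightarrow{c_i} M_{i+1}[L]$ (with the convention that $M_{n+1} = N$). Clause (a) holds for $i = s$ by assumption. For the inductive step, given that $L$ is an $M_i$-cover and $M_i \xrightarrow{c_i} M_{i+1}$ is a send, Lemma~\ref{lemma:preservecompleteness}(3) gives that $L$ is also an $M_{i+1}$-cover, establishing (a) at $i+1$. Clause (b) at $i$ then follows from Lemma~\ref{lemma:onestep} applied to $M_i \xrightarrow{c_i} M_{i+1}$ together with the fact that $M_i$ is $A$-derived. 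Chaining (b) for $i = s, \ldots, n$ yields
\[
M_s[L] \xrightarrow{c_s} M_{s+1}[L] \xrightarrow{c_{s+1}} \cdots \xrightarrow{c_{n-1}} M_n[L] \xrightarrow{c_n} N[L].
\]

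It remains to prove $N[L] = N$, so that the last transition lands in $N$ as stated. By Definition~\ref{def:M_L_}, $N[L]$ differs from $N$ only at local components $[m]_{\ell'} \in N$ that also appear in $L$. But any $[m']_{\ell''} \in L$ is, by Definition~\ref{def:completeset}, a set of recipients for $H:(?\ell, A(\ell))$, so $m' \xrightarrow{(?\ell, A(\ell))} n'$ for some $n'$; that is, the underlying local monitor can communicate. Since $N$ cannot communicate, no local component of $N$ can communicate, so no $[m']_{\ell''} \in L$ occurs in $N$, and therefore $N[L] = N$.

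The main obstacle I foresee is bookkeeping rather than conceptual: one must be careful that the cover $L$, which is defined relative to $M_s$, continues to "fit" as we move through later $M_i$'s (so that $M_i[L]$ is well-defined and Lemma~\ref{lemma:onestep} applies), and that each intermediate $M_i$ remains action-derived so the preservation and substitution lemmas may be invoked. Both follow cleanly from Lemma~\ref{lemma:preservecompleteness} and the structural observation about $\boldarrow$, but the argument needs to be stated explicitly to avoid circularity between (a) and (b) in the induction.
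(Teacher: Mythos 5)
Your proof is correct and follows essentially the same route as the paper's: both arguments induct along the chain of send transitions, using Lemma~\ref{lemma:preservecompleteness} to keep $L$ an $M_i$-cover, Lemma~\ref{lemma:onestep} to push each transition through the substitution $[L]$, and the fact that $N$ cannot communicate to conclude $N[L]=N$. The only difference is cosmetic (you induct forward from $s$ with an explicit invariant, the paper inducts backward on $n-i$), and your explicit justification of $N[L]=N$ and of why each $M_i$ stays action-derived is if anything slightly more careful than the paper's.
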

\ifarxiversion
\begin{proof}
    We prove this by induction on 
    $n-i$.
    For the base case, $i=n\geq 1$; thus, we know that $M_n\xrightarrow{c_n}N$, and we need to prove that $M_n[L]\xrightarrow{c_n} N[L]$. By \Cref{lemma:preservecompleteness}, $L$ is an $M_n$-cover; 
    hence, we can apply \Cref{lemma:onestep} 
    to conclude that $M_n[L]\xrightarrow{c_n} N[L]$. Since $N$ cannot communicate by assumption, we know that $N[L] = N$.

    In the inductive step, we obtain from the induction hypothesis that $M_{i+1}[L]\xrightarrow{c_{i+1}}
    \dots \xrightarrow{c_{n-1}} M_n[L] \xrightarrow{c_n}N$. Then, we apply \Cref{lemma:onestep} to conclude that $M_i[L]\xrightarrow{c_i}M_{i+1}[L]$, which completes the proof.
\end{proof}
\fi

We can generalize the previous result to include a union of $L$'s.
Notationally, if $L = L_1 \cup \ldots \cup L_k$, we denote $M[L_1]\ldots[L_k]$ as $M[L]$.
This lemma is a cornerstone for proving \Cref{lemma:joincommunication} (together with \Cref{lemma:many}). 
\begin{lemma}\label{lemma:joincommunicationone}
    Let $H\subseteq \Loc$, $M_1 $ be action-derived, and
    $M_1\xrightarrow{c_1} M_2\xrightarrow{c_2}\dots \xrightarrow{c_{n-1}} M_n \xrightarrow{c_n} N$,
    where $n\geq 1$ and $N$ cannot communicate. 
    Let $s\leq n$ and
    $L= \bigcup_{j=1}^k L_j$ 
such that $L_j$ is an $M_s$-cover for $H_j:(?\ell_j, A(\ell_i))$, for all $j$. 
Then, for all $i\in\{s,\ldots,n\}$, it holds that
$M_i[L]\xrightarrow{c_i}M_{i+1}[L]\xrightarrow{c_{i+1}}
\dots
\xrightarrow{c_{n-1}} M_n[L] 
\xrightarrow{c_n}
N$.
%
\end{lemma}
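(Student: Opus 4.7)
The plan is to proceed by induction on $k$, the number of covers in the union. The base case $k=1$ is exactly the content of \Cref{lemma:receivealwayshappens}, so there is nothing to do there. For the inductive step, let $L' = \bigcup_{j=1}^{k} L_j$ and $L = L' \cup L_{k+1}$. The induction hypothesis gives us, for all $i \in \{s, \ldots, n\}$, that $M_i[L']\xrightarrow{c_i} M_{i+1}[L']\xrightarrow{c_{i+1}} \cdots \xrightarrow{c_{n-1}} M_n[L']\xrightarrow{c_n} N$ (using that $N$ cannot communicate, so $N[L'] = N$). What remains is to fold in $L_{k+1}$ on top of this chain.

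The strategy is to reduce the inductive step to a second application of \Cref{lemma:receivealwayshappens}, applied to the chain $M_s[L'] \to M_{s+1}[L'] \to \cdots \to M_n[L'] \to N$ with the cover $L_{k+1}$. To do so, I will show that $L_{k+1}$ is an $M_s[L']$-cover for $H_{k+1}:(?\ell_{k+1}, A(\ell_{k+1}))$, and that $(M_i[L'])[L_{k+1}] = M_i[L]$, for every $i \in \{s,\ldots,n\}$. The first condition of being a cover is immediate since receiving locations do not change; the second (action-derived) condition is preserved under receive transitions by \Cref{lemma:stay_relevant} and the definition of action-derived; the third (closure) condition is the one that requires work.

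The key auxiliary fact I will establish is a small commutation lemma: if $L_1$ and $L_2$ are both $M$-covers for their respective receive actions, then $L_2$ is an $M[L_1]$-cover and $M[L_1][L_2] = M[L_2][L_1]$. This follows by peeling off the element-wise replacements in the definition of $M[\cdot]$ and invoking \Cref{lemma:commutativereceiving} and \Cref{lemma:uniqueness} for the local monitors being replaced (together with \Cref{lemma:singlesender}(1) to ensure that no element of one cover reappears as eligible after the other cover has been applied). Iterating this commutation $k$ times lets us push $L_{k+1}$ past $L_1,\ldots,L_k$ in either direction, which establishes both $(M_i[L'])[L_{k+1}] = M_i[L]$ and that $L_{k+1}$ still satisfies the closure condition in $M_s[L']$: any $[m]_\ell \in M_s[L']$ with $m \boldarrow m' \xrightarrow{(?\ell_{k+1},A(\ell_{k+1}))} n$ must arise from a corresponding $[m_0]_\ell \in M_s$ whose continuation ended in $L_{k+1}$ (using \Cref{lemma:singlesender} to transport the ability to receive back and forth across the intermediate receive transitions).

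Having verified that $L_{k+1}$ is an $M_s[L']$-cover, I apply \Cref{lemma:receivealwayshappens} to the chain $M_s[L'] \xrightarrow{c_s} \cdots \xrightarrow{c_n} N$ with cover $L_{k+1}$, obtaining $M_i[L'][L_{k+1}] \xrightarrow{c_i} M_{i+1}[L'][L_{k+1}] \xrightarrow{c_{i+1}} \cdots \xrightarrow{c_n} N$ for every $i \in \{s,\ldots,n\}$, and the chain $M_i[L] \xrightarrow{c_i} \cdots \xrightarrow{c_n} N$ then follows by the commutation identity $M_i[L'][L_{k+1}] = M_i[L]$. The main obstacle, as mentioned, is the commutation lemma: carefully unpacking the element-wise substitution definition $M[L]$ so that replacements from different covers do not interfere, which hinges on the fact (via \Cref{lemma:singlesender}(1) and \Cref{lemma:unique-transitions-from-covers}) that a monitor produced by a receive step cannot receive the same message again, so a single monitor cannot appear in two ``different stages'' of the same cover.
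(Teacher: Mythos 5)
Your proof is correct and follows essentially the same route as the paper's: induction on the number of covers, with the inductive step reduced to a further application of \cref{lemma:receivealwayshappens} after checking that the newly added cover is still a cover of the monitor obtained by applying the earlier ones (which both you and the paper establish by tracing each $[m]_\ell$ back to an originating $[m_1]_\ell$ in $M_s$, resp.\ $M_i$, via \cref{lemma:preservecompleteness}). The one place where you work harder than necessary is the commutation lemma you flag as the main obstacle: since the paper \emph{defines} $M[L]$ for $L = L_1\cup\dots\cup L_k$ as the sequential application $M[L_1]\dots[L_k]$, the identity $(M_i[L'])[L_{k+1}] = M_i[L]$ holds by definition when $L_{k+1}$ is appended last, so no commutation of covers is needed for this proof (it would only be needed to show the notation is independent of the enumeration order of the union, which the paper does not require).
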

\begin{proof}
We proceed by induction on $k$. If $k=0$, the result is trivial. 
In the inductive step, we obtain from the induction hypothesis that 
\[M_i[L_1]\dots [L_{k-1}]\xrightarrow{c_i}M_{i+1}[L_1]\dots[L_{k-1}]\xrightarrow{c_{i+1}}\dots\xrightarrow{c_{n-1}}M_{n}[L_1]\dots[L_{k-1}]\xrightarrow{c_n} N.\] 
If we show that $L_k$ is a cover for $M_i[L_1]\dots [L_{k-1}]$, the result follows immediately from \Cref{lemma:receivealwayshappens}. Take $[m]_\ell\in M_i[L_1]\dots[L_{k-1}]$ with $\ell \in H_k$ such that $m\boldarrow m'\xrightarrow{(?\ell^k,A(\ell^k))}n$. Hence, we know that there exists $[m_1]_\ell\in M_i$ such that $m_1\boldarrow m$, and thus also $m_1\boldarrow m'$. As $L_k$ is an $M_s$-cover, it is also an $M_i$-cover via \Cref{lemma:preservecompleteness}. Hence, we obtain that $[m']_\ell\in L_k$, as desired.
\end{proof}

The next lemma is actually stronger than what is needed for Bounded Communication (for Bounded Communication it is sufficient that $\bigop_{\ell\in\Loc} M_\ell$ in the lemma below can transition to some decentralized monitor that cannot communicate, and it does not exactly need to be $\bigop_{\ell\in\Loc} N_\ell$). However, for Formula Convergence we need the lemma in its strong version, so we prove only that one.


\begin{lemma}\label{lemma:joincommunication}
    Let $\Loc=\{\ell_1,\dots,\ell_k\}$ and, for all $i \in \{1,\dots,k\}$, we have $
    \Monev{\sigma_i}{\varphi_i}
    %
     \xrightarrow{A}M_i\boldarrow 
     N_i$, where $N_i$ cannot communicate.
     Then,
    $\bigop_{i=1}^k M_i\boldarrow \bigop_{i=1}^k
    N_i
    $.
\end{lemma}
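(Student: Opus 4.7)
The plan is to proceed by induction on $k$. The base case $k=1$ is immediate: the hypothesis $M_1 \boldarrow N_1$ gives precisely what is required. For the inductive step, I would decompose $\bigop_{i=1}^k M_i = M_1 \op \bigop_{i=2}^k M_i$ and build the desired run in two phases: first execute all communication steps of $M_1 \boldarrow N_1$ inside the enlarged context, then continue with the remaining composition.

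For the first phase, each send transition in $M_1 \boldarrow N_1$ is fired in the enlarged composition; by \Cref{lemma:many} (taking the rest of the composition as the ``$M_2$'' argument), the intervening receives transform $\bigop_{i=2}^k M_i$ into $\bigl(\bigop_{i=2}^k M_i\bigr)[L]$, where $L$ is a union of covers corresponding to the successive sends along $M_1 \boldarrow N_1$. Since substitution distributes through $\op$, this equals $\bigop_{i=2}^k M_i[L_i]$ for appropriate covers $L_i$ on each individual $M_i$. Hence
\[
M_1 \op \bigop_{i=2}^k M_i \;\boldarrow\; N_1 \op \bigop_{i=2}^k M_i[L_i].
\]

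For the second phase, I would invoke \Cref{lemma:joincommunicationone} on each sequence $M_i \boldarrow N_i$, taking $s=0$ so that the whole cover $L_i$ is applied at the outset; this gives $M_i[L_i] \boldarrow N_i$ through the same sequence of send transitions. I would then apply the induction hypothesis to the $k-1$ composition $\bigop_{i=2}^k M_i[L_i]$. To make this step precise, one proves (by an easy induction on $\boldarrow$) that each $M_i[L_i]$ remains $A$-derived, and strengthens the inductive statement to require only that each $M_i$ be $A$-derived rather than directly of the form $\Monev{\sigma_i}{\varphi_i}\xrightarrow{A}M_i$; this strengthening is harmless because every proof step below uses only the $A$-derived property. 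Combining the two phases, and noting that $N_1$ cannot communicate (so it is inert during the second phase), yields
\[
\bigop_{i=1}^k M_i \;\boldarrow\; N_1 \op \bigop_{i=2}^k M_i[L_i] \;\boldarrow\; N_1 \op \bigop_{i=2}^k N_i \;=\; \bigop_{i=1}^k N_i.
\]

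The main obstacle I expect is the bookkeeping in the first phase: one must check that the induced $L_i$ are indeed $M_i$-covers for the correct groups $G_t\!:\!(?\ell_t,A(\ell_t))$ arising along $M_1 \boldarrow N_1$, and that the receives absorbed into $M_i[L_i]$ correspond exactly to those that the semantics of $\op$ triggers when $M_1$ fires its sends in the enlarged composition. This amounts to unwinding the definitions of $M[L]$ and of $\extoverset{$G:(?\ell,\gamma)$}\rightsquigarrow$ step by step, and is essentially the content of \Cref{lemma:many} and \Cref{lemma:joincommunicationone}. The rest of the argument is then a routine combination of these two lemmas with the induction.
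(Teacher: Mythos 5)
Your proof is correct and shares the paper's skeleton --- induction on the number of components, peeling off $M_1$, \Cref{lemma:many} to turn the receives induced in the tail into cover substitutions, then \Cref{lemma:joincommunicationone} plus the induction hypothesis --- but you apply the last two ingredients in the opposite order. The paper first invokes the induction hypothesis on the \emph{untouched} tail $\bigop_{i=2}^{k}M_i$, whose components are still literally of the required form $\Monev{\sigma_i}{\varphi_i}\xrightarrow{A}M_i\boldarrow N_i$, to get $\bigop_{i=2}^{k}M_i\boldarrow\bigop_{i=2}^{k}N_i$, and only then applies \Cref{lemma:joincommunicationone} once, to that whole composed run with the union of covers, obtaining $(\bigop_{i=2}^{k}M_i)[L_1]\dots[L_n]\boldarrow\bigop_{i=2}^{k}N_i$. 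You instead push the covers into each component first and then apply the induction hypothesis to the modified $M_i[L]$, which --- as you rightly flag --- forces a strengthening of the inductive statement to arbitrary $A$-derived monitors, since $M_i[L]$ need not be reachable from $\Monev{\sigma_i}{\varphi_i}$ by an action step followed by sends. The strengthening is sound (the supporting results \Cref{lemma:sendright,lemma:many,lemma:joincommunicationone} are all stated for ($A$-)derived monitors), but the paper's ordering sidesteps it entirely. Two details would be needed to make your version airtight: \Cref{lemma:joincommunicationone} assumes at least one communication step, so components with $M_i=N_i$ require the separate (trivial) observation that then $M_i[L]=M_i$; and the covers produced by \Cref{lemma:many} are covers of the whole tail, so you need the (easy, and implicitly used) fact that an $(M\op N)$-cover is both an $M$-cover and an $N$-cover.
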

\begin{proof}
    By induction on $k$.
    If $k=1$, we are done immediately. If $\Loc=\{\ell_1,\dots \ell_{k+1}\}$, let $M_1\xrightarrow{c_1}M_1^2\xrightarrow{c_2}\dots\xrightarrow{c_n} 
    N_1$ for $n\geq 0$;
by \Cref{lemma:sendright}, $c_j=\ell_j:(!G_j,A(\ell_j))$, for all $j\in\{1,\dots,n\}$.   
By \Cref{lemma:many}, there exist $L_1,\dots,L_n$ such that, for each $j\in\{1,\dots,n\}$, $L_j$ is a $(\bigop_{i=2}^{k}M_i)$-cover for $G_j:(?\ell_j,A(\ell_j))$.
Let 
\[N = \bigop_{\ell=2}^{k}M_i[L_1]\dots [L_n];\]
 then,
    $\bigop_{i=1}^kM_i = M_1 \op (\bigop_{i=2}^kM_i)\xrightarrow{c_1}\dots\xrightarrow{c_n} 
    N_1
    \op N$. 
    From the 
    induction hypothesis, we know that $\bigop_{i=2}^{k}M_i\boldarrow \bigop_{i=2}^{k}
    N_i
    $. 
    Thus, we can apply \Cref{lemma:joincommunicationone} to conclude that $N\boldarrow \bigop_{i=2}^{k}
    N_i
    $. 
     From $\bigop_{i=1}^kM_i\boldarrow 
     N_1
     \op N$ we can conclude, since $
     N_1
     $ cannot take any communication steps and therefore does not change anymore. 
\end{proof}

We can now finally prove Bounded Communication:

\begin{lemma}\label{lemma:communicationexists}
    For every $\Mone{\varphi}\xrightarrow{A}M\boldarrow M'$, there exists $M''$ such that $M'\boldarrow M''$ and $M''$ cannot communicate.
\end{lemma}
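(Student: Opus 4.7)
My approach is to give a direct termination proof using a well-founded measure, bypassing both \Cref{lemma:joincommunication} and \Cref{lemma:itsallrelative}. The key observation is that the counting function $cc$ from \Cref{lemma:cc} extends naturally to decentralized monitors by summing over their local constituents: $cc(M) \triangleq \sum_{[m]_\ell \in M} cc(m)$. This extension is well-defined on all action-derived monitors, since by \Cref{lemma:A-derived-iff-relevant} every action-derived monitor is built from relevant local monitors, which lie in the syntactic classes over which $cc$ is defined. Action-derivedness is closed under $\boldarrow$ (a straightforward lifting of \Cref{lemma:stay_relevant} to communicating monitors), so $M'$ itself is action-derived and $cc(M')$ is a well-defined non-negative integer.

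Next, I would prove the main claim that every $\boldarrow$-step strictly decreases $cc$. A $\boldarrow$-step is a send transition $N_1 \xrightarrow{\ell:(!G,\gamma)} N_2$, which by the rules of \Cref{tab:communicating_operational_semantics_communicating_monitors} is derived from (i) a local send $m \xrightarrow{(!G,\gamma)} m'$ at the sender location $\ell$, together with (ii) a receive-propagation $\xrightsquigarrow{G:(?\ell,\gamma)}$ spreading through the rest of $N_1$. By \Cref{lemma:cc}(1), the sender contributes a strict decrease of at least one to $cc$. The receive-propagation only triggers local receive transitions at other locations or leaves them unchanged, and by \Cref{lemma:cc}(2) receives preserve $cc$ locally. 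Summing contributions over all locations of $N_2$ thus yields $cc(N_2) < cc(N_1)$.

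Since $cc$ strictly decreases along every $\boldarrow$-edge starting from an action-derived monitor, any maximal $\boldarrow$-sequence issuing from $M'$ is finite, with length bounded by $cc(M')$. Its endpoint $M''$ admits no further $\boldarrow$-transition, which by \Cref{def:nocommunicate} and the instrumentation rules is precisely the property that $M''$ cannot communicate. This yields the desired $M'$ and $M''$ with $M' \boldarrow M''$ and $M''$ non-communicating.

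The main point requiring care is the book-keeping in the second step: one must verify that the receive-propagation $\xrightsquigarrow{G:(?\ell,\gamma)}$ can only affect local $cc$'s through receive transitions (hence conservatively), and that no hidden send sneaks in. This is immediate from the definition of $\rightsquigarrow$ in \Cref{tab:communicating_operational_semantics_communicating_monitors}, whose base cases either perform a local receive or leave the monitor unchanged. A natural alternative, more in line with the paper's earlier inductive development, would be to prove first an existence claim by induction on $\varphi$ (using \Cref{lemma:psijoincom} in the base case of formulas in $\quantfree$ and \Cref{lemma:joincommunication} in the inductive cases) and then lift it to arbitrary $M'$ via a confluence argument; the obstacle there is that \Cref{lemma:itsallrelative} by itself does not establish such confluence, since it assumes $M'$ is already on a path to a non-communicating state, so additional machinery would be needed to splice the given path $M \boldarrow M'$ with an independently obtained existence path.
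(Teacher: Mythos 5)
There is a genuine gap. Your $cc$-based termination argument is sound as far as it goes (the paper itself uses exactly this measure in the proof of \Cref{lemma:squiggly_is_not_straight} to show that every $\ell:(!G,\gamma)$ step strictly decreases the number of top-level send prefixes while $\rightsquigarrow$ preserves it), but it establishes the wrong conclusion. The endpoint $M''$ of a maximal $\boldarrow$-sequence is a state from which no \emph{send} transition is enabled; \Cref{def:nocommunicate} requires something strictly stronger, namely that no local constituent $[m]_\ell \in M''$ can perform \emph{any} communication action $c \in \Com$, receives included. A state containing a dangling receive prefix $\sum_a (?\{\ell'\},a).m_a$ but no pending send admits no further $\boldarrow$-transition, yet it ``can communicate'' in the sense of \Cref{def:nocommunicate}. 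This is precisely the deadlock scenario the paper's remark after \Cref{tab:synthesis-decentralized} warns against ($\yes \pand (?\{\sigma(\pi')\},a).\synone{\psi'}$ stuck with an unmatched receive), and ruling it out is the actual content of Bounded Communication: one must show that the synthesis guarantees every enabled receive is eventually discharged by a matching send from the location it is waiting on. That is what \Cref{lemma:independentcommunication}, \Cref{lemma:psijoincom} and \Cref{lemma:joincommunication} provide, and your proof bypasses them entirely. Termination of the send sequence is the easy half; matching of receives is the hard half, and $cc$ carries no information about receive prefixes.

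On your closing remark: the concern about \Cref{lemma:itsallrelative} is unfounded. The paper's proof first constructs, by induction on $\varphi$, a path $M \boldarrow N$ with $N$ unable to communicate (via \Cref{lemma:psijoincom} in the $\quantfree$ base case and \Cref{lemma:joincommunication} in the inductive cases), and then applies \Cref{lemma:itsallrelative} with the roles instantiated so that the given intermediate state $M'$ plays the part of the lemma's ``$M''$'': from $\Mone{\varphi} \xrightarrow{A} M \boldarrow N \NOT{\xrightarrow{c}}$ and $M \boldarrow M'$ one concludes $M' \boldarrow N$. So the splicing you worried about is exactly what that lemma delivers. A corrected version of your argument would still need that confluence step, or else would have to prove directly that the particular maximal sequence issuing from $M'$ ends with all receives matched --- which again requires the receive-matching analysis.
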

\begin{proof}
We proceed by induction on $\varphi$. The base case is for $\varphi\in\quantfree$
and we have two possible subcases. If $\sigma = \emptyset$, then $\Mone{\varphi}=[v]_{\ell_0} \xrightarrow{A} [v]_{\ell_0} = M=M'$ and $[v]_{\ell_0}$ cannot communicate, since $v \not\xrightarrow c$, for all $c$.
If $\sigma \neq \emptyset$, then
$\Mone{\varphi}=\bigvee_{\ell\in\range(\sigma)}[\synone{\varphi}]_\ell\xrightarrow{A}\bigvee_{\ell\in\range(\sigma)}[m_\ell]_\ell=M$ and $M\boldarrow\bigvee_{\ell\in\range(\sigma)}[n_\ell]_\ell=M'$, where $m_\ell\boldarrow n_\ell$. Actually, we have that $m_\ell\boldarrowt n_\ell$ because we know from the semantics that, for any receive $\xrightarrow{(?\ell',\gamma)}$ in the path $m_\ell\boldarrow n_\ell$, there exists $\ell'$ such that $m_{\ell'}\boldarrow m'_{\ell'} \xrightarrow{(!G,\gamma)}$ (with $\ell\in G$) and, by \Cref{lemma:singlesender}, that $\gamma=A(\ell')$.

As $n_\ell$ is a relevant local monitor, we prove by induction on the structure of relevant local monitors that there exists $n'_\ell$ such that $n_\ell\boldarrowt n'_\ell \NOT{\xrightarrow{c}}$, for all $c$. The first two base cases ($n_\ell=\synone{\psi}$ and $n_\ell=(!G,a).\synone{\psi}$) are trivial because of \Cref{lemma:nocommunicate}. For the third base case, $n_\ell=\sum_{a\in\Act}(?\{\ell'\},a).\synone{\psi_a}$, we observe that there is some $a\in\Act$ such that $a=A(\ell')$; so, we take that transition and conclude again by \Cref{lemma:nocommunicate}. For the inductive case,  $n_\ell=n_1\pandor n_2$ and, from the inductive hypothesis, $n_1\boldarrowt m_1$ and $n_2\boldarrowt m_2$ such that $m_1\NOT{\xrightarrow{c}}$ and $m_2\NOT{\xrightarrow{c}}$, for all $c$. By \Cref{lemma:merge}, $n_1\pandor n_2\boldarrowt m_1\pandor m_2 \NOT{\xrightarrow{c}}$, for all $c$. 
 
Hence, $n_\ell\boldarrowt n'_\ell$ such that $n'_\ell\NOT{\xrightarrow{c}}$, for all $c$. From this, we derive that $\synone{\psi}\xrightarrow{A(\ell)}m_\ell\boldarrowt n'_\ell$, for all $\ell\in\range(\sigma)$. We can then apply \Cref{lemma:psijoincom} (with $m_\ell=n_\ell$ in the statement of the lemma) to conclude that $\bigvee_{\ell\in\range(\sigma)}[m_\ell]_\ell\boldarrow \bigvee_{\ell\in\range(\sigma)}[n'_\ell]_\ell$. It is immediate that $\bigvee_{\ell\in\range(\sigma)}[n'_\ell]_\ell\NOT{\xrightarrow{c}}$, for all $c$. We apply \Cref{lemma:itsallrelative} to derive that $M'\boldarrow \bigvee_{\ell\in\range(\sigma)}[n'_\ell]_\ell$, as desired.

We only treat one of the inductive cases, the others are similar. Let $\varphi=\exists \pi.\varphi'$. Then $\Mone{\varphi}=\bigvee_{\ell\in\Loc}\Monev{\sigma[\pi\mapsto\ell]}{\varphi'}\xrightarrow{A} \bigvee_{\ell\in\Loc}M_\ell=M$. 
From the induction hypothesis we obtain that $M_\ell\boldarrow 
N_\ell$ such that $N_\ell$ cannot communicate. We apply \Cref{lemma:joincommunication} and conclude that $M\boldarrow
\bigvee_{\ell\in\Loc}N_\ell$. Then we can use \Cref{lemma:itsallrelative} to derive that $M'\boldarrow
\bigvee_{\ell\in\Loc}N_\ell$, which concludes the proof.
\end{proof}

\end{textAtEnd}

\begin{textAtEnd}[allend, category=pca]
\label{sec:pca}

We showed the first item of Processing-Communication Alternation in \Cref{lemma:nocommunicate}.
For the second item, we actually prove the contrapositive, because this simplifies the inductive proofs; we first show the result for local monitors and then we conclude by an induction based on \Cref{lemma:onlyone1}.

\begin{lemma}\label{lemma:onlyone1}
    Let $\synone{\psi}\xrightarrow{A(\ell)}m'\boldarrow m\xrightarrow{A'(\ell)}n$. Then $m\NOT{\xrightarrow{c}}$, for all $c$. 
    \end{lemma}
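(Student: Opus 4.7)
The plan is to prove the contrapositive by well-founded induction on the formula $\psi$ using the order $\prec$ from \Cref{def:order}, which lets us handle the $\max$ case via one-step unfolding. Throughout, we use that $m$ is a relevant local monitor by \Cref{lemma:A-derived-iff-relevant}, and we track the shape of $m$ by following the structure of $\synone{\psi}\xrightarrow{A(\ell)} m'$ and then the communication sequence $m'\boldarrow m$. The base cases are immediate: if $\psi\in\{\ttt,\ff,\pi{=}\pi',\pi{\neq}\pi'\}$ then $\synone\psi$ is a verdict $v$, so $m = m' = v$ and $v$ cannot communicate; if $\psi = x$ then $\synone\psi = x$ has no action transition, making the premise vacuous.

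For the modality cases $\psi = [a_\pi]\psi'$ and $\psi = \langle a_\pi\rangle\psi'$, inspection of $\synone{\psi}$ and the action-transition rules shows that $m'$ is either a single send prefix $(!G,\gamma).m''$ (when $\sigma(\pi) = \ell$) or a sum of receive prefixes $\sum_{b}(?\{\sigma(\pi)\},b).n_b$ (when $\sigma(\pi) \neq \ell$), with the continuations $m''$ and $n_b$ all of shape $\synone{\psi'}$, $\yes$ or $\no$. By the communication rules, every $m$ with $m'\boldarrow m$ is either $m'$ itself, or one of the continuations (in the send case, after consuming the single prefix; in the receive case, after at most one receive step, by \Cref{lemma:singlesender}). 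When $m = m'$, no action transition is enabled, so the premise fails and the claim is vacuous; when $m$ is a continuation, $m$ is either a verdict or $\synone{\psi'}$, and \Cref{lemma:nocommunicate} gives that $m$ cannot communicate.

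For $\psi = \psi_1 \op \psi_2$, we have $m' = m'_1\pandor m'_2$ with $\synone{\psi_i}\xrightarrow{A(\ell)} m'_i$, and after communication $m = m_1\pandor m_2$ with $m'_i\boldarrow m_i$ (this decomposition needs a subsidiary induction on the length of $m'\boldarrow m$, using that each parallel communication rule operates branch-wise). Since $m\xrightarrow{A'(\ell)}$ forces $m_i\xrightarrow{A'(\ell)}$ for each $i$ (the parallel action rule requires both sides to move), the induction hypothesis applied to each $\psi_i$ gives that neither $m_1$ nor $m_2$ can communicate, hence neither can $m$. Finally, for $\psi = \maxx{x}.\psi'$, we have $\synone\psi = \rec x.\synone{\psi'}\xrightarrow{A(\ell)} m'$ only via $\synone{\psi'}\{^{\rec x.\synone{\psi'}}/_x\}\xrightarrow{A(\ell)} m'$; by \Cref{lemma:recursion-substitutiontwo} this rewrites to $\synone{\psi'\{^{\max x.\psi'}/_x\}}\xrightarrow{A(\ell)} m'$, and since $\psi'\{^{\max x.\psi'}/_x\} \prec \maxx{x}.\psi'$ by guardedness, the induction hypothesis applies. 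The main obstacle is the propositional case: one must confirm that communication steps between $m'$ and $m$ never blur the parallel branches, so the $m = m_1\pandor m_2$ decomposition is well-defined; this follows by a careful case analysis of the communication rules, noting that sends travel in one branch and receives either fire locally or propagate through the top-level operator independently.
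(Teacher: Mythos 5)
Your proof is correct, but it takes a genuinely different route from the paper's. You induct on the formula $\psi$ under the well-founded order $\prec$, tracking the shape of the monitor through the action derivative and the subsequent communication path, and handling $\max$ via guarded unfolding. The paper instead observes that $m$ is a relevant local monitor (by \Cref{lemma:A-derived-iff-relevant}) and does a structural induction directly on the grammar of relevant local monitors from \Cref{def:relevant}: the base cases are exactly $\synone{\psi}$ (cannot communicate by \Cref{lemma:nocommunicate}), a send prefix, and a sum of receive prefixes (both of which fail the premise $m\xrightarrow{A'(\ell)}n$), and the only inductive case is $m = m_1 \pandor m_2$, where the parallel action rule forces both branches to move and the induction hypothesis applies to each. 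This is considerably shorter because the relevant-monitor grammar, together with its closure under transitions (\Cref{lemma:stay_relevant}), already packages the invariant you re-derive inline — in particular it makes your subsidiary induction on the length of $m'\boldarrow m$ (to justify the decomposition $m = m_1\pandor m_2$) unnecessary, and it sidesteps the $\prec$ machinery entirely since recursion is absorbed into the base case $m=\synone{\psi}$. Your version buys self-containedness with respect to the synthesis function at the cost of redoing that bookkeeping; both are valid.
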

    \begin{proof}
        We focus on relevant local monitors, because $m$ is a relevant local monitor via \Cref{lemma:A-derived-iff-relevant}.
    We proceed by induction on $m$. 
    In the base case, for $m=\synone{\psi}$ the result follows from \Cref{lemma:nocommunicate};
    for $m=(!G,a).\synone{\psi}$ and $m=\sum_{a\in\Act}(?\{\ell''\},a).\synone{\psi_a}$, the premise of the lemma is not satisfied. 
    
    For the inductive step, let $m=m_1\pandor m_2$ with $m_1$ and $m_2$ relevant monitors. We derive from $m_1\pandor m_2 \xrightarrow{A'(\ell)} n$ that $n=n_1\pandor n_2$, $m_1\xrightarrow{A'(\ell)}n_1$ and $m_2\xrightarrow{A'(\ell)}n_2$. Hence, we can apply the induction hypothesis to obtain that  $m_1\NOT{\xrightarrow{c}}$ and $m_2\NOT{\xrightarrow{c}}$, from which we conclude.
    \end{proof}

    
    \begin{lemma}\label{lemma:onlyone}
        Let $M\in\DMon$ be action-derived. If $M\xrightarrow{A} N$, then $M\NOT{\xrightarrow{c}}$, for all $c$. 
    \end{lemma}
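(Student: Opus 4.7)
The plan is to prove \Cref{lemma:onlyone} by structural induction on $M$, leveraging the local-monitor version \Cref{lemma:onlyone1} for the base case and using the compositional form of the communication rules for the inductive case.

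First I would handle the base case $M = [m]_\ell$. By inspection of \Cref{tab:actions_operational_semantics_communicating_monitors}, the transition $M \xrightarrow{A} N$ is derived in one of two ways: either $m \xrightarrow{A(\ell)} m'$ and $N = [m']_\ell$, or $m \NOT{\xrightarrow{A(\ell)}}$, $m \NOT{\xrightarrow{c'}}$ for every $c'$, and $N = [\vend]_\ell$. In the second case, the absence of communication transitions from $m$ directly yields $M \NOT{\xrightarrow{c}}$ for all $c$, by inspection of \Cref{tab:communicating_operational_semantics_communicating_monitors}. In the first case, the assumption that $M$ is $A$-derived provides $\psi$, $\sigma$ and $m_0$ with $\synone{\psi}\xrightarrow{A(\ell)} m_0 \boldarrow m$; combined with $m \xrightarrow{A(\ell)} m'$, \Cref{lemma:onlyone1} gives $m \NOT{\xrightarrow{c'}}$ for every $c'$, and hence $[m]_\ell \NOT{\xrightarrow{c}}$ for every $c$.

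For the inductive step $M = M_1 \op M_2$, the transition $M \xrightarrow{A} N$ is only derivable by the rule that requires $M_1 \xrightarrow{A} N_1$ and $M_2 \xrightarrow{A} N_2$ with $N = N_1 \op N_2$. Since $M$ is action-derived, every $[m]_\ell$ in $M$ is action-derived, and so $M_1$ and $M_2$ are action-derived as well. Applying the induction hypothesis yields $M_1 \NOT{\xrightarrow{c}}$ and $M_2 \NOT{\xrightarrow{c}}$ for every $c$. Inspecting the rule for $\xrightarrow{\ell:(!G,\gamma)}$ on $M_1 \op M_2$ in \Cref{tab:communicating_operational_semantics_communicating_monitors}, any such transition requires one of $M_1$ or $M_2$ to send, so $M_1 \op M_2 \NOT{\xrightarrow{c}}$ for every $c$.

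I do not expect a significant obstacle: the heavy lifting is already done in \Cref{lemma:onlyone1}, which factors out the local-monitor reasoning (and in particular the use of \Cref{lemma:nocommunicate} at the leaves). The only care needed is in the base case, where one must verify that the alternative second action rule—applicable precisely when $m$ cannot communicate—already delivers the desired conclusion outright, so that \Cref{lemma:onlyone1} need only be invoked in the first subcase. The inductive step then reduces cleanly because no communication rule of Table \ref{tab:communicating_operational_semantics_communicating_monitors} applies when neither subcomponent can send.
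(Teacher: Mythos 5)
Your proof is correct and follows essentially the same route as the paper's: an induction whose base cases split on the two instrumentation rules for $[m]_\ell$ (invoking \Cref{lemma:onlyone1} when $m$ performs $A(\ell)$, and reading off the side condition $m\NOT{\xrightarrow{c}}$ of the $\vend$-rule otherwise), and whose inductive step for $M_1\op M_2$ uses that every communication rule requires a send from one of the components. The paper phrases it as induction on the derivation of $\xrightarrow{A}$ rather than on the structure of $M$, but since those rules are syntax-directed the two inductions coincide.
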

\ifarxiversion
    \begin{proof}
    We proceed by induction on $\xrightarrow{A}$. 
    For the first base case, we consider $M=[m]_{\ell}$, $N=[n]_\ell$, $m\xrightarrow{a}n$ and $A(\ell)=a$; by \Cref{lemma:onlyone1}, $m\NOT{\xrightarrow{c}}$, from which we obtain that $M\NOT{\xrightarrow{c}}$. 
    For the second base case, we consider $M=[m]_\ell$, and by assumption $m\NOT{\xrightarrow{c}}$. 
    For the inductive step, we consider $M=M_1\op M_2$, $N=N_1\op N_2$, $M_1\xrightarrow{A}N_1$ and $M_2\xrightarrow{A} N_2$. From the induction hypothesis, it follows that $M_1\NOT{\xrightarrow{c}}$ and $M_2\NOT{\xrightarrow{c}}$; so, $M\NOT{\xrightarrow{c}}$.
    \end{proof}
    \fi
\end{textAtEnd}

\begin{textAtEnd}[allend, category=fc]
\label{sec:fc}

We start with a lemma that 
captures a basic equality between synthesized centralized and decentralized local monitors, used in the proofs of \Cref{lemma:crux} and \Cref{lemma:imaginary}. 
    \begin{lemma}\label{lemma:equal}
        If $\Mc{\psi}=\Mcv{\psi'}{\sigma'}$, for either $\sigma = \sigma'$ or $|\Loc| = 1$,
        then $\synvone{\ell}{\sigma}{\psi}=\synvone{\ell}{\sigma'}{\psi'}$, for all $\ell\in \Loc$.
        \end{lemma}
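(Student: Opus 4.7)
The plan is to mimic the proof of \Cref{lemma:nochange}: proceed by induction on the structure of $\psi$, with the key observation that $\Mc{-}$ and $\synone{-}$ differ only on the modal operators, where the centralized form uses prefixes $a_{\sigma(\pi)}$ and the decentralized form uses plain prefixes $a$ followed by a send/receive protocol. For all non-modal constructs (verdicts, boolean combinations, recursion, equality tests, recursion variables) the two synthesis functions produce monitors of syntactically identical shape in terms of the top-level combinator, so once we know what $\psi'$ must look like to have $\Mcv{\psi'}{\sigma'} = \Mc{\psi}$, applying the inductive hypothesis to the immediate subformulas finishes each such case.

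For the base cases ($\psi \in \{\ttt,\ff,x,\pi_1=\pi_2,\pi_1\neq\pi_2\}$), $\Mc{\psi}$ is either a verdict $v$ or a recursion variable $x$. In each of these situations I would enumerate, exactly as in \Cref{lemma:nochange}, the possible shapes of $\psi'$ whose centralized synthesis yields the same monitor (appealing to \Cref{lemma:verdict} for the verdict cases), and check by direct inspection of \Cref{tab:synthesis-decentralized} that $\synvone{\ell}{\sigma'}{\psi'}$ coincides with $\synvone{\ell}{\sigma}{\psi}$; here the equality tests turn out to agree because they only depend on whether $\sigma$/$\sigma'$ identifies the two variables, which is already forced by the equality of their centralized monitors. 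For the inductive cases of conjunction, disjunction, and $\max x.\psi$, the top-level combinator of $\Mc{\psi}$ determines the top-level combinator of $\psi'$, and the inductive hypothesis on the subformulas (after a substitution argument in the $\max$ case, akin to the one in \Cref{lemma:recursion-substitution}) delivers the conclusion.

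The delicate part is the modal case. Suppose $\psi = [a_\pi]\psi''$, so $\Mc{\psi} = a_{\sigma(\pi)}.\Mc{\psi''} + \sum_{b \neq a} b_{\sigma(\pi)}.\yes$. Matching this against $\Mcv{\psi'}{\sigma'}$ leaves two subcases, already identified in the proof of \Cref{lemma:nochange}: either $\psi' = [a_{\pi'}]\psi'''$ with $\sigma(\pi) = \sigma'(\pi')$ and $\Mc{\psi''} = \Mcv{\psi'''}{\sigma'}$, or the degenerate subcase where $|\Act|=2$ and $\psi' = \langle b_{\pi'}\rangle\psi'''$ with $\sigma(\pi)=\sigma'(\pi')$, $\Mc{\psi''}=\no$, and $\Mcv{\psi'''}{\sigma'}=\yes$. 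In the first subcase I apply the inductive hypothesis to obtain $\synvone{\ell}{\sigma}{\psi''} = \synvone{\ell}{\sigma'}{\psi'''}$, and then compare the two branches of \Cref{tab:synthesis-decentralized} for $[a_\pi]\psi''$ and $[a_{\pi'}]\psi'''$; these two branches agree provided that the boolean test $\sigma(\pi)=\ell$ gives the same answer as $\sigma'(\pi')=\ell$ (which follows from $\sigma(\pi)=\sigma'(\pi')$), and that $\range(\sigma) = \range(\sigma')$ (so that the multicast set $\range(\sigma)\setminus\{\ell\}$ in the sending branch and the receiver set $\{\sigma(\pi)\}$ in the receiving branch coincide with their primed counterparts). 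In the degenerate subcase the two resulting decentralized monitors are likewise equal after inspecting both variants of the rule, using \Cref{lemma:verdict} to simplify the recursive calls. The diamond case $\psi = \langle a_\pi\rangle\psi''$ is handled symmetrically.

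The main obstacle, therefore, is precisely the degenerate box/diamond coincidence when $|\Act|=2$, which already appeared in \Cref{lemma:nochange}, and the need to guarantee that $\range(\sigma) = \range(\sigma')$ when reading off the multicast/receive groups in \Cref{tab:synthesis-decentralized}. This is exactly where the side condition is used: if $\sigma = \sigma'$, then trivially $\range(\sigma)=\range(\sigma')$; if $|\Loc|=1$, then whenever the modal rule applies (so $\sigma$ and $\sigma'$ are nonempty on the relevant $\pi$, $\pi'$) both ranges are the singleton $\Loc$, and the condition $\sigma(\pi)=\ell$ holds vacuously, making the two synthesis branches agree.
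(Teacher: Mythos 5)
Your proposal is correct and follows essentially the same route as the paper's proof: structural induction on $\psi$, with the base cases discharged by relating verdict-valued syntheses (the paper cites \Cref{lemma:threefacts} for this rather than \Cref{lemma:verdict}), and the modal case being the only delicate one, where the side condition ($\sigma=\sigma'$ or $|\Loc|=1$) is used exactly as you say to guarantee $\range(\sigma)\setminus\{\ell\}=\range(\sigma')\setminus\{\ell\}$ and the agreement of the $\sigma(\pi)=\ell$ test. You are in fact slightly more thorough than the paper's own write-up, which silently assumes $\psi'$ must be a box when $\Mc{\psi}$ has box shape and omits the degenerate $|\Act|=2$ box/diamond coincidence that you correctly identify and check.
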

        \begin{proof}
        We proceed by induction on $\psi$. 
        As a first base case, we consider $\psi=\ttt$: we have $\Mc{\psi}=\yes=\Mcv{\psi'}{\sigma'}$ and, by \Cref{lemma:threefacts}, we conclude that, for all $\ell\in \Loc$, we have $\synvone{\ell}{\sigma}{\psi}=\yes=\synvone{\ell}{\sigma'}{\psi'}$. The base cases for $\psi=\ff$, $\psi=(\pi=\pi')$ and $\psi=(\pi\neq\pi')$ are similar. 
        The last base case is for $\psi=x$: then, $\Mc{\psi}=x=\Mcv{\psi'}{\sigma'}$, from which we can conclude that $\psi'=x$. Immediately it follows that, for all $\ell\in \Loc$, $\synvone{\ell}{\sigma}{\psi}=x=\synvone{\ell}{\sigma'}{\psi'}$.
        
        We now proceed with the inductive cases. We first consider $\psi=[a_\pi]\psi_1$
(the case for $\psi=\langle  a_\pi\rangle \psi_1$ is the same, with $\no$ in place of $\yes$);
thus $\Mc{\psi}=a_{\sigma(\pi)}.\Mc{\psi_1} + \sum_{b \neq a} b_{\sigma(\pi)}.\yes=\Mcv{\psi'}{\sigma'}$ and so $\psi'=[a_{\pi'}]\psi_2$, $\sigma(\pi)=\sigma'(\pi')$ and $\Mc{\psi_1}=\Mcv{\psi_2}{\sigma'}$. We use the induction hypothesis to obtain that, for all $\ell\in\Loc$, we have $\synvone{\ell}{\sigma}{\psi_1}=\synvone{\ell}{\sigma'}{\psi_2}$. We have two possibilities:
\begin{enumerate}
\item $\sigma(\pi)\ (= \sigma'(\pi')) = \ell$. In this case,
$\synone{\psi}=a.(!(\range(\sigma) \setminus \{\ell\}),a).\synone{\psi_1}+\sum_{b\neq a}b.(!(\range(\sigma) \setminus \{\ell\}),b).\yes$ and $\synvone{\ell}{\sigma'}{\psi'}=a.(!(\range(\sigma') \setminus \{\ell\}),a).\synvone{\ell}{\sigma'}{\psi_2}+\sum_{b\neq a}b.(!(\range(\sigma') \setminus \{\ell\}),b).\yes$. Using the induction hypothesis, we obtain the required result, since $\range(\sigma) \setminus \{\ell\} = \range(\sigma') \setminus \{\ell\}$, both if $\sigma=\sigma'$ and if $|\Loc|=1$ (and so $\Loc = \{\ell\}$).

\item $\sigma(\pi) \ (= \sigma'(\pi')) \neq \ell$ (notice that this case is not possible if $|\Loc|=1$). Then, $\synone{\psi}= \sum_{b \in \Act}b.\left ( ?(\{\sigma(\pi)\},a).\synone{\psi_1}+\sum_{b\neq a}(?(\{\sigma(\pi)\}),b).\yes \right)$ 
and $\synvone{\ell}{\sigma'}{\psi'}= \sum_{b \in \Act}b.\left ( ?(\{\sigma'(\pi)\},a).\synvone{\ell}{\sigma'}{\psi_2}+\sum_{b\neq a}(?(\{\sigma'(\pi')\}),b).\yes \right)$. Using the induction hypothesis, we obtain the required result.
\end{enumerate}

        Next we consider $\psi=\maxx {x}.\psi_1$. Hence, $\Mc{\psi}=\rec x.\Mc{\psi_1}=\Mcv{\psi'}{\sigma'}$. From this we conclude that $\psi'=\maxx{x}.\psi_2$ and $\Mc{\psi_1}=\Mcv{\psi_2}{\sigma'}$. We use the induction hypothesis to obtain that, for all $\ell\in\Loc$, we have $\synvone{\ell}{\sigma}{\psi_1}=\synvone{\ell}{\sigma'}{\psi_2}$; thus, for all $\ell\in\Loc$, $\synone{\maxx {x}.\psi_1}=\rec x.\synone{\psi_1}=\rec x.\synvone{\ell}{\sigma'}{\psi_2}=\synvone{\ell}{\sigma'}{\maxx{x}.\psi_2}=\synvone{\ell}{\sigma'}{\psi'}$.
        
        Finally, let $\psi=\psi_1\op \psi_2$. Hence, $\Mc{\psi}=\Mc{\psi_1}\pandor \Mc{\psi_2}=\Mcv{\psi'}{\sigma'}$ (for $\pandor=\pand$ if $\op=\wedge$ and $\pandor=\por$ if $\op=\vee$)), with $\psi'=\psi_1'\op \psi_2'$, $\Mc{\psi_1}=\Mcv{\psi_1'}{\sigma'}$ and $\Mc{\psi_2}=\Mcv{\psi_2'}{\sigma'}$. The result immediately follows from the induction hypothesis.
        \end{proof}

        To prove Formula Convergence for quantifier free formulas, we need the following result. Informally, it captures part of the weak bisimulation at the level of local monitors.
        
        \begin{lemma}\label{lemma:crux}
            Let $\psi\in\quantfree$. If
        $\Mc{\psi}\xrightarrow{A}\Mc{\psi'}$, then $\synone{\psi}\xrightarrow{A(\ell)}m_\ell\boldarrowt \synone{\psi'}$, for all $\ell\in \Loc$. 
        \end{lemma}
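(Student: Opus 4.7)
The plan is to proceed by well-founded induction on the order $\prec$ of Definition~\ref{def:order}, already used for Lemma~\ref{lem:reactivity}: ordinary structural induction breaks at the recursion case $\psi=\maxx{x}.\psi_1$ because unfolding yields $\psi_1\{{}^{\maxx{x}.\psi_1}/_x\}$, which is syntactically larger than $\psi$; however, by guardedness the outer top-level fixed-point disappears and no new top-level fixed-points are introduced, so the unfolded formula is $\prec$-smaller. Throughout, I would appeal to Lemma~\ref{lemma:equal} to convert equalities of centralized monitors into the corresponding equalities of local synthesized monitors, since the hypothesis specifies only $\Mc{\psi'}$, not $\psi'$ itself.

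For the $\prec$-minimal formulas $\ttt, \ff, x, \pi=\pi', \pi\neq\pi'$, the centralized monitor $\Mc{\psi}$ is a verdict or a recursion variable; the only possible $A$-transition is a verdict self-loop, so $\Mc{\psi'}=\Mc{\psi}$, and by Lemma~\ref{lemma:equal} also $\synone{\psi'}=\synone{\psi}$, so taking $\synone{\psi}\xrightarrow{A(\ell)}\synone{\psi}$ with empty communication path suffices. For $\psi=\psi_1\op\psi_2$ with $\op\in\{\wedge,\vee\}$, the centralized rules force $\Mc{\psi_i}\xrightarrow{A}\Mc{\psi_i'}$ for $i=1,2$; I apply the induction hypothesis to each subformula (both $\prec$-smaller) to obtain local paths $\synone{\psi_i}\xrightarrow{A(\ell)}m_\ell^i\boldarrowt\synone{\psi_i'}$. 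The target states $\synone{\psi_i'}$ are communication-free by Lemma~\ref{lemma:nocommunicate}, so Lemma~\ref{lemma:merge} splices the two paths into a single $\boldarrowt$-path in $m_\ell^1\pandor m_\ell^2$ ending at $\synone{\psi_1'}\pandor\synone{\psi_2'}=\synone{\psi'}$. The recursion case $\psi=\maxx{x}.\psi_1$ is handled by unfolding: the centralized step on $\rec x.\Mc{\psi_1}$ reduces to a step on $\Mc{\psi_1\{{}^{\maxx{x}.\psi_1}/_x\}}$ via Lemma~\ref{lemma:recursion-substitution}, to which the induction hypothesis applies, and the result is transferred back to the decentralized side via Lemma~\ref{lemma:recursion-substitutiontwo} together with the operational rule for recursive monitors.

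The main obstacle is the modality case $\psi=[a_\pi]\psi_1$ (dually for $\langle a_\pi\rangle\psi_1$), which is exactly the case that drives the design of the decentralized synthesis for modalities. I would case-split on whether $A(\sigma(\pi))=a$ and on whether $\ell=\sigma(\pi)$. When $\ell=\sigma(\pi)$, the local monitor reads $A(\ell)$ and exposes a single send prefix $(!G,A(\ell))$ that, after a single sending step, reaches either $\synone{\psi_1}$ (when $A(\ell)=a$) or the verdict corresponding to $\Mc{\psi'}$. When $\ell\neq\sigma(\pi)$, the local monitor reads an arbitrary action $A(\ell)$ and exposes a sum of receive-prefixed branches; I must pick the branch $(?\{\sigma(\pi)\},A(\sigma(\pi)))$, which is admissible in $\boldarrowt$ precisely because its payload coincides with the action actually read at the sender, matching the constraint $\gamma=A(\ell')$ that distinguishes $\boldarrowt$ from the general communication relation $\boldarrow$. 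Identifying the reached state with $\synone{\psi'}$ then reduces to one final appeal to Lemma~\ref{lemma:equal}.
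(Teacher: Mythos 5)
Your proof is correct and follows essentially the same route as the paper's: the same case analysis (verdicts and variables as the $\prec$-minimal cases, the modality case split on $\ell=\sigma(\pi)$ with a single send or a matching $\boldarrowt$-admissible receive, parallel composition spliced via \Cref{lemma:merge} thanks to \Cref{lemma:nocommunicate}, and recursion handled by unfolding through \Cref{lemma:recursion-substitution} and \Cref{lemma:recursion-substitutiontwo}), with the same final appeals to \Cref{lemma:equal}/\Cref{lemma:threefacts} to identify the reached states with $\synone{\psi'}$. The only difference is that the paper inducts on the derivation of the centralized transition $\xrightarrow{A}$ rather than on the well-founded order $\prec$; both measures dispose of the fixed-point unfolding equally well, so the choice is immaterial.
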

        \begin{proof}
        We proceed by induction on $\xrightarrow{A}$ for central monitors. 
The base case with $\Mc{\psi}=a_\ell.m$ never applies, because of the assumption that $|\Act| \geq 2$. Hence, the only base case is for $\Mc{\psi}=v=\Mc{\psi'}$; then,  \Cref{lemma:threefacts} entails that $\synvone{\ell}{\sigma}{\psi}= v =\synvone{\ell}{\sigma}{\psi'}$, for all $\ell\in\Loc$, and we can conclude, as $v\xrightarrow{A(\ell)}v\boldarrowt v$ for any $A(\ell)$. 
        
For the inductive case, let $\Mc{\psi}=m+n\xrightarrow{A} \Mc{\psi'}$;
hence, either $\psi=[a_\pi]\psi''$ 
or $\psi=\langle a_\pi\rangle\psi''$,
and $\Mc{\psi} = a_{\sigma(\pi)}.\Mc{\psi''} + \sum_{b\neq a}b_{\sigma(\pi)}.v$, 
where $v = \yes$, in case of box, and $v = \no$, in case of diamond.
We distinguish two cases. 
        \begin{enumerate}
            \item 
            $\sigma(\pi)=\ell'$ and $A(\ell')=a$; this entails that $\Mc{\psi'}=\Mc{\psi''}$.
            Fix $\ell \in \Loc$. If $\ell=\ell'$, we have $\synvone{\ell}{\sigma}{\psi}\xrightarrow{A(\ell)}(!(\range(\sigma)\setminus\{\ell\},a).\synvone{\ell}{\sigma}{\psi''}$; if $\ell\neq \ell'$, we have $\synvone{\ell}{\sigma}{\psi}\xrightarrow{A(\ell)}(?\{\ell'\},a).\synvone{\ell}{\sigma}{\psi''}+\sum_{b\neq a}(?\{\ell'\},b).v$. Thus, in both situations we can have that $\synvone{\ell}{\sigma}{\psi}\xrightarrow{A(\ell)}m_{\ell}\xrightarrow{c}\synvone{\ell}{\sigma}{\psi''}$, where $\gamma=A(\ell')\ (= a)$ whenever $c=(?\{\ell'\},\gamma)$. From \Cref{lemma:equal} 
and $\Mc{\psi'}=\Mc{\psi''}$, we conclude  $\synvone{\ell}{\sigma}{\psi'}=\synvone{\ell}{\sigma}{\psi''}$, for all $\ell\in \Loc$.
        
            \item 
            $\sigma(\pi)=\ell'$, $A(\ell')=b$, and $a\neq b$; this entails that $\Mc{\psi'}=v$. We use \Cref{lemma:threefacts} to conclude that $\synone{\psi'}=v$, for all $\ell$.
            Fix $\ell \in \Loc$. If  $\ell=\ell'$, we have $\synvone{\ell}{\sigma}{\psi}\xrightarrow{A(\ell)}(!(\range(\sigma)\setminus\{\ell\}),b).v=(!(\range(\sigma)\setminus\{\ell\},b).\synvone{\ell}{\sigma}{\psi'}$;  if $\ell\neq \ell'$, we have $\synvone{\ell}{\sigma}{\psi}\xrightarrow{A(\ell)}(?\{\ell'\},a).\synvone{\ell}{\sigma}{\psi''}+\sum_{b\neq a}(?\{\ell'\},b).v=(?\{\ell'\},a).\synvone{\ell}{\sigma}{\psi''}+\sum_{b\neq a}(?\{\ell'\},b).\synvone{\ell}{\sigma}{\psi'}$. Thus in both situations we can conclude that $\synvone{\ell}{\sigma}{\psi}\xrightarrow{A(\ell)}m_{\ell}\xrightarrow{c}\synvone{\ell}{\sigma}{\psi'}$, where $\gamma=A(\ell')$ whenever $c=(?\{\ell'\},\gamma)$. 
        \end{enumerate}
        
        Next we consider the recursive case: $\Mc{\psi}=\rec x.m\xrightarrow{A} \Mc{\psi'}$ because $m\{^{\rec x.m}/_x\}\xrightarrow{A} \Mc{\psi'}$. Thus we obtain that $\psi=\max x.\psi_1$ and $m=\Mc{\psi_1}$. As $\Mc{\psi_1}\{^{\rec x.\Mc{\psi_1}}/_x\}=\Mc{\psi_1\{^{\maxx x.\psi_1}/_x\}}$ (\Cref{lemma:recursion-substitution}), we use the induction hypothesis to obtain that, for all $\ell$, we have $\synone{\psi_1\{^{\maxx x.\psi_1}/_x\}}\xrightarrow{A(\ell)} m_\ell\boldarrowt \synone{\psi'}$. Using that 
        $\synone{\psi_1}\{^{\rec x.\synone{\psi_1}}/_x\}=\synone{\psi_1\{^{\maxx x.\psi_1}/_x\}}$ (\Cref{lemma:recursion-substitutiontwo}), we get that $\synone{\psi_1}\{^{\rec x.\synone{\psi_1}}/_x\}\xrightarrow{A(\ell)} m_\ell\boldarrowt \synone{\psi'}$. This implies that $\synone{\psi}\xrightarrow{A(\ell)} m_\ell\boldarrowt \synone{\psi'}$.
        
        Last, let $\Mc{\psi}=m\pandor n\xrightarrow{A} m'\pandor n'$ because $m\xrightarrow{A}m'$ and $n\xrightarrow{A}n'$. As $\psi$ is quantifier-free, we get that $\psi=\psi_1\op\psi_2$ (for $\op=\wedge$ if $\pandor=\pand$ and $\op=\vee$ if $\pandor=\por$), $m=\Mc{\psi_1}$ and $n=\Mc{\psi_2}$. Also, $\psi'=\psi_1'\op\psi_2'$, $m'=\Mc{\psi_1'}$ and $n'=\Mc{\psi_2'}$. From the induction hypothesis we obtain that, for all $\ell$, we have $\synone{\psi_1}\xrightarrow{A(\ell)}m_\ell^1\boldarrowt \synone{\psi_1'}$ and $\synone{\psi_2}\xrightarrow{A(\ell)}m_\ell^2\boldarrowt \synone{\psi_2'}$. Thus, for all $\ell$, we have that $\synone{\psi_1\op\psi_2}\xrightarrow{A(\ell)}m_\ell^1\pandor m_\ell^2$ and, via \Cref{lemma:merge} (that can be used thanks to \Cref{lemma:nocommunicate}), we conclude the desired $m_\ell^1\pandor m_\ell^2\boldarrowt \synone{\psi_1'}\pandor \synone{\psi_2'}$. 
        \end{proof}

         We can now prove Formula Convergence for $\psi\in\quantfree$:

    \begin{lemma}\label{lemma:actionbispsi}
        Let $\psi\in\quantfree$.
        If $\Mone{\psi}\xrightarrow{A} M \boldarrow M'$, with $M'$ that cannot communicate, and $\Mc{\psi} \xrightarrow{A} \Mcv{\psi'}{\sigma}$, for some formula $\psi'$, then $M' = \Monev{\sigma}{\psi'}$.
    \end{lemma}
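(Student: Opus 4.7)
The proof rests on \Cref{lemma:crux}, which lifts a single centralized transition $\Mc{\psi}\xrightarrow{A}\Mc{\psi'}$ to local action-plus-communication paths $\synone{\psi}\xrightarrow{A(\ell)} m_\ell \boldarrowt \synone{\psi'}$ at every location $\ell$. The overall strategy is: first exhibit a communication sequence $M \boldarrow \Monev{\sigma}{\psi'}$, and then use \Cref{lemma:itsallrelative}, together with the hypothesis that $M'$ is communication-free, to force $M' = \Monev{\sigma}{\psi'}$.

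The degenerate case $\sigma = \emptyset$ is dispatched by observing that $\Mone{\psi} = [v]_{\ell_0}$ with $\Mc{\psi} \valto v$, that $[v]_{\ell_0}$ admits no communication, and that the same form applies to $\psi'$ (which must be closed since $\Mcv{\psi'}{\sigma}$ is a well-defined, closed centralized monitor); verdict preservation along the centralized chain $\Mc{\psi}\xrightarrow{A}\Mcv{\psi'}{\sigma}$ then gives $\Monev{\sigma}{\psi'} = [v]_{\ell_0} = M'$. Henceforth assume $\sigma \neq \emptyset$ and let $\range(\sigma)=\{\ell_1,\dots,\ell_k\}$, so that $\Mone{\psi}=\bigvee_{i=1}^k [\synvone{\ell_i}{\sigma}{\psi}]_{\ell_i}$. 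The transition $\Mone{\psi}\xrightarrow{A} M$ then forces $M=\bigvee_{i=1}^k [m_i]_{\ell_i}$ with $\synvone{\ell_i}{\sigma}{\psi}\xrightarrow{A(\ell_i)} m_i$ for every $i$; by \Cref{lemma:deterministicstep}, each $m_i$ is uniquely determined by $A(\ell_i)$, and therefore must coincide with the intermediate local monitor produced by \Cref{lemma:crux} (which a priori only asserts the existence of \emph{some} such intermediate state). Hence $m_i \boldarrowt \synvone{\ell_i}{\sigma}{\psi'}$ for every $i$, and each $\synvone{\ell_i}{\sigma}{\psi'}$ cannot communicate by \Cref{lemma:nocommunicate}.

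Feeding these paths into \Cref{lemma:psijoincom}, with $n_i = m_i$ (so that the hypothesis $m_i \xybRightarrow{\ell_i}{A}{k}{i=1} m_i$ holds trivially via one-element sequences) and $o_i = \synvone{\ell_i}{\sigma}{\psi'}$, yields
\[ M = \bigvee_{i=1}^k [m_i]_{\ell_i} \boldarrow \bigvee_{i=1}^k [\synvone{\ell_i}{\sigma}{\psi'}]_{\ell_i} = \Monev{\sigma}{\psi'}. \]
We then apply \Cref{lemma:itsallrelative} to the chain $\Mone{\psi}\xrightarrow{A} M \boldarrow M \boldarrow \Monev{\sigma}{\psi'}$ together with the given auxiliary path $M \boldarrow M'$, obtaining $M' \boldarrow \Monev{\sigma}{\psi'}$; since $M'$ is communication-free by hypothesis, this path is necessarily empty and $M' = \Monev{\sigma}{\psi'}$, as required. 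The delicate step is the identification of the $m_i$ coming from the action transition with those produced by \Cref{lemma:crux}: \Cref{lemma:crux} only guarantees the existence of compatible intermediate states at each location, and it is exactly \Cref{lemma:deterministicstep} that ensures they must be the very $m_i$ arising from the transition $\Mone{\psi}\xrightarrow{A} M$, enabling the subsequent use of \Cref{lemma:psijoincom}.
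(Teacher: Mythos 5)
Your proof is correct and follows essentially the same route as the paper's: \Cref{lemma:crux} to obtain the local paths $\synone{\psi}\xrightarrow{A(\ell)}m_\ell\boldarrowt\synone{\psi'}$, \Cref{lemma:psijoincom} instantiated with $n_\ell=m_\ell$ to assemble $M\boldarrow\Monev{\sigma}{\psi'}$, and \Cref{lemma:itsallrelative} together with the cannot-communicate hypothesis to force $M'=\Monev{\sigma}{\psi'}$. The only differences are that you make explicit two points the paper leaves implicit or dismisses: the appeal to \Cref{lemma:deterministicstep} to identify $M$ with $\bigvee_{\ell}[m_\ell]_\ell$, and the degenerate case $\sigma=\emptyset$, which the paper instead claims cannot arise.
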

    \begin{proof}
    First, since $\Mone{\psi}\xrightarrow{A} M$, we have that $\FVloc(\psi) \neq \emptyset$ and so $\sigma \neq \emptyset$.
Then, \Cref{lemma:crux} tells us that $\synone{\psi}\xrightarrow{A(\ell)}m_\ell\boldarrowt \synone{\psi'}$, for all $\ell\in \Loc$. Thus, we know that $\Mone{\psi}=\bigvee_{\ell\in \range(\sigma)}[\synone{\psi}]_\ell\xrightarrow{A}\bigvee_{\ell\in \range(\sigma)} [m_\ell]_\ell$; because $\synone{\psi'}\NOT{\xrightarrow{c}}$ for all $c$ (\Cref{lemma:nocommunicate}), we can use \Cref{lemma:psijoincom} (by taking $n_\ell=m_\ell$ in the statement of that result) to obtain that $\bigvee_{\ell \in \range (\sigma)} [m_\ell]_\ell\boldarrow\bigvee_{\ell\in \range(\sigma)}[\synone{\psi'}]_\ell=\Mone{\psi'}$ and $\Mone{\psi'}$ cannot communicate. Hence, we apply \Cref{lemma:itsallrelative} to derive that $M'\boldarrow \Mone{\psi'}$. Since $M'$ cannot communicate by assumption, we get that $M'=\Mone{\psi}$.
    \end{proof}

    Then we show a result for the synthesis function in the centralized setting specifically for $\psi$-formulas, which differs from \Cref{lemma:chain} because $\sigma$ remains constant. This allows us to prove the base case in the proof for Formula Convergence (\Cref{lemma:actionbis}) via \Cref{lemma:actionbispsi}.
    \begin{lemma}\label{lemma:chainpsi}
    If $\Mc{\psi}\xrightarrow{A} m'$, then $m'=\Mc{\psi'}$ for some $\psi'$.
    \end{lemma}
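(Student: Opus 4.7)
The plan is to adapt the proof of \Cref{lemma:chain} to exploit the fact that $\psi$ is quantifier-free. The key observation is that, in \Cref{lemma:chain}, the environment $\sigma'$ in the conclusion could differ from the original $\sigma$ precisely because the synthesis rules for $\forall \pi.\varphi$ and $\exists \pi.\varphi$ extend $\sigma$ with $[\pi \mapsto \ell]$. Since $\psi \in \quantfree$ contains no quantifiers, this extension never occurs, so we can strengthen the conclusion to keep $\sigma$ fixed.

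Concretely, I would proceed by induction on the derivation of $\Mc{\psi} \xrightarrow{A} m'$ (or equivalently on the structure of $\psi$), mirroring the case analysis of \Cref{lemma:chain}. The base case $\Mc{\psi} = v$ is handled by taking $\psi' = \psi$, since a verdict transitions to itself. For the sum case $\Mc{\psi} = m + n$, the definition of $\Mc{-}$ forces $\psi$ to be of the form $[a_\pi]\psi''$ or $\langle a_\pi \rangle \psi''$, so $m'$ is either $\Mc{\psi''}$ (take $\psi' = \psi''$) or a verdict $v \in \{\yes, \no\}$ (take $\psi' = \ttt$ or $\psi' = \ff$ accordingly, noting that $\Mcv{\ttt}{\sigma} = \yes$ and $\Mcv{\ff}{\sigma} = \no$ for any $\sigma$). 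The recursion case $\Mc{\psi} = \rec x.m$ uses \Cref{lemma:recursion-substitution} to rewrite $m\{^{\rec x.m}/_x\}$ as $\Mc{\psi''\{^{\max x.\psi''}/_x\}}$ and then invokes the induction hypothesis; crucially, unfolding a $\max$ operator does not introduce quantifiers, so $\psi''\{^{\max x.\psi''}/_x\}$ remains in $\quantfree$.

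The parallel case $\Mc{\psi} = m \pandor n \xrightarrow{A} m' \pandor n'$ is the only one requiring a bit more care. Here $\psi = \psi_1 \op \psi_2$ with $m = \Mc{\psi_1}$ and $n = \Mc{\psi_2}$, and both $\psi_1, \psi_2 \in \quantfree$. The induction hypothesis directly yields $m' = \Mc{\psi_1'}$ and $n' = \Mc{\psi_2'}$ with the \emph{same} $\sigma$, so we can immediately set $\psi' = \psi_1' \op \psi_2'$ and conclude $m' \pandor n' = \Mc{\psi'}$. This is simpler than the corresponding case in \Cref{lemma:chain}, which had to appeal to \Cref{lemma:combine-sigma} to reconcile potentially different environments $\sigma_1$ and $\sigma_2$ produced by the two inductive calls — a complication that does not arise here precisely because the quantifier-free restriction keeps $\sigma$ constant throughout the derivation.

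I do not anticipate any serious obstacle: the result is strictly a strengthening of \Cref{lemma:chain} under a strictly weaker hypothesis on the input formula, and the proof is an easy adaptation that primarily amounts to checking that none of the cases introduce new location variables into $\sigma$.
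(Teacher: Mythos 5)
Your proposal is correct and matches the paper's approach exactly: the paper's proof of this lemma is literally stated as ``a simplified version of the proof for \Cref{lemma:chain}'', and your case analysis spells out precisely why that simplification works (the quantifier cases that extend $\sigma$ never arise, so \Cref{lemma:combine-sigma} is not needed in the parallel case). The only cosmetic omission is an explicit remark that the prefix case $\Mc{\psi}=a_\ell.m$ is vacuous because the synthesis never produces a bare prefix when $|\Act|\geq 2$, but this does not affect correctness.
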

    \ifarxiversion
    \begin{proof}
    The proof is a simplified version of the proof for \Cref{lemma:chain}.
    \end{proof}
    \fi

    The next lemma is used to prove \Cref{lemma:ikweethetniet}, which in turn is used in the final steps of the proof of Formula Convergence (\Cref{lemma:actionbis}).

    \begin{lemma}\label{lemma:imaginary}
        If $\Mc{\varphi}=\Mcv{\varphi'}{\sigma'}$ and $\Loc=\{\ell\}$, then $\Mone{\varphi}=\Monev{\sigma'}{\varphi'}$.
        \end{lemma}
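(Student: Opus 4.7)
The plan is to proceed by structural induction on $\varphi$, crucially exploiting the fact that, when $\Loc=\{\ell\}$, both the centralized and decentralized syntheses collapse outer quantifiers: one has $\Mc{\forall\pi.\varphi_1}=\Mcv{\varphi_1}{\sigma[\pi\mapsto\ell]}$ and $\Mone{\forall\pi.\varphi_1}=\Monev{\sigma[\pi\mapsto\ell]}{\varphi_1}$, and symmetrically for $\exists$. Hence the quantifier cases reduce immediately to the inductive hypothesis applied to $\varphi_1$ against $\varphi'$ with the updated environment. The same remark allows us, at every step, to normalize $\varphi'$ by stripping any outer quantifiers from it without affecting either side of either equation.

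For the Boolean cases $\varphi = \varphi_1\op\varphi_2$ with $\op\in\{\wedge,\vee\}$ and corresponding $\pandor\in\{\pand,\por\}$, I will use the equation $\Mc{\varphi_1}\pandor\Mc{\varphi_2}=\Mcv{\varphi'}{\sigma'}$ to pinpoint the syntactic shape of $\varphi'$. After stripping outer quantifiers from $\varphi'$, the only centralized-synthesis rule that produces a top-level $\pandor$ is the Boolean rule itself, so $\varphi'$ normalizes to a matching Boolean combination $\varphi'_1\op\varphi'_2$ with $\Mc{\varphi_i}=\Mcv{\varphi'_i}{\sigma''}$ for some $\sigma''$ extending $\sigma'$ by the peeled-off quantifier bindings. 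Applying the inductive hypothesis componentwise and using the quantifier-collapse for the decentralized synthesis yields $\Mone{\varphi_i}=\Monev{\sigma''}{\varphi'_i}=\Monev{\sigma'}{Q_1\pi_1\cdots Q_k\pi_k.\varphi'_i}$; recomposing with the outer $\op$ delivers $\Mone{\varphi}=\Monev{\sigma'}{\varphi'}$.

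In the base case $\varphi=\psi\in\quantfree$, we likewise normalize $\varphi'$ by peeling off its outer quantifiers to obtain $\varphi'^*\in\quantfree$ with $\Mcv{\varphi'^*}{\sigma^*}=\Mc{\psi}$. Then \Cref{lemma:equal} applies (its $|\Loc|=1$ clause is precisely designed for this situation) and gives $\synvone{\ell}{\sigma}{\psi}=\synvone{\ell}{\sigma^*}{\varphi'^*}$. Because $\range(\sigma),\range(\sigma^*)\subseteq\{\ell\}$, both decentralized syntheses reduce to $[\synvone{\ell}{\sigma}{\psi}]_\ell$; in the degenerate $\sigma=\emptyset$ or $\sigma^*=\emptyset$ sub-cases, one invokes the verdict clause of the $\psi$-synthesis together with \Cref{lemma:verdictind} to identify the two verdict monitors from the centralized-equality assumption.

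The main obstacle will be the Boolean case: I must rigorously argue that the centralized-synthesis equation $\Mc{\varphi_1}\pandor\Mc{\varphi_2}=\Mcv{\varphi'}{\sigma'}$ forces $\varphi'$ to be, modulo outer quantifiers, a Boolean combination whose conjuncts/disjuncts respect the grammar level (that is, at the $\varphi$-level rather than hidden inside a $\quantfree$-level $\wedge/\vee$). This requires a careful structural inspection of the centralized-synthesis table, possibly supplemented by the convention that $\wedge$/$\vee$ between $\varphi$-grammar subformulas is always parsed at the $\varphi$-level whenever doing so is well-formed, so that the two parallel synthesis regimes cannot produce the same term from essentially different syntactic shapes.
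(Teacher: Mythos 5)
Your plan is correct and follows essentially the same route as the paper: structural induction on $\varphi$, collapsing quantifiers on both sides when $\Loc=\{\ell\}$, invoking \Cref{lemma:equal} for the quantifier-free base case, and applying the inductive hypothesis componentwise in the Boolean cases. The one place you go further is in worrying about the exact shape of $\varphi'$ (outer quantifiers, and the $\varphi$-level versus $\quantfree$-level parsing of $\wedge/\vee$); the paper simply asserts $\varphi'=\varphi_1'\op\varphi_2'$ there, so your extra care is warranted rather than a deviation.
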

        \begin{proof}
        We proceed by induction on $\varphi$. In the base case we consider $\varphi\in\quantfree$: by \Cref{lemma:equal} 
and $\Loc = \{\ell\}$, we obtain $\Mone{\varphi}=[\synone{\varphi}]_\ell=[\synvone{\ell}{\sigma'}{\varphi'}]_\ell=\Monev{\sigma'}{\varphi'}$. 
        
        In the inductive case for $\varphi=\exists \pi.\varphi''$, we have $\Mc{\varphi}=\Mcv{\varphi''}{\sigma[\pi\mapsto\ell]}=\Mcv{\varphi'}{\sigma'}$. From the induction hypothesis we derive that $\Monev{\sigma[\pi\mapsto\ell]}{\varphi''}=\Monev{\sigma'}{\varphi'}$, which proves the claim, since $\Monev{\sigma[\pi\mapsto\ell]}{\varphi''} = \Mone{\varphi}$ because $\Loc = \{\ell\}$. The case for $\varphi = \forall \pi.\varphi''$ is similar
        
        Finally, let $\varphi=\varphi_1\op\varphi_2$. Hence, $\Mc{\varphi}=\Mc{\varphi_1}\pandor \Mc{\varphi_2}=\Mcv{\varphi'}{\sigma'}$ (for $\pandor=\pand$ if $\op=\wedge$ and $\pandor=\por$ if $\op=\vee$). As $\Loc=\{\ell\}$, we know that $\varphi'=\varphi_1'\op \varphi_2'$ with $\Mc{\varphi_1}=\Mcv{\varphi_1'}{\sigma'}$ and $\Mc{\varphi_2}=\Mcv{\varphi_2'}{\sigma'}$. From the induction hypothesis we conclude that $\Mone{\varphi_1}=\Monev{\sigma'}{\varphi_1'}$ and $\Mone{\varphi_2}=\Monev{\sigma'}{\varphi_2'}$. The result now follows: $\Mone{\varphi}=\Mone{\varphi_1}\op\Mone{\varphi_2}=\Monev{\sigma'}{\varphi_1'}\op \Monev{\sigma'}{\varphi_2'}=\Monev{\sigma'}{\varphi'}$.
        \end{proof}

\begin{lemma}\label{lemma:ikweethetniet}
    If $\Mc{\varphi}=\oplus_{i\in I}\Mcv{\varphi_i}{\sigma_i}$, then $\Mone{\varphi}=\bigvee_{i\in I}\Monev{\sigma_i}{\varphi_i}$.
If $\Mc{\varphi}=\otimes_{i\in I}\Mcv{\varphi_i}{\sigma_i}$, then $\Mone{\varphi}=\bigwedge_{i\in I}\Monev{\sigma_i}{\varphi_i}$.
    \end{lemma}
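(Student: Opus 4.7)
My plan is to prove both statements simultaneously by induction on the formula $\varphi$; I focus on the $\oplus/\vee$ statement, with the $\otimes/\wedge$ statement being symmetric (swap $\por \leftrightarrow \pand$, $\vee \leftrightarrow \wedge$, $\exists \leftrightarrow \forall$, and $\bigoplus \leftrightarrow \bigotimes$ throughout).

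The two ``easy'' inductive cases are those where the top-level operator of $\varphi$ naturally produces a top-level $\por$ in $\Mc\varphi$. For $\varphi = \exists \pi.\varphi'$, the centralized synthesis gives $\Mc{\varphi} = \bigoplus_{\ell \in \Loc}\Mcv{\varphi'}{\sigma[\pi \mapsto \ell]}$; matching this with the hypothesis $\Mc{\varphi} = \bigoplus_{i \in I}\Mcv{\varphi_i}{\sigma_i}$ and using associativity and commutativity of $\por$, I obtain a partition $(I_\ell)_{\ell \in \Loc}$ of $I$ such that $\Mcv{\varphi'}{\sigma[\pi \mapsto \ell]} = \bigoplus_{i \in I_\ell}\Mcv{\varphi_i}{\sigma_i}$ for every $\ell$. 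Applying the induction hypothesis to the strict subformula $\varphi'$ at each $\sigma[\pi \mapsto \ell]$ yields $\Monev{\sigma[\pi \mapsto \ell]}{\varphi'} = \bigvee_{i \in I_\ell}\Monev{\sigma_i}{\varphi_i}$, and the decentralized rule for $\exists$ then assembles these pieces into $\Mone{\varphi} = \bigvee_{\ell \in \Loc}\Monev{\sigma[\pi \mapsto \ell]}{\varphi'} = \bigvee_{i \in I}\Monev{\sigma_i}{\varphi_i}$. The case $\varphi = \varphi_1 \vee \varphi_2$ is analogous: partition $I$ into $I_1 \cup I_2$ (disjoint) with $\Mc{\varphi_j} = \bigoplus_{i \in I_j}\Mcv{\varphi_i}{\sigma_i}$, apply the induction hypothesis to each $\varphi_j$, and recombine using the decentralized rule $\Mone{\varphi_1 \vee \varphi_2} = \Mone{\varphi_1} \vee \Mone{\varphi_2}$.

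In the remaining cases ($\varphi = \forall\pi.\varphi'$, $\varphi = \varphi_1 \wedge \varphi_2$, or $\varphi$ an atomic quantifier-free formula), $\Mc\varphi$ is not a top-level $\por$, so the hypothesis forces $|I| = 1$ and $\Mc{\varphi} = \Mcv{\varphi_1}{\sigma_1}$; the required conclusion is then $\Mone{\varphi} = \Monev{\sigma_1}{\varphi_1}$. For $|\Loc| = 1$ this is exactly \Cref{lemma:imaginary}. For $|\Loc| > 1$, I would use \Cref{lemma:nochange} together with \Cref{lemma:equal} to lift the equality of central monitors to syntactic equality of the local components at each location, combined with a structural analysis of the possible shapes of $\varphi_1$ (in particular, in this regime $\Mcv{\varphi_1}{\sigma_1}$ cannot be a top-level $\por$ either, so $\varphi_1$'s shape is tightly constrained).

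The main obstacle is precisely this $|I| = 1$ case: syntactic equality of central monitors does not transparently transfer to syntactic equality of decentralized ones, because the decentralized synthesis localizes at each location in $\range(\sigma)$ and $\range(\sigma_1)$ may differ from $\range(\sigma)$ a priori. The crucial observation to overcome this is that the ranges that actually feed into the decentralized monitor are determined by the free location variables occurring in the formula, which are in turn pinned down by the central monitor (through \Cref{lemma:nochange}); this is what allows the structural matching between $\Mone\varphi$ and $\Monev{\sigma_1}{\varphi_1}$ to go through, closing the induction.
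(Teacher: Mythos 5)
Your proof follows essentially the same route as the paper's: normalize the decomposition so that no $\Mcv{\varphi_i}{\sigma_i}$ is itself a top-level $\oplus$ (your appeal to associativity/commutativity plays the role of the paper's explicit maximality assumption), induct on $\varphi$, handle $\exists\pi.\varphi'$ and $\varphi_1\vee\varphi_2$ by partitioning $I$ into blocks matched to the summands of $\Mc{\varphi}$ and applying the induction hypothesis blockwise, and reduce every other shape of $\varphi$ to the case $|I|=1$, i.e.\ $\Mc{\varphi}=\Mcv{\varphi_1}{\sigma_1}$. Up to that point the two arguments coincide step for step.

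The one place you diverge is the $|I|=1$ case. The paper dispatches it in a single line by citing \Cref{lemma:imaginary}; you correctly observe that \Cref{lemma:imaginary} is stated only under the hypothesis $\Loc=\{\ell\}$, and that the subcase $|I|=1$ with $|\Loc|>1$ genuinely arises inside the induction (for instance when one disjunct of a $\vee$ contributes a singleton block of the partition, or when $\varphi$ is an atom over several locations). Your instinct here is sharper than the paper's own write-up. However, your proposed repair is only a sketch and does not yet close the gap: \Cref{lemma:nochange} delivers semantic equality of the formulas, not the syntactic control over environments that you need, and \Cref{lemma:equal} carries exactly the side condition ($\sigma=\sigma'$ or $|\Loc|=1$) you are trying to escape, so neither lemma immediately yields $\Mone{\varphi}=\Monev{\sigma_1}{\varphi_1}$ when $\range(\sigma)$ and $\range(\sigma_1)$ could in principle differ. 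What is actually needed is an analogue of \Cref{lemma:imaginary} without the $|\Loc|=1$ restriction, proved by a case analysis on how the syntactic identity $\Mc{\varphi}=\Mcv{\varphi_1}{\sigma_1}$ can hold (which constrains the shape of $\varphi_1$ and forces the relevant ranges to agree). As it stands, your proposal and the paper share the same soft spot in this subcase; the paper glosses over it, while you name it explicitly but leave it unfilled.
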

    \begin{proof}
We prove only the first claim (the second one is similar) and assume that $\oplus_{i\in I}\Mcv{\varphi_i}{\sigma_i}$ is maximal, i.e., no $\Mcv{\varphi_i}{\sigma_i}$ is of the form 
    $m_1\oplus m_2$.
We proceed by induction on $\varphi$. 
If $|I| = 1$ (this covers the base cases -- i.e., when $\varphi$ is an atom -- and somee inductive cases -- e.g., an existential but with $|\Loc| = 1$), then we have $\Mc{\varphi}=\Mcv{\varphi_i}{\sigma_i}$ and the result follows from \Cref{lemma:imaginary}.


If $\varphi = \varphi_1 \lor \varphi_2$, then $\Mc{\varphi} = \Mc{\varphi_1} \oplus \Mc{\varphi_2} = \oplus_{i\in I}\Mcv{\varphi_i}{\sigma_i}$.
        Then, there is some $J \subsetneq I$ such that 
        $\Mc{\varphi_1}=\oplus_{i\in J}\Mcv{\varphi_i}{\sigma_i}$ and
        $\Mc{\varphi_2}=\oplus_{i\in I\setminus J}\Mcv{\varphi_i}{\sigma_i}$.
        By the induction hypothesis, 
        $\Mone{\varphi_1}=\bigvee_{i\in J}\Monev{\sigma_i}{\varphi_i}$ and
        $\Mone{\varphi_2}=\bigvee_{i\in I\setminus J}\Monev{\sigma_i}{\varphi_i}$, and therefore 
        $\Mone{\varphi} = \Mone{\varphi_1} \vee \Mone{\varphi_2} = \bigvee_{i\in I}\Monev{\sigma_i}{\varphi_i}$.
    
    
    Now consider $\varphi=\exists \pi.\varphi'$, with $|\Loc| > 1$; then, 
    $\Mc{\varphi} = \oplus_{\ell \in \Loc} \Mcv{\varphi'}{\sigma[\pi \mapsto \ell]}$. From our assumption on the maximality of $\oplus_{i\in I}\Mcv{\varphi_i}{\sigma_i}$, there is a partition $(I_\ell)_{\ell \in \Loc}$ of $I$ such that 
    $\Mcv{\varphi'}{\sigma[\pi \mapsto \ell]} = \oplus_{i\in I_\ell}\Mcv{\varphi_i}{\sigma_i}$, for every $\ell \in \Loc$.
    By the induction hypothesis,  
    $\Monev{\sigma[\pi \mapsto \ell]}{\varphi'} = \bigvee_{i\in I_\ell}\Monev{\sigma_i}{\varphi_i}$, for every $\ell \in \Loc$, and we can conclude that
    $\Mone{\varphi} = \bigvee_{\ell\in \Loc}\Monev{\sigma[\pi \mapsto \ell]}{\varphi'} = \bigvee_{\ell\in \Loc}\bigvee_{i\in I_\ell}\Monev{\sigma_i}{\varphi_i}= \bigvee_{i\in I}\Monev{\sigma_i}{\varphi_i}$.
    \end{proof}


\begin{lemma}\label{lemma:actionbis}
    If $\Mone{\varphi}\xrightarrow{A} M \boldarrow M'$, with $M'$ that cannot communicate, and $\Mc{\varphi} \xrightarrow{A} \Mcv{\varphi'}{\sigma'}$, for some formula $\varphi'$ and environment $\sigma'$, then $M' = \Monev{\sigma'}{\varphi'}$.
\end{lemma}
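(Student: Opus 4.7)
The plan is to proceed by induction on the structure of $\varphi$. The base case, where $\varphi \in \quantfree$, has already been dispatched in \Cref{lemma:actionbispsi}, so the work is entirely in the four inductive cases ($\exists \pi.\psi$, $\forall \pi.\psi$, $\psi_1 \vee \psi_2$, $\psi_1 \wedge \psi_2$). I will present the argument for $\varphi = \exists \pi.\psi$ explicitly, as the other three cases are structurally identical modulo swapping $\vee/\wedge$ and quantifying over $\Loc$ versus binary branching.

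For the case $\varphi = \exists \pi.\psi$, observe that $\Mone{\varphi} = \bigvee_{\ell \in \Loc} \Monev{\sigma[\pi \mapsto \ell]}{\psi}$, so from $\Mone{\varphi} \xrightarrow{A} M$ we can decompose $M = \bigvee_{\ell \in \Loc} M_\ell$ with $\Monev{\sigma[\pi \mapsto \ell]}{\psi} \xrightarrow{A} M_\ell$ for each $\ell$. By \Cref{lemma:communicationexists} (Bounded Communication), for each $\ell$ there is some $N_\ell$ with $M_\ell \boldarrow N_\ell$ and $N_\ell$ unable to communicate. In parallel, $\Mc{\varphi} = \bigoplus_{\ell \in \Loc} \Mcv{\psi}{\sigma[\pi \mapsto \ell]}$ and the transition $\Mc{\varphi} \xrightarrow{A} \Mcv{\varphi'}{\sigma'}$ decomposes into transitions $\Mcv{\psi}{\sigma[\pi \mapsto \ell]} \xrightarrow{A} m'_\ell$ for each $\ell$; by \Cref{lemma:chain}, each $m'_\ell$ has the form $\Mcv{\psi'_\ell}{\sigma'_\ell}$ and, up to rearranging parallel components, $\Mcv{\varphi'}{\sigma'} = \bigoplus_{\ell \in \Loc} \Mcv{\psi'_\ell}{\sigma'_\ell}$.

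Now I can invoke the induction hypothesis on each location: $\Monev{\sigma[\pi \mapsto \ell]}{\psi} \xrightarrow{A} M_\ell \boldarrow N_\ell$ with $N_\ell$ non-communicating, together with $\Mcv{\psi}{\sigma[\pi \mapsto \ell]} \xrightarrow{A} \Mcv{\psi'_\ell}{\sigma'_\ell}$, gives $N_\ell = \Monev{\sigma'_\ell}{\psi'_\ell}$. Applying \Cref{lemma:joincommunication} to the family $\{M_\ell\}_{\ell \in \Loc}$, we obtain $M = \bigvee_{\ell \in \Loc} M_\ell \boldarrow \bigvee_{\ell \in \Loc} N_\ell$. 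Since every $N_\ell$ is a synthesized monitor, \Cref{lemma:nocommunicate} ensures that $\bigvee_{\ell \in \Loc} N_\ell$ cannot communicate. Hence \Cref{lemma:itsallrelative}, applied to the two communication paths $M \boldarrow M'$ and $M \boldarrow \bigvee_{\ell \in \Loc} N_\ell$ from $M$ to non-communicating monitors, forces $M' = \bigvee_{\ell \in \Loc} N_\ell = \bigvee_{\ell \in \Loc} \Monev{\sigma'_\ell}{\psi'_\ell}$. Finally, \Cref{lemma:ikweethetniet}, applied to $\Mcv{\varphi'}{\sigma'} = \bigoplus_{\ell \in \Loc} \Mcv{\psi'_\ell}{\sigma'_\ell}$, yields $\Monev{\sigma'}{\varphi'} = \bigvee_{\ell \in \Loc} \Monev{\sigma'_\ell}{\psi'_\ell}$, so $M' = \Monev{\sigma'}{\varphi'}$ as required.

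The cases for $\forall \pi.\psi$ and for the binary boolean connectives are analogous: replace $\bigvee_{\ell \in \Loc}$ with $\bigwedge_{\ell \in \Loc}$ in the universal case, and the $\Loc$-indexed joins with binary joins over the two conjuncts/disjuncts. In every case the same ingredients are used: decomposition of the action transition into the components, Bounded Communication to reach non-communicating states componentwise, the induction hypothesis per component, the key combination lemma \ref{lemma:joincommunication} to merge the local communication paths into a global one, \Cref{lemma:itsallrelative} to identify the two non-communicating endpoints reachable from $M$, and \Cref{lemma:ikweethetniet} to rewrite the synthesized-monitor join back as a synthesized monitor of the composite formula. The main obstacle is ensuring that the parallel decomposition on the decentralized side is compatible with the one on the centralized side; this is precisely why Lemma \ref{lemma:ikweethetniet} is stated with the level of generality it has, so that we do not need to track which $\ell$ in $\Loc$ maps to which parallel component up to the maximality assumption.
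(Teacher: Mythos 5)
Your proof is correct and follows essentially the same route as the paper's: structural induction with the base case delegated to \Cref{lemma:actionbispsi}, then componentwise decomposition, Bounded Communication, the induction hypothesis, \Cref{lemma:joincommunication}, \Cref{lemma:itsallrelative}, and \Cref{lemma:ikweethetniet} to reassemble the result. The only difference is the order in which you invoke the induction hypothesis relative to the path-merging lemmas, which is immaterial.
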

\begin{proof}
We proceed by induction on $\varphi$. 
The case for $\psi$ follows from \Cref{lemma:actionbispsi}, where we know that $\sigma'=\sigma$ because of \Cref{lemma:chainpsi}.

We start with the inductive case for $\varphi = \exists \pi.\varphi''$. Since $\Monev{\sigma}{\exists \pi.\varphi''}=\bigvee_{\ell\in \Loc} \Monev{\sigma[\pi \mapsto \ell]}{\varphi''}$, we have $M=\bigvee_{\ell\in\Loc} M_\ell$, where $ \Monev{\sigma[\pi \mapsto \ell]}{\varphi''}\xrightarrow {A} M_\ell$, for each $\ell$. By Bounded Communication, we obtain that, for all $\ell$, there exists $N_\ell$ such that $M_\ell\boldarrow 
N_\ell$ and $N_\ell$ cannot communicate. We apply \Cref{lemma:joincommunication} and conclude that $M\boldarrow\bigvee_{\ell\in\Loc} N_\ell$. Then we can use \Cref{lemma:itsallrelative} to derive that $
\bigvee_{\ell\in\Loc} N_\ell
\boldarrow M'$, which implies that $
\bigvee_{\ell\in\Loc} N_\ell
= M'$ because $
\bigvee_{\ell\in\Loc} N_\ell
$ cannot communicate.
Moreover, $\Mc{\varphi} = \bigoplus_{\ell\in\Loc}\Mcv{\varphi''}{\sigma[\pi \mapsto \ell]}$ and
$\Mcv{\varphi'}{\sigma'} = \bigoplus_{\ell\in\Loc}\Mcv{\varphi_\ell}{\sigma_\ell}$, where
 $\Mcv{\varphi''}{\sigma[\pi \mapsto \ell]}\xrightarrow{A} \Mcv{\varphi_\ell}{\sigma_\ell}$, for each $\ell \in \Loc$. 
 From $\Monev{\sigma[\pi \mapsto \ell]}{\varphi''} \xrightarrow{A} M_\ell\boldarrow N_\ell$ and the induction hypothesis, we can obtain that $N_\ell=\Monev{\sigma_\ell}{\varphi_\ell}$; hence, $M' = \bigvee_{\ell\in\Loc} \Monev{\sigma_\ell}{\varphi_\ell}$ and we conclude via \Cref{lemma:ikweethetniet}, that implies that $\bigvee_{\ell \in \Loc} \Monev{\sigma_\ell}{\varphi_\ell} = \Monev{\sigma'}{\varphi'}$.

The case for $\varphi_1\vee \varphi_2$ is similar; the cases for $\forall \pi.\varphi''$ and $\varphi_1\wedge\varphi_2$ are dual.
\end{proof}
\end{textAtEnd}

\section{Conclusion}
\label{sec:concl}

We provided two methods to synthesize monitors for hyperproperties expressed as fragments of \HypermuHML.
Our first synthesis procedure constructs monitors that analyse hypertraces in a centralized manner and are guaranteed to correctly detect all violations of the respective formula, as long as it does not have a least fixed-point operator.
Our second synthesis algorithm constructs monitors that operate in a decentralized manner and communicate with one another using multicast to share relevant information between 
them.
%
The decentralized-monitor synthesis provides the same correctness guarantees as the centralized one, but is only defined for formulas with trace quantifiers that do not appear inside any fixed-point operator. 
This additional restriction, which is natural and present in many monitoring set-ups for hyperlogics, \emph{e.g.}~\cite{monitor_hyper,hyperLTL,AgrawalB16,brett17,Gustfeld21,complex_monitor_hyper}, allows us to focus on examining the intricacies of monitoring in a decentralized setting with monitor communication. 
More precisely, it allows us to fix the $\sigma$ in the synthesis function which, in turn, produces a \emph{static} set of locations with which a monitor 
can communicate. 
Despite the restriction to \PHypermuHML, our synthesis algorithm still covers properties that were previously not even expressible, hence not monitorable, in state-of-the-art hyperlogics. 
%
%

\paragraph{Future work}
Of course, the picture is still incomplete: we have a centralized-monitor synthesis procedure for an expressive fragment of \HypermuHML, 
whereas our decentralized-monitor synthesis deals with a more restricted fragment of that logic.
It is not clear if this restriction is necessary; 
for example, a different decentralized-monitor synthesis for a larger fragment might be obtained by utilizing a different communication paradigm other than multicast, which was adopted in this study.
In fact, we conjecture that broadcast communications might allow us to synthesize decentralized monitors for a larger \HypermuHML\ fragment, including formulae that mix greatest fixed-points and quantifiers, like $\varphi_a$ defined in \eqref{exp:property}; currently, monitors only send messages to the locations in the range of the specified $\sigma$.

A second interesting direction is to allow monitors to infer information from communications they did not receive. 
%
A good starting point to explore such a synthesis algorithm (and prove its correctness) can be the synthesis properties in \Cref{def:principled-synthesis}.
To fully delineate the power of decentralized monitoring, a maximality result in the spirit of those presented in~\cite{AcetoPOPL19,AcetoAFIL21:sosym} is needed, which we intend to establish in the future.

\revTwo{
A third avenue for future investigation is to synthesize monitors that detect all satisfying hypertraces for the respective dual fragments of \HypermuHML, instead of focusing on monitors that detect violations (that is, we can focus on satisfaction completeness instead of on violation completeness).
A fourth possible direction is to provide an estimation (or, at least, a lower/upper bound) of the size of synthesized centralized and decentralized monitors for a given formula; this will require non-trivial combinatorial arguments, like those in \cite{AAFIL19a}. 
Another possible direction is to extend the proposed approach to hyperlogics that adopt asynchronous semantics,  like the one adopted in~\cite{ChalupaH23,Gustfeld21}. Furthermore, it would also be challenging to understand how we can model a scenario where the number of localities
varies over time, and how we can handle the dynamic creation/deletion of the associated local monitors.
}

Finally, another relevant direction we intend to pursue in future is the development of tools for monitoring \HypermuHML\ specifications at runtime, based on the results of this article. 
We expect that our decentralised-monitor synthesis procedure can be implemented by generating a dedicated monitor for every location in a way that is very similar to the synthesis of \muHML\ monitors presented in~\cite{AttardAAFIL21,AcetoAAEFI22,AcetoAAEFI24} and implemented in the tool {\tt detectEr} available at \url{https://duncanatt.github.io/detecter/}.

\medskip

\noindent
\textbf{Related Work.}\label{sec:related_work}
To the best of our knowledge, Agrawal and Bonakdarpour were the first to study RV for hyperproperties expressed in HyperLTL in~\cite{AgrawalB16}, where they investigated monitorability for $k$-safety hyperproperties expressed in HyperLTL. They also gave a semantic characterization of monitorable $k$-safety hyperproperties, which is a natural extension to hyperproperties of the `universal version' of the classic definition of monitorability presented by Pnueli-Zaks~\cite{PnueliZ06,AcetoAFIL21:sosym}. In contrast to this work, we do not restrict ourselves to alternation-free formulas (see~\cref{exp:property}) and every monitorable formula considered by Agrawal and Bonakdarpour can be expressed in our monitorable fragment. Brett et al.~\cite{brett17} improve on the work presented in~\cite{AgrawalB16} by presenting an algorithm for monitoring the full alternation-free fragment of HyperLTL. They also highlight challenges that arise when monitoring arbitrary HyperLTL formulas, namely $(i)$ quantifier alternations, $(ii)$ inter-trace dependencies and $(iii)$ relative ordering of events across traces. 
Our decentralized-monitor synthesis addresses $(i)$ by using the number of locations as an upper bound on the number of traces, and $(ii)$ and $(iii)$ via synchronized multicasts.

In~\cite{monitor_hyper}, Finkbeiner et al.\  investigate RV for HyperLTL~\cite{hyperLTL} formulas w.r.t.\ three different input classes, namely the bounded sequential, the unbounded sequential and the parallel classes.
They also develop the monitoring tool RVHyper~\cite{FinkbeinerHST18} based on the sequential algorithms developed for those input classes.
The parallel class is closest to our set-up, since it consists in a \emph{fixed} number of system executions that are processed synchronously. 

Beutner et al.~\cite{BeutnerFFM24} study runtime monitoring for  \hyperLTLtwofp, a temporal logic that is interpreted over sets of \emph{finite} traces of \emph{equal length}. 
Unlike \hyperLTLtwo~\cite{BeutnerFFM23}, \hyperLTLtwofp permits quantification under temporal operators, which is also allowed in our logic \HypermuHML. 
In contrast to HyperLTL, \hyperLTLtwofp features second-order quantification over sets of finite traces and can express properties like common knowledge.

In~\cite{Gustfeld21}, Gustfeld et al.~study automated analysis techniques for asynchronous hyperproperties and propose a novel automata-theoretic framework, the so-called alternating asynchronous parity automata, together with the fixed-point logic $H_\mu$ for expressing asynchronous hyperproperties. 
The logic $H_\mu$ has commonalities with \PHypermuHML,
%
but 
it only allows for prenex formulas; 
moreover, 
its semantics 
progresses asynchronously on each trace.
%
Properties such as ``an atomic proposition does not occur at a certain level in the tree (of traces)'' are not expressible in their logic $H_\mu$, but 
can be described in \HypermuHML. 
%
%

Chalupa and Henzinger~\cite{ChalupaH23} explore the potential of 
monitoring for hyperproperties using prefix transducers. 
They develop a transducer language, called prefix expressions, give it an operational semantics over a hypertrace (reminiscent of the semantics in Section~\ref{sec:decMon}) and then implement 
it to assess the induced overheads.
They show how transducers can use the writing capabilities as a method for monitor synchronization across traces, akin to the monitor communication and verdict aggregation 
of \Cref{sec:decMon}.
%
Since transducers are, in principle, more powerful that passive monitors, 
additional guarantees 
are required to ensure that they do not interfere unnecessarily with system executions.

\bibliographystyle{ACM-Reference-Format}
\bibliography{bibliography.bib}

\ifextendedversion
\clearpage
\appendix

\section{Proofs for Soundness of Centralized Monitors}
\label{app:soundness}
\printProofs[soundness]

\section{Proofs for Correct Synthesis of Decentralized Monitors}
\label{app:corr}
\printProofs[corr]


\subsection{Proofs for Verdict Agreement of $\Monev{-}{\cdot}$}
\label{app:va}
\printProofs[va]

 \subsection{Proofs for Verdict Irrevocability of $\Monev{-}{\cdot}$} 
 \label{app:vi}
 \printProofs[vi]

\subsection{Proofs for Reactivity of $\Monev{-}{\cdot}$}
\label{app:react}
\printProofs[react]

\subsection{Proofs for Bounded Communication of $\Monev{-}{\cdot}$}
\label{app:bc}
 \printProofs[bc]

\subsection{Proofs for Processing-Communication Alternation of $\Monev{-}{\cdot}$}
\label{app:pca}
 \printProofs[pca]

 \subsection{Proofs for Formula Convergence of $\Monev{-}{\cdot}$} 
 \label{app:fc}
 \printProofs[fc]
\fi 

\end{document}